\newtheorem{theorem}{Theorem}
\newtheorem{lemma}[theorem]{Lemma}
\newtheorem{corollary}[theorem]{Corollary}
\theoremstyle{definition}
\newtheorem{definition}{Definition}[section]
\newtheorem{prop}[theorem]{Proposition}
\newcommand{\mA}{\mathcal{A}}
\newcommand{\mB}{\mathcal{B}}
\newcommand{\mF}{\mathcal{F}}
\newcommand{\mH}{\mathcal{H}}
\newcommand{\mK}{\mathcal{K}}
\newcommand{\mL}{\mathcal{L}}
\newcommand{\mM}{\mathcal{M}}
\newcommand{\mN}{\mathcal{N}}
\newcommand{\mR}{\mathcal{R}}
\newcommand{\mS}{\mathcal{S}}
\newcommand{\mT}{\mathcal{T}}
\newcommand{\mbC}{\mathbb{C}}
\newcommand{\mbI}{\mathbb{I}}
\newcommand{\mbR}{\mathbb{R}}
\newcommand{\mbZ}{\mathbb{Z}}
\newcommand{\lb}{\left(}
\newcommand{\rb}{\right)}
\newcommand{\nn}{\nonumber}
\newcommand{\p}{\partial}
\newcommand{\M}{\mathcal{M}}
\newcommand{\sgn}{\text{sgn}}
\begin{document}

\makeatletter
\gdef\@fpheader{}
\makeatother

\title{\Large{Modular Intersections, Time Interval Algebras and Emergent AdS$_2$}}
%Emergent AdS from Intersection of Future Algebras
%\thanks{A footnote to the article title}%
\author[a,1]{Nima Lashkari,}
\author[a,2]{Kwing Lam Leung,}
\author[a,b,3]{Mudassir Moosa,}
\author[a,c,4]{Shoy Ouseph}
\affiliation[a]{Department of Physics and Astronomy, Purdue University, West Lafayette, IN 47907, USA}
\affiliation[b]{Department of Physics and Center for Theory of Quantum Matter, University of Colorado Boulder, Boulder, CO 80309, USA}
\affiliation[c]{New York University Abu Dhabi, P.O. Box 129188, Abu Dhabi, United Arab Emirates}

\emailAdd{$^1$nima@purdue.edu}
\emailAdd{$^2$leung60@purdue.edu}
\emailAdd{$^4$shoy.o@nyu.edu}
\emailAdd{$^3$mudassir.moosa@colorado.edu}

\abstract{We compute the modular flow and conjugation of time interval algebras of conformal Generalized Free Fields (GFF) in $0+1$-dimensions in the vacuum. For non-integer scaling dimensions, for general time intervals, the modular conjugation and the modular flow of operators outside the algebra are non-geometric. This is because they involve a Generalized Hilbert Transform (GHT) that treats positive and negative frequency modes differently. However, the modular conjugation and flows viewed in the dual bulk AdS$_2$ are local, because the GHT geometrizes as the local antipodal symmetry transformation that pushes operators behind the Poincar\'e horizon. These algebras of conformal GFF satisfy a {\it Twisted Modular Inclusion} and a {\it Twisted Modular Intersection} property. We prove the converse statement that the existence of a (twisted) modular inclusion/intersection in any quantum system implies a representation of the (universal cover of) conformal group $PSL(2,\mathbb{R})$, respectively. We discuss the implications of our result for the emergence of AdS$_2$ geometries in large $N$ theories without a large gap. Our result applied to higher dimensional eternal AdS black holes explains the emergence of two copies of $PSL(2,\mathbb{R})$ on future and past Killing horizons.}

\maketitle

%\tableofcontents

\section{Introduction}

In holography, in the limit of small gravitational constant ($G_N$), the vacuum and thermofield double states of a large class of quantum systems with $N^2\sim 1/G_N$ local degrees of freedom are universally dual to quantum fields living on Anti-de Sitter space (AdS) and eternal black holes, respectively \cite{maldacena2003eternal}. 
In the strict limit $N=\infty$, on the boundary, we have Generalized Free Fields (GFF), and in the bulk, we have massive free fields living on the classical geometry with no back-reaction. The choice of the spectral densities of the GFF commonly discussed in holography corresponds to the large $N$ limit of gauge theories with large t'Hooft coupling. An important question is: What are the most general spectral densities consistent with the emergence of a dual bulk? Since such geometries do not necessarily originate from a strongly coupled gauge theory with large $N$ and a large gap in the spectrum of conformal primaries, it is natural to interpret them as {\it Stringy spacetimes}. Recent developments in operator algebraic approaches to quantum gravity, and Quantum Field Theory (QFT) have provided new tools to address this question.

A key recent development was the realization that in a large $N$ theory, in the strict $N\to \infty$ limit, the set of single-trace operators can form nontrivial nets of von Neumann algebras associated with finite and half-infinite time bands \cite{leutheusser2023emergent,leutheusser2023causal,witten2022gravity}. In particular, focusing on the spherically symmetric mode, or on a $0+1$-dimensional boundary theory, for specific choices of spectral density, there are type III$_1$ von Neumann algebras associated with time intervals.\footnote{In \cite{Furuya:2023fei}, it was shown that if the spectral density is Lebesgue-measurable the resulting von Neumann algebra is type III$_1$.} When it exists, the algebra of a half-infinite time-band from some time $t$ until eternity is an example of a {\it future subalgebra}. For any dynamical system, if the set of all operators from some time until eternity forms a proper subalgebra of observables we call that subalgebra a future subalgebra with respect to that time evolution.\footnote{In this work, we are primarily concerned with modular time evolution. At the risk of confusing the reader, in the remainder of this work, we will refer to modular future algebras, simply as future algebras.} The connections between the spectral density of GFF, future algebras, and the emergence of the horizon and its entropy have been further explored in  \cite{leutheusser2022subalgebra,chandrasekaran2023large,ouseph2024local,gesteau2023large,gesteau2024emergent,gesteau2024explicit}.
 
It was pointed out in \cite{ouseph2024local,gesteau2024emergent,Gesteau:2024rpt} that the ergodic properties of the modular flow of future algebras of GFF are tied to the spectral density. Modular flows that are chaotic enough (quantum Kolmogorov systems) result in the emergence of a {\it Stringy horizon} and generate an algebra of null translation and dilatation on it. So far, progress on deducing the existence of horizon from the universal ergodic properties of the future algebras, for the most part, have relied on a fundamental theorem in operator algebraic ergodic theory called {\it Half-Sided Modular Inclusion} (HSMI) theorem \cite{borchers2000revolutionizing,araki2005extension}; see Appendix \ref{App:Hsmi} for the statement of the theorem.\footnote{An intimately tied theorem is the so-called Half-Sided Translation theorem (HST); see \cite{borchers2000revolutionizing,ouseph2024local} and Appendix \ref{App:Hsmi}.} This theorem reconstructs the Lie algebra of null translations and null scaling on the future/past Killing horizons from the modular operators of the algebra and its subalgebra.\footnote{To simplify notation, we denote the future (past) HSMI with HSMI$+$ (HSMI$-$).}

In this work, we contribute to the connection between boundary time interval algebras and the emergence of geometry both at the technical and conceptual levels. At the technical level, we prove the following two sets of new results:
\begin{enumerate}
    \item {\bf Modular data of time interval algebras of GFF:} In conformal $0+1$-dimensional GFF i.e., $\rho(\omega)=\omega^{2\Delta-1}$, we explicitly compute the modular data (modular flow and modular conjugation) of time interval algebras in Section \ref{sec:modular-flow}. We find that for non-integer $\Delta$, the modular conjugation involves a nonlocal boundary transform we call the Generalized Hilbert Transform (GHT). The modular flow of operators is local until their bulk dual crosses the horizon, at which point, the flow becomes non-local from the boundary point of view (see Lemma \ref{lemma-interval-map}).\footnote{An interesting observation is that, the nonlocality emerges only when the dual bulk operator crosses the horizon. We will see in the example of future algebras that the modular flow of operators that always remain behind the horizon is also local.} However, viewed in terms of the bulk algebras, there are new ``local operators" and ``local algebras" behind the horizon, and the modular flow remains local for all bulk operators (see Lemma \ref{lemma-geometric-bulk-action}). 

    \item {\bf From algebras to conformal group:} We identify two algebraic requirements in terms of modular future algebras and time interval algebras that are each sufficient to imply the emergence of a representation of the universal cover the $0+1$-dimensional conformal group $PSL(2,\mathbb{R})$ by proving a {\it twisted inclusion theorem} (see Theorem \ref{twistModularInclusion}) and a {\it twisted modular intersection theorem} (see Theorem \ref{twistedModInt}).
\end{enumerate}
At the conceptual level, we make progress in the following two directions: 
\begin{enumerate}
    \item {\bf Bulk as a local representation of the boundary algebras:} In holography, the bulk dual of every boundary time interval algebra $\mA_{I}$ corresponds to the causal development of the time interval algebra $I$; see Figure \ref{fig:intro} (a). In a duality, there is only one algebra $\mA_I$ and one Hilbert space. The bulk or boundary fields correspond to different fields (frames) we use to generate the same algebra. The advantage of the language of operator algebras is that it is frame-independent. However, as we establish in this work, the notion of locality and Haag's duality are frame-dependent. On the boundary, we have a {\it net} of time interval von Neumann algebras. However, the bulk theory has a larger net because it has ``local" subalgebras corresponding to the regions that extend behind the horizon; see Figure \ref{fig:intro} (a). They do not correspond to any time interval algebras on the boundary, allowing for the possibility of geometrization of non-geometric boundary inclusions.

    \begin{figure}[t]
   \centering
    \includegraphics[width=\textwidth]{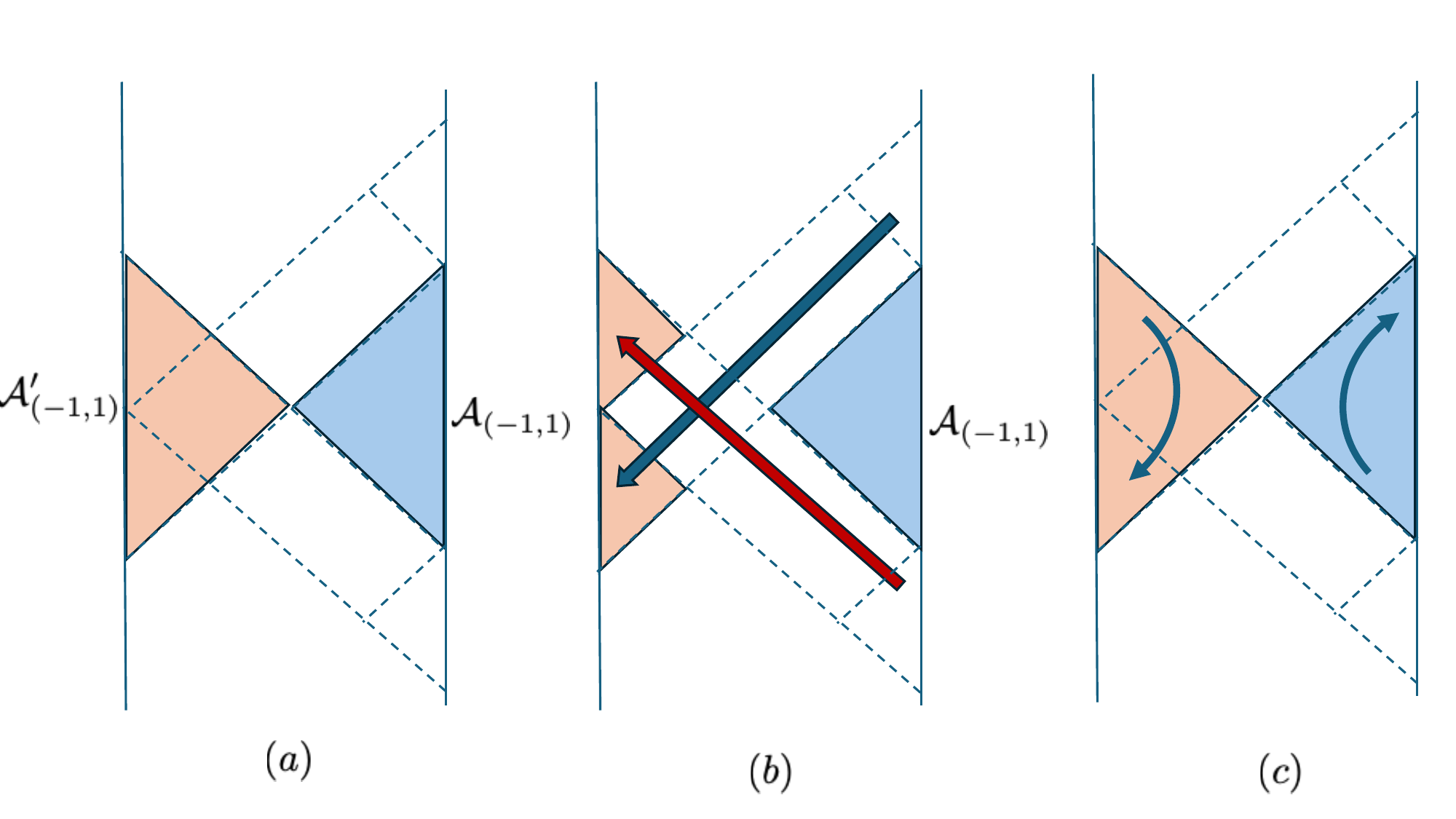}
    \caption{\small Bulk dual of time interval algebras in AdS$_2$. (a) The algebra of massive QFT in the blue region of AdS$_2$ is dual to the boundary GFF algebra of the time interval $(-1,1)$. However, its commutant $(\mA_{(-1,1)})'$, a local bulk algebra, does not correspond to any time interval algebra on the boundary. (b) The algebra $(\mA_{(-1,1)})'$ can be written in terms of the boundary local algebras using the antipodal transform (the red arrow) and its inverse (the blue arrow). The boundary dual of the antipodal transform is the Generalized Hilbert transform (GHT). GHT is a nonlocal transform on the boundary because it treats the negative and positive frequency modes differently. (c) Modular flow of time interval algebras pushes operators behind the horizon. This corresponds to a nonlocal transform on the boundary.}
    \label{fig:intro}
\end{figure}
    
    There are ``local" bulk transformations that are non-local on the boundary. An important example of such transformation in this work is the antipodal transformation in global AdS$_2$ $(\tau,\rho)\to (\tau+\pi,-\rho)$ which is local in the bulk, but is dual to the Generalized Hilbert Transform (GHT) (see Lemma \ref{lamma-anitpodal}); see Figure \ref{fig:intro} (b). From the boundary viewpoint, as the operator crosses the horizon, the transformation develops nonlocal features, however, viewed in the bulk, the spacetime continues behind the horizon and the modular flow remains local for all modular time; see Figure \ref{fig:intro} (c). 
    
    The algebras $\mA_I$ viewed from the boundary GFF fields is neither M\"obius covariant nor local, and violates Haag's duality, whereas, the same algebra viewed from the bulk massive free fields perspective is both M\"obius covariant and local, and satisfies Haag's duality.\footnote{See Section \ref{sec:gff_conformal} for the definition of a local net that satisfies Haag's duality.} We turn this into the following slogan: {\it The bulk emerges from insisting on a local representation of time interval algebras that satisfy Haag's duality.}

\item {\bf AdS$_2$ from Twisted Inclusions and Twisted Modular Intersections:}
In \cite{ouseph2024local}, we showed that for any system with modular future and past algebras, there exists a {\it local Poincar\' e algebra} in the vicinity of the emergent stringy bifurcate Killing horizon.
That is to say if $\mA^+\subset \mathcal{R}$ is HSMI$+$ and $\mA^-\subset\mathcal{R}$ is HSMI$-$ 
then the positive operators 
\begin{eqnarray}
    \pm G_\pm(s)=\mathcal{K}_\mR-\mathcal{K}_{\mA^\pm(s)}\,
\end{eqnarray}
with $\mK_\mA$ the modular Hamiltonian of $\mA$, generate the Lie algebra
\begin{eqnarray}\label{curvaturenearKilling}
&&[G_\pm(s),\mathcal{K}_\mR]=\pm i G_\pm(s)\nn\\
  &&  [G_+(-s),G_-(s)]=e^{-2\pi s}[G_+(0),G_-(0)]\ .
\end{eqnarray}
The large $s$ limit corresponds to zooming near an emergent bifurcate Killing horizon. A scaling limit of the right-hand-side of (\ref{curvaturenearKilling}) will capture the near-horizon curvature \cite{ouseph2024local,de2020holographic,faulkner2019modular}.
In the vacuum AdS$_2$ or any other geometry that is locally AdS$_2$ such as the near horizon geometry of extremal black holes, we have a universal emergent $\mathfrak{psl}(2,\mathbb{R})$ algebra \cite{lin2019symmetries}:
\begin{eqnarray}
    &&[G_\pm,\mK_\mR]=\pm i G_\pm(s)\nn\\
    &&[G_+,G_-]=-2i\mK_\mR\ .
\end{eqnarray}
One of the main results of this paper is that in Theorems \ref{twistedModInt} and \ref{twistModularInclusion}, we identify algebraic requirements in terms of modular future algebras and time interval algebras that are sufficient to imply the emergence of the $\mathfrak{psl}(2,\mathbb{R})$ Lie algebra. Then, insisting on a local representation of this symmetry group implies the emergence of AdS$_2$. In other words, AdS$_2$ emerges from insisting on a local description of the algebra that the modular Hamiltonians of time intervals satisfy.\footnote{For other attempts at reconstruction of bulk geometries using modular flows or in the context of GFF see \cite{faulkner2017bulk,faulkner2019modular,Nebabu:2023iox}.} Starting from the abstract algebraic requirements in Theorems \ref{twistedModInt} and \ref{twistModularInclusion} we explicitly construct a local net of algebras corresponding to QFT living in this emergent AdS$_2$.

\end{enumerate}

The outline of the paper is as follows: First, in Section \ref{sec:gff}, we review GFF and their time interval von Neumann algebras. This Section is included to set the notation, and basic definitions and keep the presentation self-contained.\footnote{The readers familiar with the quantization of GFF can skip Section \ref{sec:gff}.} Then, in Section \ref{sec:gff_conformal} we specialize to conformal GFF in $0+1$-dimensions, and establish that the commutant of time interval algebras for non-integer $\Delta$ is non-local because it involves the GHT transform in (\ref{GHT}) that treats the positive and negative frequencies differently. In Section \ref{sec:modular-flow}, we explicitly derive the modular flow and conjugation of the future algebra and time interval algebras in conformal GFF. We discuss the bulk dual AdS$_2$ in Section \ref{sec:bulk-dual}, we show that the bulk dual of the GHT is the local antipodal transform that sends operators behind the Poincar\'e horizon. We establish that the boundary modular flows viewed in terms of the bulk fields become local. In Section \ref{sec:Twisted}, we identify the {\it twisted inclusion} property and the {\it twisted modular intersection} property, each as a sufficient condition for the emergence of a representation of the universal cover of $PSL(2,\mathbb{R})$, which henceforth we will denote by $\widetilde{PSL}(2,\mbR)$. 
Starting from abstract algebras that satisfy the conditions of twisted modular intersection and/or twister inclusion theorems we explicitly construct a {\it local} and $\widetilde{PSL}(2,\mbR)$ covariant net of algebras in an emergent global AdS$_{2}$ spacetime.
Finally, in Section \ref{Sec:Stringy spacetime} we discuss how our results fit without the ergodic hierarchy of von Neumann algebras, and the emergence of AdS$_2$. We end with a brief discussion of generalization to higher dimensions and black hole spacetimes.

\section{$0+1$-dimensional GFF}\label{sec:gff}

A $0+1$-dimensional theory of Generalized Free Fields (GFF) is a collection of harmonic oscillators of frequencies $\omega$ such that
\begin{eqnarray}\label{CCR}
&&[a_\omega,a_{\omega'}^\dagger]=\delta(\omega-\omega')\ .
\end{eqnarray} 
In general, we consider a continuum of frequencies with density $\rho(\omega)$ for every positive frequency $\omega$. Equivalently, we can  describe GFF in terms of a collective field $\varphi_\rho$:
\begin{eqnarray}
        &&\varphi_\rho(t)=\int_0^\infty d\omega\, \sqrt{\rho(\omega)}(e^{-i\omega t}a_\omega+e^{i\omega t}a_\omega^\dagger)\ .
\end{eqnarray}
Defining $a_{-\omega}=a_\omega^\dagger$ and $ \varphi_\rho(\omega)=\sqrt{2\pi \rho(|\omega|)} a_{\omega}$ we write 
\begin{eqnarray}
        \varphi_\rho(t)&&=  \frac{1}{\sqrt{2\pi}}\int_{-\infty}^\infty d\omega\, e^{-i\omega t}\varphi_\rho(\omega)\ .
       \end{eqnarray}
The commutator of the collective fields is
\begin{eqnarray}\label{Gcrho}
    &&G_{c,\rho}(t,t')=[\varphi_\rho(t),\varphi_\rho(t')]=\int_0^\infty d\omega \rho(\omega)\lb e^{-i\omega(t-t')}-e^{i\omega(t-t')}\rb\nn\\
    &&G_{c,\rho}(\omega)=\sqrt{2\pi}\rho(|\omega|)\text{sgn}(\omega)\ .
\end{eqnarray}
Note that in our notation, $\rho(\omega)$ is defined only for positive $\omega$, whereas $G_c(-\omega)=-G_c(\omega)$ runs over all frequencies, and $G_{c,\rho}(t)$ is purely imaginary.\footnote{Our choice for the inverse Fourier transform is $ f(t) = \frac{1}{\sqrt{2\pi}} \int_{-\infty}^\infty d\omega\, f(\omega) e^{-i\omega t}$.}

A single harmonic oscillator corresponds to the density with positive frequency $m>0$:
\begin{eqnarray}
&&\rho_m(\omega)= \delta(\omega^2-m^2)\Theta(\omega)=\frac{1}{2m}\delta(\omega-m)\nn\\
&&G_c^{(m)}(t)=\frac{1}{2m}\lb e^{-imt}-e^{imt}\rb\ .
\end{eqnarray}
The generalization to higher dimensions is straightforward, as follows \cite{greenberg1961generalized,jost1967general,dutsch2003generalized}: 
A theory of GFF in $d$-dimensional spacetime is defined in terms of the collective field \footnote{We are using the metric signature $(-1,+1,\cdots, +1)$ and that is why in our inverse Fourier transform we have $e^{-i\omega t}$.}
\begin{eqnarray}\label{GFFd+1}
\varphi_\rho(t,x)&&=\int_{V_+}d^{d}k \sqrt{\rho(k^2)} e^{i k_\mu x^\mu}a_k   
\end{eqnarray}
where $V_+$ is the open forward lightcone in momentum space. A single scalar mode of mass $m$ corresponds to $\rho_m(k^2)=\delta(-k_\mu k^\mu+m^2)\Theta(k^0)$.\footnote{Note that a $0+1$-dimensional theory of GFF can come from the zero-mode of a higher dimensional theory in $\mathbb{R}^{d-1,1}$, or $l=0$ mode on $S^{d-1}\times \mathbb{R}$.}

\subsection{Weyl C$^*$-algebras: choice of commutator}\label{sec-gff-comm-choice}

To avoid the singular $t\to t'$ behavior of the GFF commutator in (\ref{Gcrho}) we smooth out the collective field with real functions of time 
\begin{eqnarray}
f\to \varphi(f)=\int_{-\infty}^\infty dt \: f(t) \varphi_\rho(t)= \sqrt{2\pi}\int_0^\infty d\omega \sqrt{\rho(\omega)} (f(-\omega)a_\omega+f(\omega) a_\omega^\dagger)\ . \label{eq:field}
\end{eqnarray}
The commutator of the smeared fields are
\begin{eqnarray}\label{commutatorfg}
[\varphi(f),\varphi(h)]&&= 2\pi\int_0^\infty d\omega \rho(\omega)(f(-\omega)h(\omega)-f(\omega)h(-\omega))\nn\\
&&=2\pi\int_0^\infty d\omega\, [f(\omega),h(\omega)]_\rho
\end{eqnarray}
where
\begin{eqnarray}\label{Poissonbraket}
&&    [f(\omega),h(\omega)]_\rho = \rho(\omega)
   F(\omega)^\dagger \begin{pmatrix}
       1 &0\\0& -1
   \end{pmatrix}H(\omega)\ ,\qquad F(\omega)=\begin{pmatrix}
f_+(\omega)\\
f_-(-\omega)
\end{pmatrix},
\end{eqnarray}
and $f_\pm(\omega)=f(\omega)\Theta(\pm\omega)$ are the positive and negative frequencies parts of the function, respectively.\footnote{Note that for real functions $f(t)$ the positive and negative frequency parts are complex conjugates of each other $f(-\omega)=f(\omega)^*$. In real-time, we have 
\begin{eqnarray}
f_\pm(t)=\mathcal{F}^{-1}(f(\omega)\Theta(\pm \omega))=\pm \frac{i}{2\pi}\int_{-\infty}^\infty dt'\:\frac{f(t')}{(t'-t)\pm i\epsilon}\ .
\end{eqnarray} The anti-symmetric bilinear $[\cdot, \cdot]_\rho$ in (\ref{Poissonbraket}) is a classical Poisson bracket on the space of functions $f$. We chose this notation to avoid confusion between the standard notation $\{\cdot,\cdot\}$ for the Poisson bracket and anti-commutator.} At this point, we make two important comments:
\begin{enumerate}
\item {\bf One-particle Hilbert space:} We have achieved our goal of regulating the singularity in $t\to t'$ of (\ref{Gcrho}) only if we choose functions $f$ and $g$ such that the integral in (\ref{commutatorfg}) is finite. To ensure this, we choose $f$ to be square-integrable with respect to the measure $d\mu=d\omega\rho(|\omega|)$. Mathematically, we write $f\in L^2(d\mu,\mathbb{R}_t)$. For example, in the special case of a GFF theory with $\rho(\omega)=\omega^{2\Delta-1}$ with $\Delta>1/2$, we can ensure this by choosing functions that fall off faster than $|\omega|^{-\Delta}$ at large frequencies $|\omega|$. There can be functions in the $L^2$ space that commute with all other functions in this space. In this case, we quotient the $L^2$ space by these functions to obtain a Hilbert space that we refer to as the {\it one-particle} Hilbert space.

\item {\bf Complex-time domain:} It is convenient to define projections to the positive frequency $P_+=\Theta(\omega)$ and negative frequencies $P_-=\Theta(-\omega)$ and the functions $f_\rho(\omega)=\sqrt{\rho(|\omega|)}f(\omega)$ so that 
\begin{eqnarray}
[f(\omega),h(\omega)]_\rho= F_\rho(\omega)^
\dagger
\begin{pmatrix}
    1 & 0\\
    0 & -1
\end{pmatrix}
H_\rho(\omega),\qquad F_{\rho}(\omega)=\begin{pmatrix}
(P_+f_\rho)(\omega)\\
(P_- f_\rho)(-\omega)
\end{pmatrix}
\end{eqnarray}
because $(P_+f_\rho)^*(\omega)=(P_-f_\rho)(-\omega)$. In the time domain, $P_+(f_\rho)$ ($P_-(f_\rho)$) is analytic in the lower (upper) complex $t$-plane.    
\end{enumerate}
A Weyl operator can be defined through its series expansion
\begin{eqnarray}\label{eq-weyl-def}
    W(f)  = \sum_{n=0}^\infty \frac{i^n \lb \varphi(f)\rb^n}{n!}= e^{i\varphi(f)} 
\end{eqnarray}
or the spectral decomposition of $\varphi(f)$. Since the commutator of GFF is central (proportional to the identity operator), using the Baker-Campbell-Hausdorff (BCH) expansion we find \footnote{When $[X,[X,Y]] = 0 = [Y,[X,Y]]$, the BCH expansion gives $e^X e^Y = e^{X+Y+\frac{1}{2}[X,Y]}$.}
\begin{eqnarray}\label{weyl-algebra}
    W(f)W(h)  &=&   e^{-\frac{1}{2}[\varphi(f), \varphi(h) ] } W(f+h)  \ .
\end{eqnarray}
With the multiplication rule above the Weyl operators form a {\it Weyl group}. Taking the closure of the formal sum of Weyl operators in operator norm topology gives a {\it Weyl C$^*$-algebra} which we denote by $\mB$.\footnote{In our notation, we denote C$^*$-algebras by $\mathcal{B}$ and von Neumann algebras by $\mathcal{A}$.} Weyl operators $W(f)$, $W(h)$ commute if and only if $[\varphi(f), \varphi(h) ] = in2\pi, n\in\mathbb{Z}$. However, a Weyl operator $W(f)$ belongs to the center if and only if for all $W(h)$ the smeared fields $\varphi(f)$ and $\varphi(h)$ commute. Since we ensured that none of our $\varphi(f)$ commutes with all $\varphi(h)$, the resulting Weyl algebra is a factor i.e., it has a trivial center.

\subsection{Quasi-free states}

In physics, we would like a Hilbert space representation of the GFF Weyl C$^*$-algebra. To do this, we need to choose a state. As before, we build intuition using the example of a single harmonic oscillator. In a harmonic oscillator, the Hamiltonian is proportional to the number operator, therefore it is natural to consider the following two choices: pure vacuum state $\ket{0}\in \mH$ and the thermal state $e^{-\beta H}/Z$. The vacuum is defined to be the unique state killed by the annihilation operator $a\ket{0}=0$, whereas the canonically purified thermal state i.e., the thermofield double $\ket{TFD}\in \mH_L\otimes \mH_R$, is defined by the detailed balance condition
\footnote{Detailed balance is the assumption that in equilibrium $p_n P_{n\to n+1}= p_{n+1} P_{n+1\to n}$ where $p_i$ are the equilibrium distribution of microstates $i$ and $P_{i\to j}$ is the probability of transition from microstate $i$ to microstate $j$. In equilibrium $p_n=e^{-\beta \omega (n+1/2)}/Z$, the annihilation operator ``$a$" represents the transition $n+1\to n$, and the creation operator ``$a^\dagger$" the reverse transition $n\to n+1$.}
\begin{eqnarray}\label{detailedbalance}
&&e^{\beta \omega/4} a^{(L)}\ket{TFD}=e^{-\beta \omega/4}(a^{(R)})^\dagger\ket{TFD}\\
&&\ket{TFD}=\frac{1}{\sqrt{Z}}\sum_n e^{-\beta \omega (n+1/2)/2}\ket{n,n}\ .
\end{eqnarray}
Vacuum is the pure state we find in the zero temperature limit $\lim_{\beta\to \infty}\ket{TFD}=\ket{0,0}$. 
The thermal two-point function as a function of time is 
\begin{eqnarray}
G(t;\beta)=\frac{\cos(\omega(t+i\beta/2))}{2\omega\sinh(\beta \omega/2)}\ .
\end{eqnarray}
Splitting the two-point function into its symmetric and anti-symmetric part, for positive $\omega$ we define \footnote{Note that since the Fourier transform of $G(-t)$ is $G(-\omega)$. The symmetric/anti-symmetric parts of the correlator in the time and frequency domain are Fourier transforms of each other.}
\begin{eqnarray}
&&G_{sym}(\omega)= \frac{1}{2}(G(\omega)+G(-\omega)) = \frac{1}{2}(b_+^2(\omega)+b_-^2(\omega))G_c(\omega) \nonumber\\
&&G_{anti}(\omega)= \frac{1}{2}(G(\omega)-G(-\omega)) = \frac{1}{2}G_c(\omega)\nonumber\\
&&G_c(\omega) = \sqrt{2\pi}\rho(\omega)\nn\\
%&&G_c(\omega) = \sqrt{2\pi} \rho(\omega)\quad\text{(normalization of Fourier)}\nn\\
&&b_\pm(\omega) =\lb \frac{e^{\pm\beta \omega/2}}{e^{\beta \omega/2}-e^{-\beta \omega/2}}\rb^{1/2}=\lb Z e^{\pm \beta\omega/2}\rb^{1/2}\ .
\end{eqnarray}
Since the commutator is twice the anti-symmetric part of the two-point function, it follows from (\ref{Gcrho}) that the thermal state can be alternatively defined using the {\it KMS relation}
\begin{eqnarray}\label{KMS}
G(\omega;\beta)=\frac{1}{1-e^{-\beta \omega}}G_c(\omega)= b_+^2(\omega)G_c(\omega)\ .
\end{eqnarray}
We call the thermal state and the vacuum of a harmonic oscillator {\it quasi-free states}. The intuition behind the terminology is as follows: any density matrix is a thermal state with respect to its modular Hamiltonian, and we can always define the raising and lowering operators on the Hilbert space. The quasi-free states are special in that their modular Hamiltonians are proportional to the number operator $N=a^\dagger a$.\footnote{In a general state  $\ket{\rho^{1/2}}=\sum_n \sqrt{p_n}\ket{n,n}$ the dependence of $p_n$ on $n$ need not be of the exponential form $p_n \sim e^{-\alpha n}$.}
In a quasi-free state, we have 
\begin{eqnarray}\label{stateWeylop}
\braket{ W(f)}&&=e^{-\frac{1}{2}\braket{f,f}_\rho}\nn\\
\braket{f,h}_\rho &&\equiv\braket{\varphi(f)\varphi(h)}=  \sqrt{2\pi}\int_{-\infty}^\infty d\omega \: G(\omega;\beta)f(-\omega) h(\omega)\nonumber\\
&& =  \sqrt{2\pi} \int_0^\infty d\omega \: G_c(\omega) \lb b_+^2 f(-\omega) h(\omega) + b_-^2 f(\omega) h(-\omega)\rb \ .
\end{eqnarray}
Note that for real scalar fields with translation invariance, the two-point function satisfies $G(t)^*=G(-t)$ which implies $G(\omega)$ is real ensuring $\braket{f,f}_\rho\geq 0$ for real functions $f(t)$. More generally, we can separate the real and imaginary parts 
\begin{eqnarray}
  &&  \Re\braket{f,h}_\rho=\int_{-\infty}^\infty d\omega\, G(\omega;\beta)\Re(f(-\omega)h(\omega))=\int_{-\infty}^\infty d\omega\, G_{sym}(\omega;\beta)f(-\omega)h(\omega)\nn\\
&&    \Im\braket{f,h}_\rho=\int_{-\infty}^\infty d\omega\, G(\omega;\beta)\Im(f(-\omega)h(\omega))=\int_{-\infty}^\infty d\omega\, G_{asym}(\omega;\beta)f(-\omega)h(\omega)\ .
\end{eqnarray}
In other words, we have 
\begin{eqnarray}
    &&\Re\braket{f,h}_\rho=\frac{1}{2}\braket{\{\varphi(f),\varphi(h)\}}\equiv G_{sym}(f,h)\nn\\
    &&\Im\braket{f,h}_\rho =\frac{1}{2} \braket{[\varphi(f),\varphi(h)]}\equiv G_{anti}(f,h)\ .
\end{eqnarray}
The real part of $\braket{f,h}_\rho$ is a symmetric bilinear on the real vector space of real functions and its imaginary part is an anti-symmetric bilinear on this real vector space. In the notation of (\ref{Poissonbraket}) we can write
\begin{eqnarray}\label{innerproduct}
&&\braket{f,h}_\rho \equiv 2\pi \int_0^\infty d\omega \braket{f(\omega),h(\omega)}_\rho \nn \\
&&\braket{f(\omega),h(\omega)}_\rho =  F_\rho^\dagger(\omega)\begin{pmatrix}
    b_+^2(\omega) & 0\\
    0 & b_-^2(\omega)
\end{pmatrix}H_\rho(\omega)\ .
\end{eqnarray}
According to the Stone-von Neumann theorem, all representations of the canonical commutation relations are unitarily equivalent. In particular, the linear Bogoliubov transformations below correspond to the unitary transformation that maps the vacuum Hilbert space $\ket{0,0}$ and the $\ket{TFD}$ Hilbert space:
\begin{eqnarray}\label{linearbog}
&&c^{(R)}=b_+a^{(R)}-b_- (a^{(L)})^\dagger\nn\\
&&c^{(L)}=-b_- (a^{(R)})^\dagger- b_+ a^{(L)}
\end{eqnarray}
resulting in the defining equations
\begin{eqnarray}\label{newannihilation}
c^{(R)}\ket{TFD}=c^{(L)}\ket{TFD}=0\ .
\end{eqnarray}
These equations follow from (\ref{detailedbalance}).

In a theory of GFF, we have a collection of harmonic oscillators. A general quasi-free state $\ket{QF}$  corresponds to a state in which each oscillator $\omega$ is in a thermal state of a different temperature $\beta(\omega)$. Defining $\gamma(\omega)\equiv e^{-\beta(\omega)\omega}$ we have
\begin{eqnarray}
&&a^{(L)}\ket{QF}=\gamma(\omega)^{1/2}(a^{(R)})^\dagger\ket{QF}\nn\\
&&\ket{QF}=\int_0^\infty \frac{d\omega}{\sqrt{Z(\omega)}}\sum_n \gamma(\omega)^{(n+1/2)/2}\ket{n(\omega),n(\omega)}\nn\\
&&G_{QF}(\omega)=b_+^2(\omega)G_c(\omega)\nn\\
&&b_\pm(\omega)=\lb \frac{\gamma^{\mp 1/2}(\omega)}{\gamma^{- 1/2}(\omega)-\gamma^{+ 1/2}(\omega)} \rb^{1/2} \ .
\end{eqnarray}
The von Neumann algebra is the double commutant of the Weyl algebra generated by $W(f)$ in a quasi-free state: $\mA=\mB''$.
A GFF theory with a finite number of oscillators $\rho(\omega)$ has support on a finite point number of points, and by Stone-von Neumann theorem, the Hilbert spaces corresponding to all quasi-free states are unitarily equivalent. We only have one von Neumann algebra which is the type I algebra $B(\mH)$. The situation is more interesting when the spectral density $\rho(\omega)$ is supported on an infinite number of points. Then, depending on the choice of state, we might obtain inequivalent von Neumann algebras. The vacuum state can be defined as the vector killed by all $a(f)=\int_{-\infty}^\infty dt\, a(t) f(t)$. More generally, a quasi-free state is defined by the expectation value of Weyl operators in (\ref{stateWeylop}). In other words, the choice of a quasi-free state and its corresponding Hilbert space is fixed by a choice of spectral density $\rho(\omega)$ and functions $\gamma(\omega)$ (or equivalently $b_\pm(\omega)$).

\subsection{Time interval von Neumann algebras}\label{time-intervalvNalgebras}

%\subsection*{Weyl algebra of time-intervals}
\begin{definition}[Time interval algebras]
    Let $I$ be a simply connected open interval on the real line $I\subset \mbR$. We define the Weyl algebra $\mB_I$ on the time interval as the unital polynomial algebra generated by the Weyl operators $W(f)$ for functions $f$ inside the one-particle Hilbert space with support only on the interval $I$ 
    \begin{eqnarray}
        \mB_I = \{W(f)|\text{supp}(f) \subseteq I \subset\mbR , f\in L^2(d\mu,\mathbb{R}_t)\} \ .
    \end{eqnarray}
Given a quasi-free state (as we saw in the previous section) we obtain a Hilbert space representation, and a von Neumann algebra $\mA_I=(\mB_I)''$ associated with $I$.
\end{definition}
See \cite{haag2012local,longoNotes,hollands2018entanglement} for more detail. The choice of spectral density $\rho(\omega)$ defines the one-particle Hilbert space, and a quasi-free state defines a Hilbert space representation, and hence the von Neumann algebras. We have a map $I\to \mA_I$ from time intervals to von Neumann algebras. It is clear that in any theory of GFF the map respects the inclusion relations 
\begin{eqnarray}
    I_1\subset I_2:\qquad \mA_{I_1}\subseteq\mA_{I_2}\ .
\end{eqnarray}
We are interested in two particular types of von Neumann subalgebras:
\begin{enumerate}
    \item {\bf Future/Past von Neumann algebras:} The von Neumann algebra $\mA_{\mbR^+}$ ($\mA_{\mbR^-})$ on the half-infinite positive (negative) real line $\mbR^+$ ($\mbR^-$), when it is a proper subalgebra of $\mA_\mathbb{R}$, respectively.
    \item {\bf Time interval von Neumann algebras:} The Weyl algebra $\mA_I$ corresponding to a finite time interval $I=(t_1,t_2)\subset \mbR$ when it is a proper subalgebra of $\mA_{\mathbb{R}}$.
\end{enumerate}
The time interval Weyl algebras $\mA_I$ are well-defined von Neumann algebras for any choice of spectral density $\rho(\omega)$, including the case of a single harmonic oscillator $\rho(\omega)=\frac{1}{2m}\delta(\omega-m)$ for some $m>0$. However, in that case, the inclusions are trivial because $\mA_I=\mA_{\mathbb{R}}$ for all intervals $I\subset \mathbb{R}$. We do not have time intervals or future algebras in this case.

In systems with well-posed equations of motion, such as a single harmonic oscillator or a quantum field of definite mass and spin, the knowledge of the system for an arbitrarily small time interval is enough to determine the entire evolution of the system. This principle is sometimes referred to as the \textit{time-slice axiom}. Therefore, in a system with the time-slice axiom, we expect $\mA_I=\mA_{\mathbb{R}}$ for any $I$. There are neither future nor time interval von Neumann algebras.

In GFF, with an arbitrary spectral density, the time-slice axiom fails, and we can have non-trivial inclusions of time interval algebras. In the absence of the time-slice axiom, the question arises for how long should we observe the system so that we can predict the entire evolution of the system forever. In terms of time interval algebras, this is the question what is the smallest time $T_{dep}$ such that $\mA_{(-T_{dep},T_{dep})}=\mA_\mathbb{R}$. 
Following \cite{Gesteau:2024rpt}, we call this time $T_{dep}$ the {\it causal depth parameter}.

In systems with a finite causal depth parameter, the map $I\to \mA_I$ is not injective. A system has finite causal depth $T_{dep}$ if $\mA_{(-t_0,t_0)}=\mA_{(-\infty,\infty)}$ for all $t_0>T_{dep}$. It is clear that if there exists operators in $\mA_\mathbb{R}$ that are not in $\mA_I$, then $\mA_I$ is a proper subalgebra of $\mA_\mathbb{R}$. However, importantly the converse is not true. In other words, if the relative commutant of $\mA_I\subset \mA_{\mathbb{R}}$ is non-trivial, it is a proper inclusion, however, there are proper inclusions with trivial relative commutant (singular inclusions) \cite{longo1984solution}.\footnote{The {\it relative commutant} of an inclusion of von Neumann algebras $\mA_1\subset \mA_2$ is the algebra of operators in $\mB$ that are not in $\mA$ i.e., $\mA_2 \cap \mA_1'$.} 
Singular inclusions can appear in interacting quantum field theories but not for GFF (see \cite{Gesteau:2024rpt} for the review of a proof).\footnote{A physical example of a singular inclusion is the inclusion of the wedge in an interacting QFT in its null-shifted wedge. In a genuinely interacting QFT in higher than $1+1$-dimensions we cannot localize any operators on a null plane \cite{bousso2015entropy}. As we will see in Section \ref{sec:bulk-dual}, the inclusion of $\mA_{\mathbb{R}_+}\in \mA_{\mathbb{R}}$ in GFF is dual to the algebra of a null interval times sphere in the dual geometry. However, since the bulk fields dual to GFF are free fields, there are operators localized on null interval (in higher dimensions null interval times spheres), hence, the inclusion is not singular.}

In \cite{Gesteau:2024rpt} the authors explored the connections between the causal depth parameter and the spectral density of GFF in in a KMS state. Here, we highlight the following three results from \cite{Gesteau:2024rpt}:
\begin{enumerate}
\item A system has causal depth parameter $T_{dep}$ if the spectral density measure $d\omega \rho(|\omega|)$ is exponential type $T_{dep}$. 
\begin{definition}
In harmonic analysis, a measure $\mu(\omega)$ is exponential type $T$ if $T$ is the infimum for which the exponentials $e^{i\omega t}$ with $t\in [-T,T]$ form a complete basis for $L^2(\mu)$ \cite{poltoratski2013problem}.
\end{definition}

\item For a system with a spectral density that is compactly supported in frequency space we say $T_{dep}=0$. 

\item We can have almost periodic two-point functions and still have $T_{dep}=\infty$. An explicit example is the case of ``primon gas" or ``Riemannium" where the spectral density is discrete as in (\ref{diracdelta}) with $\omega_n$ at the logarithm of prime numbers. 

\end{enumerate}

Following \cite{Gesteau:2024rpt} we say there is {\it stringy connectivity} if $T_{dep}=\infty$. The example of primon gas establishes that we can have infinite exponential type, stringy connectivity but no information loss.\footnote{For a recent discussion on conformal primon gas related to dynamical gravity, see \cite{Hartnoll:2025hly}} To obtain a chaotic GFF theory, it is natural to choose the spectral density to be infinite exponential type so that we have stringy connectivity. We postpone the discussion of ergodic classes of spectral density $\rho(\omega)$ for which the future and time interval algebras are proper subalgebras of $\mA_{\mathbb{R}}$ to Section \ref{Sec:Stringy spacetime}.

\section{Conformal GFF}\label{sec:gff_conformal}

We start this section with a discussion of higher dimensional GFF before specializing to $0+1$-dimensions.
We saw in Section \ref{sec:gff} that the $d$-dimensional GFF collective fields are defined using (\ref{GFFd+1}) with spectral density $\rho(k^2)$. The one-particle Hilbert space is $L^2(\rho(|k^2|)d^{d}k,\mathbb{R}^{d-1,1})$ which is equipped with a unitary positive-energy representation of the Poincar\'e group + scaling with the generators \cite{dutsch2003generalized}
\begin{eqnarray}ƒ
&&P_\mu=\int_{V_+} d^{d}k \: k_\mu a^\dagger(k) a(k)\nn\\
&&M_{\mu\nu}=i\int_{V_+}d^{d}k\: a^\dagger(k)\lb k_\nu \frac{\p}{\p k^\mu}-k_\mu \frac{\p}{\p k^\nu}\rb a(k)\nn\\
&&D=\frac{i}{2}\int_{V_+}d^{d}k\:a^\dagger(k) \lb (k\cdot \p_k)+(\p_k\cdot k)\rb a(k)
\end{eqnarray}
where 
\begin{eqnarray}
&&e^{i\lambda D}\varphi_\rho(x) e^{-i \lambda D}=\varphi_{\rho_\lambda}(\lambda x)\nn\\
&&\rho_\lambda(k^2)=\lambda^{d}\rho(\lambda^2 k^2)\ .
\end{eqnarray}
If the spectral density has homogeneous weight $k^{\Delta-d/2}$, under scaling, the collective GFF field transforms locally
\begin{eqnarray}
e^{i\lambda D}\varphi_\rho(x) e^{-i \lambda D}=\lambda^\Delta \varphi_{\rho}(\lambda x)\ .
\end{eqnarray}
Therefore, we call a GFF theory with spectral density $\rho(k^2)=k^{2\Delta-d}$ a {\it conformal GFF} in $d$-spacetime dimensions.

The vacuum two-point function of conformal GFF transforms covariantly under the Poincar\'e + scaling transformations. In other words, these transformations act as point transformations on $\mathbb{R}^{d-1,1}$ and the generators are independent of $\Delta$. In addition to the Poincar\'e + scaling, the conformal group also includes special conformal transformations. Unlike Poincar\'e + scaling generators, special conformal transformations do not preserve Minkowski space $\mathbb{R}^{d-1,1}$ because finite transformations can send points to infinity \cite{mack1969finite}.\footnote{In mathematical terms, the conformal group does not admit a transitive global action on Minkowski space. It acts quasi-globally \cite{brunetti1993modular}.} Enlarging the spacetime manifold by including the points at infinity (its conformal compactification) does not help because it results in closed time-like curves. One has to formulate conformal field theory in the Lorentzian cylinder $\mathbb{R}_T\times S^{d-1}$ \cite{luescher1975global}.\footnote{In mathematical terms, the conformal group acts transitively on the orientable universal cover of Minkowski space \cite{brunetti1993modular}. In $d>2$ the orientable universal cover of Minkowski space is infinite sheeted and diffeomorphic to $\mathbb{R}_t\times S^{d-1}$. In $d=1$ it is $\mathbb{R}$, and in $d=2$ it is $\mathbb{R}^{1,1}$. However, as argued by \cite{brunetti1993modular} the physically relevant manifold to consider is $\mathbb{R}_t\times S^{d-1}$ for all $d$.} The universal cover of the conformal group $\widetilde{SO}(d,2)$ acts on the Lorentzian cylinder. 

In conformal GFF, the generator of special conformal transformations depends on $\Delta$ :
\begin{eqnarray}
K^{(\Delta)}=\int_{V_+}a^\dagger(k)\lb \frac{\p}{\p k^\alpha}k_\mu \frac{\p}{\p k_\alpha}-(k\cdot \p_k)\frac{\p}{\p k^\mu}-\frac{\p}{\p k^\mu}(\p_k\cdot k)+(\frac{d}{2}-\Delta)^2 \frac{k^\mu}{k^2}\rb a(k)\ .
\end{eqnarray}
They do not act as point transformations, except in the special case $\Delta=d/2$. We will see this explicitly below in the example of $0+1$-dimensional GFF.

\subsection{Representation of $PSL(2,\mbR)$ and its universal cover $\widetilde{PSL}(2,\mbR)$}

\label{sec-Mobius}

In $0+1$-dimensions, conformal GFF has spectral density $\rho(\omega)=\omega^{2\Delta-1}$. The analog of Poincar\'e + scaling is the group $t\to a t+b$ generated $H$ (generator of translations) and $D$ (generator of scaling). In addition, the conformal group includes special conformal transformations $t\to t/(ct+1)$ and the full $0+1$-dimensional restricted conformal group is $SO^+(1,2)\simeq PSL(2,\mathbb{R})$:
\begin{eqnarray}\label{psl2r_action}
    t\to \frac{at+b}{ct+d}\ , \qquad ad-bc=1
\end{eqnarray}
for real $a,b,c,d$ quotiented out by $(a,b,c,d)\sim (-a,-b,-c,-d)$. In terminology motivated by the dual geometry, we refer to the $t$-coordinates as the ``Poincar\'e coordinates" (see Section \ref{sec:bulk-dual} for more detail). To obtain the full conformal group, we only need to add to the $t\to a t+b$ group the inversion $t\to -1/t$.\footnote{A special conformal transformation is an inversion followed by translation followed by another inversion.} Inversion sends $t=0$ to infinity. Adding a point at $t=\infty$ results in a circle, and taking the universal cover of this circle gives a line covered by ``global coordinates" $\tau\in (-\infty,\infty)$. The 
coordinate transformation $\tau=2\tan^{-1}(t)$ maps the Poincar\'e time $t\in (-\infty,\infty)$ to the range $\tau\in (-\pi,\pi)$ in global time. The universal cover $\widetilde{PSL}(2,\mathbb{R})$ acts on global time. The translation $\tau\to \tau+2\pi$ is in the center of $\widetilde{PSL}(2,\mathbb{R})$ sends one Poincar\'e patch to the next. An element of $\widetilde{PSL}(2,\mathbb{R})$ can be labeled by a $PSL(2,\mbR)$ element and an integer that keeps track of shifts of $\tau$ by $2\pi$.

Another useful choice of coordinates is $z=\frac{i-t}{i+t}$ which manifests that adding a point at infinity to $\mathbb{R}$ results in a circle $|z|=1$. This change of variable identifies the $PSL(2,\mathbb{R})$ conformal group with $PSU(1,1)$:
\begin{eqnarray}
    z\to \frac{\alpha z+\beta}{\beta^* z+\alpha^*}
\end{eqnarray}
where $|\alpha|^2-|\beta|^2=1$. In this frame, the inversion map is simply rotation by $\pi$: $z\to e^{i\pi} z$. This transformation squares to identity in $PSU(1,1)$, however, in its universal cover it becomes a rotation by $2\pi$ that takes us to the next patch (see Appendix \ref{app:psl2r} for a review of $PSL(2,\mathbb{R})$ and its universal cover $\widetilde{PSL}(2,\mbR)$). 

To construct a representation of $PSL(2,\mathbb{R})$ we consider functions of time $f(t)$ and translations $T_{b}$ and scaling $D_s$ transformations:
\begin{eqnarray}\label{eq-rep-dilatation}
     T_{b} :  f(t) &\to& f(t-b) \nn \\
    D_{s} :   f(t) &\to& e^{s(\Delta-1)}  f\left(e^{-s} t\right)
\end{eqnarray}
Note that since $e^{-i\omega t}$ provide a basis for functions on the real line $\mathbb{R}_t$, positive frequency functions $f_+(t)$ can be analytically continued to the lower half $t$-plane and negative frequency functions $f_-(t)$ can be analytically continued to the upper half $t$-plane. Since $PSL(2,\mathbb{R})$ is the isometry group of the upper half-plane, the conformal group sends positive frequency functions to positive frequency functions and negative frequency functions to negative frequency functions. This immediately implies that any transformation of the form
\begin{eqnarray}\label{diagonaltransf}
    f_\pm(t)\to c_\pm f_\pm(t)
\end{eqnarray}
with constants $c_\pm$ commutes with all conformal transformations.\footnote{Note that such transformations are, in general, not local in time (not point transformations in $\mathbb{R}_t$)  because they act differently on the positive and negative frequency modes.}

The inversion map $t\to -1/t$ acts as 
\begin{eqnarray}
&& f(t)\to f'(t)\nn\\
&&(P_{\pm}f)'(t)= P_{\pm}\left[|t|^{2(\Delta-1)} e^{\mp 2\pi i \Delta \Theta(t)}  f (-1/t)\right] \, . \label{eq-inv-map}
%&&(P_\pm f)(t)\to (P_\pm f')\nn\\
%&&f'(t)= |t|^{2(\Delta-1)} e^{\mp 2\pi i \Delta \Theta(t)}f (-1/t)\ .
\end{eqnarray}
Squaring this transformation we find
\begin{eqnarray}\label{tau2}
f_\pm(t)\to e^{\pm 2\pi i\Delta}f_\pm(t)
\end{eqnarray}
which is of the form (\ref{diagonaltransf}).
This makes it explicit that for integer $\Delta=n\in\mathbb{Z}_+$ rotation by $2\pi$ is identity, $f(t)$ form a representation of $PSL(2,\mathbb{R})$, and conformal transformations act locally \cite{Yngvason:1994nk}:
\begin{eqnarray}\label{eq-psl2r-rep-loc}
    f(t)\to   (a-ct)^{2(n-1)} f\lb \frac{d\,t-b}{a-ct}\rb\ .
\end{eqnarray}

However, for $\Delta\notin\mathbb{Z}_+$ the transformation in (\ref{tau2}) is a non-trivial transformation commuting with all $PSL(2,\mathbb{R})$ generators. We will see below that, for non-integer $\Delta$, we obtain a representation of the universal cover $\widetilde{PSL}(2,\mbR)$ : 
\begin{eqnarray}\label{universal_psl2r_action}
   (P_\pm f)(t)\to (P_\pm f')(t)= P_\pm\lb |a-ct|^{2(\Delta-1)} e^{\mp 2\pi i\Delta(\sgn(c) \Theta(ct-a) + n)}f\lb \frac{d\,t-b}{a-ct}\rb\rb
\end{eqnarray}
where $n$ is an integer. $\widetilde{PSL}(2,\mbR)$ can be obtained by unwrapping the rotation part of $PSL(2,\mathbb{R})$ in the KAN decomposition, which is unwrapping $S^1$ to its universal cover $\mathbb{R}$. We can label the elements in $\widetilde{PSL}(2,\mbR)$ with $(a,b,c,d)$ and $n$. $(a,b,c,d)$ denotes the element in the base as we label them in $PSL(2,\mathbb{R})$. $n$ denotes the $n$-th copy in the sheets. Rotation by $2\pi$ becomes moving to the next copy, instead of returning to the identity.\footnote{For rational $\Delta$, this is also a representation of a $k$-sheeted covering as $e^{\mp 2\pi i\Delta k} = 1$ for some integer $k$.} We assume $n=0$ (elements in the base) if we do not specify the value of $n$. See also Appendix \ref{app:psl2r}.

\begin{lemma}\label{lemma:psl2r_rep}
    The transformations in (\ref{universal_psl2r_action}) form a representation of the universal covering group $\widetilde{PSL}(2,\mathbb{R})$. When $\Delta$ is a positive integer, the transformation becomes (\ref{eq-psl2r-rep-loc}) forming a representation of the $PSL(2,\mathbb{R})$ group. 
\end{lemma}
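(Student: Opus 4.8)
The plan is to verify directly that the map $g \mapsto \pi(g)$ defined by the formula in~(\ref{universal_psl2r_action}) is a group homomorphism from $\widetilde{PSL}(2,\mbR)$ into the group of linear operators on the one-particle Hilbert space, and then to check that the kernel is trivial so that we genuinely land on the universal cover rather than a quotient. First I would set up coordinates: represent an element of $\widetilde{PSL}(2,\mbR)$ as a pair $(g,n)$ with $g=\begin{pmatrix} a & b \\ c & d\end{pmatrix}\in SL(2,\mbR)$ (modulo the sign identification) and $n\in\mbZ$ the winding index, with composition law $(g_1,n_1)\cdot(g_2,n_2) = (g_1 g_2, n_1+n_2+\mu(g_1,g_2))$ where $\mu$ is the standard integer-valued cocycle recording how many times the product $g_1 g_2$ wraps past the reference chart (this is reviewed in Appendix~\ref{app:psl2r}). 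The $|a-ct|^{2(\Delta-1)}$ prefactor is the familiar cocycle of the (non-unitary) principal-series-type realization and composes correctly by the chain rule for the Möbius action $t\mapsto (dt-b)/(a-ct)$; the only genuinely new content is bookkeeping the phase $e^{\mp 2\pi i \Delta(\sgn(c)\Theta(ct-a)+n)}$.

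The key steps, in order: (1) Record the action of the two one-parameter subgroups $T_b$ and $D_s$ from~(\ref{eq-rep-dilatation}) and of the inversion from~(\ref{eq-inv-map}), and note that these generate $PSL(2,\mbR)$; this reduces the homomorphism check to these generators plus the relation~(\ref{tau2}) that inversion squared equals multiplication by $e^{\pm 2\pi i\Delta}$ on $P_\pm$. (2) Show that for $\Delta\in\mbZ_+$ the phase is trivial and the formula collapses to~(\ref{eq-psl2r-rep-loc}); for this I would substitute a positive integer $\Delta=n$, observe $e^{\mp 2\pi i n(\cdots)}=1$, combine $P_+$ and $P_-$ pieces (their prefactors agree since $|a-ct|^{2(n-1)}=(a-ct)^{2(n-1)}$ for integer exponent), and read off the local point transformation — this also shows $PSL(2,\mbR)$ acts, not just its cover. (3) For general $\Delta$, verify the cocycle identity: compose two transformations $\pi(g_2)$ then $\pi(g_1)$, track how the arguments $ct-a$ transform and how the Heaviside/sign factors add up, and confirm that the total phase equals the phase attached to $g_1 g_2$ with winding shifted by exactly $\mu(g_1,g_2)$. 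The cleanest route is to use the known fact that $PSL(2,\mbR)$ preserves upper/lower half-planes (hence the statement right after~(\ref{diagonaltransf})): the combination $e^{\mp 2\pi i\Delta}$-valued factor is precisely the obstruction to single-valuedness of $(a-ct)^{2(\Delta-1)}$ as $g$ traverses a loop, so the winding integer $n$ is forced, and one checks the two descriptions of that obstruction — via the branch of $\arg(a-ct)$ and via the abstract cocycle $\mu$ — coincide. (4) Finally, triviality of the kernel: if $\pi(g,n)=\mathrm{id}$, evaluating on functions supported near a generic $t$ forces $g$ to act trivially as a Möbius map, hence $g=\pm I$, and then the surviving phase $e^{\mp 2\pi i\Delta n}$ must equal $1$ for both signs, which for irrational $\Delta$ forces $n=0$; I would remark (as the paper's own footnote does) that for rational $\Delta=p/q$ one instead gets a faithful representation of the $q$-sheeted cover, consistent with the claim as stated for the universal cover when $\Delta\notin\mbQ$.

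The main obstacle is step (3): getting the discrete winding bookkeeping exactly right. The subtlety is that $\sgn(c)\Theta(ct-a)$ jumps as $t$ crosses the pole $t=a/c$ of the Möbius map, and when one composes two such maps the pole locations move, so one must carefully check that the sum of the two jump-contributions plus the two integers $n_1,n_2$ reproduces the correct jump-plus-integer for the composite — i.e. that no ``extra'' $2\pi$ is gained or lost at the seam. I expect the honest way to do this is to fix a branch of $\log(a-ct)$ (say, cut along the negative real axis of its argument) consistently across the composition and identify $\sgn(c)\Theta(ct-a)+n$ with $\tfrac{1}{2\pi i}$ times the jump in that continuous branch; then the cocycle identity for $\pi$ becomes literally the statement that continuous logarithms add, which is automatic. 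Once step (3) is framed this way the remaining steps are routine. I would present the proof by first disposing of the integer-$\Delta$ case, then doing the general homomorphism check on the three generators and the squared-inversion relation, then invoking the continuous-branch argument for the cocycle, and closing with the kernel computation.
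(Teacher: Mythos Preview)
Your step~(3) and its ``continuous logarithm'' framing correctly locate the heart of the matter, and the strategy is viable, but the paper's execution differs. The paper composes two \emph{arbitrary} transformations directly: the key lemma is that the prefactor $\chi_\mp(t)=(a-c(t\pm i0^+))^{2(\Delta-1)}$ is analytic in a half-plane, which lets one slide the projector past it, i.e.\ $P_\pm[\chi_\mp(t)\,h(t)]=P_\pm[\chi_\mp(t)\,P_\pm h(t)]$; iterating this for $g_2$ after $g_1$ collapses the nested projections and yields the $g_3=g_2g_1$ formula times a phase $e^{\pm 2\pi i\Delta\,n_{g_3}(t)}$ with an explicit $n_{g_3}(t)$ built from sign and Heaviside functions. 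The proof then concludes by checking $dn_{g_3}/dt=0$ (the algebraic input is $a_2c_3-c_2a_3=c_1$) together with a short case split on whether the $c_i$ vanish. Your branch-of-$\log$ argument could in principle replace that derivative computation by an analyticity/discreteness argument, but you have underspecified the role of the outer $P_\pm$: the transformed function is $P_\pm[\chi_\mp(t)\,f(g^{-1}t)]$, not $\chi_\mp(t)\,(P_\pm f)(g^{-1}t)$, and the projector-insertion step is precisely what makes composition tractable and what turns your half-plane remark into an actual tool. Two smaller points: steps~(1)--(2) do not in fact reduce the check, since the formula is already given for every group element and the homomorphism property must be verified for arbitrary pairs, not merely for generator pairs and a single relation; and step~(4) is unnecessary, as the lemma claims only a representation, not a faithful one.
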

\begin{proof}
    See Appendix \ref{app:proof_lemma_psl2r_rep} for a proof.
\end{proof}
The special case $a=1=d$ and $b=0$ corresponds to the special conformal transformation $t\to t/(1+ct)$. The expression above makes it manifest that special conformal transformations also treat the positive and negative frequencies differently, and are hence non-local. In Appendix \ref{proof:cor-local-net-integer-delta}, we will elaborate on this point further by providing an explicit example.
 
The vacuum of conformal GFF is defined by the two-point function 
\begin{eqnarray}\label{conformal0p1}
&&\braket{\varphi(f)\varphi(h)}=\iint_{-\infty}^\infty dtdt'f(t)h(t') G(t-t')\nn\\
&&G(t-t')=\Gamma(2\Delta) \frac{e^{-i\pi\Delta  \sgn(t-t')}}{|t-t'|^{2\Delta}} \nn\\
&&G(\omega)=\sqrt{2\pi}\omega^{2\Delta-1}\Theta(\omega),\nn\\
&&\quad \rho(\omega) = |\omega|^{2\Delta-1}\text{sgn}(\omega),
\end{eqnarray}
and is invariant under the action of $\widetilde{PSL}(2,\mbR)$ %the universal cover of $PSL(2,\mathbb{R})$ 
as established by the following lemma:
\begin{lemma}\label{lemma-psl-unitary}
    Consider $0+1$-dimensional conformal GFF with the correlator in (\ref{conformal0p1}) and the action of $\widetilde{PSL}(2,\mbR)$%the universal cover of $PSL(2,\mathbb{R})$
    on functions $f$ in (\ref{universal_psl2r_action}). Then,
    \begin{enumerate}
    \item The two-point function is invariant:
    \begin{eqnarray}
    \braket{\varphi(f)\varphi(h)}=\braket{\varphi(f')\varphi(h')}\ .
    \end{eqnarray}
    \item All correlation functions of the Weyl operators are also invariant:
    \begin{align}
        \braket{W(f_{1})\cdots  W(f_{n})} = \braket{W(f'_{1}) \cdots W(f'_{n})}\ . 
    \end{align}
    \end{enumerate}
    Therefore, we have a state-preserving unitary representation of $PSL(2,\mathbb{R})$ when $\Delta$ is an integer or that of $\widetilde{PSL}(2,\mathbb{R})$ when $\Delta$ is not an integer. We denote the unitary by $U_g$. 
\end{lemma}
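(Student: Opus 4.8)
The plan is to reduce everything to the two-point function invariance (part 1), since the correlators of Weyl operators in a quasi-free state are determined by the two-point function via the generating-function formula (\ref{stateWeylop})--(\ref{innerproduct}), and a state-preserving transformation of the one-particle structure automatically lifts to a unitary on the GFF Hilbert space. For part 1, I would split the inner product $\braket{f,h}_\rho$ into its symmetric (real) and antisymmetric (imaginary) parts as in the decomposition following (\ref{innerproduct}). The antisymmetric part is the commutator $G_{anti}(f,h)=\tfrac12\braket{[\varphi(f),\varphi(h)]}$, which is purely classical (state-independent) and equals the Poisson bracket $[\cdot,\cdot]_\rho$; invariance of this piece is essentially the statement that (\ref{universal_psl2r_action}) is a symplectic transformation, and it follows from the fact established around (\ref{diagonaltransf})--(\ref{eq-psl2r-rep-loc}) that for integer $\Delta$ the action is the standard local conformal action (\ref{eq-psl2r-rep-loc}) under which $G_c$ is manifestly covariant, combined with the observation that the only difference for non-integer $\Delta$ is the phase factors $e^{\mp 2\pi i\Delta(\cdots)}$ which, being of the diagonal form (\ref{diagonaltransf}) with $|c_\pm|=1$, drop out of the bilinear $F_\rho^\dagger \,\mathrm{diag}(1,-1)\,H_\rho$ because the phase on $f$ is conjugated against the phase on $h$. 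So the nontrivial content is the symmetric part, i.e. invariance of $G_{sym}$, equivalently the full positive-frequency two-point function $G(\omega)=\sqrt{2\pi}\,\omega^{2\Delta-1}\Theta(\omega)$.

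For that I would work in the position representation with the explicit kernel $G(t-t')=\Gamma(2\Delta)\,e^{-i\pi\Delta\,\sgn(t-t')}/|t-t'|^{2\Delta}$. It suffices to check invariance under generators of $\widetilde{PSL}(2,\mbR)$: translations $T_b$ are trivial since $G$ depends only on $t-t'$; dilations $D_s$ follow from the homogeneity degree $-2\Delta$ of $|t-t'|^{-2\Delta}$ against the Jacobian weights $e^{s(\Delta-1)}$ from each of the two functions plus $dt\,dt'\to e^{2s}dt\,dt'$, giving $e^{2s(\Delta-1)}e^{2s}e^{-2s\Delta}=1$; the only real work is the inversion (\ref{eq-inv-map}), $t\to -1/t$. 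Under inversion one computes $|{-1/t}-({-1/t'})|^{-2\Delta}=|tt'|^{2\Delta}|t-t'|^{-2\Delta}$, the measure gives $dt\,dt'\to dt\,dt'/(t^2 t'^2)$, and the weight $|t|^{2(\Delta-1)}|t'|^{2(\Delta-1)}$ from the two functions; the absolute-value powers combine to $|tt'|^{2\Delta}\cdot|tt'|^{2(\Delta-1)}\cdot|tt'|^{-2}=|tt'|^{4\Delta-4}$... so I must be careful: the correct bookkeeping is $|t|^{2(\Delta-1)}|t'|^{2(\Delta-1)}\cdot t^{-2}t'^{-2}\cdot|tt'|^{2\Delta}=1$, i.e. the magnitudes cancel exactly. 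The remaining check is that the phase factors $e^{\mp 2\pi i\Delta\Theta(t)}$ from (\ref{eq-inv-map}) on the two functions together with the sign function $e^{-i\pi\Delta\sgn(t-t')}$ in $G$ reproduce $e^{-i\pi\Delta\sgn((-1/t)-(-1/t'))}$; since $\sgn(-1/t+1/t')=\sgn(t-t')\cdot\sgn(tt')$ one tracks the case distinctions on the signs of $t,t'$ and verifies the accumulated phases match — this is the one genuinely fiddly computation, and it is exactly the place where the branch data $n$ and the $\Theta$-functions in (\ref{universal_psl2r_action}) are needed to keep the phase single-valued on the universal cover rather than on the circle.

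Once part 1 holds, part 2 is immediate: by (\ref{stateWeylop}) and (\ref{weyl-algebra}) every correlator $\braket{W(f_1)\cdots W(f_n)}$ is a fixed universal function of the collection of inner products $\{\braket{f_i,f_j}_\rho\}$ and commutators $\{[\varphi(f_i),\varphi(f_j)]\}$, all of which are invariant under $f_i\to f_i'$ by part 1 and the symplectic-invariance argument above. Finally, since the $f\mapsto f'$ map preserves the one-particle inner product $\braket{\cdot,\cdot}_\rho$ (hence is an isometry, modulo the null functions that are quotiented out) and preserves the symplectic form, it induces a well-defined unitary $U_g$ on the Fock-type GFF Hilbert space with $U_g W(f) U_g^{-1}=W(f')$ and $U_g\Omega=\Omega$; the group law $U_{g_1}U_{g_2}=U_{g_1g_2}$ is inherited from Lemma \ref{lemma:psl2r_rep}. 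I expect the inversion phase bookkeeping to be the main obstacle; everything else is either homogeneity counting or a formal consequence of the quasi-free structure.
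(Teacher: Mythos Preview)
Your overall strategy --- reduce to the two-point function and then lift to Weyl correlators via the quasi-free structure --- is correct and matches the paper's Part~2. But your treatment of Part~1 differs from the paper's and contains one genuine error in the commutator step, though the error is in a redundant part of your argument.

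The flaw: you claim that for non-integer $\Delta$ the action (\ref{universal_psl2r_action}) differs from the integer-$\Delta$ action only by phases ``of the diagonal form (\ref{diagonaltransf}) with $|c_\pm|=1$'', which then cancel in $F_\rho^\dagger\,\mathrm{diag}(1,-1)\,H_\rho$. This is false: the phase $e^{\mp 2\pi i\Delta\sgn(c)\Theta(ct-a)}$ is $t$-dependent through the step function, so it is \emph{not} of the constant-diagonal form (\ref{diagonaltransf}) and does not cancel by mere conjugation. Fortunately this step is redundant: the commutator is just the imaginary part of $\braket{\varphi(f)\varphi(h)}$, so once you prove invariance of the full two-point function (your inversion check) you get commutator invariance for free. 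You should simply drop the separate symplectic argument.

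On the comparison: your generator-by-generator check (translations trivial, dilations by homogeneity, inversion by explicit sign-case analysis) is valid but more laborious than the paper's route. The paper handles a general group element in one stroke by observing that $\chi_\pm(t)=|a-ct|^{2(\Delta-1)}e^{\pm 2\pi i\Delta\sgn(c)\Theta(ct-a)}=(a-c(t\mp i0^+))^{2(\Delta-1)}$ are themselves positive/negative-frequency functions, and that the vacuum kernel $G(\omega)\propto\Theta(\omega)$ couples only $P_-f$ to $P_+h$. These two facts let the outer projections in (\ref{universal_psl2r_action}) be pushed inside and make the phase bookkeeping automatic: one computes $G(t_1'-t_2')=G(t_1-t_2)\chi_+(t_1)\chi_-(t_2)(a-ct_1)^{2}(a-ct_2)^{2}$ and the $\chi_\pm$ factors are absorbed directly into the transformed smearing functions without any case split on $\sgn(t),\sgn(t')$. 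Your inversion check would eventually reproduce this, but the analyticity observation is what makes the ``fiddly'' phase matching systematic rather than ad hoc.
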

\begin{proof}
See Appendix \ref{app:proof_lemma_psl2r_unitary} for a proof.
\end{proof}
The proof of the above lemma manifests that for the vacuum correlators to remain conformally invariant in the case of non-integer $\Delta$, the special conformal transformations must treat the positive and negative frequencies differently; hence they are nonlocal.

\subsection{Generalized Hilbert Transformation (GHT)}

The expression in (\ref{tau2}) motivates defining the {\it Generalized Hilbert Transform} (GHT):
\begin{eqnarray}\label{GHT}
    H_\Delta:f_\pm(\omega)\to e^{\mp i\pi \Delta}f_\pm(\omega)
\end{eqnarray}
or equivalently, in the notation of \eqref{Poissonbraket}
\begin{eqnarray}\label{GHT2}
H_\Delta:F(\omega)\to \begin{pmatrix} e^{- i\pi\Delta} & 0\\
0 & e^{ i\pi\Delta}\end{pmatrix} F(\omega)
\end{eqnarray}
which also commutes with all $PSL(2,\mathbb{R})$ transformations. The GHT transform squared 
$H_\Delta^2$ is non-trivial in the universal cover. In global coordinates, $H_\Delta^2$ is the translation $\tau\to \tau+2\pi$. It follows from (\ref{innerproduct}) that the GHT also preserves the two-point function because the matrix in (\ref{GHT}) commutes with the diagonal matrix in (\ref{innerproduct}):
\begin{eqnarray}
   \left[ \begin{pmatrix} e^{-i\pi \Delta} & 0\\
0 & e^{i\pi \Delta}\end{pmatrix} , \begin{pmatrix} b_+^2 & 0\\
0 & b_-^2\end{pmatrix}\right]=0\ . 
\end{eqnarray}
We will see in Lemma \ref{lamma-anitpodal} in Section \ref{sec:bulk-dual} that the bulk geometry dual to conformal GFF geometrizes the GHT as a local bulk transformation that pushes operators behind the Poincar\'e horizon (antipodal transform). In the remainder of this work, we denote the unitary transformation corresponding to GHT $H_\Delta$ with $\mT_\Delta$ with the index $\Delta$ to emphasize that it does not correspond to point transformations on $\mathbb{R}_t$ i.e., it is a non-local transform:\footnote{Note that since GHT is a symmetry, we need to represent it unitarily.}
\begin{eqnarray}\label{GHT3}
    \mT_\Delta^\dagger\varphi(f)\mT_\Delta=\varphi(H_\Delta^{-1} (f))\ .
\end{eqnarray}

\begin{corollary}
    The generalized Hilbert transform (GHT) (\ref{GHT}) leaves both the two-point functions in (\ref{conformal0p1}) and all the correlators of Weyl operators invariant. 
\end{corollary}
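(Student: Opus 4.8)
The plan is to show that the GHT is a special case of the $\widetilde{PSL}(2,\mathbb{R})$ action already analyzed in Lemma~\ref{lemma-psl-unitary}, so that the Corollary follows immediately from that lemma together with the Corollary that precedes it in the excerpt. The key observation is Eq.~\eqref{tau2}: squaring the inversion map \eqref{eq-inv-map} produces precisely $f_\pm(t)\to e^{\pm 2\pi i\Delta}f_\pm(t)$, which is $H_\Delta^2$ (up to the sign convention in \eqref{GHT}). More directly, the GHT as written in \eqref{GHT2} is exactly the ``diagonal'' transformation \eqref{diagonaltransf} with $c_\pm=e^{\mp i\pi\Delta}$, and it was already noted there (and again when $H_\Delta$ was introduced) that any such diagonal rescaling of the positive/negative frequency parts commutes with all $PSL(2,\mathbb{R})$ generators and sits inside the $\widetilde{PSL}(2,\mathbb{R})$ representation \eqref{universal_psl2r_action} --- indeed $H_\Delta$ corresponds to the central element realizing half of the shift $\tau\to\tau+2\pi$.

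First I would make the identification precise: exhibit $H_\Delta$ as the element of the representation \eqref{universal_psl2r_action} with $(a,b,c,d)=(1,0,0,1)$ and the half-integer sheet label appropriate to $e^{\mp i\pi\Delta}$ (equivalently, $H_\Delta=U_g$ for the square root of the center-generating rotation by $2\pi$; that this square root exists in the representation is exactly the content of \eqref{tau2}, since the inversion already provides it). Second, invoke part~1 of Lemma~\ref{lemma-psl-unitary} to conclude $\braket{\varphi(f)\varphi(h)}=\braket{\varphi(H_\Delta^{-1}f)\varphi(H_\Delta^{-1}h)}$ for the two-point function \eqref{conformal0p1}. Third, invoke part~2 of the same lemma to get invariance of all Weyl correlators $\braket{W(f_1)\cdots W(f_n)}=\braket{W(H_\Delta^{-1}f_1)\cdots W(H_\Delta^{-1}f_n)}$, which by \eqref{GHT3} is the desired statement.

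Alternatively --- and this is arguably cleaner since it does not route through the inversion map --- I would give a one-line self-contained argument: by \eqref{stateWeylop} the state on Weyl operators is fixed entirely by the bilinear $\braket{f,h}_\rho$ of \eqref{innerproduct}, and by \eqref{innerproduct} that bilinear is $F_\rho^\dagger \operatorname{diag}(b_+^2,b_-^2) H_\rho$; since $H_\Delta$ acts on $F_\rho$ by the diagonal unitary $\operatorname{diag}(e^{-i\pi\Delta},e^{i\pi\Delta})$, which commutes with $\operatorname{diag}(b_+^2,b_-^2)$ (this commutator is displayed just above the Corollary and vanishes because both matrices are diagonal), we get $\braket{H_\Delta f,H_\Delta h}_\rho=\braket{f,h}_\rho$. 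Hence $\braket{W(f)}=e^{-\frac12\braket{f,f}_\rho}$ and all higher Weyl correlators, being products of such Gaussian factors with commutator phases from \eqref{weyl-algebra} (themselves built from the same $\rho$-antisymmetric pairing, which is likewise diagonal-matrix-invariant), are unchanged. The two-point function \eqref{conformal0p1} is the $\Delta$-specialization $\rho(\omega)=\omega^{2\Delta-1}$ of this bilinear, so it too is invariant.

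The main obstacle is essentially bookkeeping rather than mathematics: one must be careful about the sign/phase conventions relating the ``$\pi\Delta$'' in the GHT \eqref{GHT} to the ``$2\pi\Delta$'' in \eqref{tau2} and to the sheet label $n$ in \eqref{universal_psl2r_action}, so that the claim ``$H_\Delta$ is an element of the $\widetilde{PSL}(2,\mathbb{R})$ representation'' is stated with the correct square-root branch --- i.e.\ $H_\Delta$ is a \emph{half}-period translation in global $\tau$, so it is a genuine element of the universal cover only because we have unwrapped the $S^1$. If one prefers to avoid that subtlety entirely, the self-contained diagonal-matrix argument of the previous paragraph sidesteps it completely, since it never needs $H_\Delta$ to be a group element at all --- only that it acts by a diagonal unitary on the one-particle space, which is immediate from \eqref{GHT2}. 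I would present that argument as the proof and mention the group-theoretic interpretation as a remark.
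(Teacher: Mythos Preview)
Your preferred argument---the diagonal-matrix computation showing that $\operatorname{diag}(e^{-i\pi\Delta},e^{i\pi\Delta})$ commutes with $\operatorname{diag}(b_+^2,b_-^2)$, hence $\braket{f,h}_\rho$ is preserved, hence by \eqref{stateWeylop}--\eqref{weyl-algebra} all Weyl correlators are preserved---is correct and is exactly the paper's approach: the commutator of the two diagonal matrices is displayed immediately before the Corollary, and the passage from two-point invariance to Weyl-correlator invariance is the same as in Lemma~\ref{lemma-psl-unitary}, part~2.

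One caution about your first route: the obstacle you flag is more than bookkeeping. The GHT is \emph{not} an element of the $\widetilde{PSL}(2,\mathbb{R})$ representation \eqref{universal_psl2r_action}. In that formula the sheet label $n$ is an integer, so the only pure phases available are $e^{\mp 2\pi i\Delta n}$; the GHT phase $e^{\mp i\pi\Delta}$ would require $n=\tfrac{1}{2}$, which is not allowed. Geometrically, the antipodal map $(\tau,\rho)\to(\tau+\pi,-\rho)$ includes the spatial reflection $\rho\to-\rho$ and thus lies outside the connected isometry group; the paper itself states later (Section~\ref{sec:bulk-dual}) that ``$\mT_\Delta$ is not part of the universal cover of the conformal group, however, it squares to the transformation $\mT^2_\Delta:\tau\to \tau+ 2\pi$.'' So Lemma~\ref{lemma-psl-unitary} cannot be invoked directly for $H_\Delta$, and the self-contained diagonal argument is not merely cleaner but necessary.
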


\subsection{Time interval algebras for integer $\Delta$}

We start this section with a review of the definition of M\"obius covariant, local, net of von Neumann algebras \cite{longo}. 
\begin{definition}\label{def-net-Longo}
A {\it net} of von Neumann algebras is a map from every open interval $I \subset \mathbb{R}$ to a von Neumann algebra $\mA_{I}$ that respects inclusions (isotony):  $\mA_{I_{1}} \subset \mA_{I_{2}}$ if $I_{1} \subset I_{2}$. 
\end{definition}

\begin{definition}[Covariance]
A net of von Neumann algebras in $\mathbb{R}$ is said to be {\it M\"obius covariant} if there exists a strongly continuous positive-energy unitary representation of $PSL(2,\mathbb{R})$ such that
    \begin{align}
        U^{\dagger}_{g} \mA_{I} U_{g} \, = \, \mA_{gI} \, ,
    \end{align}
and there exists a unique $U_{g}$ invariant vacuum state which is cyclic for $\vee_{I} \mA_{I}\, $.
\end{definition}

\begin{definition}(Locality)
A M\"obius covariant net of von Neumann algebras is called {\it local} if $\mA_{I_{1}}$ and $\mA_{I_{2}}$ commute when $I_{1}$ and $I_{2}$ are disjoint intervals
\begin{align}
    I_{1} \cap I_{2} = \emptyset \implies  \mA_{I_{1}} \subseteq (\mA_{I_{2}})' \ .
\end{align}
\end{definition}

\begin{definition}[Haag's duality]
   A local M\"obius covariant net of von Neumann algebras satisfies Haag's duality when each local algebra satisfies Haag's duality
    \begin{eqnarray}\label{Haagdual}
        (\mA_I)'  =  \mA_{I^{c}}
    \end{eqnarray}
     where $I^{c}$ is the complement of the closure $\bar{I}$.
\end{definition}

It is natural to ask for what spectral functions $\rho(\omega)$ the time interval algebra of $0+1$-dimensional conformal GFF is a local M\"obius covariant net of von Neumann algebras.
The vacuum of GFF is cyclic and separating for time interval algebras.\footnote{See Lemma 21 of \cite{Furuya:2023fei} for a proof.}
Intuitively, we can think of $0+1$-dimensional GFF as a QFT with degrees of freedom organized in the time direction, instead of space.\footnote{There are other QFT with algebras in time. In any chiral CFT in $1+1$ dimensions, we can associate algebras to time intervals. In fact, a free chiral massless scalar in AdS$_2$ is equivalent to the conformal GFF on the boundary of AdS$_2$ with $\Delta=1$.} The definitions above imply the following for the time interval algebras of GFF:
\begin{enumerate}
    \item Time interval algebras of GFF form a net of von Neumann algebras in time if and only if the map $I\to \mA_I$ is injective.\footnote{Clearly, this fails to be the case for any theory with finite causal depth parameter $T_{dep}$.}

    \item Conformal GFF for integer $\Delta$ is a local M\"obius covariant net.

    \item Conformal GFF for non-integer $\Delta$ is neither M\"obius covariant nor local, and violates Haag's duality.
\end{enumerate}
The first statement follows trivially from our discussion of the previous section. 

\begin{theorem}[Integer $\Delta$] \label{thm-loc-mobcov-net-integer-delta}
    When $\Delta$ is a positive integer, the map 
    \begin{align}
        I \to \mA_{I} = \mA_{I}'' 
    \end{align}
   is a local M\"obius covariant net of von Neumann algebras and the net satisfies Haag's duality.
\end{theorem}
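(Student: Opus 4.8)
The plan is to verify the three defining properties---M\"obius covariance, locality, and Haag's duality---for the net $I \mapsto \mA_I = \mB_I''$ of conformal GFF at integer $\Delta = n$, working entirely at the level of the one-particle Hilbert space of real functions $f$ supported on $I$ and then promoting statements about this symplectic space to statements about the von Neumann algebras via the standard Weyl/second-quantization dictionary. The crucial simplification at integer $\Delta$, already extracted in this section, is that the $PSL(2,\mbR)$ action \eqref{eq-psl2r-rep-loc} is a genuine \emph{point transformation}: $f(t) \to (a-ct)^{2(n-1)} f\big(\tfrac{dt-b}{a-ct}\big)$, with no frequency-dependent phase, so it manifestly maps functions supported on $I$ to functions supported on $gI$.

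First I would establish M\"obius covariance. By Lemma~\ref{lemma-psl-unitary} (specialized to integer $\Delta$) there is a state-preserving unitary representation $U_g$ of $PSL(2,\mbR)$ with $U_g^\dagger \varphi(f) U_g = \varphi(g^{-1}f)$, where $g^{-1}f$ is the point transformation above. Since this transformation is supported on $gI$ whenever $f$ is supported on $I$, conjugation by $U_g$ carries the generating Weyl operators of $\mB_I$ to those of $\mB_{gI}$, hence $U_g^\dagger \mA_I U_g = \mA_{gI}$ after taking double commutants. Positivity of the energy representation and cyclicity of the vacuum for $\vee_I \mA_I$ are inherited from the GFF construction (the vacuum is cyclic and separating for time-interval algebras, as noted via \cite{Furuya:2023fei}). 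For locality, I would use the commutator formula \eqref{commutatorfg}: $[\varphi(f),\varphi(h)]$ is proportional to $\int_0^\infty d\omega\, \rho(\omega)\big(f(-\omega)h(\omega) - f(\omega)h(-\omega)\big)$, which in the time domain is an integral of $f(t)$ against $h(t')$ weighted by the commutator distribution $G_{c,\rho}(t-t')$. For conformal GFF, $\rho(\omega) = \omega^{2\Delta-1}$ with $\Delta = n$, so $G_{c,\rho}(t-t')$ is a derivative of a delta function, $G_{c,\rho}(t-t') \propto \delta^{(2n-1)}(t-t')$ (equivalently, the commutator is ultralocal), which vanishes when $\mathrm{supp}(f)$ and $\mathrm{supp}(h)$ are disjoint. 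Hence $[\varphi(f),\varphi(h)]=0$ for $f$ supported on $I_1$ and $h$ on $I_2$ disjoint, which by \eqref{weyl-algebra} gives $W(f)W(h) = W(h)W(f)$, so $\mB_{I_1}$ and $\mB_{I_2}$ commute and therefore $\mA_{I_1} \subseteq (\mA_{I_2})'$.

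The substantive step---and what I expect to be the main obstacle---is Haag's duality $(\mA_I)' = \mA_{I^c}$. One inclusion, $\mA_{I^c} \subseteq (\mA_I)'$, is just locality applied to $I$ and $I^c$ (care is needed since $I^c$ is typically two intervals, but for GFF the algebra of a disconnected region is generated by the pieces). The hard inclusion is $(\mA_I)' \subseteq \mA_{I^c}$. The clean way to get this is to invoke the known classification: for the \emph{chiral free field} (the $U(1)$-current net, or equivalently a derivative-of-$\Delta=1$ current), Haag duality on intervals holds, and for integer $\Delta$ the conformal GFF two-point function $G(t) \propto |t|^{-2n} e^{-i\pi n\,\mathrm{sgn}(t)}$ coincides (up to normalization) with the $2(n-1)$-th derivative structure of a chiral current two-point function---indeed, as the footnote observes, $\Delta=1$ GFF on the AdS$_2$ boundary \emph{is} a chiral massless scalar. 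So the concrete route is: (i) identify the one-particle space with support in $I$ as a standard subspace $\mathcal H_I$ of the one-particle Hilbert space; (ii) show that the symplectic complement of $\mathcal H_I$ equals $\mathcal H_{I^c}$, i.e. that any real $f$ with $[\varphi(f),\varphi(h)]=0$ for all $h$ supported on $I$ is (in the $L^2(d\mu)$-closure) supported on $I^c$---this is a statement about the range of the commutator kernel $\delta^{(2n-1)}(t-t')$ restricted to $I$, and uses that differentiation is local so that its "annihilator" is exactly functions vanishing on $I$; (iii) apply Takesaki/Rieffel--van Daele duality for second quantization, which says that for a standard subspace $\mathcal H_I$ the von Neumann algebra $\mA_I$ satisfies $(\mA_I)' = \mA_{\mathcal H_I^{\perp_{\text{symp}}}}$, converting the one-particle symplectic-complement statement into Haag duality. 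Alternatively, and perhaps more self-containedly, one can realize the net as the restriction to the boundary circle of the $U(1)$-current net and cite Haag duality for that net directly (Buchholz--Mack--Todorov, or the general result for chiral CFT nets). The delicate point to get right in either approach is the treatment of the derivative structure: the map $f \mapsto G_{c,\rho} * f$ has a kernel on all of $\mbR$ (polynomials of degree $< 2n-1$), so one must argue these zero modes are quotiented out in forming the one-particle Hilbert space (as the text already flags in the "One-particle Hilbert space" remark) and do not obstruct the symplectic-complement computation. Modulo that bookkeeping, duality for the derivative-current net is standard, and I would present the argument as a reduction to it.
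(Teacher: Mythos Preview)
Your proposal is correct and follows the same overall strategy as the paper: M\"obius covariance from the fact that at integer $\Delta$ the $PSL(2,\mbR)$ action \eqref{eq-psl2r-rep-loc} is a genuine point transformation preserving supports, locality from the ultralocal commutator $G_{c,\rho}(t-t') \propto \delta^{(2n-1)}(t-t')$, and Haag duality by computing the symplectic complement of functions supported in $I$.

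The one place where your route diverges is in how you package Haag duality. The paper does not invoke Rieffel--van~Daele/Araki second-quantization duality, nor does it cite external chiral-CFT results. It argues directly at the Weyl level: $W(g)\in(\mA_I)'$ forces $[\varphi(f),\varphi(g)]=0$ for all $f$ supported in $I$, which in the time domain reads $\int_I(\partial_t^{2\Delta-1}f)\,g\,dt=0$, and from this concludes (somewhat tersely) that $g$ vanishes on $I$, giving $(\mA_I)'=\mA_{I^c}$ and hence $\mA_I=\mA_I''$ in one stroke. Your step~(ii) is exactly this computation, so the core idea is identical; you simply wrap it in the standard-subspace formalism and then lift to von~Neumann algebras via second quantization, whereas the paper treats that lift as implicit. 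Your flagged concern about the polynomial kernel of $\partial_t^{2\Delta-1}$ (functions that are degree-$(2\Delta-2)$ polynomials on $I$ rather than zero) is the genuine analytic subtlety here, and the paper glosses over it; so your more structured approach is a bit more careful, at the cost of importing heavier machinery or external citations. Either route reaches the same conclusion.
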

\begin{proof}
    See Appendix \ref{proof:thm-loc-mobcov-net-integer-delta} for a proof.
\end{proof}

\subsection{Time interval algebras for non-integer $\Delta$}

When $\Delta$ is not an integer, $G_c(\omega)$ is not an entire function and has a branch point at $\omega = 0$. We will see that has important implications for commutant algebra; in particular, it results in a failure of Haag's duality. For non-integer $\Delta$, in Theorem \ref{thm-comm-half-line-general} and Theorem \ref{thm-comm-I-general}, we explicitly work out the commutant algebra of the future algebra and time interval algebras and find that they do not match the algebra of the complementary regions. Then, in the next section, we will see that, in both cases, the modular flow of the operators inside the algebra remains local, however, we find that modular conjugation involves non-local GHTs $\mT_\Delta$ in (\ref{GHT}). The modular flow of operators in the commutant algebra is, in general, non-local. In the special case where the interval is a half-line, i.e. $(t,\infty)$ or $(-\infty,t)$ for some $t$, the modular flow remains local even for operators in the commutant algebra. This is because the origin of nonlocality can be traced back to the dual operators crossing the Poincar\'e horizon. In the case of half-line, the modular flown operators in the commutant are behind the horizon and never cross it. 
\begin{prop} \label{lemma-support-1}
    Suppose $f(t) \in L^2(d\mu,\mbR_t) $. Denote by $f(t)\ast h(t)$ the convolution of the functions $f(t)$ and $h(t)$. 
    \begin{enumerate}
        \item The support of $\Theta(t) \ast f(t)$ is in a half-line $(a,\infty)$ if and only if the support of $f(t)$ is in  $(a,\infty)$. 
        \item The support of $\Theta(-t) \ast f(t)$ is in a half-line $(-\infty,a)$ if and only if the support of $f(t)$ is in  $(-\infty,a)$. 
        \item If $f(t)$ has support in the interval $(a,b)$, then $\Theta(t) \ast f(t)$ has support in $(a,\infty)$ whereas $\Theta(-t) \ast f(t)$ has support in $(-\infty,b) \, $.
    \end{enumerate}
\end{prop}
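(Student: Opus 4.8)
The plan is to argue entirely in the time domain, where
\begin{equation}
\Theta(t)\ast f(t)=\int_{-\infty}^{t}f(s)\,ds ,\qquad \Theta(-t)\ast f(t)=\int_{t}^{\infty}f(s)\,ds ,
\end{equation}
and to invoke only three elementary properties of convolution and distributional differentiation: (i) $\text{supp}(g\ast f)\subseteq\text{supp}(g)+\text{supp}(f)$; (ii) $\text{supp}(\partial_t g)\subseteq\text{supp}(g)$; and (iii) $\partial_t\Theta(t)=\delta(t)$, so that $\partial_t(\Theta(t)\ast f(t))=\delta(t)\ast f(t)=f(t)$ and likewise $\partial_t(\Theta(-t)\ast f(t))=-f(t)$. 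One preliminary remark: $\Theta\ast f$ is a well-defined distribution whenever $\text{supp}(f)$ is bounded below (and $\Theta(-t)\ast f$ whenever it is bounded above), which is exactly the hypothesis in the forward implications of items 1 and 2 and in item 3; in the two converse implications the convolution is assumed from the outset to exist with support in a half-line. Since every statement only constrains $\text{supp}(f)$, one may read the convolutions distributionally throughout, or equivalently reduce to smooth $f$ with the stated support and then pass to the limit.

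Item 1 then follows immediately. If $f$ vanishes on $(-\infty,a]$, then $\int_{-\infty}^{t}f(s)\,ds=0$ for all $t\le a$, i.e.\ $\Theta(t)\ast f(t)$ is supported in $(a,\infty)$; this is just (i) with $\text{supp}(\Theta)=[0,\infty)$. Conversely, if $\Theta(t)\ast f(t)$ is supported in $(a,\infty)$ then, differentiating and using (ii)--(iii), $\text{supp}(f)=\text{supp}(\partial_t(\Theta(t)\ast f(t)))\subseteq\text{supp}(\Theta(t)\ast f(t))\subseteq(a,\infty)$. Item 2 is obtained by the reflection $\tilde f(t):=f(-t)$: a change of variables gives $\Theta(-t)\ast f(t)=(\Theta\ast\tilde f)(-t)$, while $f$ is supported in $(-\infty,a)$ exactly when $\tilde f$ is supported in $(-a,\infty)$, so the claim reduces to item 1 applied to $\tilde f$ with the half-line $(-a,\infty)$.

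Item 3 is then a corollary of the forward halves just established: $(a,b)\subseteq(a,\infty)$ and item 1 give that $\Theta(t)\ast f(t)$ is supported in $(a,\infty)$, while $(a,b)\subseteq(-\infty,b)$ and item 2 give that $\Theta(-t)\ast f(t)$ is supported in $(-\infty,b)$. It is worth recording that these half-lines cannot in general be shrunk: for $t\ge b$ one has $\Theta(t)\ast f(t)=\int_{a}^{b}f(s)\,ds$, a constant that is generically nonzero, and symmetrically $\Theta(-t)\ast f(t)$ is a generically nonzero constant for $t\le a$.

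The only real subtlety is the one flagged in the first paragraph: a general $f\in L^2(d\mu,\mbR_t)$ need not be integrable near $\pm\infty$, so the integral defining $\Theta\ast f$ can diverge unless $\text{supp}(f)$ is one-sided --- which is precisely why the proposition discusses the convolution only in one-sided situations. Handling this point (via the distributional reading, or by density of smooth functions with the appropriate support) is the one place where some care is needed; the rest of the argument is completely elementary.
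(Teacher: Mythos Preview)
Your proposal is correct and follows essentially the same approach as the paper's proof: both write $\Theta(t)\ast f(t)=\int_{-\infty}^t f(s)\,ds$, read off the forward implication directly, and obtain the converse by differentiation (the paper phrases it as ``$\int_{-\infty}^t f(t')\,dt'=0$ for all $t<a$ can only be satisfied if $f(t)=0$ for $t<a$'', which is exactly your $\partial_t(\Theta\ast f)=f$ step). Your additional remarks on distributional well-definedness and on the sharpness of the half-lines in item~3 go slightly beyond what the paper records, but the core argument is the same.
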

\begin{proof}
    See Appendix \ref{proof:lemma-support-1} for a proof. 
\end{proof}

\begin{theorem}  \label{thm-comm-half-line-general}
    The commutant of $\mA_{\mathbb{R}^{+}}$ is the generalized Hilbert transform (GHT) of the algebra of the complementary region
    \begin{align}
        \left(\mA_{\mathbb{R}^{+}}\right)' =  \mT_\Delta^\dagger \mA_{\mathbb{R}^{-}} \mT_\Delta
    \end{align}
    and similarly the commutant of $\mA_{\mathbb{R}^{-}}$ is
    \begin{align}
        \left(\mA_{\mathbb{R}^{-}}\right)' = \mT_\Delta \mA_{\mathbb{R}^{+}}  \mT_\Delta^\dagger 
    \end{align}
    where $\mT_\Delta$ is the unitary representing the GHT in \eqref{GHT3}. As a result, $\mA_{\mathbb{R}^+}$ and $\mA_{\mbR^-}$ are von Neumann algebras.
\end{theorem}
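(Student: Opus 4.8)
The plan is to compute the commutant of $\mA_{\mbR^+}$ directly from the structure of the one-particle Hilbert space, exploiting the quasi-free structure: for quasi-free states, the commutant of a von Neumann algebra generated by Weyl operators $W(f)$ with $f$ ranging over a closed real subspace $K$ of the one-particle space is generated by $W(h)$ for $h$ in the \emph{symplectic complement} $K^{\perp_s}$ (the space of functions whose commutator $[\varphi(f),\varphi(h)]$ vanishes for all $f\in K$, together with the positivity/closure conditions that make the resulting set a standard subspace). So the whole statement reduces to identifying this symplectic complement for $K = \{f : \mathrm{supp}(f)\subseteq\mbR^+\}$ and recognizing it as $\mT_\Delta^\dagger$ applied to the standard subspace for $\mbR^-$.

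First I would write the commutator pairing using \eqref{commutatorfg}--\eqref{Poissonbraket}: $[\varphi(f),\varphi(h)]$ depends only on the ``antisymmetric'' combination $f_+(\omega)h_-(-\omega) - f_-(-\omega)h_+(\omega)$ weighted by $\rho(\omega)$. The key point is that the GHT $H_\Delta$ multiplies $f_+$ by $e^{-i\pi\Delta}$ and $f_-$ by $e^{+i\pi\Delta}$, so conjugating one of the two arguments by $H_\Delta$ contributes a net phase $e^{-i\pi\Delta}\cdot e^{+i\pi\Delta}$-type mismatch — precisely what is needed to turn the ``support in $\mbR^+$'' condition into ``support in $\mbR^-$'' for the partner function. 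Concretely, I expect to show: $h$ commutes with all $W(f)$, $\mathrm{supp}(f)\subseteq\mbR^+$, if and only if $H_\Delta(h)$ (equivalently some fixed power/phase-rotated version) has support in $\mbR^-$. This is where Proposition~\ref{lemma-support-1} enters: the ``support in a half-line'' condition for a function and for its $\Theta$-convolution (which is how the positive/negative frequency projections act in the time domain, cf.\ the footnote after \eqref{Poissonbraket}) are equivalent, and this lets me convert a frequency-domain analyticity statement about $h_\pm$ into the clean time-domain support statement $\mathrm{supp}(H_\Delta h)\subseteq\mbR^-$.

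The second half of the argument is to check that this symplectic-complement subspace is not merely a complex-linear space but a \emph{standard subspace} (closed, real-linear, with the correct positivity with respect to the vacuum two-point function), so that it genuinely generates a von Neumann algebra and, via the quasi-free/standard-subspace correspondence, that algebra is $\mT_\Delta^\dagger\mA_{\mbR^-}\mT_\Delta$. Here I would use that $\mT_\Delta$ is unitary and state-preserving (the Corollary after \eqref{GHT3}, built on Lemma~\ref{lemma-psl-unitary}'s machinery and the commutation of the GHT phase matrix with the $b_\pm^2$ matrix in \eqref{innerproduct}), so conjugation by $\mT_\Delta$ maps the standard subspace of $\mbR^-$ to another standard subspace; it remains only to verify that \emph{that} image coincides with the symplectic complement computed above — an inclusion in one direction is the commutation check just done, and the reverse inclusion follows from taking commutants twice (using that $\mA_{\mbR^-}$ is already known to be a von Neumann algebra, or bootstrapping from the fact that a standard subspace equals its double symplectic complement, Tomita--Takesaki for standard subspaces / the second quantization functor). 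The symmetric statement for $\mA_{\mbR^-}$ is obtained by applying the first result with $\mT_\Delta\to\mT_\Delta^\dagger$, or equivalently by a time-reflection $t\to -t$ which swaps $\mbR^\pm$ and sends $\Delta\to\Delta$ while conjugating the GHT phases.

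The main obstacle I anticipate is the bookkeeping of \emph{which} power of $H_\Delta$ and \emph{which} sign of the phase correctly converts $\mathrm{supp}\subseteq\mbR^+$ into $\mathrm{supp}\subseteq\mbR^-$: because $H_\Delta^2$ is the nontrivial $\tau\to\tau+2\pi$ translation (not the identity) for non-integer $\Delta$, one must be careful that $\mT_\Delta$ and $\mT_\Delta^\dagger$ appear on the correct sides and that no stray $2\pi$-shift is absorbed. The cleanest route is probably to first establish the analogue for a single half-line using the explicit $\mbR^+$-analyticity of $f_+$ in the lower half $t$-plane (and $f_-$ in the upper), translate through Proposition~\ref{lemma-support-1}, and only at the end repackage the phase as the unitary $\mT_\Delta$ via \eqref{GHT3}; the positivity/standardness check is then routine given that $\mT_\Delta$ preserves the inner product \eqref{innerproduct}.
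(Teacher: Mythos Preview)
Your proposal is essentially correct and follows the same strategy as the paper: reduce the commutant computation to identifying the symplectic complement of the half-line subspace, and show via the GHT phase that this complement is exactly $H_\Delta^{-1}$ applied to functions supported on $\mbR^-$. The paper carries this out slightly more concretely by introducing the auxiliary functions $\theta_\pm(t)=\mF^{-1}\big(|\omega|^{2(\Delta-1)}e^{\mp i\pi\Delta\,\sgn(\omega)}\big)\propto \Theta(\pm t)/|t|^{2\Delta-1}$, which package the spectral weight together with the GHT phase into a kernel with half-line support; the commutator then becomes $\int dt\,\partial_t(\theta_-\ast H_\Delta(h))\,f(t)$, and both inclusions follow directly from the support properties of this convolution (this is where Proposition~\ref{lemma-support-1} enters, though strictly for $\theta_\pm$ rather than $\Theta(\pm t)$).

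One small point: your plan to get the reverse inclusion by ``taking commutants twice'' or invoking the double symplectic complement is more circuitous than necessary. The paper simply runs the same convolution computation in reverse: if $[\varphi(f),\varphi(h)]=0$ for all $f$ supported in $\mbR^+$, rewrite the commutator as an integral of $\theta_-\ast H_\Delta(h)$ against $\partial_t f$, deduce that $\theta_-\ast H_\Delta(h)$ vanishes on $\mbR^+$, and then use the support equivalence (Proposition~\ref{lemma-support-1} or its analogue for $\theta_-$) to conclude $H_\Delta(h)$ is supported in $\mbR^-$. This is cleaner than bootstrapping via standard-subspace closure, and also avoids any circularity about whether $\mA_{\mbR^-}$ is already known to be a von Neumann algebra---in the paper that fact is a \emph{consequence} of the theorem, not an input.
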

\begin{proof}
Consider the case for the commutant of $\mA_{\mathbb{R}^{+}}$. We will prove the equality by showing inclusions in both directions $\mT_\Delta^\dagger  \mA_{\mathbb{R}^{-}} \mT_\Delta  \subseteq (\mA_{\mathbb{R}^{+}})'$ and $(\mA_{\mathbb{R}^{+}})' \subset\mT_\Delta^\dagger  \mA_{\mathbb{R}^{-}} \mT_\Delta $. To show the first inclusion, consider the inverse Fourier transform 
\begin{eqnarray}\label{piplusminus}
    \theta_\pm(t) = \mF^{-1}\lb |\omega|^{2(\Delta-1)} e^{\mp i \pi \Delta \sgn(\omega) }\rb = \frac{2\Gamma(2\Delta-1)}{\sqrt{2\pi}} \frac{\Theta(\pm t)}{|t|^{2\Delta-1}} \ .
\end{eqnarray}
Let $W(f) \in \mA_{\mathbb{R}^{+}}$ and let $W(h) \in \mT_\Delta^\dagger \mA_{\mathbb{R}^{-}} \mT_\Delta $. Then the commutator \eqref{commutatorfg} for the smeared fields vanishes
\begin{eqnarray}
    [\varphi(f),\varphi(h)] &=&  \int_{-\infty}^\infty d\omega\, \omega  |\omega|^{2(\Delta-1)} e^{i\pi\Delta\sgn(\omega)}h(\omega) f(-\omega) \nn \\
    &=&  \int_{-\infty}^\infty d\omega\, \omega  \lb \theta_-(\omega) h(\omega) \rb f(-\omega) \nn \\
    &=& \frac{i}{\sqrt{2\pi}}\int_{-\infty}^\infty dt\, \p_t(\theta_- \ast h)(t)\, f(t) = 0
\end{eqnarray}
since $\theta_- \ast h$ is supported only in $\mbR^-$. Hence we have the inclusion $\mT_\Delta^\dagger  \mA_{\mathbb{R}^{-}}  \mT_\Delta \subseteq (\mA_{\mathbb{R}^{+}})'$. To show the reverse inclusion assume that $W(f) \in \mA_{\mathbb{R}^{+}}$ and let $W(h) \in (\mA_{\mathbb{R}^{+}})'$. Then
\begin{eqnarray}
    0 &=& \lbrack \varphi(f), \varphi(h)\rbrack = \int_{-\infty}^\infty d\omega\, \omega  |\omega|^{2(\Delta-1)} h(\omega) f(-\omega) \nn \\
    &=&  \int_{-\infty}^\infty d\omega\, \omega   \lb |\omega|^{2(\Delta-1)} e^{i\pi\Delta\sgn(\omega)} \rb \lb e^{-i\pi\Delta\sgn(\omega)}h(\omega)\rb f(-\omega) \nn \\ 
    &=& -\frac{i}{\sqrt{2\pi}} \int_{-\infty}^\infty dt\, (\theta_- \ast H_\Delta (h))(t)\, \p_t f(t)\ .
\end{eqnarray}
Since $f(t)$ has support in $\mathbb{R}^{+}$, the commutator vanishes for all $f(t)$ only when $H_\Delta (h)$ has support in $\mathbb{R}^{-}$. Hence, using (\ref{GHT3}) we find $ W\big(H_{\Delta}(h)\big)  =  \mT_\Delta W(h)  \mT_\Delta^{\dagger} \in \mA_{\mbR^-}$. Thus, we have shown the reverse inclusion $(\mA_{\mathbb{R}^{+}})' \subset\mT_\Delta^{\dagger} \mA_{\mathbb{R}^{-}} \mT_\Delta$.
The proof for the commutant of $\mA_{\mbR^-}$ follows identically using $\theta_+(t)$. We can see that $\mA_{\mbR^+}$ and $\mA_{\mbR^-}$ are von Neumann algebras since the commutants $(\mA_{\mathbb{R}^{+}})'$ and $ (\mA_{\mathbb{R}^{-}})'$ are von Neumann algebras and $\mT_\Delta$ is a unitary
\begin{eqnarray}\label{eq-vonNeumann-alg-Rp-general}
    \mA_{\mathbb{R}^{+}} &=& \mT_\Delta^{\dagger} (\mA_{\mathbb{R}^{-}})' \mT_\Delta \nn \\
     \mA_{\mathbb{R}^{-}} &=&  \mT_\Delta(\mA_{\mathbb{R}^{+}})' \mT_\Delta^{\dagger} \ .
\end{eqnarray}
\end{proof}
The commutant of the time intervals algebras is derived in the following theorem:
\begin{theorem} \label{thm-comm-I-general}
Consider a bounded interval $(a,b)$. 
The commutant algebra of $\mA_{(a,b)}$ is
\begin{eqnarray}\label{eq-res-comm-I-general-Delta}
    (\mA_{(a,b)})'  = \lb \mT_\Delta^\dagger \mA_{(-\infty,a)} \mT_\Delta \rb \vee \lb \mT_\Delta \mA_{(b,\infty)} \mT_\Delta^\dagger \rb \ .
\end{eqnarray}
As a result, $\mA_{(a,b)}$ is a von Neumann algebra.
\end{theorem}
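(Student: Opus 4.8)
The plan is to mimic the proof of Theorem~\ref{thm-comm-half-line-general}, establishing the equality in \eqref{eq-res-comm-I-general-Delta} by proving inclusions in both directions, and then deducing the von Neumann algebra property from the fact that the right-hand side is manifestly a von Neumann algebra (a join of unitary conjugates of time-interval von Neumann algebras). The key computational input is again the pair of functions $\theta_\pm(t)$ in \eqref{piplusminus}: writing the commutator \eqref{commutatorfg} of two smeared fields $\varphi(f),\varphi(h)$ as an integral against $\omega|\omega|^{2(\Delta-1)}e^{\pm i\pi\Delta\,\sgn(\omega)}$, one recognizes the kernel as $\theta_\mp(\omega)$, so the commutator becomes (up to a constant and a $\partial_t$) an integral of $(\theta_\mp \ast h)(t)\,f(t)$ over $\mathbb{R}$.

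For the easy inclusion ``$\supseteq$'', I would take $W(f)\in\mA_{(a,b)}$, so $\mathrm{supp}(f)\subseteq(a,b)$, and separately take $W(h_1)\in\mT_\Delta^\dagger \mA_{(-\infty,a)}\mT_\Delta$ and $W(h_2)\in\mT_\Delta \mA_{(b,\infty)}\mT_\Delta^\dagger$; by \eqref{GHT3} this means $H_\Delta(h_1)$ is supported in $(-\infty,a)$ and $H_\Delta^{-1}(h_2)$ in $(b,\infty)$. Applying the GHT inside the commutator integral converts $|\omega|^{2(\Delta-1)}e^{i\pi\Delta\sgn(\omega)}h_1(\omega)$ into $\theta_-(\omega)\,H_\Delta(h_1)(\omega)$, and part~3 of Proposition~\ref{lemma-support-1} says $\theta_-\ast H_\Delta(h_1)$ is supported in $(-\infty,a)$, which is disjoint from $\mathrm{supp}(\partial_t f)\subseteq(a,b)$, so $[\varphi(f),\varphi(h_1)]=0$; symmetrically $\theta_+\ast H_\Delta^{-1}(h_2)$ is supported in $(b,\infty)$, giving $[\varphi(f),\varphi(h_2)]=0$. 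Since generators of the join commute with $\mA_{(a,b)}$, the inclusion follows.

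For the reverse inclusion ``$\subseteq$'', I would take $W(h)\in(\mA_{(a,b)})'$ and show it lies in the join. This is the step I expect to be the main obstacle, because unlike the half-line case a single function no longer splits cleanly: here the vanishing of $[\varphi(f),\varphi(h)]$ for all $f$ supported in $(a,b)$ must be parlayed into a statement that $h$ is a sum of two pieces, one whose GHT lives in $(-\infty,a)$ and one whose inverse GHT lives in $(b,\infty)$. I would proceed by decomposing $h = h_a + h_b$ where $h_a$ carries the part that should belong to the left factor and $h_b$ the part for the right factor, using the two convolution operators $\theta_\pm \ast(\cdot)$ as the natural projectors: from the commutator identity one learns that $\theta_-\ast H_\Delta(h)$ has vanishing $t$-derivative on $(a,b)$, hence is constant there, and similarly $\theta_+\ast H_\Delta^{-1}(h)$ is constant on $(a,b)$; one then shows, using the explicit inverses of the convolution operators (equivalently, dividing by the kernels $|\omega|^{2(\Delta-1)}e^{\mp i\pi\Delta\sgn\omega}$ in Fourier space, taking care of the branch point at $\omega=0$), that $h$ decomposes into a term supported after GHT in $(-\infty,a)$ plus a term supported after inverse GHT in $(b,\infty)$, possibly modulo a constant-mode ambiguity that drops out at the level of the Weyl algebra. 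An alternative, cleaner route is to invoke Haag duality for the integer case or, better, to first prove the bulk version (Lemma~\ref{lemma-geometric-bulk-action}) and pull the result back: in the bulk AdS$_2$ the region dual to $(a,b)$ has a genuinely local commutant, and the two wedge-like components of that commutant map under the antipodal-transform dictionary precisely to $\mT_\Delta^\dagger\mA_{(-\infty,a)}\mT_\Delta$ and $\mT_\Delta\mA_{(b,\infty)}\mT_\Delta^\dagger$; I would likely present the additivity/locality argument in the bulk and translate, since that avoids the delicate Fourier-space division. Either way, once both inclusions are in hand, $\mA_{(a,b)} = \big( \mT_\Delta^\dagger \mA_{(-\infty,a)} \mT_\Delta \vee \mT_\Delta \mA_{(b,\infty)} \mT_\Delta^\dagger \big)'$ exhibits $\mA_{(a,b)}$ as a commutant, hence a von Neumann algebra.
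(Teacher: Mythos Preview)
Your forward inclusion argument is exactly the paper's. For the reverse inclusion, the paper takes your route (a): it writes an arbitrary generator of the commutant as $g=g_1+g_2$ and argues directly from the vanishing commutator, via the same $\theta_\pm$ convolutions, that $H_\Delta(g_1)$ is supported in $(-\infty,a)$ and $H_\Delta^{-1}(g_2)$ in $(b,\infty)$. You are right that the decomposition is the crux, and the paper is in fact just as terse as your sketch on how to effect it; neither version spells out the constant-mode bookkeeping you flag. Your alternative route (b) via the bulk should be avoided: Section~\ref{sec:bulk-dual} is explicitly presented as the geometric interpretation of Theorems~\ref{thm-comm-half-line-general} and~\ref{thm-comm-I-general}, so invoking Lemma~\ref{lemma-geometric-bulk-action} here would reverse the paper's logical order.

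There is a genuine gap in your final step. Having shown $(\mA_{(a,b)})' = \text{RHS}$ does \emph{not} by itself exhibit $\mA_{(a,b)}$ as a commutant; taking commutants of both sides gives only $(\mA_{(a,b)})'' = (\text{RHS})'$, and the remaining content is precisely that this equals $\mA_{(a,b)}$. The paper closes this by computing $(\text{RHS})'$ explicitly: using Theorem~\ref{thm-comm-half-line-general} (translated to $a$ and to $b$) one has $\mT_\Delta^\dagger \mA_{(-\infty,a)}\mT_\Delta = (\mA_{(a,\infty)})'$ and $\mT_\Delta \mA_{(b,\infty)}\mT_\Delta^\dagger = (\mA_{(-\infty,b)})'$, so that
\[
(\mA_{(a,b)})'' \;=\; \mA_{(a,\infty)} \wedge \mA_{(-\infty,b)} \;=\; \mA_{(a,b)}\,.
\]
This identification of the double commutant with the original interval algebra is the step your sketch skips.
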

\begin{proof}
    See Appendix \ref{proof:thm-comm-I-general} for a proof.
\end{proof}

\begin{corollary}\label{cor:local-net-integer-delta}
    The map $I \to \mA_{I}$ defines a net of von Neumann algebras. However, for non-integer $\Delta$ the net is neither local nor M\"obius covariant, and Haag's duality is violated.
\end{corollary}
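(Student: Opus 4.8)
The plan is to derive each of the three assertions directly from the structural results already established, with almost no new computation. First, that $I \to \mA_I$ is a net of von Neumann algebras: isotony is immediate from the containment $\mB_{I_1} \subseteq \mB_{I_2}$ whenever $I_1 \subseteq I_2$, which passes to the double commutants $\mA_{I_1} = (\mB_{I_1})'' \subseteq (\mB_{I_2})'' = \mA_{I_2}$. That each $\mA_I$ is genuinely a von Neumann algebra (rather than merely a weakly closed $*$-subspace) for bounded $I$ is the content of Theorem \ref{thm-comm-I-general}, and for half-lines it is Theorem \ref{thm-comm-half-line-general}; together with the definition $\mA_{\mbR} = (\mB_{\mbR})''$ this covers all simply connected open intervals. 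So this part is just a citation.

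Second, the failure of Haag's duality and of locality for non-integer $\Delta$. I would argue by contradiction: Haag's duality would require $(\mA_{(a,b)})' = \mA_{(-\infty,a) \cup (b,\infty)} = \mA_{(-\infty,a)} \vee \mA_{(b,\infty)}$, whereas Theorem \ref{thm-comm-I-general} gives $(\mA_{(a,b)})' = (\mT_\Delta^\dagger \mA_{(-\infty,a)} \mT_\Delta) \vee (\mT_\Delta \mA_{(b,\infty)} \mT_\Delta^\dagger)$. These two von Neumann algebras coincide only if $\mT_\Delta$ acts trivially (up to something in the relevant commutants) on $\mA_{(-\infty,a)}$ and $\mA_{(b,\infty)}$, which it does not: the GHT $H_\Delta$ of a function supported in, say, $(b,\infty)$ generically develops support on all of $\mbR$, by Proposition \ref{lemma-support-1} together with the explicit kernels $\theta_\pm$ in \eqref{piplusminus} — convolution with $\Theta(\pm t)/|t|^{2\Delta-1}$ spreads a compactly supported function onto a half-line, so $H_\Delta(h)$ is a nontrivial mixture that is not supported in $(b,\infty)$. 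To make this airtight one should exhibit a single explicit $h$ supported in $(b,\infty)$ for which $W(H_\Delta(h)) \notin \mA_{(b,\infty)}$; the cleanest route is to test against $\mA_{(-\infty,a)}$-commutation: $W(H_\Delta(h))$ fails to commute with some $W(f)$, $\operatorname{supp} f \subseteq (-\infty,a)$, because the commutator \eqref{commutatorfg} picks up a $\sgn(\omega)$-odd contribution from the phase $e^{\mp i\pi\Delta}$ that does not integrate to zero. For locality, it suffices to note that locality plus the already-established net structure would, by the standard argument, force $\mA_{(b,\infty)} \subseteq (\mA_{(a,b)})'$ and likewise for $(-\infty,a)$, hence $\mA_{(-\infty,a)} \vee \mA_{(b,\infty)} \subseteq (\mA_{(a,b)})'$; but by Haag duality for the \emph{commutant} side (reversing roles) one would then get equality, returning us to the contradiction just obtained. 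Alternatively, and more simply, locality already fails because $\mT_\Delta^\dagger \mA_{(b,\infty)} \mT_\Delta \neq \mA_{(b,\infty)}$ means the would-be complementary algebra sitting inside $(\mA_{(a,b)})'$ is the \emph{twisted} one, and a twisted algebra does not commute with $\mA_{(a,b)}$ itself.

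Third, the failure of M\"obius covariance. Here I would invoke Lemma \ref{lemma:psl2r_rep} and the discussion around \eqref{universal_psl2r_action}: for non-integer $\Delta$ the natural action of the conformal group on one-particle functions is the representation \eqref{universal_psl2r_action} of $\widetilde{PSL}(2,\mbR)$, not of $PSL(2,\mbR)$, because special conformal transformations act with the frequency-dependent phase $e^{\mp 2\pi i \Delta \Theta(\cdot)}$ and therefore do not map the subspace of functions supported in a bounded interval $I$ to the subspace supported in $gI$ — the image acquires support outside $gI$, exactly as with the GHT. Concretely, an inversion $t \to -1/t$ implemented via \eqref{eq-inv-map} sends a function localized in $(a,b)$ (with $0 \notin [a,b]$) to one whose positive/negative frequency parts are multiplied by the non-geometric phase, so the transformed operator is no longer in $\mA_{(-1/a, -1/b)}$ but in its GHT. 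Thus there is no strongly continuous positive-energy representation $U_g$ of $PSL(2,\mbR)$ with $U_g^\dagger \mA_I U_g = \mA_{gI}$; one can only realize $\widetilde{PSL}(2,\mbR)$, and even then the covariance relation holds against the twisted net, not the original one.

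The main obstacle, and the only place requiring genuine care, is the negative claim "these algebras are \emph{not} equal" — the contradictions above all reduce to showing that the unitary $\mT_\Delta$ does not preserve $\mA_{(b,\infty)}$ (equivalently that $H_\Delta$ does not preserve the support condition $\operatorname{supp} \subseteq (b,\infty)$ at the level of the one-particle space modulo the degenerate directions). I expect to handle this by producing one explicit counterexample function and computing a single non-vanishing commutator, rather than any general argument; the support-spreading property of the $\theta_\pm$ kernels in \eqref{piplusminus}, already packaged in Proposition \ref{lemma-support-1}, does essentially all the work, and the remaining task is just to check that the spreading is not accidentally cancelled in the specific test pairing one chooses.
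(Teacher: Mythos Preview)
Your proposal is correct and follows essentially the same strategy as the paper: cite Theorems \ref{thm-comm-half-line-general} and \ref{thm-comm-I-general} for the net property, show locality/Haag duality fail because $H_\Delta$ spreads support, and show M\"obius covariance fails because special conformal transformations carry the frequency-dependent phase. The paper makes two simplifying choices worth noting: for locality it works with the half-lines $\mA_{\mathbb{R}^\pm}$ rather than bounded intervals (so there is no join to worry about --- one just needs $\mA_{\mathbb{R}^-} \not\subseteq (\mA_{\mathbb{R}^+})' = \mT_\Delta^\dagger \mA_{\mathbb{R}^-}\mT_\Delta$), and it dispatches the support-spreading claim by writing $H_\Delta^{-1}(g)(t) = \cos(\pi\Delta)\,g(t) - \sin(\pi\Delta)\,H(g)(t)$ and observing directly that the ordinary Hilbert transform of a function supported in $\mathbb{R}^-$ does not vanish on $\mathbb{R}^+$. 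This is cleaner than your plan of testing a commutator for a bounded-interval pair, and it avoids the muddle in your last locality sentence (``a twisted algebra does not commute with $\mA_{(a,b)}$ itself'' is backwards --- the twisted algebra \emph{is} the commutant; what you need is that the \emph{untwisted} one is not contained in it). For M\"obius covariance the paper likewise gives one explicit counterexample (the special conformal map $c=-1$ acting on $I=(1,2)$) rather than arguing abstractly.
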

\begin{proof}
    See Appendix \ref{proof:cor-local-net-integer-delta} for a proof.
\end{proof}

In Figure \ref{fig:thm4and6} and Figure \ref{fig:commutant-time-interval} of Section \ref{sec:bulk-dual}, we prove a geometric interpretation in the dual bulk geometry for the theorem and corollary above.

\section{Modular flow and conjugation in conformal GFF} \label{sec:modular-flow}

In this section, we compute the modular flow of time interval algebras of GFF. A result from the theory of von Neumann algebras that plays an important role in our derivation of modular flow is
\begin{lemma}\label{stratila}
Modular flow of a von Neumann algebra $\mA$ represented on the vacuum state $\ket{\Omega}$ is the unique state-preserving automorphism $\sigma_s$ of $\mA$
that satisfies the KMS condition:
\begin{eqnarray}
    \forall a_1,a_2\in \mA: \braket{a_1\Omega|\sigma_{s+i}(a_2)\Omega}=\braket{\sigma_s(a_2)\Omega|a_1\Omega}\ , \qquad s\in \mbR\ .
\end{eqnarray}
\end{lemma}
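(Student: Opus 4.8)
The plan is to prove Lemma \ref{stratila}, which is the statement that modular flow is characterized as the unique state-preserving one-parameter automorphism group satisfying the KMS condition with respect to the vacuum state. This is a classical result in Tomita--Takesaki theory (often attributed to the work of Takesaki, and appearing in the books of Str\u{a}til\u{a}--Zsid\'o or Bratteli--Robinson), so the proof proposal below follows the standard route.

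\textbf{Existence.} First I would recall that, since $\ket{\Omega}$ is cyclic and separating for $\mA$ (which holds for time-interval algebras of GFF by the footnote citing \cite{Furuya:2023fei}), Tomita--Takesaki theory produces the modular operator $\Delta_\Omega$ and the modular automorphism group $\sigma_s(a) = \Delta_\Omega^{is} a \Delta_\Omega^{-is}$, which is state-preserving, i.e. $\braket{\Omega|\sigma_s(a)\Omega} = \braket{\Omega|a\Omega}$ because $\Delta_\Omega \ket{\Omega} = \ket{\Omega}$. Then I would verify the KMS condition directly: for $a_1, a_2 \in \mA$, the function $s \mapsto \braket{a_1\Omega|\sigma_s(a_2)\Omega} = \braket{a_1\Omega|\Delta_\Omega^{is} a_2 \Delta_\Omega^{-is}\Omega} = \braket{\Delta_\Omega^{-is} a_1^\dagger \Omega \,|\, \Delta_\Omega^{is} a_2 \Omega}$ extends to a bounded analytic function on the strip $0 \le \Im s \le 1$, using the spectral calculus of $\Delta_\Omega$ together with the fact that $a_2\Omega$ lies in the domain of $\Delta_\Omega^{1/2}$ (so that $\Delta_\Omega^{-is}a_2\Omega$ makes sense as an analytic continuation) — this is where one invokes that $S = J\Delta_\Omega^{1/2}$ with $S a_2 \Omega = a_2^\dagger \Omega$. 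Evaluating the boundary value at $\Im s = 1$ and using $\Delta_\Omega^{1/2} a_2 \Omega = J a_2^\dagger \Omega$ and the antiunitarity of $J$ gives $\braket{a_1\Omega|\sigma_{s+i}(a_2)\Omega} = \braket{\sigma_s(a_2)\Omega|a_1\Omega}$, which is the claimed identity.

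\textbf{Uniqueness.} This is the part I expect to be the main obstacle, and it is the content usually credited to the ``KMS implies modular'' half of the theorem. Suppose $\tau_s$ is another state-preserving automorphism group of $\mA$ satisfying the same KMS condition. The strategy is: state-preservation means $\tau_s$ is implemented on $\mH = \overline{\mA\Omega}$ by a strongly continuous unitary group $U_s$ with $U_s\Omega = \Omega$ and $U_s a U_{-s} = \tau_s(a)$; write $U_s = \Delta_1^{is}$ for a positive operator $\Delta_1$ by Stone's theorem. One then shows the KMS condition for $\tau_s$ forces $\Delta_1$ to satisfy the same defining relations as $\Delta_\Omega$ — concretely, that the closure of $a\Omega \mapsto a^\dagger\Omega$ equals $J_1 \Delta_1^{1/2}$ with $J_1$ antiunitary, which by the uniqueness of the polar decomposition of the closable operator $S: a\Omega \mapsto a^\dagger\Omega$ pins down $\Delta_1 = \Delta_\Omega$. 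The key technical lemma is that the KMS condition, applied to $a_1, a_2$ ranging over $\mA$, determines the analytic structure rigidly enough that the two candidate modular operators must agree; one standard device is to consider matrix elements $\braket{a_1\Omega|\Delta_1^{is}a_2\Omega}$ and $\braket{a_1\Omega|\Delta_\Omega^{is}a_2\Omega}$, show both extend analytically to the strip with the same boundary values on both edges (one edge being the KMS relation, the other being the group property at $s \in \mbR$), and invoke a Phragm\'en--Lindel\"of / three-lines argument to conclude they coincide for all real $s$, hence $\Delta_1^{is} = \Delta_\Omega^{is}$ on a dense set and therefore everywhere.

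\textbf{Shortcut for the GFF setting.} Since in this paper the lemma is only used as a computational tool for quasi-free states, I would remark that an alternative self-contained route exists: one reduces to the one-particle Hilbert space, where the problem becomes identifying a one-parameter group on the standard subspace $K = \overline{\{f : W(f) \in \mA\}}$ (real-linear, closed, with $K \cap iK = 0$ and $K + iK$ dense), for which the modular objects are those of Rindler-type second quantization; the KMS condition on Weyl operators $\braket{W(f)\Omega|\sigma_s(W(h))\Omega}$ reduces via \eqref{stateWeylop} to an analyticity statement for the two-point function pairing $\braket{f, \cdot}_\rho$, and uniqueness follows from the fact that a state-preserving symplectic flow of the standard subspace satisfying this analyticity is unique. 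However, for the purposes of this paper it suffices to cite the general von Neumann algebraic statement, so I would present the existence argument in full and defer the uniqueness proof to the standard references (e.g. \cite{bratteli1997operator}), noting only the three-lines-theorem mechanism above as the conceptual core.
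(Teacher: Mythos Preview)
Your proposal is correct and follows the standard textbook route. However, you should know that the paper does not actually prove this lemma at all: its entire proof reads ``See \cite{stratila1981modular} for a proof.'' So there is nothing to compare against beyond the citation. Your sketch of existence via the analytic continuation of $s\mapsto \braket{a_1\Omega|\Delta^{is}a_2\Omega}$ using $\mA\Omega \subset \mathrm{dom}(\Delta^{1/2})$, and of uniqueness via Stone's theorem plus a three-lines argument identifying the positive generator, is exactly the classical argument one finds in Str\u{a}til\u{a}--Zsid\'o or Bratteli--Robinson, and is more than adequate here.

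One small point of friction: you write $\sigma_s(a) = \Delta_\Omega^{is} a \Delta_\Omega^{-is}$, but the paper's convention (stated just after the lemma) is $\sigma_s(a) = \Delta^{-is} a \Delta^{is}$. This only flips the direction of the strip in the KMS analyticity, but since the lemma is invoked downstream (e.g.\ in Theorem \ref{ModFlowThmInsideAlg}) with the paper's sign, you should align your existence computation accordingly to avoid a spurious mismatch.
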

\begin{proof}
See \cite{stratila1981modular} for a proof.
\end{proof}
In our notation, the modular flow is represented on the Hilbert space using the adjoint unitary action 
\begin{eqnarray}
\forall a\in \mA:\qquad \sigma_s(a)=\Delta^{-is}a\Delta^{is}\in\mA\ .
\end{eqnarray}

\subsection{Future/Past algebras}

Let us first consider the algebra of half-line $\mA_{\mathbb{R}^{+}}$. In this case, we will show that the modular flow is dilatation. More precisely, we will show that the action of the modular flow $\Delta_{\mathbb{R}^{+}}^{is}$ on $W(f) \in \mM_{\mathbb{R}^{+}}$ is given by 
\begin{eqnarray}
    \Delta_{\mathbb{R}^{+}}^{-is}  W(f) \Delta_{\mathbb{R}^{+}}^{is}  =  W\left(D_{2\pi s}[f]\right) \ , \qquad s \in \mathbb{R} \label{eq-modflow-dil}
\end{eqnarray}
where $D_{s}$ is the dilatation defined in \eqref{eq-rep-dilatation} and 
\begin{eqnarray}
    D_{2\pi s}[f](t)  = e^{2\pi(\Delta-1)s}  f(e^{-2\pi s}t)  \ .
\end{eqnarray}

\begin{theorem}[Modular flow]\label{ModFlowThmInsideAlg}
    Consider the Weyl algebra of the half-line $\mA_{\mbR^+}$ in the vacuum state $\Omega$ for arbitrary $\Delta$. The modular flow $\Delta^{is}_{\mbR^+}$ is dilatation $D_{2\pi s}$ and the algebra $\mA_{\mbR^+}$ is a modular future algebra.\footnote{See appendix \ref{App:Hsmi} for a definition of modular future algebras.}
\end{theorem}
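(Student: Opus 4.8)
The plan is to verify the KMS condition of Lemma~\ref{stratila} directly for the candidate automorphism $\sigma_s(W(f)) = W(D_{2\pi s}[f])$ on the Weyl algebra $\mA_{\mbR^+}$ in the vacuum $\Omega$. Three things must be checked: (i) that $\sigma_s$ is a well-defined state-preserving automorphism of $\mA_{\mbR^+}$, (ii) that it satisfies the KMS condition, and (iii) uniqueness, which is automatic once (i) and (ii) hold by Lemma~\ref{stratila}. Along the way, the claim that $\mA_{\mbR^+}$ is a modular future algebra follows because dilatation $D_{2\pi s}$ maps functions supported on $\mbR^+$ into functions supported on $\mbR^+$ for all $s$, and in fact $D_{2\pi s}$ for $s>0$ shrinks the support toward the origin; combined with positivity of the generator this matches the defining properties of a modular future algebra recalled in Appendix~\ref{App:Hsmi}.

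First I would establish that $D_{2\pi s}$ is a one-parameter group of transformations on the one-particle space $L^2(d\mu,\mbR_t)$ that preserves $\mbR^+$-support, which is immediate from the definition \eqref{eq-rep-dilatation}: $t\mapsto e^{-2\pi s}t$ is a positive rescaling, and the prefactor $e^{2\pi(\Delta-1)s}$ is the weight that makes the vacuum two-point function \eqref{conformal0p1} invariant. This last point---that $D_{2\pi s}$ preserves $\braket{f,h}_\rho$---is a special case of Lemma~\ref{lemma-psl-unitary} with the group element being pure dilatation (which acts locally even for non-integer $\Delta$, since dilatations are in the Poincar\'e+scaling subgroup and do not mix positive and negative frequencies). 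Hence $W(f)\mapsto W(D_{2\pi s}[f])$ extends to a state-preserving $*$-automorphism of $\mA_{\mbR^+}$.

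Next I would check the KMS condition. Because the state is quasi-free and Weyl operators multiply by the central cocycle in \eqref{weyl-algebra}, both sides of the KMS identity in Lemma~\ref{stratila} reduce, via the Gaussian formula \eqref{stateWeylop}, to statements about the two-point function $\braket{\varphi(f)\varphi(D_{2\pi s}[h])}$ analytically continued in $s$. Concretely, it suffices to show that the function $s\mapsto \braket{\varphi(f)\,\varphi(D_{2\pi s}[h])}$ for $f,h$ supported on $\mbR^+$ extends analytically to the strip $0<\Im s<1$ and satisfies the boundary relation $\braket{\varphi(f)\varphi(D_{2\pi(s+i)}[h])} = \braket{\varphi(D_{2\pi s}[h])\varphi(f)}$. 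Using $G(t-t') = \Gamma(2\Delta)\,e^{-i\pi\Delta\,\sgn(t-t')}/|t-t'|^{2\Delta}$ and the scaling $D_{2\pi s}$, one computes that the two-point function depends on $s$ through $e^{-2\pi s}$ times a homogeneous kernel, and the $i\pi$ shift $s\to s+i$ implements precisely the $e^{-i\pi\,\sgn}\to e^{+i\pi\,\sgn}$ swap on the branch of $|t-t'|^{-2\Delta}$ that exchanges the operator ordering. It is cleanest to pass to the half-line coordinate $u=\log t$ on $\mbR^+$, under which dilatation becomes translation $u\mapsto u-2\pi s$ and the two-point function becomes a function of $u-u'$ of the standard thermal KMS form at inverse temperature $2\pi$; then KMS analyticity in the strip is the textbook statement for a thermal line.

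The main obstacle I expect is controlling the analytic continuation carefully when $\Delta$ is not an integer: the kernel $|t-t'|^{-2\Delta}e^{-i\pi\Delta\sgn(t-t')}$ has a branch point structure, and one must verify that the continuation $s\to s+i$ stays on the correct sheet and that no contribution from $t$ or $t'$ running off to infinity spoils the boundary value---this is where the restriction to $f,h$ genuinely supported in $\mbR^+$ (bounded away from $0$ and $\infty$ on a dense set) is used. Once this strip analyticity and the boundary identity are in hand, Lemma~\ref{stratila} identifies $D_{2\pi s}$ with the modular flow $\Delta_{\mbR^+}^{-is}\,\cdot\,\Delta_{\mbR^+}^{is}$, and the modular-future-algebra property follows from support considerations as noted above. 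The detailed estimates are routine and I would relegate them to an appendix.
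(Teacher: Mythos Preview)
Your overall strategy matches the paper's: verify the KMS condition of Lemma~\ref{stratila} for the candidate flow $\sigma_s(W(f))=W(D_{2\pi s}[f])$, reduce via the Gaussian property \eqref{stateWeylop} to the two-point function, and conclude the modular future property from support considerations. The difference is purely in the technical execution of the analytic continuation. You propose to work in position space with the kernel $G(t-t')=\Gamma(2\Delta)e^{-i\pi\Delta\sgn(t-t')}/|t-t'|^{2\Delta}$ (or equivalently pass to $u=\log t$), and you correctly anticipate that for non-integer $\Delta$ the branch structure makes the strip analyticity delicate. The paper instead works in frequency space: writing $\braket{\varphi(f)\varphi(D_{2\pi s}[g])}=2\pi\int_0^\infty d\omega\,\omega^{2\Delta-1}f(-e^{-\pi s}\omega)g(e^{\pi s}\omega)$, it invokes the Paley--Wiener fact that $f,g$ supported on $\mbR^+$ have Fourier transforms analytic in the upper half $\omega$-plane, so the continuation $s\to s+i$ simply rotates $e^{\pm\pi s}\omega$ through the upper half-plane and lands on $\braket{\varphi(D_{2\pi s}[g])\varphi(f)}$ with no branch-cut bookkeeping. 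This bypasses exactly the obstacle you flagged; your $u=\log t$ route would also work but is more laborious. For the modular future property the paper is slightly more explicit than you: it exhibits a proper subalgebra $\mA_{(a,\infty)}$ with $a>0$ and checks directly that $\Delta_{\mbR^+}^{-is}\mA_{(a,\infty)}\Delta_{\mbR^+}^{is}=\mA_{(e^{2\pi s}a,\infty)}\subset\mA_{(a,\infty)}$ for $s>0$, which is the HSMI$+$ condition of Definition~\ref{HSMIdef}.
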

\begin{proof}
For any Weyl operator $W(f)\in\mA_{\mbR^+}$, $f(t)$ has support in $\mathbb{R}^{+}$ and $D_{2\pi s}[f](t)$ also has support in $\mathbb{R}^{+}$.  
Consider the correlation function
\begin{eqnarray}
    F(s)  =  \braket{W(f)  \Delta_{\mathbb{R}^{+}}^{-is} W(g) \Delta_{\mathbb{R}^{+}}^{is} } =  \braket{W(f)  W\left(D_{2\pi s}[g]\right) } \label{eq-kms-Fs}
\end{eqnarray}
for $s \in \mathbb{R}  $. The KMS condition states that $F(s)$ can be analytically continued to a strip $0< \text{Im}(s)< 1$ and is continuous in the closure with the boundary value
\begin{eqnarray}
    F(s+i)  =  \braket{\Delta_{\mathbb{R}^{+}}^{-is} W(g) \Delta_{\mathbb{R}^{+}}^{is}  W(f)}  =  \braket{W\left(D_{2\pi s}[g]\right)  W(f) }
\end{eqnarray}
for $s\in \mathbb{R}  $. In a theory of GFF, all the correlation functions are fixed by the two-point functions of the GFF. Therefore we consider the two-point function of the smeared fields
\begin{eqnarray}\label{eq-kms-int}
    \braket{\varphi(f) \varphi(D_{2\pi s}[g])} &=& \braket{\varphi(D_{-\pi s}[f]) \varphi(D_{\pi s}[g])} \nn \\
    &=& 2\pi  \int_0^\infty d\omega\, \omega^{2\Delta-1} f(-e^{-\pi s} \omega) g(e^{\pi s}\omega)
\end{eqnarray}
where in the first line we have used that we have a unitary representation of dilatation from Lemma \ref{lemma-psl-unitary} and in the second line we have used that $D_{2\pi s}[f](\omega) = e^{2\pi s \Delta}f(e^{2\pi s }\omega)$.

Since $f(t)$ and $g(t)$ are supported in $\mathbb{R}^{+}$, their Fourier transform $f(\omega)$ and $g(\omega)$ are analytic in the upper half-plane for complex $\omega$. This means we can analytically continue to complex $s$  for $0< \Im(s) < 1$ such that  $\Im (e^{\pi s}\omega) > 0$ and $\Im (-e^{-\pi s}\omega) > 0$ for positive $\omega>0$. 
Thus, \eqref{eq-kms-int} can be analytically continued in the complex plane $0 \le \text{Im}(s) \le 1$. 
Thus taking $s \to s+i$, we get
\begin{eqnarray}
    \braket{\varphi(f) \varphi(D_{2\pi (s+i)}[g])} = &=& 2\pi  \int_0^\infty d\omega\, \omega^{2\Delta-1} f(e^{-\pi s} \omega) g(-e^{\pi s}\omega) \nn \\
    &=&  \braket{\varphi(D_{2\pi s}[g])\varphi(f) } 
\end{eqnarray}
Thus, the KMS condition is satisfied for the smeared fields and the Weyl algebra \eqref{weyl-algebra} ensures that the KMS condition is satisfied for the Weyl operators. Then, it follows from Lemma \ref{stratila} that dilatation is the modular flow of the algebra $\mA_{\mbR^+}$.

To see that we have a modular future algebra i.e., HSMI$+$, consider a proper subalgebra $\mA_{(a,\infty)}\subset \mA_{\mbR^+}$ of functions supported on the interval $(a,\infty)$ for some $a >0$. Under the action of the modular flow 
\begin{eqnarray}
    \forall s>0:\qquad \Delta^{-is}_{\mbR^+} \mA_{(a,\infty)} \Delta^{is}_{\mbR^+} = \mA_{(e^{2\pi s}a,0)} \subset \mA_{(a,\infty)}\ .
\end{eqnarray}
Hence we have a modular future algebra $\mA_{\mbR^+}$.

\end{proof}

\begin{prop} \label{prop-unitary-related-algebra}
If two algebras $\mA_1$ and $\mA_2$ are related by a state-preserving unitary $U$
\begin{eqnarray*}
    U \ket{\Omega} = \ket{\Omega} \ , \qquad  
    \mA_{1} = U^{\dagger} \mA_{2} U \ ,
\end{eqnarray*}
then their modular operators and modular conjugations are related by
\begin{eqnarray}
    \Delta_{\mA_1} = U^\dagger \Delta_{\mA_2} U \ , \qquad J_{\mA_1} = U^\dagger J_{\mA_2} U \ .
\end{eqnarray}
\end{prop}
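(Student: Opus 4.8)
The plan is to verify the defining properties of the Tomita--Takesaki modular data for $\mathcal{A}_1$ directly from those of $\mathcal{A}_2$, using that $U$ is unitary and fixes the cyclic-separating vector $\ket{\Omega}$. Recall that the modular operator $\Delta_{\mathcal{A}}$ and modular conjugation $J_{\mathcal{A}}$ are characterized (uniquely) through the polar decomposition $S_{\mathcal{A}} = J_{\mathcal{A}} \Delta_{\mathcal{A}}^{1/2}$ of the closure of the Tomita operator $S_{\mathcal{A}}^0 : a\ket{\Omega} \mapsto a^\dagger \ket{\Omega}$ for $a \in \mathcal{A}$. So the first step is to show that the Tomita operators themselves are intertwined: $S_{\mathcal{A}_1} = U^\dagger S_{\mathcal{A}_2} U$.

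First I would compute, for $a \in \mathcal{A}_1$, so that $U a U^\dagger \in \mathcal{A}_2$,
\begin{eqnarray}
    U^\dagger S_{\mathcal{A}_2} U \, a \ket{\Omega} = U^\dagger S_{\mathcal{A}_2} (U a U^\dagger) U\ket{\Omega} = U^\dagger S_{\mathcal{A}_2} (U a U^\dagger) \ket{\Omega} = U^\dagger (U a^\dagger U^\dagger)\ket{\Omega} = a^\dagger U^\dagger \ket{\Omega} = a^\dagger \ket{\Omega}\ ,
\end{eqnarray}
where I used $U\ket{\Omega} = \ket{\Omega}$ twice and the defining action of $S_{\mathcal{A}_2}$. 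This shows $U^\dagger S_{\mathcal{A}_2} U$ agrees with $S_{\mathcal{A}_1}$ on the core $\mathcal{A}_1 \ket{\Omega}$; since $U$ is bounded it maps cores to cores and is compatible with closures, so the closed operators coincide: $S_{\mathcal{A}_1} = U^\dagger S_{\mathcal{A}_2} U$. Then I would take the polar decomposition of both sides. On the right, $U^\dagger S_{\mathcal{A}_2} U = U^\dagger J_{\mathcal{A}_2} \Delta_{\mathcal{A}_2}^{1/2} U = (U^\dagger J_{\mathcal{A}_2} U)(U^\dagger \Delta_{\mathcal{A}_2}^{1/2} U)$; here $U^\dagger J_{\mathcal{A}_2} U$ is antiunitary (conjugation by unitary preserves antiunitarity) and $U^\dagger \Delta_{\mathcal{A}_2}^{1/2} U$ is positive self-adjoint (it equals $(U^\dagger \Delta_{\mathcal{A}_2} U)^{1/2}$ by the functional calculus, since $U$ commutes with the spectral projections after conjugation). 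By uniqueness of the polar decomposition of a closed operator into (antiunitary)$\times$(positive self-adjoint), I conclude $J_{\mathcal{A}_1} = U^\dagger J_{\mathcal{A}_2} U$ and $\Delta_{\mathcal{A}_1}^{1/2} = U^\dagger \Delta_{\mathcal{A}_2}^{1/2} U$, hence $\Delta_{\mathcal{A}_1} = U^\dagger \Delta_{\mathcal{A}_2} U$ by squaring.

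The main subtlety — not really an obstacle but the point requiring care — is the functional-analytic bookkeeping: $S$ is only densely defined and closed, not bounded, so one must check that $U^\dagger S_{\mathcal{A}_2} U$ (with the natural domain $U^\dagger \mathcal{D}(S_{\mathcal{A}_2})$) is already closed and that $\mathcal{A}_1\ket{\Omega}$ is a core for it, which follows because $\mathcal{A}_2 \ket{\Omega}$ is a core for $S_{\mathcal{A}_2}$ and $U$ is a homeomorphism of the Hilbert space intertwining $\mathcal{A}_2\ket{\Omega}$ with $\mathcal{A}_1\ket{\Omega}$. Alternatively, and perhaps more cleanly for the paper, one can bypass the Tomita operator entirely: combine Lemma~\ref{stratila} (uniqueness of the KMS automorphism) with the observation that $s \mapsto U^\dagger \Delta_{\mathcal{A}_2}^{-is}\,\cdot\,\Delta_{\mathcal{A}_2}^{is} U$ is a state-preserving automorphism of $\mathcal{A}_1$ satisfying the KMS condition with respect to $\ket{\Omega}$ (using $U\ket{\Omega}=\ket{\Omega}$ to transport the KMS identity), forcing $\Delta_{\mathcal{A}_1}^{-is} = U^\dagger \Delta_{\mathcal{A}_2}^{-is} U$ for all $s$, and then handle $J$ separately via the $S$-operator argument above. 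I would present the short $S$-operator computation as the main line, since it yields both $\Delta$ and $J$ simultaneously.
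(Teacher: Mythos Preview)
Your proof is correct and follows essentially the same route as the paper: establish $S_{\mathcal{A}_1} = U^\dagger S_{\mathcal{A}_2} U$ by direct computation on the dense domain $\mathcal{A}_1\ket{\Omega}$, then invoke uniqueness of the polar decomposition. Your version is more careful about the functional-analytic bookkeeping (cores, closures, uniqueness of polar decomposition), which the paper's proof glosses over, but the strategy is identical.
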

\begin{proof}
    See Appendix \ref{proof:prop-unitary-related-algebra} for a proof.
\end{proof}

\begin{corollary}[Modular flow of past algebra]
The modular flow of the past algebra $\mA_{\mathbb{R}^-}$ is dilatation and it satisfies
\begin{align}
    \Delta_{\mathbb{R^-}}^{-is} W(f) \Delta_{\mathbb{R^-}}^{is} = W(D_{-2\pi s}[f]) \, . 
\end{align}
\begin{proof}
    From \eqref{eq-vonNeumann-alg-Rp-general} and using the above Proposition \ref{prop-unitary-related-algebra}, we have 
    \begin{align}
        \Delta_{\mathbb{R^-}}^{is} = \mT_{\Delta} \Delta_{(\mathbb{R^+})'}^{is} \mT_{\Delta}^{\dagger}  
        = \mT_{\Delta} \Delta_{\mathbb{R^+}}^{-is} \mT_{\Delta}^{\dagger} 
        = \mT_{\Delta} D_{-2\pi s} \mT_{\Delta}^{\dagger} 
        = D_{-2\pi s} 
    \end{align}
    where the last equality follows because GHT commutes with dilatation.
\end{proof}
\end{corollary}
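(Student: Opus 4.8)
The plan is to bootstrap this from the future-algebra result of Theorem~\ref{ModFlowThmInsideAlg} rather than redo a KMS computation. The structural input is already in place: in the form~\eqref{eq-vonNeumann-alg-Rp-general}, Theorem~\ref{thm-comm-half-line-general} writes $\mA_{\mbR^-} = \mT_\Delta (\mA_{\mbR^+})' \mT_\Delta^\dagger$, exhibiting the past algebra as a state-preserving-unitary conjugate of the \emph{commutant} of the future algebra (state preserving because the GHT leaves all Weyl correlators invariant, hence fixes $\ket{\Omega}$). So I would invoke Proposition~\ref{prop-unitary-related-algebra} to get $\Delta_{\mbR^-}^{is} = \mT_\Delta\, \Delta_{(\mbR^+)'}^{is}\,\mT_\Delta^\dagger$, and then Tomita--Takesaki, $\Delta_{(\mbR^+)'} = \Delta_{\mbR^+}^{-1}$ for the shared cyclic separating vector $\ket{\Omega}$, to reduce this to $\Delta_{\mbR^-}^{is} = \mT_\Delta\,\Delta_{\mbR^+}^{-is}\,\mT_\Delta^\dagger$.

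Next I would identify the right-hand side. By Theorem~\ref{ModFlowThmInsideAlg}, $\Delta_{\mbR^+}^{-is}$ implements the dilatation $D_{2\pi s}$ (equivalently $\Delta_{\mbR^+}^{is}$ implements $D_{-2\pi s}$). The only remaining ingredient is that the GHT commutes with dilatation: in frequency space the GHT is the multiplier $e^{\mp i\pi\Delta}$ on the positive/negative-frequency subspaces, whereas dilatation acts by $f(\omega)\mapsto e^{2\pi s\Delta}f(e^{2\pi s}\omega)$ and so preserves the sign of $\omega$; the two are therefore of the mutually commuting type already noted around~\eqref{diagonaltransf}. Hence $\mT_\Delta\,\Delta_{\mbR^+}^{-is}\,\mT_\Delta^\dagger = \Delta_{\mbR^+}^{-is}$, i.e.\ $\Delta_{\mbR^-}^{is}$ is again a dilatation implementer, now for parameter $-2\pi s$. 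Feeding $s\to -s$ into~\eqref{eq-modflow-dil} then yields $\Delta_{\mbR^-}^{-is}W(f)\Delta_{\mbR^-}^{is} = W(D_{-2\pi s}[f])$, as claimed. Along the way one notes that $D_{-2\pi s}$ is a genuine automorphism of $\mA_{\mbR^-}$ since it rescales the time variable and so sends functions supported on $(-\infty,0)$ to functions supported on $(-\infty,0)$.

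As a fully self-contained alternative that avoids the commutant, I would instead mirror the proof of Theorem~\ref{ModFlowThmInsideAlg}: posit $D_{-2\pi s}$ as the modular automorphism of $\mA_{\mbR^-}$, check that it is state preserving (it is implemented by a unitary, Lemma~\ref{lemma-psl-unitary}) and maps $\mA_{\mbR^-}$ to itself, verify the KMS condition of Lemma~\ref{stratila} for $\langle\varphi(f)\varphi(D_{-2\pi s}[g])\rangle$ by analytically continuing $s$ into the strip $0\le\Im s\le 1$ --- using that functions supported in $\mbR^-$ have Fourier transforms analytic in the \emph{lower} half $\omega$-plane, the mirror of the analyticity used for $\mbR^+$ --- and then appeal to uniqueness. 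I do not expect a genuine obstacle here: this is a corollary. The step requiring the most care is the sign bookkeeping as one passes through the commutant (which reverses time orientation) and through $\Delta\mapsto\Delta^{-1}$, together with pinning down that the GHT and dilatation commute; both are routine given the results already established.
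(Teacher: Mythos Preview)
Your main argument is correct and essentially identical to the paper's proof: both use \eqref{eq-vonNeumann-alg-Rp-general} together with Proposition~\ref{prop-unitary-related-algebra}, the Tomita--Takesaki relation $\Delta_{(\mbR^+)'}=\Delta_{\mbR^+}^{-1}$, Theorem~\ref{ModFlowThmInsideAlg}, and the commutation of the GHT with dilatation. The self-contained KMS alternative you sketch is also valid but is not the route the paper takes for this corollary.
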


\subsubsection*{Modular conjugation}
Next, we discuss the Tomita operator and modular conjugation for $\mA_{\mathbb{R}^\pm}$.  The Tomita operator $S_{\mathbb{R}^\pm}$ is an anti-unitary operator defined as
\begin{eqnarray}
    S_{\mathbb{R^\pm}} W(f)\ket{\Omega}  &=&  W(f)^{\dagger} \ket{\Omega}  =  W(-f) \ket{\Omega}  ,
\end{eqnarray}
where $f(t)$ is a real function with support in $\mathbb{R^\pm}$ and we have used 
the definition of the Weyl operator in \eqref{eq-weyl-def}. Thus the Tomita operator maps 
\begin{eqnarray}
    S_{\mathbb{R^\pm}}: f(t) \to -f(t)\ , \qquad \text{supp}(f) \subseteq \mathbb{R^\pm}\ .
\end{eqnarray}

The modular conjugation operator $J_{\mathbb{R^\pm}}$ of the algebra $\mA_{\mathbb{R^\pm}}$ is an anti-unitary operator and is defined in terms of the polar decomposition of the Tomita operator $S_{\mathbb{R^\pm}}$:
\begin{eqnarray}
S_{\mathbb{R^\pm}} = J_{\mathbb{R^\pm}}\Delta_{\mathbb{R^\pm}}^{1/2} =\Delta_{\mathbb{R^\pm}}^{-1/2} J_{\mathbb{R^\pm}} \, .
\end{eqnarray}
In the following, we show that the action of $J_{\mathbb{R^\pm}}$ is the anti-unitary reflection map followed by the GHT.
\begin{theorem}[Modular conjugation] \label{thm-J_half-line}
    The modular conjugation operators of $\mA_{\mathbb{R}^+}$ and $\mA_{\mathbb{R}^-}$ are given by
    \begin{align}\label{J_half-line}
        J_{\mathbb{R^+}} =& - \mT_{\Delta}^\dagger \mR \nn\\
        J_{\mathbb{R^-}} =& - \mT_{\Delta} \mR \ ,
    \end{align}
    respectively, where $\mR$ is a anti-unitary reflection map that implements $t \to -t$. 
\end{theorem}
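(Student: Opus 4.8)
The plan is to exploit Proposition \ref{prop-unitary-related-algebra} together with the known modular flow (Theorem \ref{ModFlowThmInsideAlg}) and the commutant structure (Theorem \ref{thm-comm-half-line-general}), so that I never have to compute a Tomita operator from scratch for both algebras. First I would nail down $J_{\mbR^+}$ directly: the candidate is the anti-unitary $\tilde J := -\mT_\Delta^\dagger \mR$, and I must check that it implements the Tomita operator via $S_{\mbR^+} = \tilde J\,\Delta_{\mbR^+}^{1/2}$. Since $\Delta_{\mbR^+}^{1/2}$ acts on functions as $D_{-i\pi}$ (the analytic continuation $s \to -i/2$ of the $D_{2\pi s}$ flow from Theorem \ref{ModFlowThmInsideAlg}), I would compute the composite action on a positive-frequency function $f_+$ and a negative-frequency function $f_-$ separately, using the frequency-space descriptions: $D_{2\pi s}[f](\omega) = e^{2\pi s\Delta} f(e^{2\pi s}\omega)$, the reflection $\mR: f(\omega) \to f(-\omega)$ (complex-conjugated, since $\mR$ is anti-unitary), and the GHT $H_\Delta: f_\pm \to e^{\mp i\pi\Delta} f_\pm$. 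The point is that the branch cut/phase factors conspire: continuing $\omega \to e^{-i\pi}\omega$ in $\omega^{2\Delta-1}$ picks up $e^{\mp i\pi(2\Delta-1)}$ on the two half-lines, and combining this with the explicit $e^{\mp i\pi\Delta}$ of the GHT and the sign from $-\mR$ should collapse everything to the identity on real functions, i.e. $\tilde J \Delta_{\mbR^+}^{1/2}$ sends $f \to -f$, matching $S_{\mbR^+}$. I would also verify $\tilde J$ is anti-unitary (composition of anti-unitary $\mR$ and unitary $\mT_\Delta^\dagger$) and involutive, and that it conjugates $\mA_{\mbR^+}$ onto its commutant using Theorem \ref{thm-comm-half-line-general}: $\tilde J \mA_{\mbR^+} \tilde J = \mT_\Delta^\dagger \mR\, \mA_{\mbR^+}\, \mR\, \mT_\Delta = \mT_\Delta^\dagger \mA_{\mbR^-} \mT_\Delta = (\mA_{\mbR^+})'$, since $\mR$ maps functions supported on $\mbR^+$ to functions supported on $\mbR^-$.

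For $J_{\mbR^-}$ I would not repeat the computation but instead invoke Proposition \ref{prop-unitary-related-algebra} and the relation $\mA_{\mbR^-} = \mT_\Delta (\mA_{\mbR^+})' \mT_\Delta^\dagger$ from \eqref{eq-vonNeumann-alg-Rp-general}. Modular data of a commutant satisfies $J_{\mA'} = J_\mA$ and $\Delta_{\mA'} = \Delta_\mA^{-1}$, so $J_{(\mA_{\mbR^+})'} = J_{\mbR^+} = -\mT_\Delta^\dagger \mR$, and then Proposition \ref{prop-unitary-related-algebra} with $U = \mT_\Delta^\dagger$ gives $J_{\mbR^-} = \mT_\Delta J_{(\mA_{\mbR^+})'} \mT_\Delta^\dagger = \mT_\Delta(-\mT_\Delta^\dagger \mR)\mT_\Delta^\dagger = -\mR\,\mT_\Delta^\dagger$. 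To match the stated form $-\mT_\Delta \mR$ I would check that $\mR\,\mT_\Delta^\dagger = \mT_\Delta\,\mR$ as operators — this is the statement that complex conjugation/reflection in time intertwines the GHT with its inverse, which follows because $\mR$ acts on frequency space as $\omega \to -\omega$ (swapping the $\pm$ frequency sectors) combined with complex conjugation (which sends $e^{\mp i\pi\Delta} \to e^{\pm i\pi\Delta}$), so the net effect on $H_\Delta$ is $\mR H_\Delta \mR = H_\Delta$ while $\mR H_\Delta^\dagger = H_\Delta \mR$ at the level of how these act under $\mR$-conjugation; I should be careful here and instead just verify $\mR \mT_\Delta^\dagger \mR = \mT_\Delta$ directly, which gives $\mR \mT_\Delta^\dagger = \mT_\Delta \mR$ and hence the result.

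I expect the main obstacle to be bookkeeping of phases and branch choices in the analytic continuation $s \to -i/2$ that defines $\Delta^{1/2}$: one must be consistent about which half-plane $f_\pm$ is analytic in (lower half $\omega$-plane for $f_+$, upper for $f_-$, per the discussion around \eqref{Poissonbraket}), about the direction of rotation of $\omega$ as $s$ is continued, and about how the anti-unitarity of $\mR$ complex-conjugates the accumulated phases. A clean way to organize this is to work entirely in the $2$-component notation of \eqref{Poissonbraket}–\eqref{GHT2}: write $\Delta_{\mbR^+}^{1/2}$, $\mR$, and $\mT_\Delta$ each as explicit ($\omega$-dependent, possibly conjugation-linear) $2\times 2$ matrices acting on $F(\omega) = (f_+(\omega), f_-(-\omega))^T$, and then the whole claim reduces to a finite matrix identity plus the support statement for $\mR$. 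A subsidiary check worth including is that the resulting $J_{\mbR^\pm}$ indeed satisfies the KMS-compatible relations $J_{\mbR^\pm}\Delta_{\mbR^\pm} J_{\mbR^\pm} = \Delta_{\mbR^\pm}^{-1}$ and $J_{\mbR^\pm}^2 = 1$, which serves as a consistency test on the phases; these follow once $\mR D_{2\pi s} \mR = D_{-2\pi s}$ and $\mR \mT_\Delta \mR = \mT_\Delta^\dagger$ are established.
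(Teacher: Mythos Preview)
Your approach is essentially the paper's: both compute $J_{\mathbb{R}^+}$ from the polar decomposition $S_{\mathbb{R}^+} = J_{\mathbb{R}^+}\Delta_{\mathbb{R}^+}^{1/2}$ by analytically continuing the dilatation flow to imaginary parameter (the paper performs this inside a two-point correlator, you act directly on the functions), and both obtain $J_{\mathbb{R}^-}$ from Proposition~\ref{prop-unitary-related-algebra} together with the intertwining relation $\mR\,\mT_\Delta\,\mR = \mT_\Delta^\dagger$. One bookkeeping note: since $\Delta_{\mathbb{R}^+}^{-is}$ implements $D_{2\pi s}$, the operator $\Delta_{\mathbb{R}^+}^{1/2}$ corresponds to $s = +i/2$ (i.e.\ $D_{+i\pi}$), not $s=-i/2$ --- exactly the sign/branch care you already flag as the main obstacle.
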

\begin{proof}
Let us first consider $\mA_{\mathbb{R^+}}$. 
Let us denote the action of $J_{\mathbb{R^+}}$ by
\begin{align}
    J_{\mathbb{R^+}} W(f) J_{\mathbb{R^+}} = W(f_J) \ .
\end{align}
To prove the theorem, we will first show that for $W(f) \in \mA_{\mathbb{R^+}}$,
\begin{align}
    J_{\mathbb{R^+}} W(f) \ket{\Omega} = W(f_J) \ket{\Omega}
\end{align}
where 
\begin{align}
    f_{J}(\omega) = - e^{i\pi\Delta \text{sgn}(\omega)} f(-\omega) \, .
\end{align}
This will show that $J_{\mathbb{R^+}}$ acts as $- \mT_{\Delta}^\dagger \mR$ on a dense set of states. Then we will show that $J_{\mathbb{R^+}} = - \mT_{\Delta}^\dagger \mR$ is a state-preserving anti-unitary. 

Since $J = \Delta_{\mathbb{R^+}}^{1/2} S_{\mathbb{R^+}}$, 
for any two Weyl operators $W(f), W(g)\in\mA_{\mbR^+}$, we have
\begin{align}
    \braket{W(f) J_{\mathbb{R^+}} W(g)} \, = \, 
    \braket{W(f) \Delta_{\mathbb{R^+}}^{1/2} S_{\mathbb{R^+}} W(g)} \, = \, \braket{W(f) \Delta_{\mathbb{R^+}}^{1/2} W(-g)} \, .
\end{align}
Since the correlation functions of the Weyl operators are determined by the correlation functions of the smeared fields, we instead consider 
\begin{eqnarray}
   \braket{\varphi(f) J_{\mathbb{R^+}}  \varphi(g)} = \braket{\varphi(f)\Delta^{1/2}_{\mbR^+}\varphi(-g)}\ ,
\end{eqnarray}
for any $f(t)$, $g(t)$ with support in $\mathbb{R}^+$. Then due to the KMS condition, the above correlation function can be determined by evaluating the modular flow correlation function in \eqref{eq-kms-int} at $s=i/2$. This yields
\begin{eqnarray}
    \braket{\varphi(f) J_{\mathbb{R^+}}  \varphi(g)} = - 2\pi \int_0^\infty d\omega \, \omega^{2\Delta-1}  f(-\omega) \lb e^{i\pi \Delta}  g(- \omega)\rb  = 2\pi \int_0^\infty d\omega \, \omega^{2\Delta-1}  f(-\omega) g_{J}(\omega)\nn 
\end{eqnarray}
or equivalently,
\begin{eqnarray}
           \braket{\varphi(f) J_{\mathbb{R^+}}  \varphi(g)} = - 2\pi \int_0^{\infty} d\omega \, \omega^{2\Delta-1} \lb e^{-i\pi \Delta}  f(\omega)  \rb g(\omega) = - 2\pi \int_0^{\infty} d\omega \, \omega^{2\Delta-1} f_{J}(-\omega) g(\omega) \nn.
\end{eqnarray}
% where %These results imply that
% \begin{align}
%     f_{J}(\omega) = - e^{i\pi\Delta\sgn(\omega)} f(-\omega) \ .
% \end{align}
Therefore, we have shown that $J_{\mathbb{R^+}}$ acts as $- \mT_{\Delta}^\dagger \mR$ on a dense set of states.

Next, we need to show that $- \mT_{\Delta}^\dagger \mR$ is a state-preserving anti-unitary. Since we have already established that GHT is a state-preserving unitary, we only need to show that reflection $\mR$ is a state-preserving anti-unitary. We denote the action of $\mR$ by
\begin{eqnarray}
    \mR W(f) \mR  =  W(f_{R})\ , \qquad f_{R}(t) = f(-t) \ .
\end{eqnarray}
Now note that for any two real-valued function $f(t)$ and $g(t)$ (not necessarily with support in $\mathbb{R^+}$), we have
\begin{eqnarray}
    \braket{\mR \varphi(f) \mR , \mR \varphi(g) \mR } &=& \braket{ \varphi(f_R)  , \varphi(g_R)  } \nn \\ 
    &=& \int_{-\infty}^{\infty} dt_1 dt_2\, G(t_1 - t_2) f_R(t_1) g_R(t_2) \nn \\ 
    &=& \int_{-\infty}^{\infty} dt_1 dt_2\, G(t_2 - t_1) f(t_1) g(t_2) \nn \\ 
    &=& \int_{-\infty}^{\infty} dt_1 dt_2\, G^{*}(t_1 - t_2) f(t_1) g(t_2) \nn \\ 
    &=& \overline{\braket{\varphi(f) ,\varphi(g)} } 
\end{eqnarray}
where we have used $G(-t)=G^{*}(t)$ from \eqref{conformal0p1}. This shows that $\mR$ is a state-preserving anti-unitary operator. This finishes our proof for $J_{\mathbb{R^+}}$.

Now we consider $J_{\mathbb{R^-}}$. From \eqref{eq-vonNeumann-alg-Rp-general} and Proposition \ref{prop-unitary-related-algebra}, we get
\begin{eqnarray}
    J_{\mathbb{R}^{-}} = \mT_{\Delta} J_{(\mathbb{R}^{+})'} \mT_{\Delta}^{\dagger} = \mT_{\Delta} J_{\mathbb{R}^{+}} \mT_{\Delta}^{\dagger} = -\mR \mT_{\Delta}^\dagger \ .
\end{eqnarray}
As we show in Appendix \ref{sec-hilbert}, 
\begin{eqnarray}
    \mR \mT_\Delta \mR =  \mT_\Delta^\dagger \ .\label{eq-comm-T-R}
\end{eqnarray}
Combining the above two results proves the theorem.
\end{proof}
This finishes our discussion of modular flow and conjugation of future and past algebras. Next, we focus on the modular flow of algebras of finite intervals.

\subsection{Time interval algebras}

We are now ready to compute the modular flow of any time interval algebra $\mathcal{A}_{I}$. Since we have already computed the modular flow of $\mA_{\mathbb{R}^+}$, due to Proposition \ref{prop-unitary-related-algebra}, all we need is a state-preserving unitary $U_I$ that maps $\mA_{I}$ to $\mA_{\mathbb{R}^+}$.

We can construct the action of this unitary $U_I$ using the transformations in \eqref{universal_psl2r_action}. As a first guess, we consider the $PSL(2,\mathbb{R})$ transformation that maps $I=(p,q)$ to $\mathbb{R}^+$: 
\begin{eqnarray}\label{conformalUT}
    u:t\to \frac{t-p}{q-t}\ ,\qquad u= \frac{1}{\sqrt{q-p}}
    \begin{pmatrix}
         1 & -p \\
         -1 & q
    \end{pmatrix}
    % \begin{pmatrix}
    %      \frac{1}{\sqrt{q-p}}&\frac{-p}{\sqrt{q-p}}\\-\frac{1}{\sqrt{q-p}}&\frac{q}{\sqrt{q-p}}
    % \end{pmatrix}
    \in PSL(2,\mathbb{R})\ .
\end{eqnarray}
We remind the reader that, as we established in Corollary \ref{cor:local-net-integer-delta}, for non-integer $\Delta$, the GFF is not M\"obius covariant. Therefore we are not guaranteed that the unitary $U_I$ that implements the conformal transformation above also transforms time interval algebras as $U_I^\dagger\mA_{(p,q)}U_I=\mA_{\mathbb{R}^+}$. However, as we establish in the lemma below, the conformal transformation in (\ref{conformalUT}) is special in that %$\mathcal{U}^{-1}\cdot \mathcal{D}_{2\pi s}\cdot \mathcal{U}$ acts locally.
it acts locally on $\mA_I$ and the covariant action $U_I^\dagger\mA_{(p,q)}U_I=\mA_{\mathbb{R}^+}$, indeed holds:
\begin{lemma} \label{lemma-interval-map}
   The conformal transformation corresponding to $u$ defined in \eqref{conformalUT} has action $U_I^\dagger W(f)U_I = W(f')$, where
   \begin{eqnarray}\label{eq-U-act}
       f'(t) = \sum_\pm \left[\left|\frac{t + 1}{\sqrt{q-p}} \right|^{2\Delta-2} e^{\pm i 2\pi\Delta\Theta(-t-1)} f\left(\frac{qt+p}{t+1}\right)\right]_\pm\ .
   \end{eqnarray}
   The sum is over positive and negative frequency components. It acts locally on $\mA_I$ and maps $\mA_I$ to $\mA_{\mathbb{R}^+}$
   \begin{eqnarray}\label{interval_to_line}
    U_I^\dagger\mA_{(p,q)}U_I=\mA_{\mathbb{R}^+} \ .
   \end{eqnarray}
\end{lemma}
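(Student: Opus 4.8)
The plan is to establish the two claims of Lemma \ref{lemma-interval-map} in the following order. First I would verify the formula \eqref{eq-U-act} for the action of $U_I$ on Weyl operators. The transformation $u$ in \eqref{conformalUT} is the specific $PSL(2,\mathbb{R})$ element sending $(p,q)\to(0,\infty)$; writing it in the form $t\to(dt-b)/(a-ct)$ and reading off $a,b,c,d$, one substitutes into the general representation formula \eqref{universal_psl2r_action}. The factor $|a-ct|^{2(\Delta-1)}$ becomes $|(t+1)/\sqrt{q-p}|^{2\Delta-2}$ and the phase $e^{\mp 2\pi i\Delta(\sgn(c)\Theta(ct-a)+n)}$ becomes $e^{\pm 2\pi i\Delta\Theta(-t-1)}$ (with the sheet index $n$ fixed to $0$ by the requirement that $u$ be the transformation obtained by a continuous path in $PSL(2,\mathbb{R})$ from the identity, staying in the base). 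This is the routine part — essentially bookkeeping of signs and the argument of $\Theta$.

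The substance is the second claim: that this particular conformal map acts \emph{locally} on $\mA_I$, i.e. that for $f$ supported in $(p,q)$ the transformed function $f'$ in \eqref{eq-U-act} is again a genuine function of time supported in $\mathbb{R}^+$ (with no leftover nonlocal kernel), and hence $U_I^\dagger\mA_{(p,q)}U_I=\mA_{\mathbb{R}^+}$. The key observation is that the would-be nonlocality in \eqref{universal_psl2r_action} comes entirely from the frequency-dependent phase $e^{\mp 2\pi i\Delta\Theta(\cdots)}$, and this phase is trivial exactly on the region where the argument of the $\Theta$ vanishes. For $f$ supported in $(p,q)$, the point $t=-1$ (the preimage of $t=\infty$ under $u^{-1}$, equivalently the image of $t=\infty$ under $u$) lies \emph{outside} the support of $f\circ u^{-1}$: the map $t\mapsto (qt+p)/(t+1)$ sends the interval $(-1,\infty)\ni t$ onto... one checks it sends $t\in(0,\infty)$ precisely onto $(p,q)$, and $t=-1$ to infinity, so the support of $f'$ is contained in $(0,\infty)$, a region where $\Theta(-t-1)=0$ and the phase is $1$. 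Thus on the support of $f'$ the transformation reduces to the local conformal weight factor $|(t+1)/\sqrt{q-p}|^{2\Delta-2}f((qt+p)/(t+1))$, which is a bona fide real function supported in $\mathbb{R}^+$; so $W(f)\in\mA_{(p,q)}$ maps into $\mA_{\mathbb{R}^+}$ and the positive/negative-frequency split in \eqref{eq-U-act} is vacuous there.

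To upgrade this to the equality \eqref{interval_to_line} rather than just an inclusion, I would run the same argument with $u^{-1}$ (which sends $(0,\infty)$ back to $(p,q)$, with the role of the bad point played by $t=q$, again outside $\mathbb{R}^+$), giving $U_I\mA_{\mathbb{R}^+}U_I^\dagger\subseteq\mA_{(p,q)}$, hence equality; then pass to double commutants, noting that $U_I$ is a state-preserving unitary by Lemma \ref{lemma-psl-unitary}, so conjugation commutes with taking the von Neumann closure. The main obstacle I anticipate is purely careful accounting: making sure the sheet index $n$ in \eqref{universal_psl2r_action} is correctly pinned to zero for the particular group element $u$ (a non-canonical choice here would reintroduce a global phase $e^{\pm 2\pi i\Delta n}$ and spoil state-preservation of the naive identification), and confirming that the $\Theta$-argument $-t-1$ is indeed the one produced by the general formula with $\sgn(c)=\sgn(-1/\sqrt{q-p})=-1$ and $\Theta(ct-a)=\Theta(-(t+1)/\sqrt{q-p})=\Theta(-t-1)$ — i.e. that the "jump locus" of the phase sits exactly at the image of infinity, which is what makes this one conformal map special among all of $\widetilde{PSL}(2,\mathbb{R})$ despite the net not being Möbius covariant.
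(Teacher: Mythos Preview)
Your proposal is correct and follows essentially the same approach as the paper: the key observation is that for $f$ supported in $(p,q)$, the image support lies in $\mathbb{R}^+$ where $t+1>0$ and hence $\Theta(-t-1)=0$, so the frequency-dependent phase is trivial and the action is local. The paper's proof is more compressed---it establishes the bijection $(qt+p)/(t+1)\in(p,q)\iff t\in\mathbb{R}^+$ directly via the algebraic rewriting $\frac{qt+p}{t+1}=q-\frac{q-p}{t+1}=p+\frac{(q-p)t}{t+1}$, which simultaneously gives both inclusions you obtain separately by running $u$ and $u^{-1}$, and it does not spell out the derivation of \eqref{eq-U-act} from \eqref{universal_psl2r_action} or the sheet-index bookkeeping you (correctly) flag.
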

\begin{proof}
    It is straightforward to check that
    \begin{eqnarray}
        \frac{qt+p}{t+1} \in I = (p,q) \iff t \in\mathbb{R}^+ \implies t + 1 >0\ ,
    \end{eqnarray}
    since
    \begin{eqnarray}
        \frac{qt+p}{t+1} = q -\frac{q-p}{t+1} = p + \frac{(q-p)t}{t+1}\ .
    \end{eqnarray}
    The transformation acts on the positive and negative modes in the same way because $t+1>0$. Hence, it acts locally on $\mA_I$ and maps it to $\mA_{\mathbb{R}^+}$.
\end{proof}
As a matter of fact, $U_I$ still acts non-locally on a general operator, as $t + 1$ can be negative. As a result, we find the action of the modular flow of time interval algebra:
\begin{theorem}[Modular flow for time interval] \label{thm-interval-mod}
    For an algebra $\mA_I$ of the interval $I = (p,q)$, the modular flow  is
    \begin{eqnarray}
        &&\Delta_I^{-is}\varphi(f)\Delta_I^{is} = \varphi(f')\ , \nn\\
        &&f_\pm'(t) = \left[\left(\tfrac{(q-t)e^{\pi s}+(t-p)e^{-\pi s}}{q-p} \mp\mathrm{sgn}(s)i\epsilon\right)^{2\Delta-2} f\left(\tfrac{(q-t)p\,e^{\pi s} + (t-p)q\,e^{-\pi s}}{(q-t)e^{\pi s} + (t-p)e^{-\pi s}}\right)\right]_\pm\ .
    \end{eqnarray}
\end{theorem}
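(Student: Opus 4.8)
\emph{Proof proposal.} The plan is to reduce everything to the half‑line via the intertwiner of Lemma~\ref{lemma-interval-map}. By that lemma, $U_I^\dagger\mA_{(p,q)}U_I=\mA_{\mbR^+}$ where $U_I$ is the state‑preserving unitary of Lemma~\ref{lemma-psl-unitary} implementing the M\"obius element $u$ of \eqref{conformalUT}, so Proposition~\ref{prop-unitary-related-algebra} gives $\Delta_I^{-is}=U_I\,\Delta_{\mbR^+}^{-is}\,U_I^\dagger$. I would denote by $\Lambda_g$ the transformation of functions attached in \eqref{universal_psl2r_action} to $g\in\widetilde{PSL}(2,\mbR)$; by the computation in Lemma~\ref{lemma-interval-map}, conjugation by $U_I^\dagger$ sends $\varphi(f)\mapsto\varphi(\Lambda_u[f])$, and by Theorem~\ref{ModFlowThmInsideAlg} conjugation by $\Delta_{\mbR^+}^{-is}$ sends $\varphi(h)\mapsto\varphi(D_{2\pi s}[h])$ for any $h$, with $D_{2\pi s}$ the dilatation of \eqref{eq-rep-dilatation}. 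Chaining the three conjugations gives
\begin{equation*}
\Delta_I^{-is}\,\varphi(f)\,\Delta_I^{is}\;=\;\varphi\!\left(\Lambda_{u^{-1}}\!\circ D_{2\pi s}\circ\Lambda_u\,[f]\right),
\end{equation*}
so the theorem is just the evaluation of this composite transformation of functions.

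Since the $\Lambda_g$ form a genuine representation of $\widetilde{PSL}(2,\mbR)$ (Lemma~\ref{lemma:psl2r_rep}) and $D_{2\pi s}=\Lambda_{d_s}$ with $d_s=\mathrm{diag}(e^{\pi s},e^{-\pi s})$, the composite equals $\Lambda_{g_s}$ for the one‑parameter family $g_s=u^{-1}d_s\,u$. A one‑line matrix product gives
\begin{equation*}
g_s=\frac{1}{q-p}\begin{pmatrix} q e^{\pi s}-p e^{-\pi s} & pq\,(e^{-\pi s}-e^{\pi s})\\[3pt] e^{\pi s}-e^{-\pi s} & q e^{-\pi s}-p e^{\pi s}\end{pmatrix},
\end{equation*}
the hyperbolic one‑parameter subgroup fixing the endpoints $p$ and $q$. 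Reading $a,b,c,d$ off $g_s$ one checks $a-ct=\big((q-t)e^{\pi s}+(t-p)e^{-\pi s}\big)/(q-p)$, that $(dt-b)/(a-ct)$ is exactly the argument displayed in the theorem, and that $\sgn(c)=\sgn(s)$ for $q>p$. Substituting into \eqref{universal_psl2r_action} with sheet label $n=0$, and repackaging the prefactor $|a-ct|^{2\Delta-2}$ with the phase $e^{\mp2\pi i\Delta\,\sgn(c)\Theta(ct-a)}$ via
\begin{equation*}
\big(x\mp\sgn(s)\,i\epsilon\big)^{2\Delta-2}=|x|^{2\Delta-2}e^{\mp2\pi i(\Delta-1)\sgn(s)\Theta(-x)}=|x|^{2\Delta-2}e^{\mp2\pi i\Delta\,\sgn(s)\Theta(-x)}
\end{equation*}
(the last equality because $e^{\pm2\pi i\sgn(s)}=1$), one obtains precisely the stated $f_\pm'$.

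The content lies in two delicate points rather than in the algebra. First, for non‑integer $\Delta$ the net is not M\"obius covariant, so a priori the unitary $U_I$, which merely implements the point map $u$, need not carry $\mA_{(p,q)}$ to $\mA_{\mbR^+}$; this is exactly Lemma~\ref{lemma-interval-map}, whose proof uses that $u^{-1}(t)=(qt+p)/(t+1)$ has $t+1>0$ on $\mbR^+$, so $\Lambda_u$ acts on $\mA_I$ without the anomalous frequency‑dependent phase. Second, one must track the sheet index through the composition of three $\widetilde{PSL}(2,\mbR)$ elements: the identity $\Lambda_{u^{-1}}\circ\Lambda_{d_s}\circ\Lambda_u=\Lambda_{g_s}$ with $n=0$ requires checking that the path $s\mapsto g_s$, running from the identity through hyperbolic elements, picks up no winding around the compact direction of the KAN decomposition; it is precisely this bookkeeping that fixes the $\sgn(s)$‑dependence of the $i\epsilon$ prescription. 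I expect this second point to be the main thing to get right; everything else is routine M\"obius algebra.

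As an independent cross‑check one can verify the stated $f\mapsto f'$ directly against Lemma~\ref{stratila}: it is a state‑preserving automorphism of $\mA_I$ because $g_s$ fixes $p$ and $q$ (so the support of $f'$ stays in $I$ when that of $f$ does) and $\Lambda_{g_s}$ is implemented by a $\widetilde{PSL}(2,\mbR)$ unitary (Lemma~\ref{lemma-psl-unitary}); the analyticity required in the KMS strip $0\le\mathrm{Im}\,s\le1$ follows because $g_s$ maps the lower and upper half $t$‑planes to themselves, so $f_+$ and $f_-$ remain boundary values of functions analytic there as $s$ is continued, and matching the boundary value at $\mathrm{Im}\,s=1$ reproduces the KMS relation. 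By Lemma~\ref{stratila} this pins down $\Lambda_{g_s}$ as the modular flow.
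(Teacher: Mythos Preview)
Your approach is exactly the paper's: conjugate by $U_I$ using Proposition~\ref{prop-unitary-related-algebra} and Lemma~\ref{lemma-interval-map}, compute the matrix $g_s=u^{-1}d_s u$, and read off the result from the $\widetilde{PSL}(2,\mbR)$ representation \eqref{universal_psl2r_action}. The paper arrives at the identical matrix and then does precisely the step you flag as ``the main thing to get right'': it evaluates the composition phase $\psi$ of Lemma~\ref{lemma:psl2r_rep} (the integer $n_{g_3}$ in its proof) at a convenient point $t=0$ and checks, by a short case split on the signs of $q$ and of $qe^{\pi s}-pe^{-\pi s}$, that $\psi=0$. Your heuristic that the hyperbolic path $s\mapsto g_s$ acquires no winding is the right intuition, but it is not a proof; the paper's case analysis is the actual content, and you should carry it out rather than defer it. Your repackaging of $|a-ct|^{2\Delta-2}e^{\mp 2\pi i\Delta\,\sgn(c)\Theta(ct-a)}$ as $(a-ct\mp\sgn(s)i\epsilon)^{2\Delta-2}$ is correct and matches the paper's final form.

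The KMS cross-check you sketch is a legitimate alternative in the spirit of the half-line proof (Theorem~\ref{ModFlowThmInsideAlg}), but as written it is also incomplete: you would still need to exhibit the analytic continuation in the strip and match the boundary value at $\mathrm{Im}\,s=1$, which for the interval amounts to redoing the half-line KMS computation after pushing forward by $u$. The paper does not take this route for Theorem~\ref{thm-interval-mod}; it relies solely on the conjugation and the explicit phase computation.
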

\begin{proof}
    See Appendix \ref{sec-proof-interval-mod}.
\end{proof}
\begin{figure}[t]
    \centering
    \includegraphics[width=0.85\linewidth]{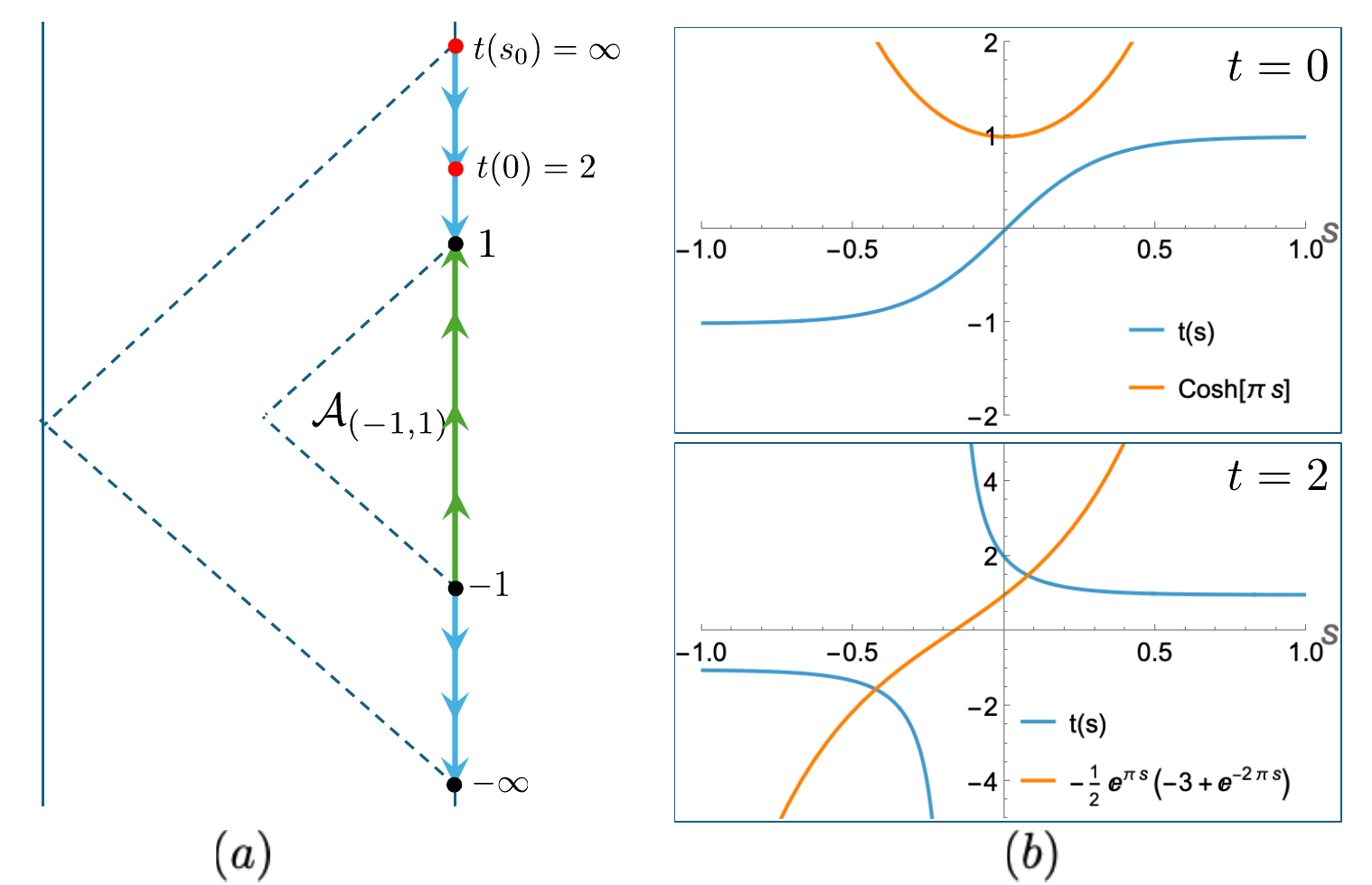}
    \caption{\small Modular flow for the time interval algebra $\mA_{(-1,1)}$: (a) The modular flow (green) is local for the operators inside $\mA_{(-1,1)}$. For an operator outside the interval, the modular flow (blue) takes it to the past/future horizon in finite modular time, $s_0 = \frac{1}{2\pi} \log\lb \frac{t-q}{t-p}\rb$. The modular flow remains local until $s_0$. (b) The top plot shows the modular flow $t(s)$ (blue) for an operator inside $\mA_{(-1,1)}$ starting at $t(0)=0$. The conformal factor (orange) remains positive for all modular time and the flow is local. The bottom plot shows the modular flow (blue) for an operator outside the interval starting at $t(0)=2$. The conformal factor (orange) becomes negative for negative modular flow at finite $s_0<0$ when the operator reaches the future horizon and there is a discontinuity. For $s<s_0$, $t(s)$ is mapped to $(-\infty, -1)$ with a frequency dependent phase. Hence the flow is non-local in general.  }
    \label{fig:interval_modular_flow}
\end{figure}
For any operator in $\mA_I$, the $i\epsilon$ is trivial and the modular flow acts as if it is M\"obius covariant. This result also applies to the (unbounded) affiliated operators that have support only in $I$. For operators not in $\mA_I$, the action is, in general, non-local; see Figure \ref{fig:interval_modular_flow}. This is summarized in the following corollary:
\begin{corollary}\label{modflowtimeintervalBoundary}
    The modular flow of the operators $\varphi(f_I)$ affiliated with $\mA_I$ is local and given by the geometric transformation
    \begin{eqnarray}
   &&\Delta_I^{-is}\varphi(f)\Delta_I^{is} = \varphi(f')\ , \nn\\
   &&f'(t)= \left(\frac{(q-t)e^{\pi s}+(t-p)e^{-\pi s}}{q-p}\right)^{2\Delta-2} f\left(\frac{(q-t)p\,e^{\pi s} + (t-p)q\,e^{-\pi s}}{(q-t)e^{\pi s} + (t-p)e^{-\pi s}}\right)     
    \end{eqnarray}
    whereas, for operators not affiliated with the algebra, the modular flow is non-local in general. The algebra $\mA_I$ is a modular future algebra. 
\end{corollary}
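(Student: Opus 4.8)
\emph{Plan.} I will extract both statements from Theorem~\ref{thm-interval-mod}, and the modular-future claim in addition from Lemma~\ref{lemma-interval-map}, Theorem~\ref{ModFlowThmInsideAlg} and Proposition~\ref{prop-unitary-related-algebra}. Write the conformal factor appearing in Theorem~\ref{thm-interval-mod} as $C(t,s)=\tfrac{N(t,s)}{q-p}$ with $N(t,s)=(q-t)e^{\pi s}+(t-p)e^{-\pi s}$. The key point is that the $\mp\sgn(s)i\epsilon$ prescription there only matters where $N$ changes sign, and that this never happens when $t\in I=(p,q)$ but does happen at a finite modular time once $t$ leaves $\bar I$.

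First, suppose $\varphi(f)$ is affiliated with $\mA_I$, i.e.\ $\mathrm{supp}(f)\subseteq I$. On this support $q-t>0$ and $t-p>0$, so $N(t,s)$ is a strictly positive combination of $e^{\pi s}$ and $e^{-\pi s}$ for every real $s$; hence $C(t,s)^{2\Delta-2}$ is an unambiguous positive real and the $\mp\sgn(s)i\epsilon$ prescription of Theorem~\ref{thm-interval-mod} is inert. The two frequency projections $[\cdot]_{+}$, $[\cdot]_{-}$ then act on one and the same integrand and reassemble into a single function, giving the stated geometric transformation $f'(t)=C(t,s)^{2\Delta-2}\,f\lb\tfrac{(q-t)p\,e^{\pi s}+(t-p)q\,e^{-\pi s}}{N(t,s)}\rb$. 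Using $\tfrac{(q-t)p\,e^{\pi s}+(t-p)q\,e^{-\pi s}}{N(t,s)}=p+\tfrac{(q-p)(t-p)e^{-\pi s}}{N(t,s)}$ (and the symmetric identity for the $q$-end) one checks this argument stays in $I$, consistent with the modular flow being an automorphism of $\mA_I$; the extension to unbounded affiliated operators is the usual Weyl-algebra and spectral-calculus argument.

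Next suppose $\mathrm{supp}(f)$ meets the complement of $\bar I$ and pick $t_0$ in it, say $t_0>q$ (the case $t_0<p$ is symmetric). Then $q-t_0$ and $t_0-p$ have opposite signs, so $N(t_0,\cdot)$, which equals $q-p>0$ at $s=0$, vanishes and changes sign at a finite modular time $s_0$ (the value drawn in Figure~\ref{fig:interval_modular_flow}). On the range where $N(t_0,s)<0$, the two quantities $C(t_0,s)\mp\sgn(s)i\epsilon$ lie on opposite sides of the branch cut of $z\mapsto z^{2\Delta-2}$, so the flow multiplies the positive- and negative-frequency parts of $f$ by different $t$-dependent phases; for non-integer $\Delta$ this relative phase is nontrivial, so the modular flow of $\varphi(f)$ cannot be a point transformation — it is non-local. (For integer $\Delta$ the phase trivializes and the flow is geometric on all operators, in accord with Theorem~\ref{thm-loc-mobcov-net-integer-delta}.)

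Finally, that $\mA_I$ is a modular future algebra follows by transport from $\mA_{\mbR^+}$. By Lemma~\ref{lemma-interval-map} the state-preserving unitary $U_I$ implementing $u$ satisfies $U_I^\dagger\mA_I U_I=\mA_{\mbR^+}$, and since its action is local on all of $\mA_I$, applying it to a subinterval $J=(u^{-1}(a),q)\subset I$ with $a>0$ gives $U_I^\dagger\mA_J U_I=\mA_{(a,\infty)}$; by Proposition~\ref{prop-unitary-related-algebra}, $\Delta_I^{is}=U_I\Delta_{\mbR^+}^{is}U_I^\dagger$. By Theorem~\ref{ModFlowThmInsideAlg}, $\mA_{(a,\infty)}$ is a proper subalgebra of $\mA_{\mbR^+}$ with $\Delta_{\mbR^+}^{-is}\mA_{(a,\infty)}\Delta_{\mbR^+}^{is}\subseteq\mA_{(a,\infty)}$ for $s>0$; conjugating this inclusion by $U_I$ shows $\mA_J\subset\mA_I$ is a proper subalgebra with $\Delta_I^{-is}\mA_J\Delta_I^{is}\subseteq\mA_J$ for $s>0$, so $\mA_I$ is a modular future algebra. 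The one genuinely delicate step is the non-locality claim: one must check that the branch ambiguity really does act differently on the $\pm$-frequency components (rather than producing an overall phase that could be absorbed into the operator), so that ``non-local in general'' is justified, and that the $i\epsilon$-free regime is exactly the set of operators affiliated with $\mA_I$; one should also be careful to read ``affiliated with $\mA_I$'' as ``smeared by functions supported in $I$''. Everything else is bookkeeping on top of Theorem~\ref{thm-interval-mod}.
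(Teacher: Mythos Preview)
Your proposal is correct and follows the same approach the paper takes: the corollary is stated there as an immediate consequence of Theorem~\ref{thm-interval-mod}, with the paper's surrounding text simply noting that ``for any operator in $\mA_I$, the $i\epsilon$ is trivial'' and referring to Figure~\ref{fig:interval_modular_flow} for the non-local case. You have filled in exactly the details that justify that remark---the strict positivity of $N(t,s)$ on $I$ and its sign change off $\bar I$---and your transport of the HSMI$+$ property from $\mA_{\mbR^+}$ via $U_I$ is the natural argument (and is how the paper itself reasons when it needs this in Section~\ref{conftoModInt}).
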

This locality issue has an interesting interpretation in the bulk. In the next section, we will see that, the modular flow of bulk operators outside of the algebra remains local on the boundary for a finite range of modular time parameters until the operator falls through the future or past Poincar\'e horizon. At this point, the boundary modular flow becomes highly nonlocal whereas, in terms of the bulk field, the flow remains local and geometric.

In the following corollary, we explicitly write down the action of the modular conjugation operator for a time interval algebra (see Figure \ref{fig:interval_modular_conjugation}):
\begin{corollary}[Modular conjugation for time interval] \label{thm-J_interval}
    The modular conjugation operator corresponding to the time interval algebra $\mA_I$ for $I=(p,q)$ is given by
    \begin{eqnarray}
        J_I = - \lb \mT_{\Delta}^\dagger \Theta(\tfrac{p+q}{2}-t)  + \mT_\Delta \Theta(t-\tfrac{p+q}{2})  \rb  \mR_I 
    \end{eqnarray}
    where $\mR_I$ is a local anti-unitary map that implements $\mR_I :\varphi(f) \to \varphi(R_I[f])$ with
    \begin{eqnarray}
        R_I[f](t) = \left|\frac{p+q-2t}{q-p} \right|^{2\Delta-2}  f\left(\frac{2pq-(p+q)t}{p+q-2t}\right)
    \end{eqnarray}
    and for convenience we used the notation $\Theta(t)\varphi(f(t)) := \varphi(f(t)\Theta(t))$.
\end{corollary}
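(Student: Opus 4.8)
The plan is to extract $J_I$ from the Tomita operator, via the polar decomposition $S_I=\Delta_I^{-1/2}J_I$, by analytically continuing the modular flow of Theorem~\ref{thm-interval-mod} to imaginary parameter $s=i/2$ --- exactly as $J_{\mathbb{R}^+}$ was obtained from the half-line modular flow in the proof of Theorem~\ref{thm-J_half-line}. Since a GFF is Gaussian, for $W(f),W(g)\in\mA_I$ the Weyl relations reduce everything to the two-point function $\braket{\varphi(f)\,J_I\,\varphi(g)}=\braket{\varphi(f)\,\Delta_I^{1/2}\,\varphi(-g)}$. The right-hand side is the $s=i/2$ value of $\braket{\varphi(f)\,\Delta_I^{-is}\varphi(g)\Delta_I^{is}}=\braket{\varphi(f)\,\varphi(g')}$, with $g'$ the transformed function of Theorem~\ref{thm-interval-mod}, and since $\Delta_I^z\ket{\Omega}=\ket{\Omega}$ this identifies $J_I\varphi(g)\ket{\Omega}=\varphi(-g')\ket{\Omega}$ with $g'$ evaluated at $s=i/2$. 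So the whole content of the corollary is the evaluation of $g'$ at $s=i/2$ together with the check that the resulting operator is an anti-unitary involution fixing the vacuum.

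Setting $e^{\pi s}=i$ and $e^{-\pi s}=-i$ in Theorem~\ref{thm-interval-mod}, the argument of $g$ collapses to $\tfrac{2pq-(p+q)t}{p+q-2t}$, which is exactly the image appearing in $R_I$; one checks independently that this is the map $u^{-1}\circ(t\mapsto-t)\circ u$ for the M\"obius transformation $u$ of \eqref{conformalUT}, that it fixes the endpoints $p,q$, and that it carries the midpoint $\tfrac{p+q}{2}$ to infinity. Under the same substitution the conformal prefactor of Theorem~\ref{thm-interval-mod} becomes $\big(\tfrac{i(p+q-2t)}{q-p}\mp\sgn(s)\,i\epsilon\big)^{2\Delta-2}$; its modulus is the weight $\big|\tfrac{p+q-2t}{q-p}\big|^{2\Delta-2}$ appearing inside $R_I$, and its phase is what must be shown to assemble into the piecewise GHT.

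The crux is this phase bookkeeping. The base $\tfrac{i(p+q-2t)}{q-p}$ lies on the positive imaginary axis for $t<\tfrac{p+q}{2}$ and on the negative imaginary axis for $t>\tfrac{p+q}{2}$, so following the continuation from $s=0$ to $s=i/2$ with the $i\epsilon$ prescription inherited from Theorem~\ref{thm-interval-mod} pins down two \emph{different} branches of the $(2\Delta-2)$-th power on the two sides of $t=\tfrac{p+q}{2}$. Combining this with the fact that $R_I$, being orientation-reversing, exchanges the positive- and negative-frequency content of $g$ under the projections $[\,\cdot\,]_\pm$, the residual operator beyond the local reflection $\mR_I$ comes out to $\mT_\Delta^\dagger$ on $\{t<\tfrac{p+q}{2}\}$ and $\mT_\Delta$ on $\{t>\tfrac{p+q}{2}\}$, which is precisely the piecewise factor $\mT_\Delta^\dagger\Theta(\tfrac{p+q}{2}-t)+\mT_\Delta\Theta(t-\tfrac{p+q}{2})$. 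I expect getting this branch-and-frequency count right --- making sure the accumulated phases land on $\mT_\Delta^{\pm1}$ rather than on an off-by-one copy in $\widetilde{PSL}(2,\mathbb{R})$ --- to be the main obstacle, much as the sheet label $n$ was the delicate point in Lemma~\ref{lemma:psl2r_rep}.

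It remains to verify that the operator so obtained is a state-preserving anti-unitary with $J_I^2=1$. This follows as in Theorem~\ref{thm-J_half-line}: $\mT_\Delta$ is already known to be a state-preserving unitary, and $\mR_I$ is a state-preserving anti-unitary by the same computation used there for $\mR$, now using $G(-t)=G^*(t)$ together with the $PSL(2,\mathbb{R})$-covariance of the vacuum two-point function under the \emph{local} map $R_I$ (Lemma~\ref{lemma-psl-unitary}). Two cross-checks are worth recording: conjugating $J_{\mathbb{R}^+}=-\mT_\Delta^\dagger\mR$ by the unitary $U_I$ of Lemma~\ref{lemma-interval-map} must reproduce the same formula via Proposition~\ref{prop-unitary-related-algebra}, with the non-local phases of $U_I$ and $U_I^\dagger$ conspiring with $\mT_\Delta^\dagger$ to generate the split at $t=\tfrac{p+q}{2}$; and the limits $p\to-\infty$ and $q\to+\infty$ must degenerate $R_I$ to the reflection about the surviving endpoint while collapsing the piecewise factor to $\mT_\Delta$ and $\mT_\Delta^\dagger$, recovering $J_{\mathbb{R}^-}$ and $J_{\mathbb{R}^+}$ respectively.
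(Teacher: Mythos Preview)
Your approach is correct but inverts the paper's logic. The paper takes as its \emph{primary} argument what you list only as a cross-check: it writes $J_I = U_I J_{\mathbb{R}^+} U_I^\dagger = -\mT_\Delta^\dagger\, U_I \mR U_I^\dagger$ via Proposition~\ref{prop-unitary-related-algebra} (using that $\mT_\Delta$ commutes with $U_I$), and then composes the three explicit $\widetilde{PSL}(2,\mathbb{R})$ actions $U_I^\dagger$, $\mR$, $U_I$ on functions. The phase bookkeeping then reduces to the step-function identity $\Theta(t-q)+\Theta\!\big(\tfrac{2t-(p+q)}{q-t}\big)=\Theta(2t-(p+q))$, which produces a phase $e^{\mp i2\pi\Delta}$ on the half $2t>p+q$; combined with the overall $-\mT_\Delta^\dagger$ this yields $-\mT_\Delta$ there and $-\mT_\Delta^\dagger$ on the other half.

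Your route---analytically continuing the interval modular flow of Theorem~\ref{thm-interval-mod} to $s=i/2$---is valid and more self-contained, but it re-fights the branch battle that the paper has already packaged into the representation theory of Lemma~\ref{lemma:psl2r_rep}. The paper's route is shorter precisely because the sheet-tracking has been done once and for all: no $i\epsilon$ prescription or continuation is needed, only composition of known actions and a step-function simplification. Your expectation that the ``off-by-one'' hazard is the delicate point is exactly right, and the paper's approach sidesteps it cleanly.
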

\begin{proof}
    See Appendix \ref{app:proof-thm-J_interval} for a proof.
\end{proof}

Before we proceed with the bulk analysis, we end this section with the following observation concerning the modular conjugation operator of the time interval algebra:
\begin{corollary} \label{corr-inv-in-terms-of-Js}
Consider an interval $I = (-1,1)$. The unitary operator defined as $J_{I}J_{\mathbb{R}^+}$ implements the inversion map in \eqref{eq-inv-map}. 
\end{corollary}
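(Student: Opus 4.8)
The plan is to compute the composition $J_I J_{\mathbb{R}^+}$ directly using the explicit formulas for the two modular conjugations already established in Theorem \ref{thm-J_half-line} and Corollary \ref{thm-J_interval}, specialized to $I=(-1,1)$, i.e. $p=-1$, $q=1$. From Theorem \ref{thm-J_half-line} we have $J_{\mathbb{R}^+} = -\mT_\Delta^\dagger \mR$ where $\mR$ implements $t\to -t$. From Corollary \ref{thm-J_interval} with $(p+q)/2 = 0$, the map $\mR_I$ implements $R_I[f](t) = |{-t}|^{2\Delta-2} f(-1/t) = |t|^{2\Delta-2} f(-1/t)$ (since $2pq = -2$ and $p+q=0$), and $J_I = -(\mT_\Delta^\dagger \Theta(-t) + \mT_\Delta \Theta(t))\mR_I$. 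The key point is that $\mR_I$ for this symmetric interval is, up to the frequency-independent power $|t|^{2\Delta-2}$, exactly the naive inversion $t\to -1/t$ composed with a reflection.

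The main computation is then: $J_I J_{\mathbb{R}^+} = (\mT_\Delta^\dagger \Theta(-t) + \mT_\Delta \Theta(t))\mR_I \,\mT_\Delta^\dagger \mR$ (the two minus signs cancel). First I would move $\mR_I$ past $\mT_\Delta^\dagger$: since $\mR_I$ flips $t\to -1/t$, it swaps the sign of $t$ and hence swaps positive and negative frequency half-lines, so conjugating $\mT_\Delta^\dagger$ by $\mR_I$ turns it into $\mT_\Delta$ (this is the content of the identity $\mR\mT_\Delta\mR = \mT_\Delta^\dagger$ from \eqref{eq-comm-T-R}, and $\mR_I$ has the same frequency-swapping behavior as $\mR$ because $t$ and $-1/t$ have opposite signs). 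Then I would combine $\mR_I \mR$, which implements $t \to -1/(-t) = 1/t$ composed with the appropriate conformal factor — wait, more carefully, $\mR_I\mR$ first reflects $t\to -t$ then applies $R_I$, giving $t \to |t|^{2\Delta-2} f(1/t)$ type behavior. Tracking the $\Theta$-projectors through these reflections, the operators $\mT_\Delta^\dagger$ and $\mT_\Delta$ that appear get reorganized so that $\mT_\Delta$ multiplies the negative-$t$ (or negative-frequency) part and $\mT_\Delta^\dagger$ the positive part, producing exactly the phase structure $e^{\mp 2\pi i\Delta\Theta(t)}$ that appears in the inversion map \eqref{eq-inv-map} after accounting for $\mT_\Delta = e^{\mp i\pi\Delta}$ on $f_\pm$ and the squaring that was noted in \eqref{tau2}.

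The cleanest route is probably to verify the claim at the level of the action on one-particle wavefunctions $f_\pm(t)$: compute $(J_I J_{\mathbb{R}^+})$ acting on $f$ by composing the four elementary operations (reflection $\mR$, GHT $\mT_\Delta^\dagger$, local inversion-reflection $\mR_I$, and the two frequency-projected GHTs in $J_I$), being careful that $\mT_\Delta$ acts as $e^{\mp i\pi\Delta}$ on $f_\pm$ and that reflections and inversions swap $f_+ \leftrightarrow f_-$. One should land on precisely
\begin{eqnarray}
    (P_\pm f)(t) \to P_\pm\left[ |t|^{2(\Delta-1)} e^{\mp 2\pi i\Delta\Theta(t)} f(-1/t)\right],
\end{eqnarray}
matching \eqref{eq-inv-map}. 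An alternative, more conceptual verification uses Corollary \ref{corr-inv-in-terms-of-Js}'s cousin logic: $U_I$ from Lemma \ref{lemma-interval-map} conjugates $\mA_{(-1,1)}$ to $\mA_{\mathbb{R}^+}$, so by Proposition \ref{prop-unitary-related-algebra} $J_I = U_I^\dagger J_{\mathbb{R}^+} U_I$ up to the subtlety that $U_I$ is only locally implemented; then $J_I J_{\mathbb{R}^+} = U_I^\dagger J_{\mathbb{R}^+} U_I J_{\mathbb{R}^+}$ and one recognizes $J_{\mathbb{R}^+} U_I J_{\mathbb{R}^+}$ as a $\widetilde{PSL}(2,\mathbb{R})$-type element whose product with $U_I^\dagger$ is the inversion. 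The main obstacle I anticipate is bookkeeping the $\Theta$-projectors and the branch/phase factors $e^{\pm i\pi\Delta}$ through the reflections $\mR$ and $\mR_I$ — in particular getting the $\Theta(t)$ versus $\Theta(-t)$ and the sign in the exponent to come out exactly as in \eqref{eq-inv-map} rather than its inverse or complex conjugate — since each anti-unitary reflection both swaps frequency sectors and complex-conjugates, and there are several of them composed together.
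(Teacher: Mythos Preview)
Your proposal sketches two routes. Both are viable in principle, but each contains slips, and neither quite lands on the paper's key simplification.

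\textbf{On the direct-composition route.} A few computational errors: for $p=-1$, $q=1$ one has $2pq = -2$ and $p+q=0$, so $R_I[f](t) = |t|^{2\Delta-2} f(1/t)$, not $f(-1/t)$. Also, your reason for $\mR_I$ swapping $P_+ \leftrightarrow P_-$ is wrong: $t$ and $1/t$ have the \emph{same} sign, so the coordinate map alone does not do it; what swaps the frequency sectors is that $\mR_I$ is anti-unitary (complex conjugation sends positive-frequency to negative-frequency). With these corrections the approach can be pushed through, but as you anticipate the $\Theta$-projector bookkeeping is tedious.

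\textbf{On the alternative route via $U_I$.} This is essentially what the paper does, but you have the conjugation direction backwards: since $U_I^\dagger \mA_I U_I = \mA_{\mathbb{R}^+}$ (Lemma \ref{lemma-interval-map}), Proposition \ref{prop-unitary-related-algebra} gives $J_I = U_I J_{\mathbb{R}^+} U_I^\dagger$, not $U_I^\dagger J_{\mathbb{R}^+} U_I$. More importantly, you miss the simplification that makes this route clean: writing
\[
J_I J_{\mathbb{R}^+} = U_I \mT_\Delta^\dagger \mR\, U_I^\dagger\, \mT_\Delta^\dagger \mR
\]
and using that $\mT_\Delta$ commutes with $U_I$ (both are diagonal in frequency) together with $\mR \mT_\Delta \mR = \mT_\Delta^\dagger$ from \eqref{eq-comm-T-R}, the two GHT factors cancel \emph{completely}. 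One is left with $J_I J_{\mathbb{R}^+} = U_I \mR U_I^\dagger \mR$, a composition of four explicitly known local/antilinear maps that evaluates directly to \eqref{eq-inv-map}. This is the paper's proof, and it bypasses all the $\Theta$-projector tracking that your primary route requires.
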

\begin{proof}
    See Appendix \ref{app-proof-inv-corr} for a proof.  
\end{proof}
\begin{figure}[t]
    \centering
    \includegraphics[width=0.35\linewidth]{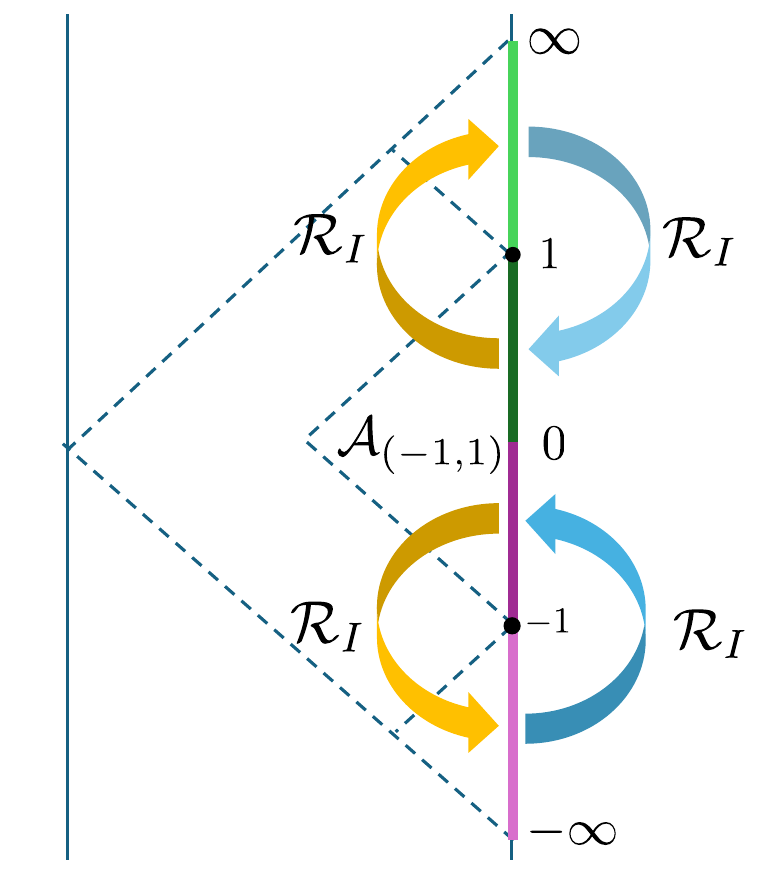}
    \caption{\small For the time interval algebra $\mA_{(-1,1)}$, the local anti-unitary map $\mR_I$ maps $f(t) \to |t|^{2(\Delta-1)} f(1/t)$. The modular conjugation of operators in the interval $(-1,0)$ is the GHT of operators in $(-\infty,-1)$ and the modular conjugation of operators in $(0,1)$ is the GHT of operators in $(1,\infty)$ and vice versa. }
    \label{fig:interval_modular_conjugation}
\end{figure}

\section{Bulk dual of conformal GFF} \label{sec:bulk-dual}

Conformal GFF in $d$-dimensional Minkowski space with spectral density $\rho(\omega)=\omega^{2\Delta-d}$ is dual to massive free fields on AdS$_{d+1}$ with mass $m^2=\Delta(\Delta-d)$. The conformal group $SO(d,2)$ coincides with the isometry group of AdS$_{d+1}$. Conformal $0+1$-dimensional GFF with spectral density $\omega^{2\Delta-1}$ is dual to free fields of mass $m^2=\Delta(\Delta-1)$ living on the boundary of Poincar\'e AdS$_2$ whose isometry group is $PSL(2,\mathbb{R})$.\footnote{The restricted conformal group $SO^+(1,2)\simeq PSL(2,\mathbb{R})\simeq PSU(1,1)$} In the special case where $\Delta=1$, the AdS$_2$ (bulk) fields are massless and conformal. This is consistent with what we found in Lemma \ref{lemma:psl2r_rep}: for $\Delta=1$, conformal GFF transforms as a representation of $PSL(2,\mathbb{R})$, whereas for non-integer $\Delta$, we only have a representation of the universal cover $\widetilde{PSL}(2,\mbR)$.

In this section, using the HKLL reconstruction of $1+1$-dimensional fields in AdS$_2$  \cite{hamilton2006local} we establish that the modular flow and conjugation of time interval algebras that were non-local on the boundary, viewed in terms of bulk fields become local. The key point is that, as we saw in Section \ref{sec:modular-flow}, the GHT was the origin of the non-locality of the modular data of the boundary. In Lemma \ref{lamma-anitpodal}, we show that 
 the bulk dual of the GHT corresponds to a local antipodal transformation.

We start by setting the notation for the global and Poincar\'e coordinates in AdS$_2$. For more details see Appendix \ref{app:ads_geometry}. In global coordinates, the AdS$_2$ metric is
\begin{eqnarray}
    ds^2=\frac{1}{\cos^2\rho}(-d\tau^2+d\rho^2)
\end{eqnarray}
where $\rho\in (-\pi/2,\pi/2)$ and $\tau\in \mathbb{R}$, and we have set the AdS radius to one. The GFF fields live on the two asymptotic boundaries at $\rho=\pm \pi/2$. For every pair of points in global AdS$_2$ it is convenient to define 
\begin{eqnarray} \label{eq-invariant-distance}
    \sigma(\tau,\rho|\tau',\rho')=\frac{\cos(\tau-\tau')-\sin\rho\sin\rho'}{\cos\rho\cos\rho'}=\cosh(s)
\end{eqnarray}
where $s$ is the geodesic proper distance between two points.

The Poincar\'e coordinates $(t,z)$ are given by 
\begin{eqnarray}
    t=\frac{\sin\tau}{\cos\tau+\sin\rho}\ ,\qquad z=\frac{\cos\rho}{\cos\tau+\sin\rho} \ .
\end{eqnarray}
The Poincar\'e patch of AdS$_2$ covered by the coordinates above has an asymptotic boundary at $z=0$ and there are Killing horizons at $(t>0, z\to \infty)$ and $(t<0,z\to \infty)$. On the boundary, the Poincar\'e time is given by 
\begin{eqnarray}
    t=\frac{\sin\tau}{\cos\tau+1}=\tan(\tau/2)\ .
\end{eqnarray}
This is precisely the change of coordinates we made on the boundary in Section \ref{sec-Mobius} to realize the action of the universal cover $\widetilde{PSL}(2,\mbR)$ and the map $\tau\to \tau+2\pi$.

\subsection{Bulk algebras and Haag's duality}

A free scalar field of mass $m$ in global AdS$_2$ can be expanded in a complete set of normalizable modes 
\begin{eqnarray} \label{eq-bulk-mode-expansion}
    \varphi(\tau,\rho)=\sum_{n=0}^\infty a_n e^{-i(n+\Delta) \tau}\cos^\Delta\rho\: C_n^\Delta(\sin\rho) + \text{h.c.}
\end{eqnarray}
    where $C_n^\Delta$ are Gegenbauer polynomials \cite{hamilton2006local}.\footnote{The Gegenbauer polynomials are a special case of the Jacobi polynomials and they are odd for odd $n$ satisfying $C_n^\Delta(-x) = (-1)^n C_n^\Delta(x)$.} The right and left GFF fields are
    \begin{eqnarray}
        &&\varphi^R(\tau)=\lim_{\rho\to \pi/2} \frac{\varphi(\tau,\rho)}{\cos^\Delta\rho}\nn\\
       && \varphi^L(\tau)=\lim_{\rho\to -\pi/2} \frac{\varphi(\tau,\rho)}{\cos^\Delta\rho}\ .
    \end{eqnarray}
The bulk dual of every boundary time interval algebra $\mA_{I}$ corresponds to the causal development of the time interval algebra; see Figure \ref{fig:intro} (a).  Note that there is only one algebra $\mA_I$ and one Hilbert space, and the bulk or boundary fields correspond to different fields (frames) we use to generate the algebra. However, we will see that the notion of locality and Haag's duality are frame-dependent.\footnote{Another frame we can use to generate this algebra is the GFF algebra of time intervals defined on a constant $z$-surface in the bulk. This corresponds to moving the boundary inside the bulk.} In Lemma \ref{lemma-psl-unitary}, we showed that the vacuum of GFF is invariant under the action of the $\widetilde{PSL}(2,\mbR)$. In the bulk, this corresponds to the invariance of the vacuum of massive free fields in Poincar\'e AdS$_2$ under the isometry group of AdS$_2$.\footnote{For completeness, in Appendix \ref{AdS2massive}, we show this invariance directly in the bulk.}

\paragraph{Antipodal symmetry is the GHT:} 
The antipodal symmetry transformation $\mT:(\tau,\rho)\to (\tau+\pi,-\rho)$ of the global AdS$_2$ commutes with the $PSL(2,\mathbb{R})$ isometry group of AdS$_2$. It sends one Poincar\'e patch to the one in the future of it in the embedding in global AdS$_2$; see Figure \ref{fig:thm4and6}. It pushes local bulk fields behind the Poincar\'e horizon and corresponds to a non-local transformation on the boundary. In fact, as we show below, this transformation is the bulk dual to the boundary GHT transform $\mT_\Delta$ we introduced in (\ref{GHT}).
Recall that $\mT_\Delta$ is not part of the universal cover of the conformal group, however, it squares to the transformation $\mT^2_\Delta:\tau\to \tau+ 2\pi$.

\begin{lemma} [Dual of Antipodal and GHT] \label{lamma-anitpodal}
The action of the transformation $\mT:(\tau,\rho)\to (\tau+\pi,-\rho)$ on global AdS$_2$ fields $\varphi(\tau,\rho)$ written in terms of the boundary GFF fields is the Generalized Hilbert Transform (GHT):
\begin{eqnarray}
    \mT_\Delta^\dagger: \varphi_\pm(\tau)\to e^{\mp i\pi \Delta}\varphi_\pm(\tau)\ .
\end{eqnarray}
\end{lemma}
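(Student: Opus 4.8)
The plan is to compute the action of the bulk antipodal map $\mathcal T:(\tau,\rho)\to(\tau+\pi,-\rho)$ directly on the mode expansion \eqref{eq-bulk-mode-expansion}, then take the boundary limit $\rho\to\pm\pi/2$ to read off its action on $\varphi^{R}$ and $\varphi^{L}$, and finally recognize the result as the GHT \eqref{GHT}. First I would note that under $\mathcal T$ each bulk mode picks up two factors: the temporal phase $e^{-i(n+\Delta)\tau}\to e^{-i(n+\Delta)(\tau+\pi)} = e^{-i\pi(n+\Delta)}\,e^{-i(n+\Delta)\tau}$, and the Gegenbauer/warp factor $\cos^{\Delta}\rho\,C_n^{\Delta}(\sin\rho)\to \cos^{\Delta}\rho\,C_n^{\Delta}(-\sin\rho) = (-1)^n\cos^{\Delta}\rho\,C_n^{\Delta}(\sin\rho)$, using the parity of the Gegenbauer polynomials noted in the footnote after \eqref{eq-bulk-mode-expansion} and the evenness of $\cos\rho$. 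The product of the two extra factors is $e^{-i\pi(n+\Delta)}(-1)^n = e^{-i\pi n}e^{-i\pi\Delta}(-1)^n = e^{-i\pi\Delta}$, which is \emph{independent of $n$}. Hence on the positive-frequency part of the bulk field $\mathcal T$ acts as multiplication by the single constant $e^{-i\pi\Delta}$, and on the negative-frequency (hermitian conjugate) part by $e^{+i\pi\Delta}$.

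Next I would push this to the boundary. Since $\mathcal T$ maps $\rho=\pi/2$ to $\rho=-\pi/2$, strictly speaking it exchanges the two asymptotic boundaries; one takes $\varphi^{R}(\tau) = \lim_{\rho\to\pi/2}\varphi(\tau,\rho)/\cos^{\Delta}\rho$ and finds $\mathcal T^{\dagger}\varphi^{R}(\tau)\mathcal T$ expressed through the left boundary data at shifted time, but the relevant combination that acts within a single Poincaré patch — the one we called $\mathcal T_{\Delta}$ — is obtained by composing with the identification of the patches, and the net effect on the mode coefficients is exactly the frequency-dependent phase computed above. Concretely, writing $\varphi^{R}_{\pm}$ for the positive/negative frequency parts, the computation gives $\mathcal T_{\Delta}^{\dagger}\varphi_{\pm}(\tau)\mathcal T_{\Delta} = e^{\mp i\pi\Delta}\varphi_{\pm}(\tau)$, which is precisely \eqref{GHT3} with the GHT \eqref{GHT}. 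I would also cross-check against \eqref{tau2}: squaring gives $e^{\mp 2\pi i\Delta}$ on $\varphi_{\pm}$, consistent with $\mathcal T^2:\tau\to\tau+2\pi$ acting as $H_\Delta^2$, and with the fact that for integer $\Delta$ the antipodal map is an honest involution of the locally-acting net (Theorem \ref{thm-loc-mobcov-net-integer-delta}) while for non-integer $\Delta$ it is not.

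The main obstacle is bookkeeping the boundary limit cleanly — in particular keeping track of which asymptotic boundary $\mathcal T$ lands on and how the Poincaré-patch identification converts the "$L\leftrightarrow R$ with $\tau\to\tau+\pi$" map of global AdS$_2$ into the single-patch transformation $\mathcal T_{\Delta}$ that is dual to the GHT on a fixed GFF. One must verify that the warp-factor rescaling $1/\cos^{\Delta}\rho$ used to define $\varphi^{R}, \varphi^{L}$ is compatible with the sign flip $\rho\to-\rho$ (it is, since $\cos$ is even) so that no extra $\Delta$-dependent factor is introduced, and that the mode sum converges uniformly enough near the boundary to interchange the limit with the (already diagonal) action of $\mathcal T$ — which is immediate here because $\mathcal T$ acts mode-by-mode as a pure phase. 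Once the phase $e^{-i\pi\Delta}$ is seen to be $n$-independent, the rest is a matter of matching notation to \eqref{GHT}–\eqref{GHT3}.
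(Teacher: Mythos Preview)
Your core computation is exactly the paper's: both use the Gegenbauer parity $C_n^{\Delta}(-x)=(-1)^nC_n^{\Delta}(x)$ together with $e^{-i\pi(n+\Delta)}(-1)^n=e^{-i\pi\Delta}$ to show the antipodal map acts as the $n$-independent phase $e^{\mp i\pi\Delta}$ on $\varphi_\pm(\tau,\rho)$. The only difference is in the bulk-to-boundary step: rather than taking boundary limits and worrying about the $R\leftrightarrow L$ swap (which you correctly flag as the delicate point), the paper writes $\varphi(\tau,\rho)=\int d\tau'\,K(\tau'|\tau,\rho)\,\varphi(\tau')$ via HKLL and applies $H_\Delta^{-1}$ directly to the smearing kernel, obtaining $\sum_\pm e^{\mp i\pi\Delta}\varphi_\pm(\tau,\rho)=\varphi(\tau+\pi,-\rho)$. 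This sidesteps the boundary-swap bookkeeping entirely and makes the identification with $\mT_\Delta$ immediate; your route is equivalent but the HKLL formulation is cleaner for exactly the reason you anticipated.
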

\begin{proof}
From \eqref{eq-bulk-mode-expansion}, the positive and negative modes of the bulk field are
\begin{eqnarray}
    \varphi_+(\tau,\rho)&=&\sum_{n=0}^\infty a_n e^{-i(n+\Delta) \tau}\cos^\Delta\rho\: C_n^\Delta(\sin\rho)\ ,\qquad \varphi_-(\tau,\rho) = \varphi_+^\dagger(\tau,\rho)\nn\\
    \varphi_+(\tau + \pi,-\rho)&=&\sum_{n=0}^\infty a_n e^{-i(n+\Delta)(\tau+\pi)}\cos^\Delta\rho\: C_n^\Delta(-\sin\rho)\nonumber\\
    &=&\sum_{n=0}^\infty a_n e^{-i\pi\Delta}(-1)^n e^{-i(n+\Delta)\tau}\cos^\Delta\rho\: (-1)^n C_n^\Delta(\sin\rho)\nonumber\\
    &=& e^{-i\pi\Delta} \varphi_+(\tau,\rho)\ .
\end{eqnarray}
We can write the bulk field as a smeared boundary field 
\begin{eqnarray}
    \varphi(\tau,\rho) &=& \int d\tau'\, K(\tau'|\tau,\rho) \varphi(\tau')\nn\\
    &=&\int d\tau'\, \lb K_-(\tau'|\tau,\rho) \varphi_+(\tau') + K_+(\tau'|\tau,\rho) \varphi_-(\tau')\rb\ .
\end{eqnarray}
where $K(\tau'|\tau,\rho)$ is the smearing function. The action of $\mT_\Delta$ on the bulk field is
\begin{eqnarray}
    \mT_\Delta^\dagger [\varphi(\tau,\rho)] &=& \int d\tau'\, H_\Delta^{-1}[K(\tau'|\tau,\rho)] \varphi(\tau') \nonumber\\
    &=& \sum_\pm \int d\tau'\, (K_\mp(\tau'|\tau,\rho) e^{\mp i\pi \Delta})\varphi_\pm(\tau) \nonumber\\
    &=& \sum_\pm e^{\mp i\pi \Delta}\varphi_\pm(\tau,\rho)\nonumber\\
    &=& \sum_\pm \varphi_\pm(\tau + \pi,-\rho) = \varphi(\tau + \pi,-\rho)\ .
\end{eqnarray}
We have used the fact that the positive (negative) modes in the global coordinates is still the positive (negative) modes in Poincar\'e coordinates \cite{Spradlin:1999bn,Danielsson:1998wt,hamilton2006local}. Note that the functions $f_\pm$ combine with the operators $\varphi_\mp$ (see \eqref{eq:field}).
\end{proof}

\begin{figure}[t]
   \centering
    \includegraphics[width=\textwidth]{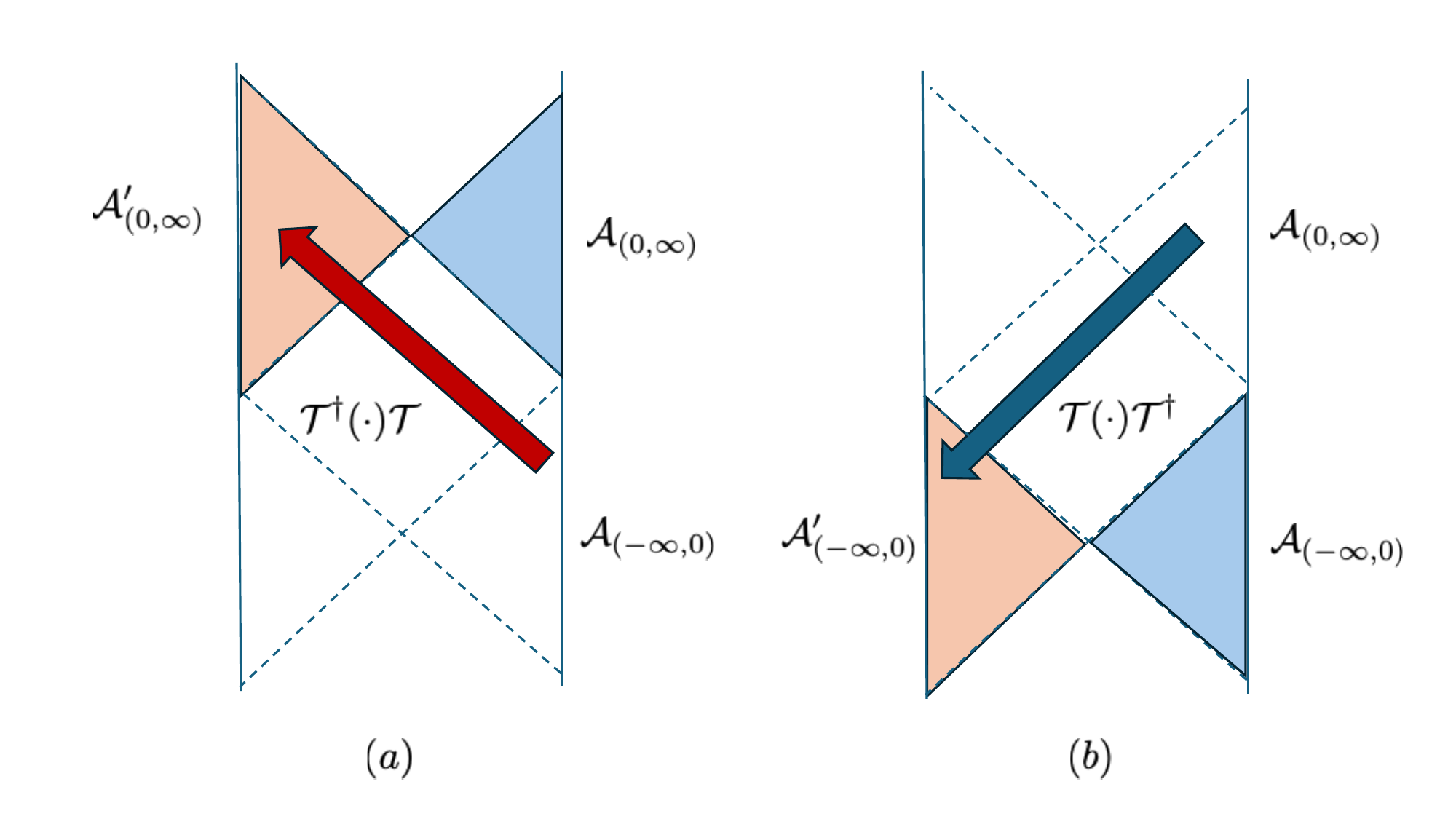}
    \caption{\small Haag's duality is restored in the bulk (a) The geometrization of the commutant relation in Theorem \ref{thm-comm-half-line-general} $(\mA_{(0,\infty)})'=\mT^\dagger \mA_{(-\infty,0)}\mT$. (b) The geometrization of the commutant relation in Theorem \ref{thm-comm-half-line-general} $(\mA_{(-\infty,0)})'=\mT \mA_{(0,\infty)}\mT^\dagger$.}
    \label{fig:thm4and6}
\end{figure}

We saw in Section \ref{sec:gff_conformal} that for conformal GFF with non-integer $\Delta$, Haag's duality is violated. In particular, we saw in Theorem \ref{thm-comm-half-line-general} that the commutant of the algebra $\mA_{(0,\infty)}$ is no longer $\mA_{(-\infty,0)}$ but its GHT: 
\begin{eqnarray}
    (\mA_{(0,\infty)})'=\mT^\dagger \lb \mA_{(-\infty,0)}\rb \mT\ .
\end{eqnarray}
In the bulk, $\mT^\dagger$ sends the patch that covers the Poincar\'e AdS$_2$ in global AdS$_2$ into the next patch by pushing operators behind the future Poincar\'e horizon. Similarly, the inverse map $\mT$ pushes operators behind the past Poincar\'e horizon. As we have depicted in Figure \ref{fig:thm4and6}, in the bulk Haag's duality is restored because $\mT^\dagger \mA_{(-\infty,0)} \mT$ corresponds to the algebra of the bulk causal complement of $\mA_{(0,\infty)}$. This is only possible because there are operators and algebras localized behind the horizon in the bulk which correspond to highly non-local operators and algebras on the boundary.

Similarly, in Theorem \ref{thm-comm-I-general}, we derived the commutant algebra of time interval $\mA_{(-1,1)}$. Once again, the bulk restores Haag's duality as depicted in Figure \ref{fig:commutant-time-interval}. Note that by the time-like tube theorem \cite{strohmaier2024timelike}, the algebraic union of the two orange regions in Figure \ref{fig:commutant-time-interval} (b) is the same as the orange region in Figure \ref{fig:commutant-time-interval} (a).

\begin{figure}[t]
   \centering
    \includegraphics[width=\textwidth]{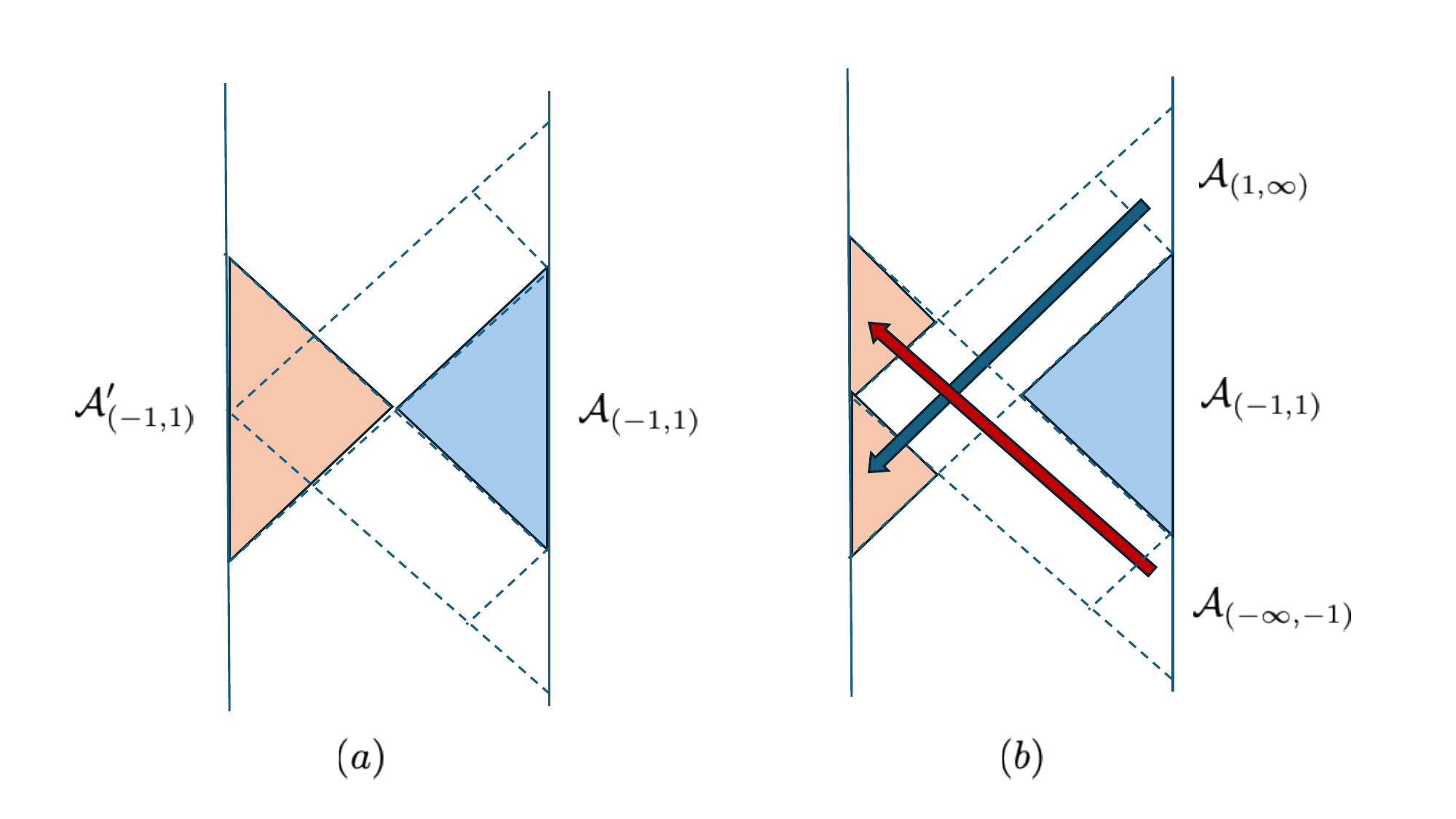}
    \caption{\small Haag's duality is restored in the bulk for time interval algebras. (a) The commutant $(\mA_{(-1,1)})'$ in the bulk. (b) Theorem \ref{thm-comm-I-general} says that $(\mA_{(-1,1)})'$ is the algebraic union of the two orange regions.}
    \label{fig:commutant-time-interval}
\end{figure}

\subsection{Modular flow in Poincar\'e AdS$_2$}

In this subsection, we explore the geometric properties of the modular flow of bulk dual of time interval algebras. 
We start by rewriting the bulk field in Poincar\'e coordinates in terms of the boundary Poincar\'e GFF fields \cite{hamilton2006local,hamilton2006holographic}. %(see appendix \ref{app:HKLL} or a review of HKLL in AdS$_2$).
\begin{align}
    \varphi(t,z) &= \int_{-\infty}^\infty dt'\, K(t'|t,z) \varphi(t'),\nn\\
    K(t'|t,z) &= \frac{2^{\Delta-1}\Gamma(\Delta+1/2)}{\sqrt{\pi}\Gamma(\Delta)}\left(\frac{z^2-(t'-t)^2}{2z}\right)^{\Delta-1} \Theta(z-|t'-t|).
\end{align}
The kernel above (the step function) is non-zero only for boundary points that are spacelike separated from $(t,z)$. It is convenient to introduce the null coordinates $t_\pm=t\pm z$ with $t_+>t_-$ corresponding to the Poincar\'e patch and the smearing function as
\begin{align}
%    K(t'|t,z) &= 
    K'(t|t_+,t_-)
     = \frac{2^{\Delta-1}\Gamma(\Delta+1/2)}{\sqrt{\pi}\Gamma(\Delta)}
    \left(\frac{(t_+-t)(t-t_-)}{t_+-t_-}\right)^{\Delta-1} \Theta\left(\frac{(t_+-t)(t-t_-)}{t_+-t_-}\right)\ .
\end{align}
We can continue the coordinates beyond the Poincar\'e horizon by considering $t_+ < t_-$; see Figure \ref{fig:poincare}. 

Here, we show that the modular flow of bulk fields is local, breaking down the argument into the following three steps:
\begin{enumerate}
    \item {\bf Lemma \ref{lemma-local-action}:} The modular flow of $\mA_I $ acting on a bulk field inside the bulk dual of the interval $I$ is always local and geometric. This is the bulk version of Corollary \ref{modflowtimeintervalBoundary} for the boundary GFF.
    \item {\bf Lemma \ref{lemma-restricted-local}:} The modular flow of $\mA_I $ acting on a bulk field outside the bulk dual of the interval $I$ is local and geometric for finite modular time until the operator reaches the Poincar\'e horizon. 
    %This is the bulk version of the boundary Corollary \ref{modflowtimeintervalBoundary}.
    \item {\bf Lemma \ref{lemma-geometric-bulk-action}:} As bulk modular flow pushes bulk operators behind the Poincar\'e horizon, the bulk flow remains local and geometric in global AdS$_2$, however, on the boundary it is highly non-local.
\end{enumerate}
We saw in Corollary \ref{cor:local-net-integer-delta} that GFF with general $\Delta$, generated by the boundary GFF fields, is neither local nor satisfies Haag's duality, whereas, in this subsection, we establish that the same algebra generated by the bulk massive free fields is both local and satisfies Haag's duality. We can turn this into the following slogan: {\it The bulk emerges from insisting on a local representation of time interval algebras that satisfies Haag's duality.} 

\begin{figure}[t]
   \centering
    \includegraphics[width=\textwidth]{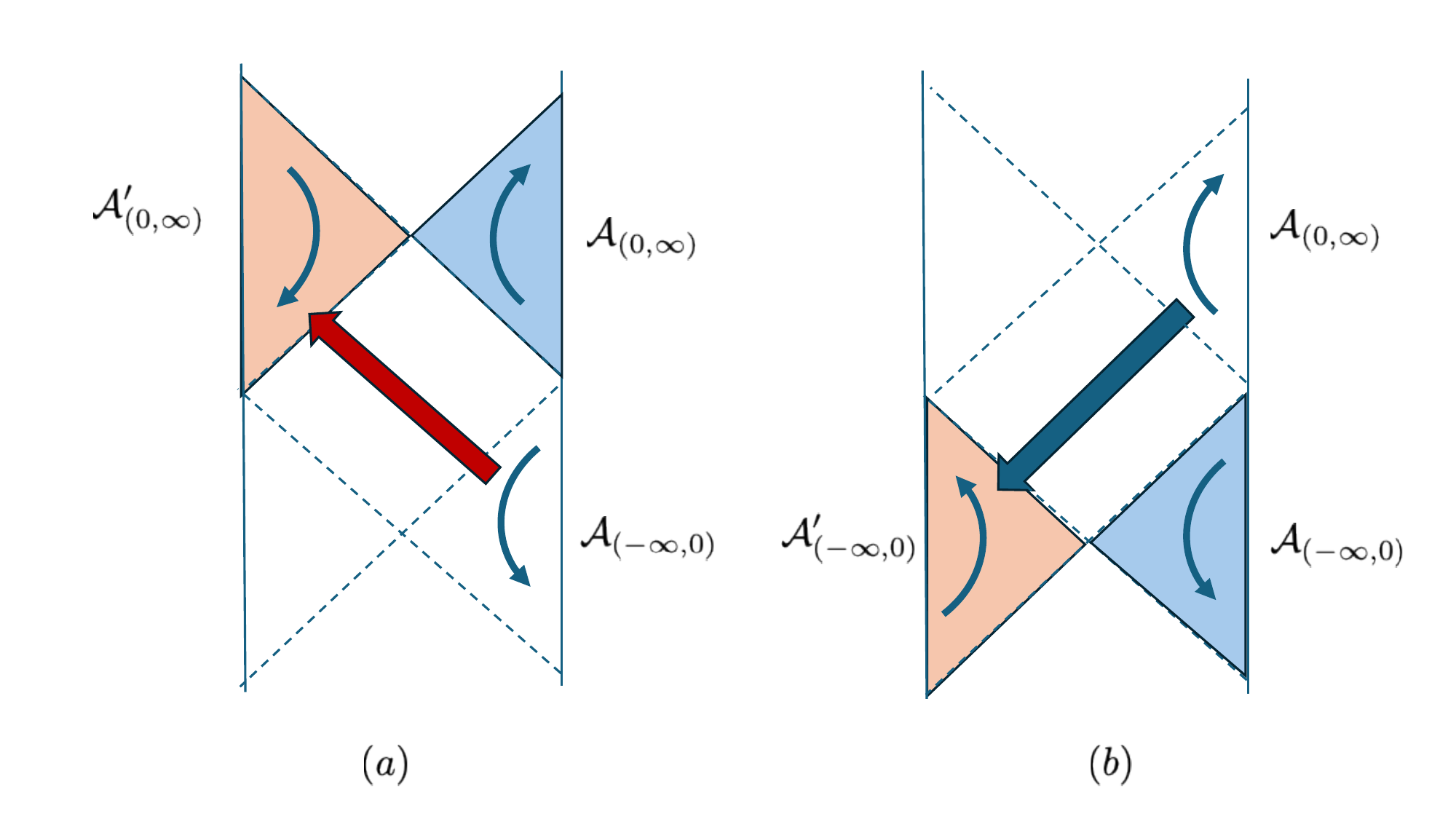}
    \caption{\small Modular flow of half-infinite interval regions $\mA_{(0,\infty)}$ and $\mA_{(-\infty,0)}$ in the bulk.}
    \label{fig:modular-future-past}
\end{figure}
\begin{figure}[t]
   \centering
    \includegraphics[width=\textwidth]{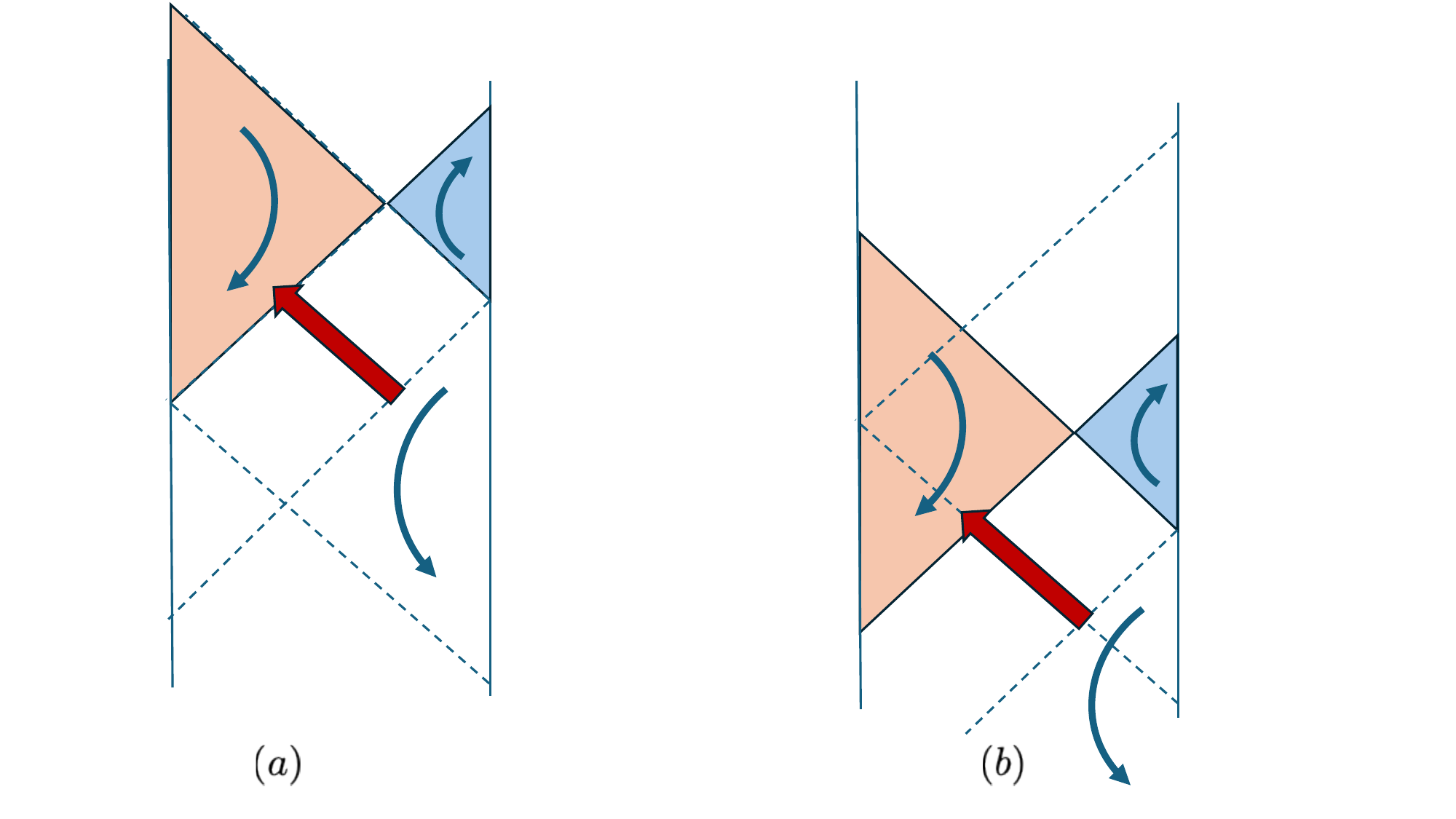}
    \caption{\small Modular flow  of time interval algebras in the bulk is found by a conformal transformation of that of half-infinite transformations.}
    \label{fig:modular-time-interval}
\end{figure}

Another key observation is that the modular flow acts geometrically on the bulk fields $\varphi(t_+,t_-)$ , for (at least) some finite range of $s$. 
\begin{lemma} \label{lemma-local-action}
    For any bulk field $\varphi(t_+,t_-)$ that is inside the bulk dual of the time interval algebra $\mathcal{A}_{(p,q)}$, the action of the modular flow is local and geometric:
    \begin{eqnarray}
        \sigma_s^{(p,q)}[\varphi(t_+,t_-)] = \varphi(t_+(s),t_-(s))\ , \quad \forall \, p < t_\pm < q\ , \quad \forall s\in \mathbb{R}
    \end{eqnarray}
    where 
    \begin{eqnarray}\label{modflowinterval}
        t_\pm(s) \equiv \frac{q(p-t_\pm)-e^{-2\pi s}p(q-t_\pm)}{(p-t_\pm)-e^{-2\pi s}(q-t_\pm)}\ .
    \end{eqnarray}
    Equivalently, using the conformal transformation that maps the interval $(p,q)$ to the interval $(0,\infty)$: 
    \begin{eqnarray}
        u=\frac{(t-p)}{(q-t)}
    \end{eqnarray}
we can write this modular flow as a dilatation map
    \begin{eqnarray}
u_\pm(s)\equiv \frac{t_\pm(s)-p}{q-t_\pm(s)}=e^{2\pi s}\frac{t_\pm-p}{q-t_\pm}=e^{2\pi s}u_\pm(0)\ .
    \end{eqnarray}
\end{lemma}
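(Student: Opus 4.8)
The plan is to reduce this statement entirely to the modular flow of the half-line algebra $\mA_{(0,\infty)}$, whose modular flow we already know from Theorem~\ref{ModFlowThmInsideAlg} to be dilatation by $D_{2\pi s}$. The strategy has three ingredients: (i) the HKLL reconstruction map $\varphi(t_+,t_-)=\int dt'\,K'(t'|t_+,t_-)\varphi(t')$ expresses the bulk field in terms of boundary GFF operators with support in the interval $(t_-,t_+)$ when $t_+>t_-$; (ii) the boundary modular flow of $\mA_{(0,\infty)}$ acts \emph{locally} on boundary operators supported inside $(0,\infty)$, by Corollary~\ref{modflowtimeintervalBoundary}; and (iii) the conformal map $u=(t-p)/(q-t)$ sends $(p,q)$ to $(0,\infty)$ and is implemented by the state-preserving unitary $U_I$ of Lemma~\ref{lemma-interval-map}, which acts locally on $\mA_{(p,q)}$ and hence on the bulk region dual to it.

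First I would establish the statement for the half-line $I=(0,\infty)$. A bulk point $(t_+,t_-)$ with $0<t_-<t_+$ lies in the bulk causal domain of $(0,\infty)$, and its HKLL smearing function $K'(t'|t_+,t_-)$ is supported on $(t_-,t_+)\subset(0,\infty)$, so $\varphi(t_+,t_-)$ is an operator affiliated with $\mA_{(0,\infty)}$. By Theorem~\ref{ModFlowThmInsideAlg} the modular flow is $\Delta_{(0,\infty)}^{-is}\varphi(f)\Delta_{(0,\infty)}^{is}=\varphi(D_{2\pi s}[f])$, which is the geometric dilatation $t'\mapsto e^{2\pi s}t'$ together with the conformal weight factor $e^{2\pi(\Delta-1)s}$. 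Pushing this through the smearing integral, and using that the HKLL kernel transforms covariantly under bulk isometries (dilatation being an isometry of Poincar\'e AdS$_2$, acting as $t_\pm\mapsto e^{2\pi s}t_\pm$, $z\mapsto e^{2\pi s}z$, with the appropriate Jacobian/weight), one gets $\sigma_s^{(0,\infty)}[\varphi(t_+,t_-)]=\varphi(e^{2\pi s}t_+,e^{2\pi s}t_-)$. This is exactly formula \eqref{modflowinterval} specialized to $p=0$, $q\to\infty$: $t_\pm(s)=e^{2\pi s}t_\pm$, and $u_\pm(s)=e^{2\pi s}u_\pm(0)$ trivially.

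Next I would conjugate by $U_I$. By Lemma~\ref{lemma-interval-map}, $U_I^\dagger\mA_{(p,q)}U_I=\mA_{(0,\infty)}$ and $U_I$ is a state-preserving unitary, so Proposition~\ref{prop-unitary-related-algebra} gives $\Delta_{(p,q)}^{is}=U_I\Delta_{(0,\infty)}^{is}U_I^\dagger$. On the bulk side, $U_I$ implements the isometry $u=(t-p)/(q-t)$ extended into AdS$_2$, which maps the bulk causal domain of $(p,q)$ onto that of $(0,\infty)$ and is local there (no GHT phases, since $t+1>0$ on the relevant region, as in the proof of Lemma~\ref{lemma-interval-map}). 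Composing: $\sigma_s^{(p,q)}=U_I\circ\sigma_s^{(0,\infty)}\circ U_I^{-1}$ acts on $\varphi(t_+,t_-)$ with $p<t_\pm<q$ by first mapping $t_\pm\mapsto u_\pm(0)=(t_\pm-p)/(q-t_\pm)$, then dilating $u_\pm\mapsto e^{2\pi s}u_\pm$, then mapping back. Solving $u_\pm(s)=e^{2\pi s}(t_\pm-p)/(q-t_\pm)$ for $t_\pm(s)$ yields precisely \eqref{modflowinterval}, and the $u$-form of the statement is immediate. The range $p<t_\pm<q$ is preserved for all $s\in\mathbb{R}$ because dilatation maps $(0,\infty)$ to itself, so the flow stays local for all modular time.

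The main obstacle I anticipate is (i) verifying carefully that the HKLL smearing function transforms covariantly under the AdS$_2$ isometry $U_I$ including the correct conformal weight factor $|du/dt|^{\Delta-1}$-type Jacobians, so that the boundary-local action of $D_{2\pi s}$ on operators in $\mA_{(0,\infty)}$ really does lift to the claimed geometric bulk action rather than picking up smearing-dependent corrections; and (ii) checking that the HKLL reconstruction is compatible with affiliation, i.e.\ that $\varphi(t_+,t_-)$ with $p<t_\pm<q$ is genuinely (affiliated with) $\mA_{(p,q)}$ so that Corollary~\ref{modflowtimeintervalBoundary}'s local formula applies verbatim. Both are essentially bookkeeping given the covariance of the HKLL kernel under isometries, but the conformal-weight factors in \eqref{modflowinterval} (the exponent $2\Delta-2$ appearing in Theorem~\ref{thm-interval-mod}) must be tracked consistently between the $\Delta$-weight of the boundary GFF and the normalization of the bulk mode expansion \eqref{eq-bulk-mode-expansion}.
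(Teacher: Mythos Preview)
Your proposal is correct but takes a genuinely different route from the paper. The paper's proof is a one-line forward reference: ``This lemma follows immediately from the next Lemma~\ref{lemma-restricted-local}.'' There, the already-computed boundary modular flow of $\mA_{(p,q)}$ from Theorem~\ref{thm-interval-mod} is applied directly to the smearing function $K(t|t_+,t_-)$, and one checks the algebraic identity
\[
K\Bigl(\tfrac{(q-t)p\,e^{\pi s}+(t-p)q\,e^{-\pi s}}{(q-t)e^{\pi s}+(t-p)e^{-\pi s}}\,\Big|\,t_+,t_-\Bigr)
=\Bigl|\tfrac{(q-t)e^{\pi s}+(t-p)e^{-\pi s}}{q-p}\Bigr|^{-2\Delta+2}K\bigl(t\,\big|\,t_+(s),t_-(s)\bigr),
\]
so that the nonlocal $\Theta$-phase in Theorem~\ref{thm-interval-mod} is absent precisely when $p<t_\pm<q$, for all $s$. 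In other words, the paper works \emph{after} having derived the interval modular flow and simply verifies its geometric action on the HKLL kernel.

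You instead go back a step and redo the conjugation $U_I\,\Delta_{(0,\infty)}^{is}\,U_I^\dagger$ at the level of bulk fields, using that dilatation is a bulk isometry and that $U_I$ acts locally on the causal domain of $(p,q)$. This is perfectly valid and arguably more geometrically transparent, but it duplicates work already packaged into Theorem~\ref{thm-interval-mod}. The obstacle you flag---covariance of the HKLL kernel under the bulk isometry implemented by $U_I$---is exactly the content of the displayed identity above (specialized appropriately), so you would end up doing essentially the same kernel computation the paper does in Lemma~\ref{lemma-restricted-local}, just organized differently. Your affiliation concern (ii) is handled in both approaches by the support property of $K'(t'|t_+,t_-)$: it vanishes outside $(t_-,t_+)\subset(p,q)$, so $\varphi(t_+,t_-)$ is a smeared GFF with support in $(p,q)$ and Corollary~\ref{modflowtimeintervalBoundary} applies.
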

\begin{proof}
    This lemma follows immediately from the next Lemma \ref{lemma-restricted-local}.
\end{proof}
For any bulk field outside of the bulk dual of the algebra, the action is local on the boundary until the bulk field falls behind the Poincar\'e horizon:
\begin{lemma} \label{lemma-restricted-local}
For any bulk field $\varphi(t_+,t_-)$ that is outside the bulk dual of the algebra $\mA_{(p,q)}$, the modular flow is geometric and local with the same expression as in (\ref{modflowinterval}) for all $s_-<s<s_+$ for which $t_\pm(s)$ in \eqref{modflowinterval} is finite. At 
\begin{eqnarray}
s_\pm =\frac{1}{2\pi}\ln\lb \frac{q-t_\pm}{p-t_\pm}\rb
\end{eqnarray}
the operators reach the Poincar\'e horizon, and past that point the boundary modular flow is non-local.
\end{lemma}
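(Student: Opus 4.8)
The plan is to write the bulk field through its HKLL representation as a smearing of boundary GFF operators and propagate the modular flow of $\mA_{(p,q)}$ through the smearing integral, reducing the problem to the explicit boundary modular flow of Theorem~\ref{thm-interval-mod}. Concretely, for a bulk point in the Poincar\'e patch write $\varphi(t_+,t_-)=\int dt'\,K'(t'\mid t_+,t_-)\,\varphi(t')$, whose kernel is supported on the boundary interval $(t_-,t_+)$ (continued to $t_+<t_-$ behind the horizon as in Figure~\ref{fig:poincare}). Since $\sigma^{(p,q)}_s$ is an automorphism it commutes with the smearing, so it suffices to control $\sigma^{(p,q)}_s[\varphi(t')]$ for every $t'$ in $(t_-,t_+)$.

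By Theorem~\ref{thm-interval-mod}, on the positive/negative frequency parts $\sigma^{(p,q)}_s$ sends $\varphi(t')$ to $\big(C(t',s)\mp\sgn(s)\,i\epsilon\big)^{2\Delta-2}\varphi\big(g_s(t')\big)$, where $g_s$ is the M\"obius boost fixing $p$ and $q$ and $C(t',s)=\frac{(q-t')e^{\pi s}+(t'-p)e^{-\pi s}}{q-p}$. When $C(t',s)>0$ the $i\epsilon$ is inert and this is a genuine local reparametrization with a real positive conformal factor; when $C(t',s)<0$ the branch of $(\cdot)^{2\Delta-2}$ yields a $\Delta$-dependent (GHT-type) phase on the modes supported near $t'$, and the flow there is non-geometric.

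The remaining work is to control the sign of $C(t',s)$ over the smearing support and to re-assemble. For $t'\in(p,q)$ one has $C(t',s)>0$ for all $s$, so interior modes are never obstructed; for $t'>q$ (resp.\ $t'<p$), $C(t',s)$ changes sign at a value of $s$ monotone in $t'$, so the first obstruction, as $s$ moves away from $0$ in the appropriate direction, comes from the endpoint $t'=t_+$ (resp.\ $t'=t_-$). Hence on the open interval of $s$ bounded by the two critical times, $g_s$ restricts to an orientation-preserving diffeomorphism of $(t_-,t_+)$ onto $(g_s(t_-),g_s(t_+))$ that never sweeps through the point at infinity, and by the $PSL(2,\mbR)$-covariance of the HKLL kernel on such non-wrapping intervals --- which makes the conformal-weight and Jacobian factors combine exactly --- the integral re-assembles into $\varphi\big(g_s(t_+),g_s(t_-)\big)$; a short M\"obius computation matches $g_s(t_\pm)$ with $t_\pm(s)$ of \eqref{modflowinterval} (up to the trivial $s\to-s$ between the conventions of Theorem~\ref{thm-interval-mod} and the Lemma). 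The critical times are read off directly from \eqref{modflowinterval}: its denominator $(p-t_\pm)-e^{-2\pi s}(q-t_\pm)$ vanishes exactly when $e^{-2\pi s}=\frac{p-t_\pm}{q-t_\pm}$, i.e.\ at $s=s_\pm=\frac{1}{2\pi}\ln\frac{q-t_\pm}{p-t_\pm}$, where $t_\pm(s)\to\infty$ --- the smeared boundary support reaching the point at infinity, equivalently the bulk operator reaching the Poincar\'e horizon. For $s$ past $s_\pm$, $C$ near the relevant endpoint is negative, those endpoint modes pick up the phase $e^{\pm 2\pi i\Delta}$, the bulk operator has crossed the horizon ($t_+(s)<t_-(s)$), and --- consistently with Lemma~\ref{lamma-anitpodal} identifying the behind-horizon region with the GHT image --- the boundary flow is non-local. (Lemma~\ref{lemma-local-action} is the sub-case $p<t_\pm<q$, in which neither critical time occurs; alternatively one may conjugate by $U_I$ of Lemma~\ref{lemma-interval-map} to reduce to $\mA_{\mbR^+}$, whose modular flow is the dilatation isometry.)

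The step I expect to be the main obstacle is the re-assembly: making rigorous the passage from ``$\sigma^{(p,q)}_s$ acts at every point of the smearing support by a real positive conformal factor'' to ``$\sigma^{(p,q)}_s$ transports $\varphi(t_+,t_-)$ to the bulk field at the flowed point,'' i.e.\ the covariance and well-definedness of the HKLL representation under the M\"obius element $g_s$ on a non-wrapping boundary interval, together with the careful matching of conformal weights and of the $i\epsilon$/branch data. The rest --- monotonicity in $t'$ of the sign change of $C$, and the identification of $s_\pm$ --- is elementary.
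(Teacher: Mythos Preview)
Your proposal is correct and follows essentially the same route as the paper: push the modular flow of Theorem~\ref{thm-interval-mod} through the HKLL smearing, track the sign of the conformal factor $C(t',s)$ over the smearing support $(t_-,t_+)$, and read off $s_\pm$ as the values where $t_\pm(s)$ diverges. The paper handles your flagged ``re-assembly'' step by writing down the kernel identity $K\big(g_s^{-1}(t)\mid t_+,t_-\big)=|C(t,s)|^{-2\Delta+2}\,K\big(t\mid t_+(s),t_-(s)\big)$ directly, which makes the weight/Jacobian cancellation explicit and reduces the locality question precisely to the vanishing of $\theta(t,s)=\sgn(s)\,\Theta(-C(t,s))$ on the support.
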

\begin{proof}
    We consider the modular flow of $\mathcal{A}_{(p,q)}$ and a bulk field $\varphi(t_+,t_-)$. The smearing function $K(t|t_+,t_-)$ transforms as
    \begin{eqnarray}
        \sigma_s^{(p,q)}[K(t|t_+,t_-)] = \sum_\pm \Bigg[ e^{\mp i2\pi\Delta\theta(t,s)} \, K\left(t|t_+(s),t_-(s)\right)\Bigg]_\pm,
    \end{eqnarray}
    where 
    \begin{eqnarray}
        \theta(t,s) = \sgn{\left(s\right)} \Theta\left(-\frac{(q-t)e^{\pi s}+(t-p)e^{-\pi s}}{q-p}\right)
    \end{eqnarray}
    and $t_\pm(s)$ the same as \eqref{modflowinterval}. 
    We have used the fact that 
    \begin{eqnarray}
        K\left(\tfrac{(q-t)p\,e^{\pi s} + (t-p)q\,e^{-\pi s}}{(q-t)e^{\pi s} + (t-p)e^{-\pi s}} | t_+,t_-\right) = \left|\tfrac{(q-t)e^{\pi s}+(t-p)e^{-\pi s}}{q-p}\right|^{-2\Delta+2} K\left(t|t_+(s),t_-(s)\right).
    \end{eqnarray}
    As long as $\Theta\left(-\frac{(q-t)e^{\pi s}+(t-p)e^{-\pi s}}{q-p}\right) = 0$ for all $t_+(s)>t>t_-(s)$, the smearing function transforms locally. For any $p,q,t_+,t_-$, there exists $s_+>0>s_-$, such that we have $\Theta\left(-\frac{(q-t)e^{\pi s}+(t-p)e^{-\pi s}}{q-p}\right) = 0$ for all $s_+>s>s_-$ and $t_+(s)>t>t_-(s)$. This is just the connected interval of $s$ for which both $t_\pm(s)$ never reach $\pm\infty$. Without reaching infinity, this range of $s$ also guarantees that $t_+(s)>t_-(s)$.
\end{proof}
\begin{figure}[t]
    \centering
    \includegraphics[height=0.5\linewidth]{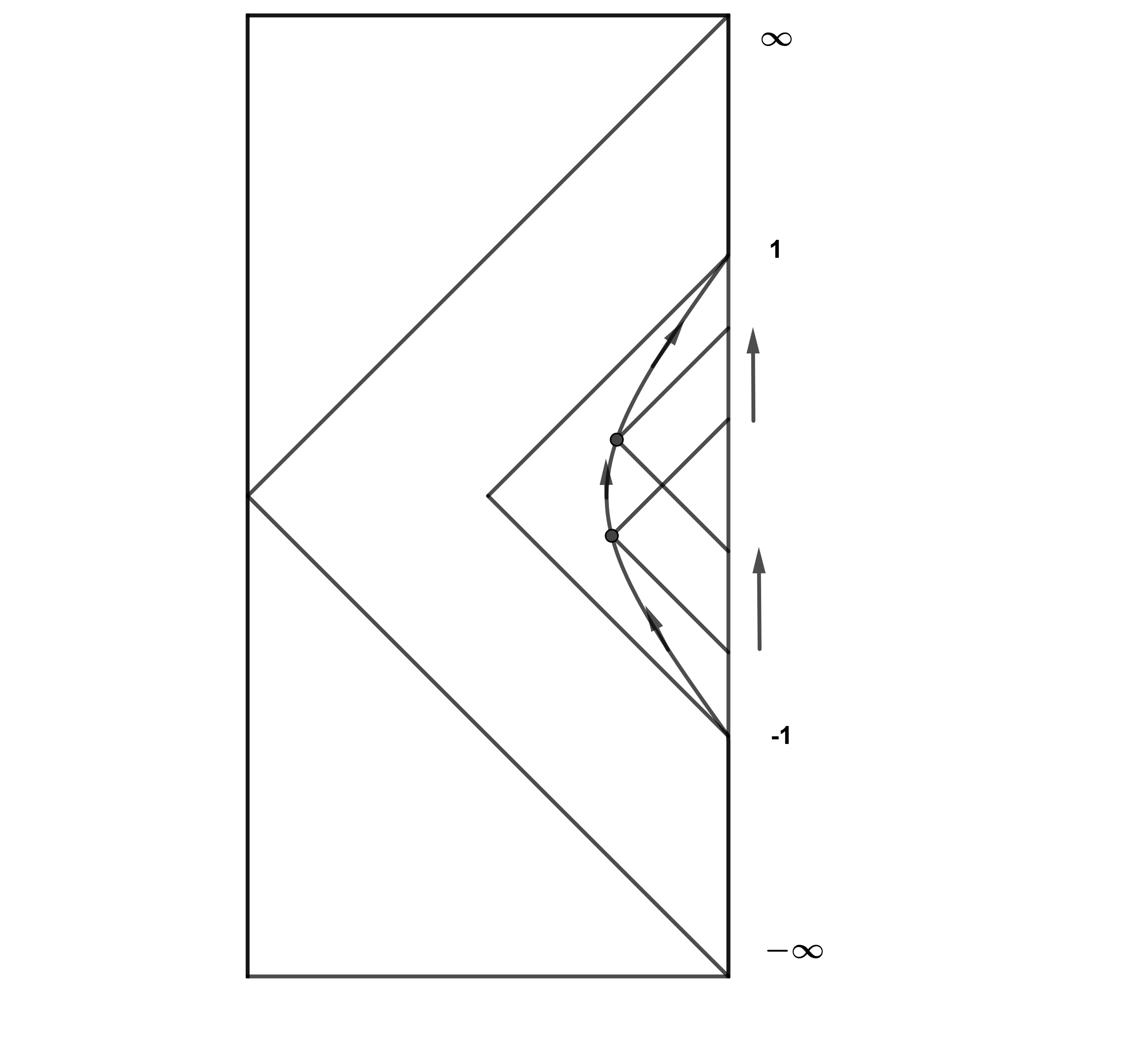}
    \caption{The geometric bulk action of the modular flow, with $p=-1$, $q=1$ as an example. The graph shows the AdS$_2$ spacetime with $\tau\in(-\pi,\pi)$. The bulk field moves along a trajectory in the bulk. This trajectory maintains a constant distance from the bulk point $(t_0,z_0) = (0, 1)$. We can also describe the trajectory with the action on the boundary points $t_\pm(s)$.}
    \label{fig:geometric_modular_flow}
\end{figure}

Once the modular flowed operator reaches the Poincar\'e horizon and past that point, the boundary modular flow is non-local, however, the bulk flow remains local. This is the result of the next lemma. An alternative way to understand is to recast the trajectory of bulk field  $(t(s),z(s))$ as the implicit expression
\begin{align}
    \frac{2t(s)t_0 + z(s)^2-t(s)^2+z_0^2-t_0^2}{2z(s)z_0} = \text{const}
\end{align}
where $q = t_0+z_0$, $p = t_0-z_0$. The left-hand side of the above expressions is the invariant distance between the bulk points $(t(s),z(s))$ and $(t_0,z_0)$, which is \eqref{eq-invariant-distance} in terms of Poincar\'e coordinate. The modular flow moves the bulk field along a trajectory while keeping the distance from $(t_0,z_0)$ fixed; see Figure \ref{fig:geometric_modular_flow}. If both $t_\pm \in (p,q)$, the action is geometric for all $s$, as expected. For other $t_\pm$, there exists a finite $s$ such that $t_+(s)$ or $t_-(s)$ reaches infinity. In the bulk, this corresponds to reaching the Poincar\'e horizon. Intriguingly, the failure of the boundary locality seems to be an indicator of the existence of a horizon. Once an operator reaches the horizon, we have to take care of the phase difference between the positive and negative modes. The expression of the trajectory also suggests that the action remains geometric while crossing the horizon, as we can extend it using the global coordinate $(\tau,\rho)$. Finally, we establish that the modular flow, indeed, acts geometrically for global AdS$_2$:
\begin{lemma} \label{lemma-geometric-bulk-action}
    For any bulk field $\varphi(\tau,\rho)$ in Poincar\'e AdS$_2$ and boundary time interval algebra $\mathcal{A}_{(p,q)}$, we have
    \begin{eqnarray}
        \sigma_s^{(p,q)}[\varphi(\tau,\rho)] = \varphi(\tau(s),\rho(s)), \qquad \forall s \in \mathbb{R}
    \end{eqnarray}
    where $\tau(s)$ and $\rho(s)$ are some continuous functions with $\tau(0) = \tau$, $\rho(0) = \rho$. The modular flow is always geometric and local in global $AdS_2$.
\end{lemma}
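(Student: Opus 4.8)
The plan is to recognize the modular flow $\sigma_s^{(p,q)}$ as a one-parameter subgroup of $\widetilde{PSL}(2,\mathbb{R})$ acting by isometries of global AdS$_2$, and then to use that a bulk scalar transforms geometrically with no anomalous weight under isometries. First I would assemble, from results already proven, the identity $\Delta_I^{is}=U_I^\dagger\,U_{D_{2\pi s}}\,U_I = U_{g_s}$ with $g_s=u^{-1}D_{2\pi s}u\in\widetilde{PSL}(2,\mathbb{R})$: here $U_{D_{2\pi s}}$ is the dilatation modular operator of $\mathcal{A}_{\mathbb{R}^+}$ from Theorem \ref{ModFlowThmInsideAlg}, $U_I$ is the $\widetilde{PSL}(2,\mathbb{R})$ unitary of the map \eqref{conformalUT} from Lemma \ref{lemma-interval-map}, and the composition law is Proposition \ref{prop-unitary-related-algebra}. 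Since $\widetilde{PSL}(2,\mathbb{R})$ is the isometry group of global AdS$_2$ (the strip $\rho\in(-\pi/2,\pi/2)$, $\tau\in\mathbb{R}$), and HKLL intertwines the boundary $\widetilde{PSL}(2,\mathbb{R})$ action with the bulk isometry action -- manifest in the mode expansion \eqref{eq-bulk-mode-expansion} and the bulk vacuum invariance of Appendix \ref{AdS2massive} -- one has $U_{g_s}^\dagger\,\varphi(\tau,\rho)\,U_{g_s}=\varphi\big(g_s^{-1}(\tau,\rho)\big)$. Setting $(\tau(s),\rho(s)):=g_s^{-1}(\tau,\rho)$ gives $\sigma_s^{(p,q)}[\varphi(\tau,\rho)]=\varphi(\tau(s),\rho(s))$, and these functions are smooth and defined for \emph{all} $s\in\mathbb{R}$ because $g_s$ is a hyperbolic one-parameter subgroup whose Killing field is complete on the open strip; its two fixed points are the bulk tips of the causal diamond of $(p,q)$, one at $(t_0,z_0)=\big(\tfrac{p+q}{2},\tfrac{q-p}{2}\big)$ and the other behind the Poincar\'e horizon. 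Equivalently, $(\tau(s),\rho(s))$ is just the orbit of constant invariant distance \eqref{eq-invariant-distance} from $(t_0,z_0)$, which never reaches the boundary.

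Next I would reconcile this clean statement with the boundary/Poincar\'e description, i.e.\ check that no phase is spuriously lost or gained. For $s$ inside the window of Lemma \ref{lemma-restricted-local} this is bookkeeping only: rewrite the geometric Poincar\'e flow \eqref{modflowinterval} in global coordinates and verify that $(\tau(s),\rho(s))$ extends smoothly across the Poincar\'e horizon $\cos\tau+\sin\rho=0$, where $t_\pm(s)$ diverge but $g_s$ is regular. For $|s|$ beyond that window $g_s^{-1}(\tau,\rho)$ leaves the original Poincar\'e patch; here the key input is that re-expressing a bulk field in the adjacent patch via HKLL, but in terms of the \emph{original} Poincar\'e boundary GFF, costs exactly one antipodal factor, which by Lemma \ref{lamma-anitpodal} is the GHT $e^{\mp i\pi\Delta}$ on positive/negative modes. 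Pushing $\Delta_I^{\mp is}$ through $\varphi(\tau,\rho)=\int d\tau'\,K(\tau'|\tau,\rho)\varphi(\tau')$ and using Theorem \ref{thm-interval-mod} for the boundary action on both $\varphi(\tau')$ and the smearing kernel, one should find that every discontinuous phase $e^{\mp i2\pi\Delta\theta(t,s)}$, together with the branch of the conformal factor and the $i\epsilon$ of Theorem \ref{thm-interval-mod}, reassembles into a single bulk field at $g_s^{-1}(\tau,\rho)$, now generally in the next patch -- thereby exhibiting Corollary \ref{modflowtimeintervalBoundary} and its nonlocal partner as artifacts of the Poincar\'e chart rather than of the flow.

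The hard part will be exactly this last matching at the horizon crossing: one must check that the discontinuous boundary data -- the phase $e^{\mp i2\pi\Delta\theta(t,s)}$, the branch of $\big((q-t)e^{\pi s}+(t-p)e^{-\pi s}\big)^{2\Delta-2}$, and the $\mp\sgn(s)i\epsilon$ prescription of Theorem \ref{thm-interval-mod} -- dovetail with the transition data of global AdS$_2$, namely the Gegenbauer parity $C_n^\Delta(-\sin\rho)=(-1)^nC_n^\Delta(\sin\rho)$ and the factor $e^{-i\pi\Delta}$ of Lemma \ref{lamma-anitpodal} across $\rho=\pm\pi/2$, so that the net result is the single-valued geometric map $x\mapsto g_s^{-1}x$ and nothing else. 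Once that is nailed down, completeness of the Killing flow on the open strip yields the claim for every $s\in\mathbb{R}$, and the explicit continuous functions $\tau(s),\rho(s)$ are read directly off $g_s$ (as illustrated in Figure \ref{fig:geometric_modular_flow}).
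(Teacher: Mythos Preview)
Your approach is correct and is organized around a cleaner conceptual core than the paper's: you first identify $\Delta_I^{is}$ as a one-parameter subgroup $U_{g_s}$ of the $\widetilde{PSL}(2,\mathbb{R})$ isometries of global AdS$_2$, and then invoke that a bulk scalar transforms geometrically under isometries, so the lemma follows once the intertwining $U_g^\dagger\varphi(\tau,\rho)U_g=\varphi(g^{-1}(\tau,\rho))$ is established. The paper instead proceeds by a direct computation: it writes the bulk field at an arbitrary global point $\varphi(\tau+2n\pi,\rho)$ explicitly as a Poincar\'e-boundary smeared operator, carefully tracking the GHT phases $e^{\mp i2\pi\Delta n'}$ that arise each time the global smearing support crosses $\tau=\pm\pi$, and then checks that these patch-crossing phases match exactly the step-function phases produced by the boundary modular flow of Theorem~\ref{thm-interval-mod}. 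In other words, the paper's proof \emph{is} the phase-matching you correctly flag as ``the hard part''; your intertwining claim is not quite ``manifest'' from the mode expansion and Appendix~\ref{AdS2massive} alone for group elements that push bulk points out of the Poincar\'e patch, and verifying it amounts to the same bookkeeping. The payoff of your route is that, once carried out, it establishes geometric action for all of $\widetilde{PSL}(2,\mathbb{R})$ at once rather than for the modular subgroup alone, and it makes the completeness of the flow on the open strip (hence validity for all $s\in\mathbb{R}$) an immediate consequence of the group structure; the paper's route has the virtue of being fully explicit about where each phase comes from. One minor point: your formula $\Delta_I^{is}=U_I^\dagger U_{D_{2\pi s}}U_I$ has the daggers swapped relative to Proposition~\ref{prop-unitary-related-algebra} and Lemma~\ref{lemma-interval-map}, which give $\Delta_I^{is}=U_I\,\Delta_{\mathbb{R}^+}^{is}\,U_I^\dagger$.
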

\begin{proof}
    We follow the HKLL reconstruction for the fields outside the Poincar\'e patch. For a bulk field outside the patch, we can write it in terms of the global coordinates
    \begin{eqnarray}
        \varphi(\tau,\rho) &=& \int_{-\infty}^\infty d\tau' K(\tau'|\tau,\rho) \varphi^{\text{global}}(\tau')\nn\\
        K(\tau'|\tau,\rho) &=& \frac{2^{\Delta-1}\Gamma(\Delta+1/2)}{\sqrt{\pi}\Gamma(\Delta)} \left(\frac{\cos(\tau'-\tau)-\sin\rho}{\cos\rho}\right)^{\Delta-1} \Theta\left(\frac{\pi}{2}-\rho-|\tau'-\tau|\right).
    \end{eqnarray}
    The global boundary field has the following property
    \begin{eqnarray}
        \varphi_\pm^{\text{global}}(\tau+2\pi) = e^{\mp i2\pi\Delta}\varphi_\pm^{\text{global}}(\tau)\ .
    \end{eqnarray}
    We first consider $\pi > \tau > \pi - (\frac{\pi}{2}-\rho) $. Let $\tau_- =\tau - (\frac{\pi}{2}-\rho)$, $\tau_+ =\tau + (\frac{\pi}{2}-\rho) - 2\pi$, $t_\pm = \tan\tau_\pm$. Note that with this shifted $\tau_\pm$, we have that $t_+ < t_-$. The bulk field can be expressed in terms of Poincar\'e boundary field as
    \begin{eqnarray}
        \varphi(\tau,\rho) &\propto& \int_{-\infty}^\infty d\tau' \left(\frac{\cos(\tau'-\tau)-\sin\rho}{\cos\rho}\right)^{\Delta-1} \Theta\lb \frac{\pi}{2}-\rho-|\tau'-\tau|\rb \varphi^{\text{global}}(\tau')\nonumber\\
        &=& \int_{\tau_-}^{\tau_+ + 2\pi} d\tau' \left(\frac{\cos(\tau'-\tau)-\sin\rho}{\cos\rho}\right)^{\Delta-1}  \varphi^{\text{global}}(\tau') \nonumber\\
        &=& \int_{\tau_-}^{\pi} d\tau' \left(\frac{\cos(\tau'-\tau)-\sin\rho}{\cos\rho}\right)^{\Delta-1}  \varphi^{\text{global}}(\tau') \nonumber\\
        &&+ \int_{-\pi}^{\tau_+} d\tau' \left(\frac{\cos(\tau'-\tau)-\sin\rho}{\cos\rho}\right)^{\Delta-1}  (e^{-i2\pi\Delta}\varphi_+^{\text{global}}(\tau') + e^{+i2\pi\Delta}\varphi_-^{\text{global}}(\tau')) \nonumber\\
        &=& \int_{t_-}^\infty dt \left(-\frac{(t-t_+)(t-t_-)}{t_+-t_-}\right)^{\Delta-1} \varphi(t) \nonumber\\
        &&+\int_{-\infty}^{t_+} dt \left(-\frac{(t-t_+)(t-t_-)}{t_+-t_-}\right)^{\Delta-1} (e^{-i2\pi\Delta}\varphi_+(t) + e^{+i2\pi\Delta}\varphi_-(t))\ .
    \end{eqnarray}
    The argument under the $\Delta-1$ power is still positive in these integral ranges because $t_+<t_-$ in this case. Thus, we can rewrite it as
    \begin{eqnarray}
        \varphi(\tau,\rho) = \int dt\, K(t|t_+,t_-) \lb e^{-i2\pi\Delta\Theta(t_+-t)}\varphi_+(t) + e^{+i2\pi\Delta\Theta(t_+-t)}\varphi_-(t)\rb\ .
    \end{eqnarray}
    For $-\pi <\tau < -\pi + (\frac{\pi}{2}-\rho) $, let $\tau_- =\tau - (\frac{\pi}{2}-\rho) + 2\pi$, $\tau_+ =\tau + (\frac{\pi}{2}-\rho)$, $t_\pm = \tan\tau_\pm$. We have 
    \begin{eqnarray}
        \varphi(\tau,\rho) = \int dt\, K(t|t_+,t_-) \lb e^{+i2\pi\Delta\Theta(t-t_-)}\varphi_+(t) + e^{-i2\pi\Delta\Theta(t-t_-)}\varphi_-(t)\rb\ .
    \end{eqnarray}
    For any $\tau \in (-\pi + 2n\pi, \pi+ 2n\pi)$ with integer $n$, we can shift $\tau$ by $2n\pi$ by including the extra phase $e^{\mp i2n\pi\Delta}$ for $\varphi_\pm(t)$. For any bulk field $\varphi(\tau + 2n\pi, \rho)$  with $\tau \in (-\pi,\pi)$, we have
    \begin{eqnarray}
        \varphi(\tau + 2n\pi, \rho) = \int dt\, K(t|t_+,t_-) \left( e^{-i2\pi\Delta n'} \varphi_+(t) + e^{+i2\pi\Delta n'} \varphi_-(t)\right)
    \end{eqnarray}
    where \footnote{To be rigorous, the expressions for $t_\pm$ are only valid when the argument of the step function is not exactly zero i.e., $(\tau - 2n\pi \pm \lb\frac{\pi}{2}-\rho)\rb \neq \pm \pi$, but we can easily handle the special cases by mapping it to the correct infinity.}
    \begin{eqnarray}
        n' &=& n + \Theta(t_--t_+)\left[\Theta(-(t_++t_-))\Theta(t_+-t)-\Theta(t_++t_-)\Theta(t-t_-)\right]\nn \\
        t_+ &=& \tan \left[ \tau - 2n\pi + \lb\frac{\pi}{2}-\rho\rb  - 2\pi \Theta\left((\tau - 2n\pi + (\frac{\pi}{2}-\rho)) -\pi\right)\right]\nn\\
        t_- &=& \tan \left[ \tau - 2n\pi - \lb\frac{\pi}{2}-\rho\rb  + 2\pi \Theta\left(-\pi - (\tau - 2n\pi - (\frac{\pi}{2}-\rho) )\right)\right]\ .
    \end{eqnarray}
    This example looks complicated, but the phase in fact matches the phase from the modular flow. We can see this by simply counting the moment $t_\pm(s)$ passing through the infinities. The constant $n$ does not matter as it only indicates which section in the bulk. The step functions capture whether the bulk field crosses the future horizon or the past horizon. This corresponds to the step function in the modular flow that captures whether the boundary point crosses the infinity with forward modular time or backward modular time. The modular flow also swaps $t_+(s) > t_-(s)$ to $t_-(s) > t_+(s)$ while maintaining the same expression of the constant distance trajectory. Thus, the modular flow indeed moves bulk field across the Poincar\'e horizon geometrically. If $\Delta$ is an integer, there is no phase difference between the positive and negative modes. In this case, it is straightforward to derive and see the geometric action.
\end{proof}

An interesting corollary is that the modular flow equations we derived above are equivalent to the HKLL reconstruction:
\begin{lemma} [Modular Zero Mode] \label{lemma-zero-mode}
    The bulk field $\varphi(t_+,t_-)$ is invariant under the modular flow $\sigma_s^{(t_-,t_+)}$ of the time interval algebra $\mathcal{A}_{(t_-,t_+)}$ i.e.,
    \begin{eqnarray}
        \sigma_s^{(t_-,t_+)}[\varphi(t_+,t_-)] = \varphi(t_+,t_-)\ .
    \end{eqnarray}
\end{lemma}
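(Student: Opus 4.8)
The plan is to recognise the bulk point $(t_+,t_-)$ in the statement, in null coordinates with $t_+>t_-$, as the \emph{future tip} of the bulk domain of dependence of the boundary interval $(t_-,t_+)$: setting $p=t_-$, $q=t_+$, this is the intersection of the left-moving null ray $t_+=q$ emanating from the boundary endpoint $q$ and the right-moving null ray $t_-=p$ emanating from $p$, i.e.\ the bulk point $(t,z)=\big(\tfrac{p+q}{2},\tfrac{q-p}{2}\big)$. One then reads the invariance straight off the explicit geometric action of the modular flow. Substituting $t_+=q$ into \eqref{modflowinterval} gives $t_+(s)=\dfrac{q(p-q)-e^{-2\pi s}\,p\cdot 0}{(p-q)-e^{-2\pi s}\cdot 0}=q$, and substituting $t_-=p$ gives $t_-(s)=\dfrac{-e^{-2\pi s}p(q-p)}{-e^{-2\pi s}(q-p)}=p$; hence the pair $(t_+,t_-)$ is a fixed point of the flow for every $s\in\mbR$. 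In the dilatation frame $u=(t-p)/(q-t)$ of Lemma \ref{lemma-local-action} this is just the statement that the tip sits at the two fixed points $u=\infty$ (from $t_+=q$) and $u=0$ (from $t_-=p$) of $u\mapsto e^{2\pi s}u$.

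To turn this into a proof I would argue through the kernel transformation law of Lemma \ref{lemma-restricted-local} rather than through Lemma \ref{lemma-local-action} verbatim, since the values $t_+=q,\ t_-=p$ lie on the boundary of the range $p<t_\pm<q$ used there (and the horizon-crossing times $s_\pm$ of Lemma \ref{lemma-restricted-local} degenerate to $\pm\infty$ at the tip, consistently with the flow never moving the point). From $\varphi(t_+,t_-)=\int dt\,K(t|t_+,t_-)\varphi(t)$ and linearity of the automorphism, $\sigma_s^{(p,q)}[\varphi(t_+,t_-)]=\int dt\,\sigma_s^{(p,q)}[K(t|t_+,t_-)]\,\sigma_s^{(p,q)}[\varphi(t)]$, and Lemma \ref{lemma-restricted-local} gives $\sigma_s^{(p,q)}[K(t|t_+,t_-)]=\sum_\pm\big[e^{\mp i2\pi\Delta\,\theta(t,s)}K(t|t_+(s),t_-(s))\big]_\pm$ with $\theta(t,s)=\sgn(s)\,\Theta\!\big(-\tfrac{(q-t)e^{\pi s}+(t-p)e^{-\pi s}}{q-p}\big)$. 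Now $K(t|q,p)$ is supported on $t\in(p,q)$, where $(q-t)e^{\pi s}+(t-p)e^{-\pi s}>0$ for all real $s$, so $\theta(t,s)\equiv 0$ on the support: no anomalous phase is produced. With $t_\pm(s)=t_\pm$, the remaining content of the kernel identity in Lemma \ref{lemma-restricted-local} is exactly that the Möbius change of variables $t\mapsto \tfrac{(q-t)p\,e^{\pi s}+(t-p)q\,e^{-\pi s}}{(q-t)e^{\pi s}+(t-p)e^{-\pi s}}$ (which fixes $p$ and $q$, and whose Jacobian weight $\left|\tfrac{(q-t)e^{\pi s}+(t-p)e^{-\pi s}}{q-p}\right|^{-2\Delta+2}$ is precisely the conformal prefactor appearing there) carries $K(t|q,p)$ back to itself, so $\sigma_s^{(p,q)}[K(t|q,p)]=K(t|q,p)$ identically. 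Smearing $\sigma_s^{(p,q)}[\varphi(t)]$ against this unchanged kernel then yields $\sigma_s^{(p,q)}[\varphi(t_+,t_-)]=\varphi(t_+,t_-)$, which is the claim. The only real obstacle is this bookkeeping --- checking that the $i\epsilon$ and phase data of Theorem \ref{thm-interval-mod} genuinely drop out on the support of $K(t|q,p)$ --- and it is mild.

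For a conceptual packaging I would add the remark that $\mK_{(p,q)}$ generates the conformal Killing flow of AdS$_2$ that preserves the bulk diamond $D(p,q)$ and whose only fixed points in $\overline{D(p,q)}$ are its future and past tips; a field inserted at a tip therefore commutes with $\mK_{(p,q)}$, which is the modular-zero-mode statement (an identical computation at the past tip gives the companion invariance). Finally, to make good on the remark preceding the lemma that the modular-flow equations are ``equivalent to the HKLL reconstruction'', I would read the computation backwards: imposing $\sigma_s^{(p,q)}[\varphi(t_+,t_-)]=\varphi(t_+,t_-)$ together with the boundary modular action of Theorem \ref{thm-interval-mod} forces the boundary representative of the tip field to be $\int dt\,K(t|t_+,t_-)\varphi(t)$ up to normalisation, so the HKLL smearing kernel is singled out as the unique modular-invariant boundary reconstruction, independently of the bulk mode sum \eqref{eq-bulk-mode-expansion}.
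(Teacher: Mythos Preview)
Your approach is essentially the paper's: the forward direction follows from the geometric action of Lemma \ref{lemma-local-action} (the tip $(t_+,t_-)=(q,p)$ is a fixed point of \eqref{modflowinterval}), and your final paragraph's ``reading backwards'' is exactly the paper's alternative derivation, which solves the modular-invariance condition $f=\sigma_s^{(p,q)}[f]$ for a function supported in $(p,q)$ and recovers $f(t')\propto\big((q-t')(t'-p)/(q-p)\big)^{\Delta-1}$, i.e.\ the HKLL kernel.

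One expositional slip to clean up: the displayed equation $\sigma_s^{(p,q)}[\varphi(t_+,t_-)]=\int dt\,\sigma_s^{(p,q)}[K]\,\sigma_s^{(p,q)}[\varphi(t)]$ applies the flow twice. In the paper's convention $\sigma_s^{(p,q)}[K]$ denotes the transformed \emph{smearing function} characterised by $\sigma_s^{(p,q)}[\varphi(K)]=\varphi(\sigma_s^{(p,q)}[K])$, so the correct chain is simply $\sigma_s^{(p,q)}[\varphi(t_+,t_-)]=\varphi(\sigma_s^{(p,q)}[K])=\varphi(K)=\varphi(t_+,t_-)$ once you have shown $\sigma_s^{(p,q)}[K(\cdot|q,p)]=K(\cdot|q,p)$ (which your phase-vanishing and fixed-point arguments do establish). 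Your closing sentence ``smearing $\sigma_s^{(p,q)}[\varphi(t)]$ against this unchanged kernel'' is then redundant rather than a further step.
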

\begin{proof}
    This lemma follows from Lemma \ref{lemma-local-action}. Instead, we can also derive the bulk-field smearing function solely with the knowledge of the boundary modular flow. Consider a distribution $f$ that has support only in the interval $(p,q)$ and the corresponding affiliated operator is invariant under the modular flow of $\mathcal{A}_{(p,q)}$. We have
\begin{eqnarray}
    f &=&  \sigma_s^{(t_-,t_+)}[f] \nn\\
    &=& \sum_\pm \Bigg[\left(\tfrac{(q-t)e^{\pi s}+(t-p)e^{-\pi s}}{q-p} \mp\sgn(s)i\epsilon\right)^{2\Delta-2}\, f\left(\tfrac{(q-t)p\,e^{\pi s} + (t-p)q\,e^{-\pi s}}{(q-t)e^{\pi s} + (t-p)e^{-\pi s}}\right)\Bigg]_\pm \nn\\
    &=& \left(\tfrac{(q-t)e^{\pi s}+(t-p)e^{-\pi s}}{q-p}\right)^{2\Delta-2} \, f\left(\tfrac{(q-t)p\,e^{\pi s} + (t-p)q\,e^{-\pi s}}{(q-t)e^{\pi s} + (t-p)e^{-\pi s}}\right)\ .
\end{eqnarray}
Put $t = (p+q)/2$, we have
\begin{eqnarray}
    f\left(\frac{p+q}{2}\right) = \cosh(\pi s)^{2\Delta-2}f\left(\frac{pe^{\pi s}+qe^{-\pi s}}{e^{\pi s}+e^{-\pi s}}\right)\ .
\end{eqnarray}
Let $t' = \frac{pe^{\pi s}+qe^{-\pi s}}{e^{\pi s}+e^{-\pi s}}$, we have $e^{2\pi s} = \frac{q-t'}{t'-p}$ and
\begin{eqnarray}
    f(t') &=& f\left(\frac{q+p}{2}\right) \cosh(\pi s)^{-2\Delta+2} \nn\\
    &=& f\left(\frac{q+p}{2}\right) \left(\frac{4}{q-p}\right)^{\Delta-1}\left( \frac{(q-t')(t'-p)}{q-p}\right)^{\Delta-1}\ .
\end{eqnarray}
By putting back the step function and the normalization, it is the smearing function of the bulk field $\varphi(p,q)$. It is a modular zero mode by construction.
\end{proof}

 Passing through the horizon is a non-local transformation on the boundary. 
No time interval algebra includes the operator behind the horizon, and all time interval algebras are strict subalgebras of the whole algebra. The casual depth parameter is infinite and matches the idea of \cite{Gesteau:2024rpt}. This raises a question about the converse of our results.
 We ask under what conditions a theory that admits time interval algebra has an emergent bulk as:
\begin{enumerate}
    \item The affiliated zero modes of a time interval algebra can be interpreted as the bulk field.
    \item The modular flow of a time interval algebra acts geometrically on all these zero modes.
    \item The non-local action of the modular flows implies the existence of some bulk horizon.
\end{enumerate}
We will answer the question above in the next section.

\section{Conformal algebra, twisted inclusions and modular intersections}\label{sec:Twisted}

Up to now, we have focused on conformal GFF in $0+1$-dimensions as a theory with future algebras $\mA_{\mathbb{R}_+}$ and time interval algebras $\mA_I$. Then, using the action of the $\mathfrak{psl}(2,\mathbb{R})$ Lie algebra we derived expressions for the modular data (flow and conjugation) of $\mA_{\mathbb{R}_+}$ and $\mA_I$. We saw that the geometric action of $\mathfrak{psl}(2,\mathbb{R})$ on global AdS$_2$ resulted in local modular actions in the bulk. In this section, we prove converse results for any theory with time interval algebras.
First, in Section \ref{conftoModInt}, we show that the local modular action we derived here using the $\mathfrak{psl}(2,\mathbb{R})$ algebra results in a {\it Modular Intersection} property. Then, in Section \ref{ModInttoConf}, we show a converse that the existence of Modular Intersection property for any theory with future algebras is enough to deduce the existence of the $\mathfrak{psl}(2,\mathbb{R})$ algebra.

\subsection{Conformal algebra implies modular intersections}\label{conftoModInt}

Let us first focus on the integer $\Delta$ case and consider the algebra of half-line, $\mN = \mA_{(0,\infty)}$. We showed in Theorem  \ref{ModFlowThmInsideAlg} that the modular flow of this algebra is dilation. This means that 
\begin{eqnarray}
    \Delta_{\mN}^{-it} \mA_{(0,1)} \Delta_{\mN}^{it} \subset \mA_{(0,1)} \qquad \text{for all } t \le 0 \ . \label{eq-inter-prop-1}
\end{eqnarray}
In other words, the inclusion $\mA_{(0,1)} \subset \mN$ is a past half-sided modular inclusion (HSMI$-$); see Appendix \ref{App:Hsmi} for definitions. Now, consider the algebra of an interval $\mM = \mA_{(-1,1)}$. In Lemma \ref{lemma-interval-map}, we saw that this algebra can be mapped to that of a half-line by the unitary action of \eqref{conformalUT}. Under this map its subalgebra $\mA_{(0,1)}$ goes to $\mA_{(1,\infty)}$, and we deduce that\footnote{This follows from the fact that if $\mA_1\subset \mA_2$ is an HSMI$\pm$ inclusion and $U$ is a unitary that leaves $\ket{\Omega}$ invariant, the inclusion $U\mA_1U^\dagger\subset U\mA_2U^\dagger$ is also HSMI$\pm$.}
\begin{eqnarray}
    \Delta_{\mM}^{-it} \mA_{(0,1)} \Delta_{\mM}^{it} \subset \mA_{(0,1)} \qquad \text{for all } t \ge 0 \ , \label{eq-inter-prop-2}
\end{eqnarray}
and hence, the inclusion $\mA_{(0,1)} \subset \mM$ is a future half-sided modular inclusion (HSMI$+$). Finally, we note that for integer $\Delta$, the modular conjugation of $\mN = \mA_{(0,\infty)}$ is the reflection map ($t \to -t$), and hence,
\begin{align}
    J_{\mN} \mM J_{\mN} = \mM \ . \label{eq-inter-prop-3}
\end{align}
We can summarize the above discussion in terms of the following corollary:
\begin{corollary} \label{mod-intersection-GFF}
    In $0+1$-dimensional conformal GFF dual to Poincar\'e AdS$_2$ with integer $\Delta$, the von Neumann algebras $\mM=\mA_{(-1,1)}$ and $\mN=\mA_{(0,\infty)}$ have the following properties
\begin{enumerate}
    \item $(\mN\cap \mM)\subset \mN$ is HSMI$-$
    \item $(\mN\cap \mM)\subset \mM$ is HMSI$+$
    \item $J_\mN \mM J_\mN=\mM$. 
\end{enumerate}
\end{corollary}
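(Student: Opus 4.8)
The plan is to identify the relative algebra $\mN\cap\mM$ with the interval algebra $\mA_{(0,1)}$ and then read off each of the three properties from modular data that has already been computed: the dilatation form of the modular flow of a half-line algebra (Theorem~\ref{ModFlowThmInsideAlg}), the local covariant action of the M\"obius map carrying $(-1,1)$ to a half-line (Lemma~\ref{lemma-interval-map}), and the modular conjugation of a half-line algebra (Theorem~\ref{thm-J_half-line}).

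First I would establish $\mN\cap\mM=\mA_{(0,\infty)}\cap\mA_{(-1,1)}=\mA_{(0,1)}$. For integer $\Delta$ the net $I\mapsto\mA_I$ is a local M\"obius covariant net satisfying Haag's duality (Theorem~\ref{thm-loc-mobcov-net-integer-delta}); at the one-particle level $\mA_I$ is generated by the real subspace of functions supported in $I$, and one checks that the closed real subspaces associated with $(0,\infty)$ and $(-1,1)$ intersect precisely in the one associated with $(0,1)$, using that positive (negative) frequency functions extend analytically to the lower (upper) half $t$-plane. This identification is the only point where a structural property of the net is actually invoked; the three verifications that follow are mechanical consequences of the already-computed modular data.

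For item~1, Theorem~\ref{ModFlowThmInsideAlg} gives $\Delta_{\mN}^{-it}W(f)\Delta_{\mN}^{it}=W(D_{2\pi t}[f])$ with $D_{2\pi t}[f](x)=e^{2\pi(\Delta-1)t}f(e^{-2\pi t}x)$, so a function supported in $(0,1)$ flows to one supported in $(0,e^{2\pi t})$. Hence $\Delta_{\mN}^{-it}\mA_{(0,1)}\Delta_{\mN}^{it}=\mA_{(0,e^{2\pi t})}\subseteq\mA_{(0,1)}$ for all $t\le 0$, which is exactly the HSMI$-$ condition for $(\mN\cap\mM)\subset\mN$. For item~2, Lemma~\ref{lemma-interval-map} provides a state-preserving unitary $U_{\mM}$ implementing $t\mapsto (t+1)/(1-t)$ that acts locally on $\mM$, carries $\mM$ onto $\mA_{(0,\infty)}$, and carries its subalgebra $\mA_{(0,1)}$ onto $\mA_{(1,\infty)}$. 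In $\mA_{(0,\infty)}$ the same dilatation flow gives $\Delta_{\mA_{(0,\infty)}}^{-it}\mA_{(1,\infty)}\Delta_{\mA_{(0,\infty)}}^{it}=\mA_{(e^{2\pi t},\infty)}\subseteq\mA_{(1,\infty)}$ for $t\ge 0$, i.e. HSMI$+$; since conjugation by a vacuum-preserving unitary preserves the HSMI$\pm$ property, pulling back by $U_{\mM}$ shows $(\mN\cap\mM)\subset\mM$ is HSMI$+$.

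For item~3, Theorem~\ref{thm-J_half-line} gives $J_{\mN}=-\mT_{\Delta}^{\dagger}\mR$. When $\Delta$ is an integer the GHT multiplies $f_{\pm}$ by $e^{\mp i\pi\Delta}=(-1)^{\Delta}$, so $\mT_{\Delta}$ acts as a scalar and $J_{\mN}$ reduces, up to a phase, to the geometric reflection $\mR:f(x)\mapsto f(-x)$, which sends $\mA_{(a,b)}$ to $\mA_{(-b,-a)}$. Since the interval $(-1,1)$ is symmetric about the fixed point $0$ of $\mN$, this yields $J_{\mN}\mM J_{\mN}=\mA_{(-1,1)}=\mM$. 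I expect the intersection identity of the second paragraph to be the only genuine obstacle: once $\mN\cap\mM=\mA_{(0,1)}$ is granted, all three statements follow immediately, exactly as in the discussion preceding the corollary.
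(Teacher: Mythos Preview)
Your proposal is correct and follows essentially the same route as the paper: the paper derives items~1--3 in the discussion immediately preceding the corollary by exactly the three ingredients you identify (dilatation modular flow from Theorem~\ref{ModFlowThmInsideAlg}, the local M\"obius map of Lemma~\ref{lemma-interval-map} transporting $\mA_{(0,1)}\subset\mA_{(-1,1)}$ to $\mA_{(1,\infty)}\subset\mA_{(0,\infty)}$, and the reduction of $J_{\mN}$ to reflection for integer $\Delta$ via Theorem~\ref{thm-J_half-line}). The only difference is that the paper leaves the identification $\mN\cap\mM=\mA_{(0,1)}$ implicit, whereas you flag it explicitly; your sketch of that step via the one-particle subspaces is appropriate.
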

This motivates us to propose the following definition:
\begin{definition}(Modular Intersection)
We say a tuple $(\mM, \mN)$ of von Neumann algebras satisfies the {\it modular intersection} property if $\mM$ and $\mN$  have a common cyclic and separating state and the following properties hold
\begin{enumerate}
    \item $(\mN\cap \mM)\subset \mN$ is HSMI$-$
    \item $(\mN\cap \mM)\subset \mM$ is HMSI$+$
    \item $J_\mN \mM J_\mN=\mM$.
\end{enumerate}
\end{definition}

Now we focus on the case of non-integer $\Delta$. Even though \eqref{eq-inter-prop-1} and \eqref{eq-inter-prop-2} are still valid for the general $\Delta$, \eqref{eq-inter-prop-3} is no longer valid. This is because the modular conjugation of the algebra of a half-line is the reflection followed by GHT $\mT_\Delta$. Therefore, the modular intersection property does not hold for $\mM=\mA_{(-1,1)}$ and $\mN=\mA_{(0,\infty)}$ for general $\Delta$. However, the generalization of \eqref{eq-inter-prop-3} for arbitrary $\Delta$ is
\begin{align}
    J_{\mN} \mM J_{\mN} \, = \, \mT^{\dagger}_{\Delta} \mM \mT_{\Delta} 
\end{align}
which follows from Theorem \ref{thm-J_half-line}. This motivates the following generalization of the modular intersection property:
%This motivates the following definition:
\begin{definition}(Twisted modular intersection)
We say a tuple $(\mM, \mN, U)$ of von Neumann algebras $\mM$ and $\mN$ and a unitary $U$ satisfies the {\it twisted modular intersection} property if $\mM$ and $\mN$ have a common cyclic and separating state and the following properties hold:
\begin{enumerate}
    \item $(\mN\cap \mM)\subset \mN$ is HSMI$-$
    \item $(\mN\cap \mM)\subset \mM$ is HSMI$+$
    \item $U$ commutes with $\Delta_\mM^{is}$ and 
   \begin{eqnarray}
        J_\mN \mM J_\mN = U \mM U^\dagger \, .\label{eq-Jn-twisted-intersection-prop}%J_\mN (\mM\cap\mN) J_\mN = (U \mM U^\dagger) \cap (\mM\cap\mN)' \, .
   \end{eqnarray}
\end{enumerate}
\end{definition}
The discussion of GFF with non-integer $\Delta$ can be summarized with the following result:
%To find a suitable generalization of \eqref{eq-inter-prop-3} in the general case, let us go back to the integer $\Delta$ case and $\mM=\mA_{(-1,1)}$ and let $\mN=\mA_{(0,\infty)}$. Since Haag's duality holds in the case of integer $\Delta$, we can write \footnote{Note that $ \mM \cap (\mM \cap \mN)'$ is the relative commutant of the inclusion $\mN\cap \mM\subset \mM$, so intuitively, one can think of the Eq. (\ref{decomposmM}) as $\mM=(\mM\cap \mN)\vee (\mM-(\mM\cap \mN))$.}
%\begin{align}\label{decomposmM}
%    \mM \, = \, (\mM \cap \mN) \vee \left( \mM \cap (\mM \cap \mN)'\right) \, .
%\end{align}
%Therefore the condition $J_\mN \mM J_\mN=\mM$ is equivalent to the following condition:
%\begin{align}
%    J_\mN (\mM\cap\mN) J_\mN = \mM \cap (\mM\cap\mN)' \, .
%\end{align}
%Intuitively, the third condition above can be interpreted as follows: $J_\mN$ swaps the subalgebra and the relative commutant in the inclusion $(\mN\cap \mM)\subset \mM$. The generalization of this equation for non-integer $\Delta$ is
%\begin{align}
 %   J_\mN (\mM\cap\mN) J_\mN = (\mT_{\Delta}^\dagger \mM \mT_{\Delta}) \cap (\mM\cap\mN)' \, ,
%\end{align}
%which follows from \eqref{eq-res-comm-I-general-Delta}. The bulk dual provides a geometric and intuitive picture for the equation above. Therefore, we have the following result in the case of non-integer $\Delta$:
\begin{corollary} \label{twisted-intersection-GFF}
    In $0+1$-dimensional conformal GFF dual to Poincar\'e AdS$_2$ with $\Delta > 1/2$, $(\mM, \mN, \mT^{\dagger}_{\Delta})$ satisfies the twisted modular intersection property, where $\mM=\mA_{(-1,1)}$ and $\mN=\mA_{(0,\infty)}$. 
    % and the unitary $\mT_{\Delta}$ that implements the generalized Hilbert transform (GHT) have the following properties
%\begin{enumerate}
  %  \item $(\mN\cap \mM)\subset \mN$ is HSMI$-$
  %  \item $(\mN\cap \mM)\subset \mM$ is HMSI$+$
  %  \item $\mT_{\Delta}$ commutes with $\Delta_{\mM}^{is}$ and 
  %  \begin{align}
  %      J_\mN (\mM\cap\mN) J_\mN = (\mT_{\Delta}^\dagger \mM \mT_{\Delta}) \cap (\mM\cap\mN)' \, .
  %  \end{align}
%\end{enumerate}
\end{corollary}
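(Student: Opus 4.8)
The plan is to check, one at a time, the four ingredients in the definition of the twisted modular intersection property for the triple $(\mM,\mN,\mT_\Delta^\dagger)$ with $\mM=\mA_{(-1,1)}$, $\mN=\mA_{(0,\infty)}$, reading each off from results already proved. The common cyclic and separating state is the conformal GFF vacuum $\Omega$, which is cyclic and separating for every time interval algebra \cite{Furuya:2023fei}, hence for $\mM$, $\mN$, and for $\mN\cap\mM$. The one step not already contained in the discussion around \eqref{eq-inter-prop-1}--\eqref{eq-inter-prop-2} is the identification of the intersection algebra, $\mN\cap\mM=\mA_{(0,1)}$. Isotony gives $\mA_{(0,1)}\subseteq\mN\cap\mM$; for the converse I would write $\mN\cap\mM=(\mN'\vee\mM')'$ and invoke the explicit commutants of Theorems~\ref{thm-comm-half-line-general} and \ref{thm-comm-I-general},
\begin{eqnarray}
    \mN' &=& \mT_\Delta^\dagger\,\mA_{(-\infty,0)}\,\mT_\Delta\ ,\nn\\
    \mM' &=& \big(\mT_\Delta^\dagger\,\mA_{(-\infty,-1)}\,\mT_\Delta\big)\vee\big(\mT_\Delta\,\mA_{(1,\infty)}\,\mT_\Delta^\dagger\big)\ .
\end{eqnarray}
Since $\mA_{(-\infty,-1)}\subseteq\mA_{(-\infty,0)}$, the first factor of $\mM'$ sits inside $\mN'$, so $\mN'\vee\mM'=\big(\mT_\Delta^\dagger\mA_{(-\infty,0)}\mT_\Delta\big)\vee\big(\mT_\Delta\mA_{(1,\infty)}\mT_\Delta^\dagger\big)$, which is exactly $(\mA_{(0,1)})'$ by Theorem~\ref{thm-comm-I-general} with $(a,b)=(0,1)$. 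Taking commutants gives $\mN\cap\mM=\mA_{(0,1)}$.

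With this identity the two half-sided inclusions follow from the modular flows computed earlier. The modular flow of $\mN=\mA_{(0,\infty)}$ is dilatation $D_{2\pi s}$ (Theorem~\ref{ModFlowThmInsideAlg}), which maps a function supported in $(0,1)$ to one supported in $(0,e^{2\pi s})$; thus $\Delta_\mN^{-is}(\mN\cap\mM)\Delta_\mN^{is}=\mA_{(0,e^{2\pi s})}\subseteq\mA_{(0,1)}$ for $s\le 0$, so $(\mN\cap\mM)\subset\mN$ is HSMI$-$. For $(\mN\cap\mM)\subset\mM$, I would use the state-preserving unitary $U_I$ of Lemma~\ref{lemma-interval-map}, implementing $t\mapsto (t+1)/(1-t)$, so that $U_I^\dagger\mM U_I=\mA_{(0,\infty)}$ and $U_I^\dagger\mA_{(0,1)}U_I=\mA_{(1,\infty)}$. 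Since the modular flow of $\mA_{(0,\infty)}$ is again dilatation, $\mA_{(1,\infty)}\subset\mA_{(0,\infty)}$ is HSMI$+$, and HSMI$\pm$ inclusions are preserved by conjugation by state-preserving unitaries, so $(\mN\cap\mM)\subset\mM$ is HSMI$+$.

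It remains to verify the twist with $U=\mT_\Delta^\dagger$. First, $\mT_\Delta^\dagger$ commutes with $\Delta_\mM^{is}$: by Proposition~\ref{prop-unitary-related-algebra} and the previous step $\Delta_\mM^{is}=U_I\Delta_\mN^{is}U_I^\dagger$, and since $\Delta_\mN^{is}$ is the dilatation unitary and $U_I$ implements a $PSL(2,\mathbb{R})$ element, $\Delta_\mM^{is}$ lies in the image of the $\widetilde{PSL}(2,\mbR)$ representation of Lemma~\ref{lemma-psl-unitary}; as observed in Section~\ref{sec:gff_conformal}, the GHT acts on the one-particle space by a scalar on each frequency-sign subspace, which every $\widetilde{PSL}(2,\mbR)$ element preserves, so the two commute. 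Second, by Theorem~\ref{thm-J_half-line} the modular conjugation of $\mN$ is $J_\mN=-\mT_\Delta^\dagger\mR$ with $\mR$ the reflection $t\mapsto -t$; conjugating \eqref{eq-comm-T-R} by $\mR$ gives $\mR\mT_\Delta^\dagger\mR=\mT_\Delta$, and since the interval $(-1,1)$ is $\mR$-invariant one has $\mR\mM\mR=\mM$, so
\begin{eqnarray}
    J_\mN\,\mM\,J_\mN \;=\; \mT_\Delta^\dagger\,(\mR\,\mM\,\mR)\,\mT_\Delta \;=\; \mT_\Delta^\dagger\,\mM\,\mT_\Delta \;=\; U\,\mM\,U^\dagger\ ,
\end{eqnarray}
which is \eqref{eq-Jn-twisted-intersection-prop}. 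This exhausts the definition, so $(\mM,\mN,\mT_\Delta^\dagger)$ has the twisted modular intersection property.

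The main obstacle is conceptual rather than computational: it is the intersection identity $\mN\cap\mM=\mA_{(0,1)}$. For non-integer $\Delta$ the naive expectation $\mA_{I_1}\cap\mA_{I_2}=\mA_{I_1\cap I_2}$ is precisely the sort of Haag-duality statement that fails, so the equality genuinely rests on the commutant computations of Theorems~\ref{thm-comm-half-line-general} and \ref{thm-comm-I-general}; everything else is bookkeeping. The only other point demanding care is propagating the GHT through the antiunitary $J_\mN$ --- the sign in $J_\mN=-\mT_\Delta^\dagger\mR$ and the relation $\mR\mT_\Delta^\dagger\mR=\mT_\Delta$. The hypothesis $\Delta>1/2$ enters only so that all the invoked results (well-definedness and non-triviality of the time interval algebras, the commutant and modular-data theorems) are available.
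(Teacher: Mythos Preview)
Your proof is correct and follows essentially the same route as the paper, which treats the corollary as a summary of the preceding discussion in Section~\ref{conftoModInt}. You are more thorough on one point the paper glosses over: the paper uses $\mA_{(0,1)}$ in \eqref{eq-inter-prop-1}--\eqref{eq-inter-prop-2} and then states the corollary with $\mN\cap\mM$, tacitly identifying the two without justification, whereas you supply a clean argument for $\mN\cap\mM=\mA_{(0,1)}$ via the commutant formulas of Theorems~\ref{thm-comm-half-line-general} and \ref{thm-comm-I-general}. The remaining ingredients---the two HSMI properties, the commutation of $\mT_\Delta^\dagger$ with $\Delta_\mM^{is}$, and the twist relation $J_\mN\mM J_\mN=\mT_\Delta^\dagger\mM\mT_\Delta$---are handled the same way in both accounts.
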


The Corollaries \ref{mod-intersection-GFF} and \ref{twisted-intersection-GFF} are results of the $\mathfrak{psl}(2,\mathbb{R})$ Lie algebra. In fact, we show below that any (twisted) modular intersection property implies the existence of a unitary representation of the $PSL(2,\mathbb{R})$ ($\widetilde{PSL}(2,\mbR)$) group. 

Before we do that, let us discuss another crucial difference between the conformal GFF with integer $\Delta$ and non-integer $\Delta$. Consider the inclusion $\mA_{(1,\infty)} \subset \mA_{(0,\infty)}$. When $\Delta$ is an integer, Haag's duality holds and hence, the relative commutant of $\mA_{(1,\infty)}$ in $\mA_{(0,\infty)}$ is $\mA_{(0,1)}$ which is cyclic (see the Reeh-Schlieder property of time interval algebras in Lemma 23 of \cite{Furuya:2023fei}).  This motivates the following definition (see also Appendix \ref{app:StandardNormal}):
\begin{definition}[Standard inclusion]
An inclusion $\mN \subset \mM$ of von Neumann algebras is called a {\it standard inclusion} if $\mN$, $\mM$, and the relative commutant $\mN' \cap \mM$ share a common cyclic and separating state. 
\end{definition}
Therefore, for integer $\Delta$ the inclusion $\mA_{(1,\infty)} \subset \mA_{(0,\infty)}$ is standard.
On the other hand, for non-integer $\Delta$ the commutant of $\mA_{(1,\infty)}$ is $\mT_{\Delta}^\dagger \mA_{(-\infty,1)} \mT_{\Delta}$ (see Theorem \ref{thm-comm-I-general}).
It is instructive to understand this result in the bulk AdS$_{2}$ picture. As shown in Figure \ref{fig:commutantbulk}, the relative commutant of $\mA_{(1,\infty)}$ in $\mA_{(0,\infty)}$ in the bulk spacetime corresponds to the null segment of the future Poincar\'e horizon. While in 1+1-dimensional quantum field theory, the algebra of null segment is non-trivial,\footnote{For example, it includes operators generated by the null components of the stress tensor.} the resulting subalgebra is not cyclic.\footnote{Note that for Reeh-Schlieder to apply, we need a region of spacetime with non-zero volume.} This provides a bulk reasoning for why for non-integer $\Delta$ the modular inclusion $\mN\cap \mM\subset \mM$ is not standard.

\begin{figure}[h]
   \centering    \includegraphics[width=\textwidth]{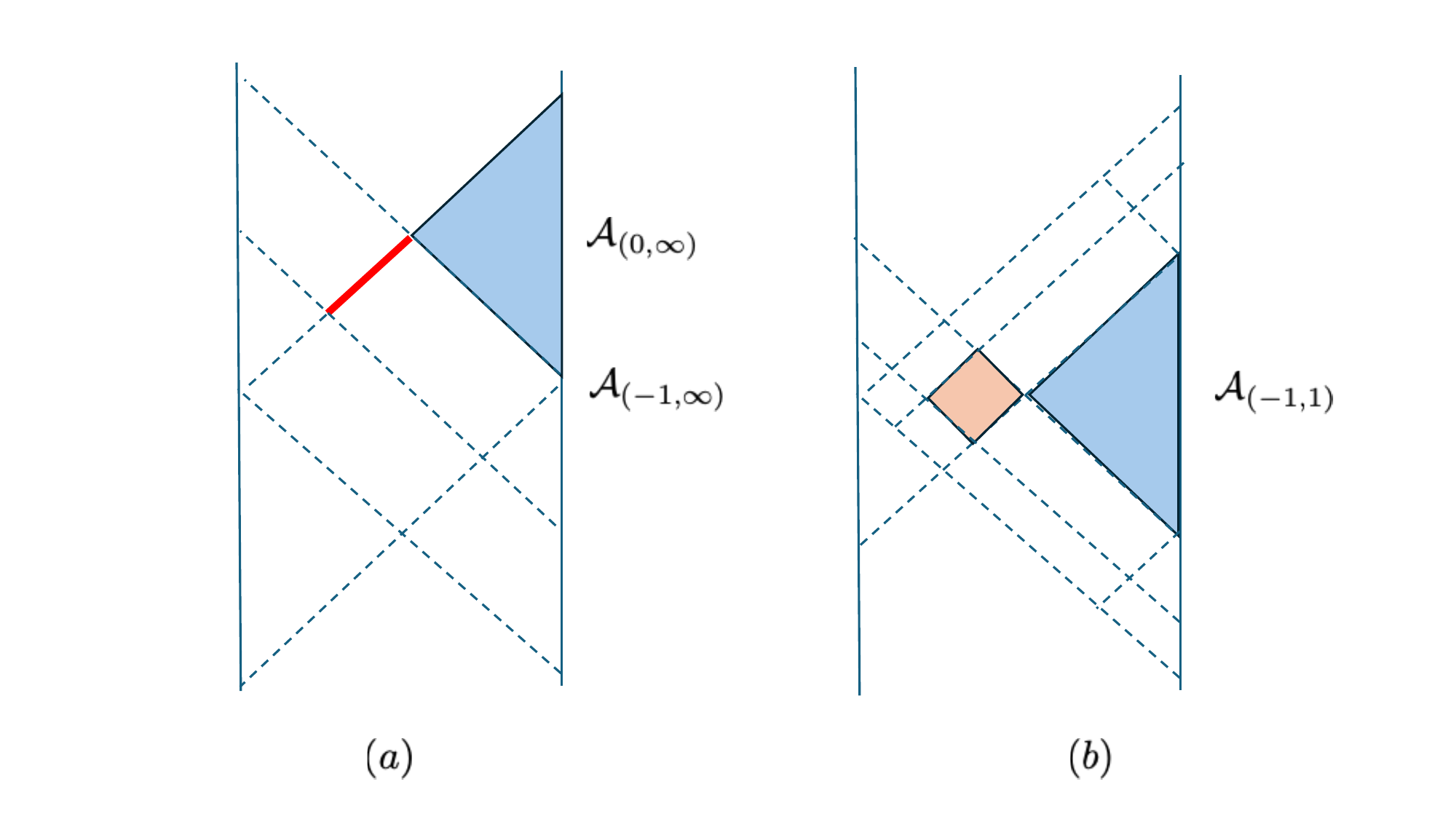}
    \caption{\small (a) The relative commutant of $\mA_{(0,\infty)}\subset \mA_{(-1,\infty)}$ is the algebra of the null segment in red in the bulk. (b) The relative commutant of $\mA_{(-1,1)}$ in $\mA_{(-2,2)}$ is the algebra of the orange region in the bulk which has finite volume, and is hence cyclic.  }
    \label{fig:commutantbulk}
\end{figure}

For non-integer $\Delta$, we have a non-standard inclusion. However, as can be seen from the geometric picture in Figure \ref{fig:modular-future-past}, the algebra
$\mT_{\Delta} (\mA_{(1,\infty)})' \mT_{\Delta}^{\dagger}$
overlaps with $\mA_{(0,\infty)}$ and the vacuum state is cyclic and separating for this overlapping algebra. This motivates defining a \textit{twisted inclusion} as a generalization of standard inclusion:
\begin{definition}[Twisted inclusion]
We call a tuple $(\mN \subset \mM, U)$ of an inclusion $\mN \subset \mM$ of von Neumann algebra and a unitary $U$ a {\it twisted inclusion} if $U$ leaves $\ket{\Omega}$ invariant, commutes with both $\Delta_{\mM}$ and $\Delta_{\mN}$, and the state $\ket{\Omega}$ is a common cyclic and separating state for $\mN$, $\mM$, and $(U^\dagger \mN' U)\cap\mM$. 
\end{definition}
With this definition, we have
\begin{corollary}
    In a theory of conformal GFF for non-integer $\Delta$, $(\mA_{(1,\infty)} \subset \mA_{(0,\infty)}, \mT_{\Delta}^\dagger)$ is a twisted inclusion. 
\end{corollary}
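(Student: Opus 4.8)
The plan is to verify directly the three defining conditions of a twisted inclusion for $\mN=\mA_{(1,\infty)}$, $\mM=\mA_{(0,\infty)}$ and $U=\mT_\Delta^\dagger$, using the modular data of Section~\ref{sec:modular-flow} together with the commutant formula of Theorem~\ref{thm-comm-half-line-general}. The inclusion $\mN\subset\mM$ itself is just isotony, since $(1,\infty)\subset(0,\infty)$.

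First I would record the structural facts. The corollary following \eqref{GHT3} states that $\mT_\Delta$ preserves every Weyl correlator in the vacuum, so $\mT_\Delta$, hence $U=\mT_\Delta^\dagger$, fixes $\ket{\Omega}$. For the commutation of $U$ with the modular operators, Theorem~\ref{ModFlowThmInsideAlg} identifies $\Delta_{\mM}^{is}=D_{2\pi s}$, and since there is a state-preserving translation unitary $V$ (an element of the $PSL(2,\mathbb{R})$ representation $U_g$ of Lemma~\ref{lemma-psl-unitary}, acting locally) with $V\mA_{(0,\infty)}V^\dagger=\mA_{(1,\infty)}$, Proposition~\ref{prop-unitary-related-algebra} gives $\Delta_{\mN}^{is}=V D_{2\pi s} V^\dagger$, again a one-parameter subgroup of the $PSL(2,\mathbb{R})$ representation. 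Because the GHT commutes with every element of $PSL(2,\mathbb{R})$ — in particular with dilatations $D_s$ and with $V$ — it commutes with both $\Delta_{\mM}^{is}$ and $\Delta_{\mN}^{is}$. Finally, by the Reeh--Schlieder property of time-interval algebras of GFF \cite{Furuya:2023fei}, $\ket{\Omega}$ is cyclic and separating for both $\mN=\mA_{(1,\infty)}$ and $\mM=\mA_{(0,\infty)}$.

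The only remaining point is that $\ket{\Omega}$ is cyclic and separating for the twisted relative commutant $(U^\dagger\mN'U)\cap\mM$. I would first identify this algebra: conjugating Theorem~\ref{thm-comm-half-line-general} by the translation $V$ and using $[\mT_\Delta,V]=0$ yields $\mN'=(\mA_{(1,\infty)})'=\mT_\Delta^\dagger\,\mA_{(-\infty,1)}\,\mT_\Delta$, so that $U^\dagger\mN'U=\mT_\Delta\,\mN'\,\mT_\Delta^\dagger=\mA_{(-\infty,1)}$ and therefore $(U^\dagger\mN'U)\cap\mM=\mA_{(-\infty,1)}\cap\mA_{(0,\infty)}$. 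Separating is immediate, as this is a von Neumann subalgebra of $\mM$ and $\ket{\Omega}$ separates $\mM$. For cyclicity I would \emph{not} compute the intersection exactly: by isotony $\mA_{(0,1)}\subseteq\mA_{(-\infty,1)}\cap\mA_{(0,\infty)}$, the vacuum is cyclic for the bounded-interval algebra $\mA_{(0,1)}$ by Reeh--Schlieder \cite{Furuya:2023fei}, and any von Neumann algebra containing a subalgebra with cyclic vacuum is itself cyclic. This establishes all three conditions of a twisted inclusion.

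The main obstacle is precisely this last step: one must resist computing $\mA_{(-\infty,1)}\cap\mA_{(0,\infty)}$ in closed form, since for non-integer $\Delta$ Haag duality and additivity fail (Corollary~\ref{cor:local-net-integer-delta}) and there is no reason the intersection equals $\mA_{(0,1)}$ on the nose; the point is that only the cheap inclusion $\mA_{(0,1)}\subseteq\mA_{(-\infty,1)}\cap\mA_{(0,\infty)}$ is needed for cyclicity. A secondary care point is the bookkeeping of conventions in $\mN'=\mT_\Delta^\dagger\,\mA_{(-\infty,1)}\,\mT_\Delta$, which is Theorem~\ref{thm-comm-half-line-general} for the half-line $(0,\infty)$ translated by one unit, consistent with the bulk picture of Figure~\ref{fig:commutantbulk} in which $U^\dagger\mN'U$ is the bulk algebra whose overlap with $\mM$ is the causal diamond of $(0,1)$.
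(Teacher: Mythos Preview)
Your proof is correct and follows the same route the paper indicates in the discussion preceding the corollary: verify directly that $U=\mT_\Delta^\dagger$ fixes the vacuum and commutes with the modular groups (both being $PSL(2,\mathbb{R})$ elements), identify $U^\dagger\mN'U=\mA_{(-\infty,1)}$ via Theorem~\ref{thm-comm-half-line-general}, and then invoke Reeh--Schlieder for cyclicity. One minor remark: your caution about not computing the intersection $\mA_{(-\infty,1)}\cap\mA_{(0,\infty)}$ exactly is unnecessary, since the last line of the proof of Theorem~\ref{thm-comm-I-general} shows $\mA_{(a,\infty)}\wedge\mA_{(-\infty,b)}=\mA_{(a,b)}$ on the nose for any $\Delta$; but your inclusion argument is of course sufficient.
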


In the next subsection, we will demonstrate the connection between the representation of $\widetilde{PSL}(2,\mbR)$ and the twisted inclusion. In particular, we will show that any twisted half-sided inclusion leads to a representation of $\widetilde{PSL}(2,\mbR)$ group.

\subsection{Twisted modular intersection implies conformal algebra}\label{ModInttoConf}

Classic result in algebraic QFT by Wiesbrock states the following converse statements: 
\begin{enumerate}
     \item {\bf Modular intersection theorem \cite{wiesbrock1993conformal}:} Modular intersection property implies a unitary representation of the $PSL(2,\mathbb{R})$ group.
    \item {\bf Standard inclusion theorem \cite{wiesbrock1993symmetries}:} Standard half-sided modular inclusion implies a unitary representation of the $PSL(2,\mathbb{R})$ group. 
\end{enumerate}
For integer $\Delta$, these results imply the existence of a unitary representation of $PSL(2,\mathbb{R})$ in accordance with Lemma \ref{lemma:psl2r_rep}. We saw that insisting on a local representation of modular data (modular conjugation and modular flow) geometrizes the $PSL(2,\mathbb{R})$ action as a bulk AdS$_2$. These results explain the emergence of free bulk fields in Poincar\'e AdS$_2$ from GFF fields with integer $\Delta$. 

For conformal GFF with non-integer $\Delta$, neither the modular intersection property nor the standard inclusion holds. To explain the emergence of AdS$_2$ for non-integer $\Delta$, we need new results that we prove below.
As we showed in the previous subsection, for non-integer $\Delta$ we have the twisted modular intersection property and the twisted inclusion. Here, we generalize Wiesbrock's results by showing: 
\begin{enumerate}
    \item {\bf Theorem \ref{twistedModInt}:} Twisted modular intersection property implies a unitary representation of the universal cover of $PSL(2,\mathbb{R})$ group: $\widetilde{PSL}(2,\mbR)$.
    \item {\bf Theorem \ref{twistModularInclusion}:} Twisted half-sided modular inclusion implies a unitary representation of the universal cover of $PSL(2,\mathbb{R})$ group: $\widetilde{PSL}(2,\mbR)$.
\end{enumerate}
Once again, insisting on a local representation of modular flow explains the emergence of AdS$_2$. 
Note that, since these theorems do not assume the existence of any particular spectral density, or even GFF algebras, they can be applied more generally to GFF fields with non-holographic spectral densities, or potentially more general theories with modular future and past subalgebras. For this reason, we call this emergent geometric local representation AdS$_2$.

Instead of reviewing Wiesbrock's theorems, we will only discuss the proofs of the twisted statements because for the case $U=1$ they reduce to Wiesbrock's original theorems.

\begin{theorem}[Twisted Wiesbrock's Theorem ]\label{GenWiesbrockThm}
    Suppose we have von Neumann algebras $\mM, \mM_1, \mM_2$ with common cyclic and separating state $\Omega$ and a unitary $U$ that leaves $\Omega$ invariant and 
    \begin{enumerate}
        \item $(\mM_1 \subset \mM, \Omega)$ is HSMI$+$,
        \item $(\mM_2 \subset \mM, \Omega)$ is HSMI$-$,%+hsmi,
        \item $(\widetilde{\mM_2} \equiv U\mM_2 U^\dagger \subset \mM_1', \Omega)$ is HSMI$+$,%-hsmi,
        \item $U\Delta_\mM^{is}U^\dagger =\Delta_\mM^{is}$.
    \end{enumerate}
    Then, the operators %$G_0, G_1, \widetilde{G_2}$ defined as 
    \begin{eqnarray}
        G_0 &=& \frac{1}{2\pi}\log \Delta_\mM \nn\\
        G_1 &=& \frac{1}{2\pi}(\log \Delta_{\mM_1}- \log \Delta_{\mM}) \nn\\
        \widetilde{G_2} &=& \frac{1}{2\pi}(\log \Delta_{\widetilde{\mM_2}}- \log \Delta_{\mM})
    \end{eqnarray}
    satisfy the $\mathfrak{psl}(2,\mathbb{R})$ Lie algebra:
        \begin{eqnarray}
        [G_0, G_1] = iG_1\ , \qquad [G_0, \widetilde{G_2}] = -i\widetilde{G_2}\ , \qquad [G_1, \widetilde{G_2}] = -2iG_0\ . 
    \end{eqnarray}
    They exponentiate to $e^{it G_0}$, $e^{is G_1}$, and $e^{ir \widetilde{G_2}}$ furnishing a representation of $\widetilde{PSL}(2,\mbR)$ group. In this representation, the inversion map $\mathcal{I}$ is given by %(THIS NEEDS MODIFICATION.)} 
    \begin{eqnarray}\label{inversionmodular}
        \mathcal{I}  =  J_{\mM}J_{\mM_{1}} \Delta_{\widetilde{\mM_2} }^{-i\frac{\log (2)}{2\pi}} \Delta_{\mM_1}^{-i\frac{\log (2)}{2\pi}}   =  J_{\mM} \left( \Delta_{\mM_1}^{-i\frac{\log(2)}{2\pi}} J_{\widetilde{\mM_2}} \Delta_{\mM_1}^{i\frac{\log(2)}{2\pi}} \right) \ . 
    \end{eqnarray}
\end{theorem}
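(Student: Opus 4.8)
The plan is to reduce the twisted Wiesbrock theorem to the original (untwisted) Wiesbrock modular intersection/inclusion theorems by exploiting the unitary $U$. The key observation is that since $U$ commutes with $\Delta_\mM^{is}$ and fixes $\Omega$, conjugation by $U$ maps half-sided modular inclusions to half-sided modular inclusions, and it does not disturb $\mM$ or its modular data. So I would work with the triple $(\mM, \mM_1, \widetilde{\mM_2} = U\mM_2 U^\dagger)$ directly: hypotheses (1)--(4) are precisely the input needed so that $\mM_1 \subset \mM$ is HSMI$+$, $\widetilde{\mM_2} \subset \mM_1'$ is HSMI$+$, and (after noting that HSMI$-$ of $\mM_2\subset\mM$ implies, via Borchers/HST, a half-sided translation structure that $U$ intertwines) the configuration satisfies the standing assumptions of Wiesbrock's modular intersection theorem \cite{wiesbrock1993conformal}. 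The first step, then, is to carefully verify that $(\mM, \mM_1, \widetilde{\mM_2})$ meets the exact hypotheses of Wiesbrock's theorem; the point is that once we pass to $\widetilde{\mM_2}$, \emph{there is no residual twist} — the twist has been absorbed — and the conclusion about $G_0, G_1, \widetilde{G_2}$ generating $\mathfrak{psl}(2,\mathbb{R})$ follows verbatim from the untwisted result. The exponentials $e^{itG_0}, e^{isG_1}, e^{ir\widetilde{G_2}}$ then generate a representation; whether it is of $PSL(2,\mbR)$ or its universal cover $\widetilde{PSL}(2,\mbR)$ depends on whether the central rotation element closes, which in the twisted setting it generically does not — this is where the distinction with Wiesbrock's theorem enters, and I would track it through the KAN/Iwasawa decomposition as in the discussion around \eqref{universal_psl2r_action}.

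For the Lie algebra relations themselves, I would follow the standard Borchers-type argument: from HSMI$+$ for $\mM_1 \subset \mM$, the Borchers commutation relations give $[\log\Delta_\mM, \log\Delta_{\mM_1}]$ proportional to $(\log\Delta_{\mM_1} - \log\Delta_\mM)$, i.e. $[G_0, G_1] = iG_1$; symmetrically HSMI$-$ for $\mM_2 \subset \mM$ gives $[G_0, G_2] = -iG_2$ where $G_2 = \frac{1}{2\pi}(\log\Delta_{\mM_2} - \log\Delta_\mM)$, and since $U$ commutes with $\Delta_\mM^{is}$ this relation transports to $[G_0, \widetilde{G_2}] = -i\widetilde{G_2}$. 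The remaining relation $[G_1, \widetilde{G_2}] = -2iG_0$ is the genuinely nontrivial one and is exactly what the modular intersection condition (3) buys us: the HSMI$+$ inclusion $\widetilde{\mM_2} \subset \mM_1'$ forces the translation generated by $\widetilde{G_2}$ (relative to $\mM_1'$) to be compatible with the one generated by $G_1$, and Wiesbrock's computation then pins down the commutator. I would cite \cite{wiesbrock1993conformal, borchers2000revolutionizing, araki2005extension} for the analytic input (the $\Delta^{is}$-analyticity of the relevant vector-valued functions that legitimizes differentiating the cocycle relations).

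For the inversion formula \eqref{inversionmodular}, the strategy is to identify $\mathcal{I}$ as the element of $\widetilde{PSL}(2,\mbR)$ whose adjoint action implements the Cartan involution / the Weyl-group rotation by $\pi$ that swaps the two ``null translations'' $G_1 \leftrightarrow \widetilde{G_2}$ (up to sign) and negates $G_0$. Concretely: $J_{\mM_1}$ and $J_{\widetilde{\mM_2}}$ are the modular conjugations attached to the two half-sided inclusions, and the finite dilatations $\Delta_{\mM_1}^{-i\log 2/2\pi}$ etc. are chosen so that the product lands exactly on the rotation element rather than on some dilated conjugate of it — the $\log 2$ comes from the specific normalization in which the generator of $PSL(2,\mbR)$ rotations is $G_1 + \widetilde{G_2}$ and one needs $e^{\pm i(\log 2/2\pi)G_0}$-conjugation to move between the translation frame and the rotation frame. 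I would verify \eqref{inversionmodular} by checking that both displayed expressions have the correct adjoint action on each of $G_0, G_1, \widetilde{G_2}$ (this is a finite $SL(2,\mbR)$ matrix computation once the generators are realized as the standard $2\times2$ matrices), and that the two expressions agree by using $J_\mM \Delta_{\mM_1}^{it} J_\mM = \Delta_{\mM_1'}^{-it}$-type identities together with $J_{\mM}J_{\mM_1} = $ (a known modular reflection) to rewrite $J_\mM J_{\mM_1} \Delta_{\widetilde{\mM_2}}^{-i\log 2/2\pi}\Delta_{\mM_1}^{-i\log 2/2\pi}$ into the second form.

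The main obstacle I anticipate is \emph{not} the algebra relations (those are Wiesbrock, transported by $U$) but rather two things: first, checking that condition (3) as stated — HSMI$+$ for $\widetilde{\mM_2} \subset \mM_1'$, rather than some mixed inclusion involving $\mM$ itself — is genuinely the right hypothesis to recover the $[G_1, \widetilde{G_2}]$ bracket, which requires unwinding how Wiesbrock's ``$-\mathrm{Ad}J$'' trick relates the two inclusions sharing $\mM_1'$; and second, getting the inversion formula \eqref{inversionmodular} exactly right, including the precise powers $\Delta^{-i\log 2/2\pi}$ and the order of factors, since small errors here change $\mathcal{I}$ by a finite conformal transformation. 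I would handle the latter by working in the faithful $2\times 2$ matrix representation throughout and only at the end translating back to modular operators via the identifications $G_0 = \frac{1}{2\pi}\log\Delta_\mM$, $J_\mM \leftrightarrow$ reflection, which is the cleanest bookkeeping device.
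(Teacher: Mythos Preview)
Your derivation of the $\mathfrak{psl}(2,\mathbb{R})$ relations is correct and matches the paper's argument closely: both use the HSMI Borchers relation to obtain $[G_0,G_1]=iG_1$ from condition 1 and $[G_0,G_2]=-iG_2$ from condition 2, then transport the latter via $U$-conjugation (using condition 4 and $\widetilde{G_2}=UG_2U^\dagger$, $G_0=UG_0U^\dagger$) to $[G_0,\widetilde{G_2}]=-i\widetilde{G_2}$, and finally extract $[G_1,\widetilde{G_2}]=-2iG_0$ from condition 3 applied to $\widetilde{\mM_2}\subset\mM_1'$. One caution about your framing in the first paragraph: you write that conjugation by $U$ ``does not disturb $\mM$,'' but nothing in the hypotheses guarantees $U\mM U^\dagger=\mM$, so $\widetilde{\mM_2}$ need not sit inside $\mM$ and you cannot literally feed $(\mM,\mM_1,\widetilde{\mM_2})$ into Wiesbrock's original lemma as a triple of subalgebras of $\mM$. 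The paper (and, in fact, your own detailed second paragraph) sidesteps this correctly by working at the level of generators rather than invoking an HSMI$-$ inclusion $\widetilde{\mM_2}\subset\mM$.

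Your proposed approach to the inversion formula \eqref{inversionmodular}, however, has a genuine gap. Verifying the adjoint action of the candidate expressions on $G_0,G_1,\widetilde{G_2}$ via the $2\times 2$ matrix representation determines $\mathcal{I}$ only modulo the center of $\widetilde{PSL}(2,\mathbb{R})$, which is infinite (generated by rotation by $2\pi$); the $2\times 2$ representation you propose as the ``cleanest bookkeeping device'' is precisely \emph{not} faithful on the universal cover. Since the whole point of the twisted theorem is that the inversion need not square to the identity but rather to a nontrivial central element, this ambiguity is fatal. The paper instead takes the constructive route: it starts from a concrete product decomposition of the inversion in the group, $\mathcal{I}=V_1(-2)\widetilde{V_2}(-1/2)V_0(-2\log 2)V_1(-1/2)$ (a purely Lie-theoretic identity, cited from Wiesbrock), and then uses the HSMI identities of Appendix \ref{App:Hsmi} --- specifically \eqref{eq-hsmi-4} to write $V_1(-2)=J_\mM J_{\mM_1}$ and \eqref{eq-hsmi-5} to express the half-translations $V_1(-1/2)$, $\widetilde{V_2}(-1/2)$ as products of modular operators --- to arrive at the first form. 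The second form is then obtained by introducing the additional translation unitary $V_3(a)=\exp\bigl[i\tfrac{a}{2\pi}(\log\Delta_{\widetilde{\mM_2}}+\log\Delta_{\mM_1})\bigr]$ furnished by the HSMI$+$ inclusion $\widetilde{\mM_2}\subset\mM_1'$ and applying \eqref{eq-hsmi-1}, \eqref{eq-hsmi-4}, \eqref{eq-hsmi-5} once more. This derivation is unambiguous and never passes through the non-faithful matrix picture.
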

\begin{proof}
Recall that according to the half-sided modular inclusion theorem (see Appendix \ref{App:Hsmi} for a review), if $\mN \subset\mM$ is HSMI$\pm$, then 
\begin{eqnarray}
    \left[\frac{1}{2\pi}\log \Delta_\mM,\frac{1}{2\pi}(\log \Delta_{\mN}- \log \Delta_{\mM})\right]= \pm i \frac{1}{2\pi}(\log \Delta_{\mN}- \log \Delta_{\mM})\ .
\end{eqnarray}
Therefore, conditions 1 and 2 imply that $[G_{0},G_{1}] = iG_{1}$ and  $[G_{0},G_{2}] = -iG_{2}$ where
\begin{eqnarray}
    G_{2} = \frac{1}{2\pi}(\log \Delta_{\mM_2}- \log \Delta_{\mM}) \ .
\end{eqnarray}
Now by definition $\widetilde{\mM_2} \equiv U\mM_2 U^\dagger$ we have that $\log \Delta_{\widetilde{\mM_2}} = U \log \Delta_{\mM_2} U^\dagger$ and from condition 4 that $\log \Delta_{\mM} = U \log \Delta_{\mM} U^\dagger$. Hence, we can write $\widetilde{G_2} = U G_2 U^\dagger$ and $G_0 = U G_0 U^\dagger$. As a result, we get
\begin{eqnarray}
    [G_0, \widetilde{G_2}] = U [G_0, G_2] U^\dagger = U(-iG_2)U^\dagger = -i\widetilde{G_2}\ .
\end{eqnarray} 
Finally, condition 3 implies
\begin{eqnarray}
    \left[\frac{1}{2\pi}\log \Delta_{\mM_1'},\frac{1}{2\pi}\log \Delta_{\widetilde{\mM_2}}\right] = i \frac{1}{2\pi}(\log \Delta_{\widetilde{\mM_2}}- \log \Delta_{\mM'_1}) 
\end{eqnarray}
    or equivalently
\begin{eqnarray}
    \left[\frac{1}{2\pi}\log \Delta_{\mM_1},\frac{1}{2\pi}\log \Delta_{\widetilde{\mM_2}}\right] = -i \frac{1}{2\pi}(\log \Delta_{\mM_1} + \log \Delta_{\widetilde{\mM_2}})\ .
\end{eqnarray}
Combining this with the first two commutation relations yields the third commutation relation
\begin{eqnarray}
    [G_1, \widetilde{G_2}] = -2iG_0\ .
\end{eqnarray}
This proves that $G_{0}$, $G_{1}$, and $\widetilde{G_2}$ satisfy the $\mathfrak{psl}(2,\mathbb{R})$ Lie algebra, and hence, the unitaries $V_{0}(a) = e^{iaG_{0}}$, $V_{1}(a) = e^{iaG_{1}}$, and $\widetilde{V_{2}}(a) = e^{ia\widetilde{G_{2}}}$ generate the $\widetilde{PSL}(2,\mbR)$ group. 

It follows from the $\mathfrak{sl}(2,\mathbb{R})$ Lie algebra that the representation of the inversion map is \cite{wiesbrock1993symmetries}
\begin{eqnarray}
    \mathcal{I}  =  V_{1}(-2) \widetilde{V_{2}}(-1/2) V_{0}(-2\log2) V_{1}(-1/2) \ .
\end{eqnarray}
Using \eqref{eq-hsmi-4}, we get $V_{1}(-2) = J_{\mM}J_{\mM_{1}}$ whereas using \eqref{eq-hsmi-5}, we get
\begin{eqnarray}
 V_{1}(-1/2) = \Delta_{\mM}^{i\frac{\log (2)}{2\pi}} \Delta_{\mM_1}^{-i\frac{\log (2)}{2\pi}} \nn\\
 V_{2}(-1/2) = \Delta_{\mM_2}^{-i\frac{\log (2)}{2\pi}} \Delta_{\mM}^{i\frac{\log (2)}{2\pi}} 
\end{eqnarray}
where $V_2(a) = e^{ia G_2}.$ Since $U$ commutes with $V_0(a) = \Delta_{\mM}^{ia/2\pi}$, we get
\begin{eqnarray}
\widetilde{V_{2}}(-1/2) = \Delta_{\widetilde{\mM_2}}^{-i\frac{\log (2)}{2\pi}} \Delta_{\mM}^{i\frac{\log (2)}{2\pi}} \ .
\end{eqnarray}
With these results the inversion map becomes
\begin{eqnarray}
    \mathcal{I} &=&  J_{\mM}J_{\mM_{1}} \Delta_{\widetilde{\mM_2} }^{-i\frac{\log (2)}{2\pi}} \Delta_{\mM_1}^{-i\frac{\log (2)}{2\pi}}  \ .
\end{eqnarray}
To further simplify this expression, we note that since $\widetilde{\mM_2} \subset \mM'_1$ is HSMI$+$, there exists a unitary $V_{3}(a) = \exp[i\frac{a}{2\pi}(\log\Delta_{\widetilde{\mM_2}}+\log\Delta_{\mM_1})] $. Using properties of this unitary, we get
\begin{eqnarray}
    \mathcal{I} &=&  J_{\mM}J_{\mM_{1}} V_{3}(-1)  \nn\\ 
    &=&  J_{\mM}J_{\mM_{1}} \Delta_{\mM_1}^{-i\frac{\log (2)}{2\pi}} V_{3}(-2) \Delta_{\mM_1}^{i\frac{\log (2)}{2\pi}} \nn\\ 
    &=&  J_{\mM} \Delta_{\mM_1}^{-i\frac{\log (2)}{2\pi}} J_{\widetilde{\mM_{2}}}  \Delta_{\mM_1}^{i\frac{\log (2)}{2\pi}} 
\end{eqnarray}
where we have used \eqref{eq-hsmi-5}, \eqref{eq-hsmi-1}, and \eqref{eq-hsmi-4} in the first, second, and third steps, respectively.

\end{proof}
Note that the theorem above is a generalization of the Lemmas $3$ and $4$ of Wiesbrock \cite{wiesbrock1993symmetries}.
Next, we prove the first main result of this section:  
\begin{theorem}[Twisted modular intersection]\label{twistedModInt}
      Suppose ($\mM$, $\mN$, $U$) satisfies the twisted modular intersection property, then we have a representation of $\widetilde{PSL}(2,\mbR)$ group, and the inversion map is given by $J_{\mM}J_{\mN}$. 
\end{theorem}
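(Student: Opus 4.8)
The plan is to derive the statement from the twisted Wiesbrock theorem, Theorem~\ref{GenWiesbrockThm}, by choosing the right triple of algebras and the right twisting unitary. Write $\mP := \mN \cap \mM$; two of the three $\mathfrak{psl}(2,\mbR)$ generators are handed to us directly, since the inclusions $\mP \subset \mM$ (HSMI$+$) and $\mP \subset \mN$ (HSMI$-$) are part of the twisted modular intersection data. The first thing to record is a consequence of the remaining two hypotheses. Because $U$ commutes with $\Delta_\mM^{is}$ it commutes with $\Delta_\mM$, so $\Delta_{U\mM U^\dagger} = U\Delta_\mM U^\dagger = \Delta_\mM$; on the other hand $\Delta_{J_\mN \mM J_\mN} = J_\mN \Delta_\mM J_\mN$ because $J_\mN$ is an $\Omega$-preserving anti-unitary. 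Comparing these using \eqref{eq-Jn-twisted-intersection-prop} gives $J_\mN \Delta_\mM J_\mN = \Delta_\mM$, and hence, by anti-linearity of $J_\mN$, $J_\mN \Delta_\mM^{is} J_\mN = \Delta_\mM^{-is}$.

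Next I would apply Theorem~\ref{GenWiesbrockThm} with ambient algebra $\mM$, with $\mM_1 := \mP$, with
\[
  \mM_2 \,:=\, U^\dagger\, J_\mN\, \mP\, J_\mN\, U\,,
\]
and with the theorem's twist taken to be the same unitary $U$. Hypothesis~4 is then immediate, and hypothesis~1 is the given HSMI$+$ inclusion $\mP \subset \mM$. For hypothesis~2, using $U\Delta_\mM^{is}=\Delta_\mM^{is}U$ and $J_\mN \Delta_\mM^{is}J_\mN = \Delta_\mM^{-is}$ one gets $\Delta_\mM^{-is}\,\mM_2\,\Delta_\mM^{is} = U^\dagger J_\mN\,(\Delta_\mM^{is}\,\mP\,\Delta_\mM^{-is})\, J_\mN\, U$, which by the HSMI$+$ property of $\mP \subset \mM$ is contained in $\mM_2$ for $s \le 0$; hence $\mM_2 \subset \mM$ is HSMI$-$. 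Finally, $\widetilde{\mM_2} := U\mM_2 U^\dagger = J_\mN \mP J_\mN$, and since $\mP \subset \mN$ we have $J_\mN \mP J_\mN \subset \mN' \subset \mP' = \mM_1'$, so $\widetilde{\mM_2}$ does sit inside $\mM_1'$; hypothesis~3 thus reduces to showing this last inclusion is HSMI$+$ for the modular flow of $\mP'$. (The common cyclic and separating vector $\Omega$ for $\mM$, $\mM_1$, $\mM_2$ is inherited from the HSMI data for $\mP \subset \mM$ and preserved by the $\Omega$-fixing maps $U$ and $J_\mN$.) Once the four hypotheses hold, Theorem~\ref{GenWiesbrockThm} produces the $\mathfrak{psl}(2,\mbR)$ algebra on $G_0 = \tfrac{1}{2\pi}\log\Delta_\mM$, $G_1 = \tfrac{1}{2\pi}(\log\Delta_\mP - \log\Delta_\mM)$, $\widetilde{G_2} = \tfrac{1}{2\pi}(\log\Delta_{J_\mN \mP J_\mN} - \log\Delta_\mM)$ and the associated $\widetilde{PSL}(2,\mbR)$ representation.

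The main obstacle is hypothesis~3: the modular operator $\Delta_\mP$ of the intersection has no a priori relation to $J_\mN$, so establishing $\Delta_\mP^{is}\,(J_\mN \mP J_\mN)\,\Delta_\mP^{-is} \subseteq J_\mN \mP J_\mN$ for $s \ge 0$ needs extra input. I would supply it by running the half-sided modular inclusion theorem on $\mP \subset \mN$: it yields a one-parameter translation group $T(a) = e^{iaG_{-}}$ with $2\pi G_{-} = \log\Delta_\mP - \log\Delta_\mN$, scaling covariantly under $\Delta_\mN^{is}$, and obeying the reflection relation $J_\mN T(a) J_\mN = T(-a)$. Writing $\Delta_\mP^{is}$ as the corresponding word in $\Delta_\mN^{is}$ and the $T(a)$, conjugation by $J_\mN$ sends it to the word with $T \mapsto T^{-1}$; the desired containment then follows from the HSMI$-$ property of $\mP \subset \mN$ alone. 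For $U = 1$ this entire argument reduces to Wiesbrock's original modular intersection theorem, which appears here as the untwisted special case.

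For the inversion map, Theorem~\ref{GenWiesbrockThm} gives $\mathcal{I} = J_\mM\, \Delta_{\mM_1}^{-i\frac{\log 2}{2\pi}}\, J_{\widetilde{\mM_2}}\, \Delta_{\mM_1}^{i\frac{\log 2}{2\pi}}$. With $\mM_1 = \mP$ and $\widetilde{\mM_2} = J_\mN \mP J_\mN$ we have $J_{\widetilde{\mM_2}} = J_\mN J_\mP J_\mN$, so
\[
  \mathcal{I} \,=\, J_\mM\, \Delta_\mP^{-i\frac{\log 2}{2\pi}}\, J_\mN J_\mP J_\mN\, \Delta_\mP^{i\frac{\log 2}{2\pi}}\,.
\]
Using the relation between $J_\mP$, $J_\mN$, and $T(a)$ coming from the HSMI$-$ theorem for $\mP \subset \mN$, together with $J_\mN \Delta_\mP^{is}J_\mN = \Delta_{\widetilde{\mM_2}}^{is}$, the translation factors telescope (exactly as in the final simplification of the proof of Theorem~\ref{GenWiesbrockThm}) and leave $\mathcal{I} = J_\mM J_\mN$, as claimed.
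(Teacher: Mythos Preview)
Your proposal is correct and follows essentially the same route as the paper's proof: both reduce to Theorem~\ref{GenWiesbrockThm} with $\mM_1=\mP=\mM\cap\mN$ and $\widetilde{\mM_2}=J_\mN\mP J_\mN$ (the paper defines $\mM_2=\mM\cap(U^\dagger\mN'U)$ and then derives $\widetilde{\mM_2}=J_\mN\mM_1 J_\mN$, which is exactly your starting definition), derive $J_\mN\Delta_\mM^{is}J_\mN=\Delta_\mM^{-is}$ from the twist relation, and verify hypothesis~3 via the translation group coming from the HSMI$-$ inclusion $\mP\subset\mN$. Two small points: you should state explicitly that $\mM_2\subset\mM$ (it follows from $J_\mN\mP J_\mN\subset J_\mN\mM J_\mN=U\mM U^\dagger$), and your identity $J_\mN\Delta_\mP^{is}J_\mN=\Delta_{\widetilde{\mM_2}}^{is}$ has a sign slip (anti-linearity gives $\Delta_{\widetilde{\mM_2}}^{-is}$), though this does not affect the telescoping argument for the inversion, which the paper carries out with the same $V(a)$ relations you invoke to arrive at $\mathcal{I}=J_\mM J_\mN$.
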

\begin{figure}[t]
    \centering
    \includegraphics[width=\linewidth]{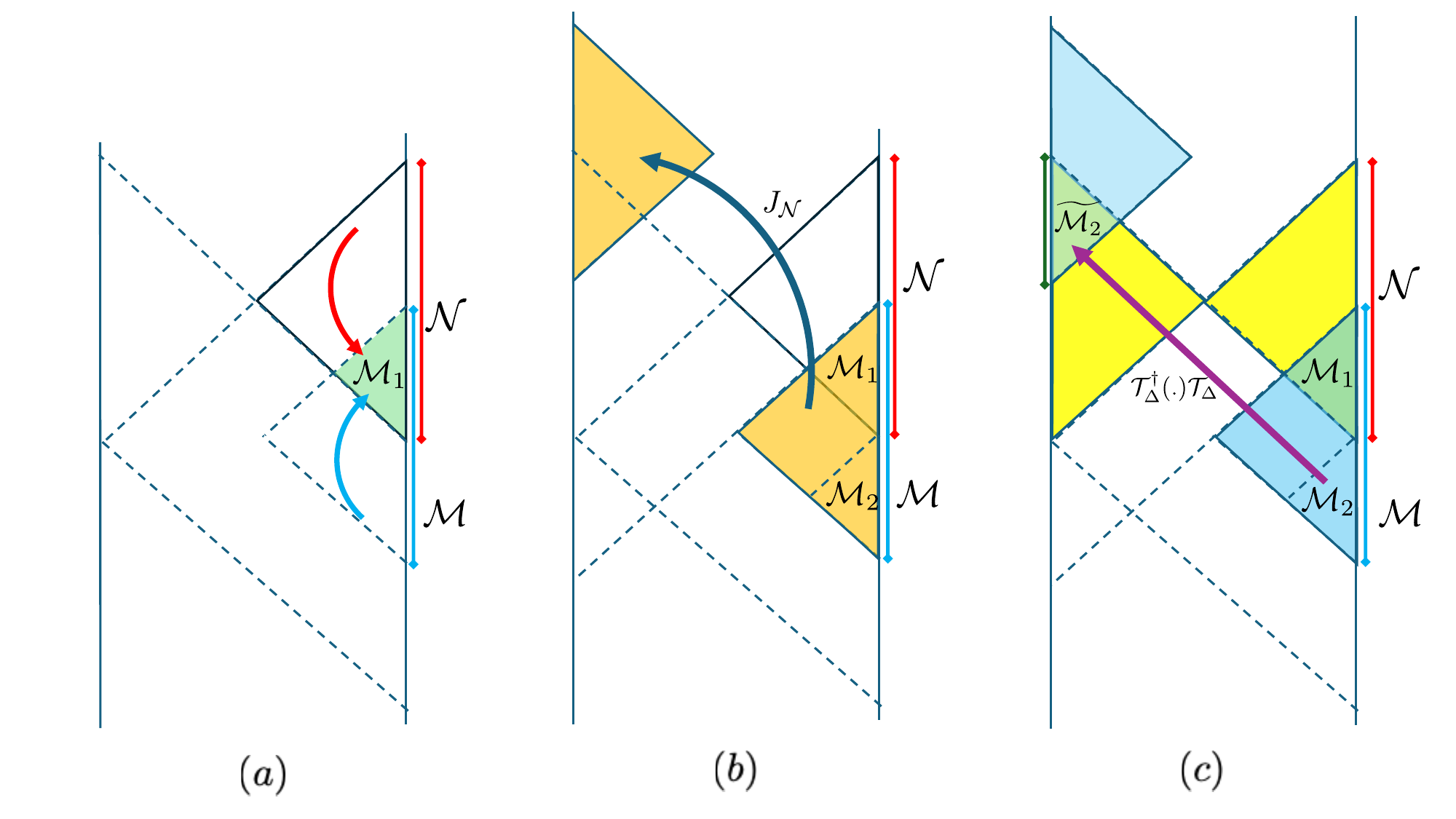}
    \caption{\small Twisted modular intersection property for 0+1-dimensional GFF dual to bulk AdS$_2$ for the unitary $U = \mT_\Delta^\dagger$ (see Theorem \ref{GenWiesbrockThm} and \ref{twistedModInt}). (a) $\mM \cap \mN \subset \mM$ is HSMI$+$ and $\mM \cap \mN \subset \mN$ is HSMI$-$. (b) $J_\mN$ maps $\mM$ to the corresponding antipodal region in the bulk $\mT_\Delta^\dagger \mM \mT_\Delta$. (c) The antipodal map takes $\mM_2$ to the region $J_\mN \mM_1 J_\mN$ showing that $\mM_2 \subset \mM$ is HSMI$-$ and $\widetilde{\mM_2} \subset \mM_1'$ is HSMI$+$. }
    \label{fig:twisted_intersection}
\end{figure}
\begin{proof}
    To prove this theorem, we need to show that if we define $\mM_{1} \equiv (\mM \cap \mN)$ and $\mM_2 \equiv \mM \cap (U^{\dagger} \mN' U)$, then the unitary $U$ and algebras $\mM$, $\mM_1$, and $\mM_2$ satisfy conditions $1$-$4$ of Theorem \ref{GenWiesbrockThm}; see Figure \ref{fig:twisted_intersection}. 
    By definition of the twisted modular intersection, we have $\mM_1 \subset\mM$ is HSMI$+$ and that the unitary $U$ commutes with $\Delta_{\mM}$. Therefore conditions 1 and 4 of Theorem \ref{GenWiesbrockThm} are satisfied. To show condition 2, we note that %condition $1$ implies
    %\begin{align}
     %   \Delta_{\mM}^{-it} \mM'_{1} \Delta_{\mM}^{it} \subset \mM'_{1} \quad \text{for all } t\le 0 \ .    
    %\end{align}
    %As a result, we get
    %\begin{align}
    %    \Delta_{\mM}^{-it} \mM_{2} \Delta_{\mM}^{it} \, =& \, \mM \cap \Delta_{\mM}^{-it} (U^{\dagger} \mM'_{1} U) \Delta_{\mM}^{it} \nn\\
    %    =& \, \mM \cap U^\dagger (\Delta_{\mM}^{-it} \mM'_{1} \Delta_{\mM}^{it}) U \nn\\
    %    \subset& \mM \cap U^\dagger \mM'_{1} U \, = \, \mM_{2} \quad\quad \text{for all } t\le 0 \ .
    %\end{align}
    \begin{eqnarray}
        \widetilde{\mM_{2}} = U \mM_2 U^\dagger = (U\mM U^\dagger) \cap \mN' = J_{\mN} (\mM\cap \mN) J_{\mN} = J_{\mN} \mM_{1} J_{\mN}  \label{eq-m2-m1-jn}
    \end{eqnarray}
    where we have used \eqref{eq-Jn-twisted-intersection-prop}. Moreover, \eqref{eq-Jn-twisted-intersection-prop} also implies that 
    \begin{eqnarray}
        J_{\mN} \Delta_{\mM}^{it} J_{\mN} = U \Delta_{\mM}^{-it} U^\dagger = \Delta_{\mM}^{-it} \ .
    \end{eqnarray}
    Using the above two results, we get
    \begin{eqnarray}
        \Delta_{\mM}^{-it} \widetilde{\mM_2} \Delta_{\mM}^{it} = J_{\mN} \left( \Delta_{\mM}^{it} \mM_1 \Delta_{\mM}^{-it} \right) J_{\mN} 
        \subset  J_{\mN} \mM_{1} J_{\mN} 
        = \widetilde{\mM_2}  \qquad \text{for all }  t \le 0\ .
    \end{eqnarray}
    Thus, $\mM_{2} \subset \mM$ is HSMI$-$ and condition $2$ is also satisfied.
    In order to see condition 3, we note that $\widetilde{\mM_{2}} \subset \mM'_{1}$ since
    \begin{eqnarray}
        \widetilde{\mM_{2}}  =  (U\mM U^\dagger) \cap \mN' \subset \mN'  \subset \mM'_{1} \ .
    \end{eqnarray}
    Now we need to show that $\widetilde{\mM_{2}} \subset \mM'_{1}$ is in fact HSMI$+$. %To do this, we first note that by definition of the twisted inclusion, 
    %\begin{align}
    %    J_{\mN} \mM_1 J_{\mN} \, = \, U \mM_2 U^{\dagger} = \widetilde{\mM_2} \, . \label{jn-m1-m2}
    %\end{align}
    From \eqref{eq-m2-m1-jn}, we get%Therefore, we get
    \begin{eqnarray}
        \Delta_{\mM'_{1}}^{-it} \widetilde{\mM_2} \Delta_{\mM'_{1}}^{it}  =  \Delta_{\mM_{1}}^{it} J_{\mN} \mM_1 J_{\mN} \Delta_{\mM_{1}}^{-it} \ .\label{eq-int-thm-cond-3-int-1}
    \end{eqnarray}
    Now since by definition of twisted modular intersection, $\mM_{1} \subset \mN$ is HSMI$-$, using Theorem \ref{thm:hsmi} there exists a one-parameter unitary $V(a)$ such that
    \begin{eqnarray}
        \mM_1 = V(-1) \mN V(1) \ .
    \end{eqnarray}
    With this, \eqref{eq-int-thm-cond-3-int-1} becomes
    \begin{eqnarray}
        \Delta_{\mM'_{1}}^{-it} \widetilde{\mM_2} \Delta_{\mM'_{1}}^{it}  &=&  V(-1)\Delta_{\mN}^{it} V(1) J_{\mN} V(-1) \mN V(1) J_{\mN} V(-1) \Delta_{\mN}^{-it} V(1) \nn\\
        &=& J_{\mN} V(-1) V(2-2e^{2\pi t}) \mN V(-2+2e^{2\pi t}) V(1) J_{\mN}  \label{eq-int-thm-cond-3-int-2}
    \end{eqnarray}
    where we have used \eqref{eq-hsmi-1} and \eqref{eq-hsmi-2} to commute $J_{\mN}$ and $\Delta_{\mN}$ through $V(a)$. Finally, we note that the half-sided translation theorem (see Appendix \ref{App:Hsmi}) dictates that 
    \begin{eqnarray}
        V(a) \mN V(-a) \subset \mN \qquad \text{for all } a\le 0 \ .
    \end{eqnarray}
    Combining this with \eqref{eq-int-thm-cond-3-int-2}, we find that for all $t \ge 0$, 
    \begin{eqnarray}
        \Delta_{\mM'_{1}}^{-it} \widetilde{\mM_2} \Delta_{\mM'_{1}}^{it}  &\subset& J_{\mN} V(-1) \mN V(1) J_{\mN} \nn\\
        &=&  J_{\mN}  \mM_1 J_{\mN} \nn\\
        &=&  \widetilde{\mM_2} \ .
    \end{eqnarray}
    Therefore, $\widetilde{\mM_{2}} \subset \mM'_1$ is HSMI$+$ and condition $3$ is also satisfied. Therefore, we have a representation of $\widetilde{PSL}(2,\mbR)$.

    Finally, we discuss the inversion map. From Theorem \ref{GenWiesbrockThm}, we have
    \begin{eqnarray}
        \mathcal{I}  &=&  J_{\mM}J_{\mM_{1}} \Delta_{\widetilde{\mM_2} }^{-i\frac{\log (2)}{2\pi}} \Delta_{\mM_1}^{-i\frac{\log (2)}{2\pi}}  \ .
    \end{eqnarray}
    Using \eqref{eq-m2-m1-jn}, we get
    \begin{eqnarray}
        \mathcal{I}  &=& J_{\mM}J_{\mM_{1}}J_{\mN} \Delta_{\mM_1}^{i\frac{\log (2)}{2\pi}} J_{\mN} \Delta_{\mM_1}^{-i\frac{\log (2)}{2\pi}}  \nn\\
        &=& J_{\mM}J_{\mM_{1}}J_{\mN} \Delta_{\mM_1}^{i\frac{\log (2)}{2\pi}} J_{\mN}J_{\mM_1} \Delta_{\mM_1}^{-i\frac{\log (2)}{2\pi}} J_{\mM_1} \nn\\
        &=&  J_{\mM} V(-2) \Delta_{\mM_1}^{i\frac{\log (2)}{2\pi}} V(2) \Delta_{\mM_1}^{-i\frac{\log (2)}{2\pi}} J_{\mM_1} \nn \\
        &=&  J_{\mM} V(2) J_{\mM_1} \nn \\
        &=& J_{\mM}J_{\mN}
    \end{eqnarray}
    where we have used \eqref{eq-hsmi-1} and \eqref{eq-hsmi-4}.     
\end{proof}

The above theorem establishes our claim that the twisted modular intersection property leads to a representation of $\widetilde{PSL}(2,\mbR)$ group. In the special case $U=1$ we have the modular intersection property, then $J_{\mN}$ and $J_{\mM}$ commute (since $J_{\mN}\mM J_{\mN} = \mM$), the inversion map in (\ref{inversionmodular}) squares to identity, i.e. $\mathcal{I}^2 = 1$, and hence, we obtain a representation of the $PSL(2,\mathbb{R})$. %On the other hand, inversion does not square to identity

Next, we discuss the second result of this section:
\begin{theorem}[Twisted inclusion]\label{twistModularInclusion}
If $(\mN \subset \mM , U)$ is a twisted inclusion and $\mN \subset \mM$ is HSMI$+$, then we have a representation of $\widetilde{PSL}(2,\mbR)$, and the inversion map is $J_{\mM}J_{\widehat{\mN}}$ where $\widehat{\mN} = \mN \vee \left( UJ_{\mM}\mN J_{\mM}U^\dagger\right) $.
\end{theorem}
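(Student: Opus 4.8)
The plan is to deduce Theorem~\ref{twistModularInclusion} from the twisted modular intersection theorem (Theorem~\ref{twistedModInt}) by showing that the triple $(\mM,\widehat{\mN},U)$, with $\widehat{\mN}=\mN\vee\bigl(UJ_{\mM}\mN J_{\mM}U^{\dagger}\bigr)$ as in the statement, satisfies the twisted modular intersection property; once that is done, Theorem~\ref{twistedModInt} returns the representation of $\widetilde{PSL}(2,\mbR)$ together with the inversion $J_{\mM}J_{\widehat{\mN}}$, which is exactly the assertion. This is the twisted analogue of obtaining Wiesbrock's standard half-sided inclusion theorem \cite{wiesbrock1993symmetries} from his modular intersection theorem via the ``doubled'' algebra $\mN\vee J_{\mM}\mN J_{\mM}$; because $U$ commutes with $\Delta_{\mM}$ and $\Delta_{\mN}$ but not with $J_{\mM}$, the doubled algebra picks up the $U$-conjugation carrying the ``twist''.

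Two preparations come first. First, $U$ commutes with the Borchers translation unitaries $V(a)=e^{iaP}$ of the HSMI$+$ inclusion $\mN\subset\mM$ (Theorem~\ref{thm:hsmi}), since $P=\tfrac{1}{2\pi}(\log\Delta_{\mN}-\log\Delta_{\mM})$ commutes with $U$ by hypothesis; combined with $\mN=V(1)\mM V(-1)$ and $J_{\mM}V(a)J_{\mM}=V(-a)$ this gives $J_{\mM}\mN J_{\mM}=V(-1)\mM'V(1)$ and hence $\widehat{\mN}=V(1)\mM V(-1)\vee V(-1)(U\mM'U^{\dagger})V(1)$. Second, $\ket{\Omega}$ is a common cyclic and separating vector for $\mM$ and $\widehat{\mN}$: it is cyclic for $\widehat{\mN}\supseteq\mN$, and it is separating because $\widehat{\mN}'\supseteq(U\mM U^{\dagger})\cap\widehat{\mN}'=U\bigl((U^{\dagger}\mN'U)\cap\mM\bigr)U^{\dagger}$ --- here one uses $J_{\mM}\mN'J_{\mM}\supseteq J_{\mM}\mM'J_{\mM}=\mM$ to collapse $(U^{\dagger}\widehat{\mN}'U)\cap\mM$ to $(U^{\dagger}\mN'U)\cap\mM$ --- and the latter is cyclic and separating for $\ket{\Omega}$ by the twisted-inclusion hypothesis. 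The same collapse identifies the relative commutant $(U^{\dagger}\widehat{\mN}'U)\cap\mM=(U^{\dagger}\mN'U)\cap\mM$ needed when Theorem~\ref{twistedModInt} is applied.

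What remains is the reflection condition $J_{\widehat{\mN}}\mM J_{\widehat{\mN}}=U\mM U^{\dagger}$ (condition~3 of the twisted modular intersection property, with $[U,\Delta_{\mM}^{it}]=0$ already assumed), the identity $\widehat{\mN}\cap\mM=\mN$ (so that $\widehat{\mN}\cap\mM\subset\mM$ is HSMI$+$ by hypothesis), and the property that $\mN=\widehat{\mN}\cap\mM\subset\widehat{\mN}$ is HSMI$-$. All three come down to computing the modular data $(\Delta_{\widehat{\mN}},J_{\widehat{\mN}})$ of the joined algebra $\widehat{\mN}=V(1)\mM V(-1)\vee V(-1)(U\mM'U^{\dagger})V(1)$. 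I would carry out this computation following Wiesbrock's proof of the standard-inclusion theorem essentially line by line: conjugating off one factor of $V(1)$ reduces $\widehat{\mN}$ to $\mM\vee V(-2)(U\mM'U^{\dagger})V(2)$, and the Borchers relations among $V(a)$, $\Delta_{\mM}^{it}$, $J_{\mM}$ together with $[U,V(a)]=[U,\Delta_{\mM}^{it}]=0$ identify this as a ``double-cone'' algebra whose modular flow is a second one-parameter subgroup of the emerging $\mathfrak{psl}(2,\mbR)$ and whose $J_{\widehat{\mN}}$ implements a dilation times the inversion; the HSMI$-$ property of $\mN\subset\widehat{\mN}$ then follows from the half-sided translation theorem for that second translation group, $\widehat{\mN}\cap\mM=\mN$ from a locality argument for the two translation nets, and $J_{\widehat{\mN}}\mM J_{\widehat{\mN}}=U\mM U^{\dagger}$ from chasing $J_{\mM}$-conjugation through the twist $U$. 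The main obstacle is precisely this determination of $J_{\widehat{\mN}}$ for the doubled algebra --- the twisted version of the core lemma of \cite{wiesbrock1993symmetries} --- along with careful bookkeeping of the translation convention and of where $U$ does and does not commute; the twisted-inclusion hypotheses on $U$ and on the cyclicity of $(U^{\dagger}\mN'U)\cap\mM$ are exactly the inputs that make these checks close. With all conditions verified, Theorem~\ref{twistedModInt} applied to $(\mM,\widehat{\mN},U)$ finishes the proof.
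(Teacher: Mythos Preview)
Your strategy---reducing the twisted inclusion theorem to the twisted modular intersection theorem via the doubled algebra $\widehat{\mN}$---is a genuinely different route from the paper's. The paper bypasses Theorem~\ref{twistedModInt} entirely and instead verifies directly that $\mM_1=\mN$ and $\mM_2=(U^\dagger\mN'U)\cap\mM$ satisfy the four conditions of Theorem~\ref{GenWiesbrockThm}: condition~1 is the assumed HSMI$+$, condition~4 is the assumed commutation of $U$ with $\Delta_\mM$, and conditions~2 and~3 follow from short modular-flow computations using only the HSMI$+$ hypothesis and Corollary~\ref{kwing-corollary}. Only \emph{after} the $\widetilde{PSL}(2,\mbR)$ representation is obtained does the paper massage the inversion formula of Theorem~\ref{GenWiesbrockThm} into the form $\mathcal{I}=J_\mM J_{\widehat{\mN}}$; the algebra $\widehat{\mN}$ enters only at that last cosmetic step, and the paper never needs its modular data.

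The gap in your route is exactly the step you flag as the obstacle: determining $J_{\widehat{\mN}}$ and $\Delta_{\widehat{\mN}}^{it}$ \emph{before} the group structure is in hand. Your sketch appeals to ``the emerging $\mathfrak{psl}(2,\mbR)$'' to identify the modular flow of the doubled algebra, but that is the very thing you are trying to establish, so the argument as written is circular. To make your approach self-contained you would need an independent determination of $\Delta_{\widehat{\mN}}^{it}$ (for instance by guessing the flow and verifying KMS via Lemma~\ref{stratila}), an independent proof that $\widehat{\mN}\cap\mM=\mN$, and an independent verification of $J_{\widehat{\mN}}\mM J_{\widehat{\mN}}=U\mM U^\dagger$---none of which you have supplied. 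The paper's direct verification of the hypotheses of Theorem~\ref{GenWiesbrockThm} uses only the modular operators of $\mM$, $\mN$, and $(U^\dagger\mN'U)\cap\mM$, all of which are handed to you by the twisted-inclusion data, and so avoids this difficulty completely.
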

\begin{figure}
    \centering
    \includegraphics[width=\linewidth]{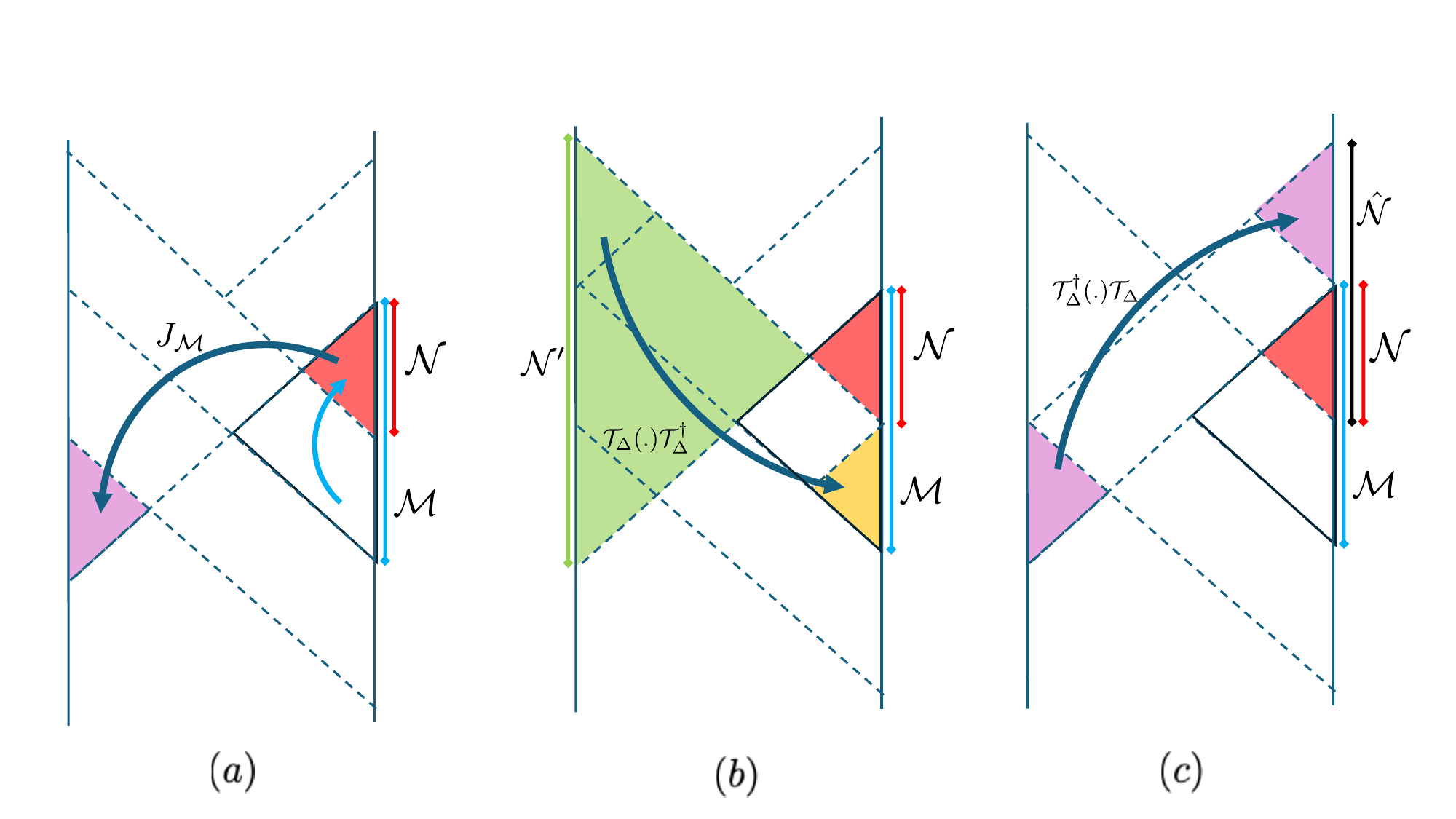}
    \caption{\small Twisted inclusion for 0+1-dimensional conformal GFF dual to AdS$_2$ bulk with unitary $U=\mT_\Delta^\dagger$ (see Theorems \ref{GenWiesbrockThm} and \ref{twistModularInclusion}). (a) $\mN\subset \mM$ is HSMI$+$. (b) The green region is $\mN'$ and the orange region is $U^\dagger\mN' U \cap \mM=\mT_\Delta \mN' \mT_\Delta^\dagger \cap \mM$ which is cyclic and separating. (c) The purple region on the right side is $UJ_{\mM}\mN J_{\mM}U^\dagger=\mT_\Delta^\dagger J_\mM \mN J_\mM \mT_\Delta$.}
    \label{fig:twisted_inclusion}
\end{figure}
\begin{proof}
Let us take $\mM_1 = \mN$ and $\mM_2 =  (U^{\dagger}\mN' U) \cap \M$ (see Figure \ref{fig:twisted_inclusion}). To prove this theorem, we need to show the unitary $U$ and algebras $\mM$, $\mM_1$, $\mM_2$ satisfy conditions $1$-$4$ of Theorem \ref{GenWiesbrockThm}. 
Note that condition $1$ of Theorem \ref{GenWiesbrockThm} is true by assumption whereas condition $4$ is satisfied due to the definition of twisted inclusion. To verify condition $2$, note that
\begin{eqnarray}
    \Delta_{\mM}^{-it} \mM_2 \Delta_{\mM}^{it}  &=&  \Delta_{\mM}^{-it} \left(\mM \cap (U^\dagger \mN' U) \right) \Delta_{\mM}^{it} \nn\\
    &=& \mM \cap \left(\Delta_{\mM}^{-it} (U^\dagger \mN' U) \Delta_{\mM}^{it}\right) \nn\\
    &=& \mM \cap \left( U^\dagger ( \Delta_{\mM}^{-it} \mN' \Delta_{\mM}^{it}) U\right) 
\end{eqnarray}
where we have used the fact that $\Delta_\mM^{-it}$ commutes with $U$. Since $\mN \subset \mM$ is assumed to be HSMI$+$, we have that $\Delta_{\mM}^{-it} \mN' \Delta_{\mM}^{it} \subset \mN'$ for all $t \le 0$. As a result, we get
\begin{eqnarray}
    \Delta_{\mM}^{-it} \mM_2 \Delta_{\mM}^{it} \subset \mM_2 \qquad \text{for all } t\le 0 \ .
\end{eqnarray}
Therefore, $\mM_2 \subset \mM$ is HSMI$-$ and so condition $2$ is satisfied. Next, note that $\widetilde{\mM_2} = U\mM_2 U^\dagger = \mM'_1 \cap (U\mM U^\dagger) \subset \mM'_1 $. To verify condition $3$, we note that 
\begin{eqnarray}
    \Delta_{\mM'_1}^{-it} \widetilde{\mM_2} \Delta_{\mM'_1}^{it} &=&  \Delta_{\mM'_1}^{-it} \left(\mM'_1 \cap (U \mM U^\dagger) \right) \Delta_{\mM'_1}^{it} \nn\\
    &=& \mM'_1 \cap \left(\Delta_{\mM'_1}^{-it} (U \mM U^\dagger) \Delta_{\mM'_1}^{it}\right) \nn\\
    &=& \mM'_1 \cap \left( U ( \Delta_{\mM'_1}^{-it} \mM \Delta_{\mM'_1}^{it}) U^\dagger\right)
\end{eqnarray}
where we have used the fact that $U$ commutes with $\Delta_{\mN}^{-it}$ and hence with $\Delta_{\mM_1}^{-it}$. Since $\mM_1 \subset \mM$ is HSMI$+$, Corollary \ref{kwing-corollary} implies that $\mM' \subset \mM'_1$ is HSMI$-$, which means $\Delta_{\mM'_1}^{it} \mM \Delta_{\mM'_1}^{-it} \subset \mM$ for all $t\ge 0$. Therefore
\begin{eqnarray}
    \Delta_{\mM'_1}^{it} \widetilde{\mM_2} \Delta_{\mM'_1}^{-it} \subset \widetilde{\mM_2} 
\end{eqnarray}
for all $t\ge 0$ and hence condition $3$ is satisfied. 
Therefore, Theorem \ref{GenWiesbrockThm} implies the existence of a unitary representation of $\widetilde{PSL}(2,\mbR)$ group and the inversion acts as
\begin{eqnarray}
    \mathcal{I}  =  J_{\mM} \Delta_{\mM_1}^{-i\frac{\log (2)}{2\pi}} J_{\widetilde{\mM_{2}}}  \Delta_{\mM_1}^{i\frac{\log (2)}{2\pi}} \ .
\end{eqnarray}
To simplify this expression, we note that 
\begin{eqnarray}
    \Delta_{\mM_1}^{-i\frac{\log (2)}{2\pi}} \widetilde{\mM_{2}}  \Delta_{\mM_1}^{i\frac{\log (2)}{2\pi}} &=& \Delta_{\mM_1}^{-i\frac{\log (2)}{2\pi}} \left(\mM'_{1} \cap (U\mM U^{\dagger})\right)  \Delta_{\mM_1}^{i\frac{\log (2)}{2\pi}} \nn\\
    &=&  \mM'_{1} \cap  \left(U (\Delta_{\mM_1}^{-i\frac{\log (2)}{2\pi}}\mM \Delta_{\mM_1}^{i\frac{\log (2)}{2\pi}}) U^\dagger\right) \label{eq-inv-twst-inc-int-1}
\end{eqnarray}
where we have used the fact that $U$ commutes with $\Delta_{\mM_1}$. Since $\mM_1 \subset \mM$ is HSMI$+$, we get
\begin{eqnarray}
    \Delta_{\mM_1}^{-i\frac{\log (2)}{2\pi}}\mM \Delta_{\mM_1}^{i\frac{\log (2)}{2\pi}} &=&  \Delta_{\mM_1}^{-i\frac{\log (2)}{2\pi}} V_{1}(-1)\mM_{1}V_{1}(1) \Delta_{\mM_1}^{i\frac{\log (2)}{2\pi}} \nn\\
    &=& V_{1}(-2)\mM_{1}V_{1}(2) \nn\\
    &=& J_{\mM} \mM'_{1} J_{\mM} 
\end{eqnarray}
where $V_{1}(a)$ is the same unitary defined in the proof of Theorem \ref{GenWiesbrockThm} and where we have used \eqref{eq-hsmi-3}, \eqref{eq-hsmi-1} and \eqref{eq-hsmi-4} in the first, second and third equalities, respectively. With this result, \eqref{eq-inv-twst-inc-int-1} becomes
\begin{eqnarray}
    \Delta_{\mM_1}^{-i\frac{\log (2)}{2\pi}} \widetilde{\mM_{2}}  \Delta_{\mM_1}^{i\frac{\log (2)}{2\pi}} &=&  \mM'_{1} \cap \left( U J_{\mM} \mM'_{1} J_{\mM} U^\dagger \right) \nn\\
    &=&  \left(\mN \vee ( UJ_{\mM}\mN J_{\mM}U^\dagger)\right)' 
\end{eqnarray}
where we have substituted $\mN = \mM_1$. By defining 
\begin{eqnarray}
    \widehat{\mN}  = \mN \vee \left( UJ_{\mM}\mN J_{\mM}U^\dagger\right) 
\end{eqnarray}
we get
\begin{eqnarray}
    \Delta_{\mM_1}^{-i\frac{\log (2)}{2\pi}} J_{\widetilde{\mM_{2}}}  \Delta_{\mM_1}^{i\frac{\log (2)}{2\pi}} =  J_{\widehat{\mN}} 
\end{eqnarray}
and hence, the inversion map becomes
\begin{eqnarray}
    \mathcal{I} =  J_{\mM}J_{\widehat{\mN}} \ .
\end{eqnarray}
\end{proof}
This ensures that the twisted inclusion leads to a representation of $\widetilde{PSL}(2,\mbR)$. If we take $U=1$ in the above theorem, then $\widehat{\mN}$ becomes invariant under $J_{\mM}$ i.e., $J_{\mM} \widehat{\mN} J_{\mM} = \widehat{\mN}$. This implies that $J_{\mM}$ and $J_{\widehat{\mN}}$ commute and the inversion squares to identity. Therefore, we recover Wiesbrock's result that the standard inclusion implies a representation of $PSL(2,\mathbb{R})$.

\subsection{Twisted modular intersection/inclusion implies locality in emergent AdS$_2$}\label{emergentlocalQFT}

Consider abstract triples $(\mathcal{M},\mathcal{N},U)$ that satisfy the conditions of twisted modular intersection and/or twister inclusion theorems. Having established such triples imply a unitary representation of  $\widetilde{PSL}(2,\mbR)$ group, in this subsection, we explicitly construct a {\it local} and covariant net of algebras in an emergent global AdS$_{2}$ spacetime.

% Having established that twisted modular intersection or twisted inclusion leads to a unitary representation of $\widetilde{PSL}(2,\mbR)$ group, we now argue that these properties also define a \textit{local} and 

Our result is the generalization of Wiesbrock's earlier results where he showed that the modular intersection property or the standard half-sided modular inclusion leads to a conformal net on a sphere \cite{wiesbrock1993conformal,wiesbrock1993symmetries}. For simplicity, we only focus on the twisted HSMI$+$ $(\mN \subset \mM, U)$ and present the construction of such a net. 

A global AdS$_2$ spacetime is parameterized by coordinates $\tau \in \mbR$ and $-\pi/2 \le \rho \le \pi/2$. The right and left boundaries are at $\rho = \pi/2$ and $\rho = -\pi/2$, respectively. Let us first concentrate on the region $-\pi/2 \le \tau \le \pi/2$. For this region, it is convenient to use the Kruskal coordinates:
\begin{eqnarray}
    u = \tan \frac{\tau+\rho}{2} \quad\quad\quad v = \tan \frac{\tau-\rho}{2} \ .
\end{eqnarray}
Note that $-\pi/2 \le \rho \le \pi/2$ implies that $uv> -1$ and the boundaries are at $uv = -1$. Now corresponding to each point $(u,v)$, we associate a right and left wedge. In particular, a right (left) wedge associated with a point $(u,v)$ is the set of all the points that are spacelike separated to $(u,v)$ and are towards the right (left) boundary. That is,% Let us define right and left wedges corresponding to each point $(u,v)$. In particular, we define the right wedge $W_{R}(u,v)$ as
\begin{eqnarray}
    W_{R}(u_1,v_1) =& \, \{ (u,v) | u > u_1 , v < v_1 \} \ \nn\\
    W_{L}(u_1,v_1) =& \, \{ (u,v) | u < u_1 , v > v_1 \} \ .
\end{eqnarray}
For example, $W_{R}(0,0)$ is the right Rindler wedge whereas $W_{L}(0,0)$ is the left Rindler wedge.

Our goal is to associate to each of these wedges $W_{R/L}(u,v)$ an algebra $\mA_{R/L}(u,v)$ such that if two wedges are related by a bulk $\widetilde{PSL}(2,\mbR)$ transformation, then the associated algebras are also related by a unitary representation of the $\widetilde{PSL}(2,\mbR)$ group. %We start by associating algebra $\mM$ with $W_{R}(0,0)$ and $\mM'$ with $W_{L}(0,0)$. 
Recall that the bulk isometries are generated by (see Appendix \ref{app:ads_geometry})
\begin{eqnarray}
    \hat{\eta}_{0} = 2\pi( u \partial_{u} -  v \partial_{v})\ , \qquad
    \hat{\eta}_{1} = \partial_{u} + v^2 \partial_{v}\ ,  \qquad
    \hat{\eta}_{2} = u^2 \partial_{u} + \partial_{v} \ .
\end{eqnarray}
Under the flow generated by these vector fields, a point $(u,v)$ flows to
\begin{eqnarray}
    \eta_{0}(s)  :  (u,v) \to (u_s,v_s) &=& (e^{2\pi s} u , e^{-2\pi s} v)  \nonumber\\
    \eta_{1}(s) :  (u,v) \to (u_s,v_s) &=& \left(u+s , \frac{v}{1-sv}\right)  \nonumber\\
    \eta_{2}(s)  : (u,v) \to (u_s,v_s) &=& \left(\frac{u}{1-su} , v+s\right) \ .
\end{eqnarray}
Therefore, the flow generated by $\eta_{0}$ preserves the wedge $W_{R}(0,0)$, but the flow generated by $\eta_{1}$ ($\eta_2$) maps $W_{R}(0,0)$ to a smaller wedge when $s>0$ ($s<0$). 

This motivates that we associate algebra $\mM$ with $W_{R}(0,0)$ and we relate the transformation generated by $\eta_{0}$, $\eta_{1}$ and $\eta_2$ with unitaries $V_{0}(a) = e^{ia G_0}$, $V_{1}(a) = e^{ia G_1}$ and $V_2(a) = e^{ia G_0}$, where (see Theorems \ref{GenWiesbrockThm} and \ref{twistModularInclusion})
\begin{eqnarray}
    G_{0} &=& \frac{1}{2\pi} \log\Delta_{\mM}  \nonumber\\
    G_{1} &=&  \frac{1}{2\pi} \left(\log\Delta_{\mN}-\log\Delta_{\mM}\right) \nonumber\\
    G_{2} &=&  \frac{1}{2\pi} \left(\log\Delta_{(U^\dagger \mN' U)\cap\mM}-\log\Delta_{\mM}\right) \ .
\end{eqnarray}
Now consider a general point $(u,v)$. Note that the wedge $W_{R}(0,0)$ is related to the wedge $W_{R}(u,v)$ by the following transformation
\begin{eqnarray}
    \eta_{1}(u) \cdot \eta_{2}\left(\frac{v}{1+uv}\right) : W_{R}(0,0) \to W_{R}(u,v) \ .\label{eq-wedge-R-map1}
\end{eqnarray}
This motivates the definition
\begin{eqnarray}
    \mA_{R}(u,v) &=&  V_{1}(u) V_{2}\left(\frac{v}{1+uv}\right) \mA_{R}(0,0) \, V_{2}\left(-\frac{v}{1+uv}\right) V_{1}(-u) \nonumber\\ 
    &=& V_{1}(u) V_{2}\left(\frac{v}{1+uv}\right) \mM \, V_{2}\left(-\frac{v}{1+uv}\right) V_{1}(-u) \ .\label{eq-def-Ar-net-1}
\end{eqnarray}
As an example, note that $\mA_{R}(1,0) = \mN$ and $\mA_{R}(0,-1) = (U^\dagger \mN' U)\cap\mM.$ 

Note that the map in \eqref{eq-wedge-R-map1} is not unique at all. In fact, the following map also maps $W_R(0,0)$ to $W_R(u,v)$:
\begin{eqnarray}
    \eta_{2}(v) \cdot \eta_{1}\left(\frac{u}{1+uv}\right) :  W_{R}(0,0) \to W_{R}(u,v) \ .\label{eq-wedge-R-map2}
\end{eqnarray}
This leads to another definition of $\mA_{R}(u,v)$
\begin{eqnarray}
    \mA_{R}(u,v)  = V_{2}(v) V_{1}\left(\frac{u}{1+uv}\right) \mM \, V_{1}\left(-\frac{u}{1+uv}\right) V_{2}(-v) \ .\label{eq-def-Ar-net-2}
\end{eqnarray}
However, the above two definitions of $\mA_{R}(u,v)$ are equivalent since 
\begin{eqnarray}
    V_{2}(v)V_{1}\left(\frac{u}{1+uv}\right) = V_{1}(u)V_{2}\left(\frac{v}{1+uv}\right) V_{0}(-2\log(1+uv)) 
\end{eqnarray}
and $V_{0}(a) \mM V_{0}(-a) = \Delta_{\mM}^{ia/2\pi} \mM \Delta_{\mM}^{-ia/2\pi} = \mM $. In the following, we will use both \eqref{eq-def-Ar-net-1} and \eqref{eq-def-Ar-net-2} interchangeably. 

Next, we show that the map $W_{R}(u,v) \to \mA_{R}(u,v)$ satisfies isotony and hence, is a net.  %is, by construction, $\widetilde{PSL}(2,\mbR)$ covariant. Next we show that this net satisfy isotony. 
In particular, we show that if $W_{R}(u_1,v_1) \subset W_{R}(u_2,v_2)$, then $\mA_{R}(u_1,v_1) \subset \mA_{R}(u_2,v_2)$. In other words, if $u_1 > u_2$ and $v_1 < v_2$, then $\mA_{R}(u_1,v_1) \subset \mA_{R}(u_2,v_2)$. From \eqref{eq-def-Ar-net-1}, we get that if $v_1 < v_2$, then 
\begin{eqnarray}
    \mA_{R}(u,v_1) &=&  V_{1}(u) V_{2}\left(\frac{v_1}{1+uv_1}\right) \mM \, V_{2}\left(-\frac{v_1}{1+uv_1}\right) V_{1}(-u)  \nonumber\\
    &\subset&  V_{1}(u) V_{2}\left(\frac{v_2}{1+uv_2}\right) \mM \, V_{2}\left(-\frac{v_2}{1+uv_2}\right) V_{1}(-u) \nonumber\\
    &=& \mA_{R}(u,v_2)  \label{eq-ads-net-isotony-1}
\end{eqnarray}
where we have used the half-sided translation theorem which implies
\begin{eqnarray}
    V_{2}(v_1) \mM \, V_{2}(-v_1) \subset V_{2}(v_2) \mM \, V_{2}(-v_2) \, \qquad \text{for } v_1 < v_2 
\end{eqnarray}
since $(U^\dagger \mN' U)\cap\mM \subset\mM$ is HSMI$-$ (see Theorem \ref{twistModularInclusion}). Similarly, from \eqref{eq-def-Ar-net-2}, we find that if $u_1>u_2$, then 
\begin{eqnarray}
    \mA_{R}(u_1,v) &=&  V_{2}(v) V_{1}\left(\frac{u_1}{1+u_1 v}\right) \mM \, V_{1}\left(-\frac{u_1}{1+u_1 v}\right) V_{2}(-v) \nonumber\\
    &\subset&  V_{2}(v) V_{1}\left(\frac{u_2}{1+u_2 v}\right) \mM \, V_{1}\left(-\frac{u_2}{1+u_2 v}\right) V_{2}(-v) \nonumber\\ 
    &=&  \mA_{R}(u_2,v)  \label{eq-ads-net-isotony-2}
\end{eqnarray}
where we have again used the half-sided translation theorem
\begin{eqnarray}
    V_{1}(u_1) \mM \, V_{1}(-u_1) \subset V_{1}(u_2) \mM \, V_{1}(-u_2) \qquad \text{for } u_1 > u_2 
\end{eqnarray}
since $\mN \subset \mM$ is HSMI$-$ (see Theorem \ref{twistModularInclusion}). By combining \eqref{eq-ads-net-isotony-1} and \eqref{eq-ads-net-isotony-2}, we get that if $u_1>u_2$ and $v_1<v_2$, then
\begin{eqnarray}
    \mA_{R}(u_1,v_1) \subset \mA_{R}(u_1,v_2) \subset \mA_{R}(u_2,v_2) 
\end{eqnarray}
and hence, the map $W_{R}(u,v) \to \mA_{R}(u,v)$ satisfies isotony.

Now we associate algebras to the left wedges, $W_{L}(u,v)$. First, note that the left wedge $W_{L}(0,0)$ is related to the right wedge $W_{R}(0,0)$ by the reflection and time-reversal (RT) symmetry: $(u,v) \to (-u,-v)$. Therefore, we relate the RT transformation by the anti-unitary $J_{\mM}$. This implies that
\begin{eqnarray}
    \mA_{L}(0,0) = J_{\mM}\mA_{R}(0,0) J_{\mM} = J_{\mM}\mM J_{\mM} = \mM' \ .
\end{eqnarray}
Moreover, analogous to \eqref{eq-def-Ar-net-1}, we define
\begin{eqnarray}
    \mA_{L}(u,v) &=&  V_{1}(u) V_{2}\left(\frac{v}{1+uv}\right) \mA_{L}(0,0) \, V_{2}\left(-\frac{v}{1+uv}\right) V_{1}(-u)  \nonumber\\ 
    &=& \, V_{1}(u) V_{2}\left(\frac{v}{1+uv}\right) \mM' \, V_{2}\left(-\frac{v}{1+uv}\right) V_{1}(-u) \ .
\end{eqnarray}
Therefore, the net satisfies the `wedge duality'
\begin{eqnarray}
    (\mA_{R}(u,v))' = \mA_{L}(u,v)  \ .
\end{eqnarray}

We now extend the net for the wedge regions to a net for the diamond-shaped regions. A general diamond-shaped region is parameterized by two points $(u_1,v_1)$ and $(v_1,v_2)$, and is defined as
%Until now, we have only constructed the net for the wedge regions. Now we extend the net to the diamond-shaed regions
\begin{eqnarray}
    \Diamond(u_1,v_1|u_2,v_2) = W_{R}(u_1,v_1) \cap W_{L}(u_2,v_2) \ .
\end{eqnarray}
We associate with this region the algebra
\begin{eqnarray}
    \mA_{\Diamond}(u_1,v_1|u_2,v_2) = \mA_{R}(u_1,v_1) \cap \mA_{L}(u_2,v_2) \ .
\end{eqnarray}
This net satisfies Haag's duality since
\begin{eqnarray}
    \left(\mA_{\Diamond}(u_1,v_1|u_2,v_2)\right)' &=&   \left(\mA_{R}(u_1,v_1) \cap \mA_{L}(u_2,v_2) \right)'  \nonumber\\
    &=&  \left(\mA_{R}(u_1,v_1)\right)' \vee \left(\mA_{L}(u_2,v_2)\right)' \nonumber\\
    &=&  \mA_{L}(u_1,v_1) \vee \mA_{R}(u_2,v_2) \ .
\end{eqnarray}
As a result, the net satisfies locality. 

Until now, we have only restricted to the region $-\pi/2 \le \tau \le \pi/2$ of the global AdS$_{2}$ spacetime. We extend the net beyond this region by using the unitary $U$. In particular, if two wedge/diamond regions are related to each other by the antipodal transformation $(\tau,\rho) \to (\tau+\pi,-\rho)$, then the associated algebras are related by the unitary $U$.

This finishes our construction of a local and $\widetilde{PSL}(2,\mbR)$ covariant net of algebras in a global AdS$_{2}$ spacetime only from the data of twisted inclusion $(\mN \subset \mM, U)$.

\section{Ergodic hierarchy in GFF and Stringy spacetimes}\label{Sec:Stringy spacetime}

We saw in Section \ref{sec:gff} that a theory of GFF in a quasi-free state is fixed with the knowledge of the spectral density $\rho(\omega)$ and inverse temperatures $\beta(\omega)$ (or equivalently $\gamma(\omega)$). For simplicity, we will focus on the case where $\beta(\omega)$ is independent of frequency i.e., we are in a thermal state of inverse temperature $\beta$. Then, the KMS relation in (\ref{KMS}) ties the spectral density function to the two-point function:
\begin{eqnarray}
f,g\in L^2(\rho):\quad G(f,g)=\braket{\varphi(f)\varphi(g)}=\braket{f|g}_\rho= \sqrt{2\pi} \int_0^\infty d\omega \rho(\omega)b_+^2(\omega)f(-\omega)g(\omega)\ .
\end{eqnarray}
Since the thermal odd-point functions vanish in any GFF theory, the two-point function is the same as its connected piece. The long-time behavior of the thermal two-point function depends crucially on the spectral density function. We can define the following ergodic classes based on how well at late times the system forgets its initial perturbations based on the spectral density:
 
\begin{enumerate}

\item {\bf Unforgetful (Almost periodic):} A local perturbation at time $t=0$ will never be truly forgotten because at late enough times, there are always local operators that are sensitive to initial perturbations. This occurs when the spectral density is supported on a countable set of points
\begin{eqnarray}\label{diracdelta}
\rho(\omega)=\sum_{n\in\mathbb{Z}} a_n \delta(\omega-\omega_n)\ .
\end{eqnarray}
In real time, the thermal two-point function $G(f,g)$ becomes a ($B^2$-Besicovitch) almost periodic function of time, almost repeats itself and does not have any late time limit \cite{Furuya:2023fei}. 

For a theory of $0+1$-dimensional GFF that corresponds to single trace operators of conformal field theory on a spatial sphere, we expect $\omega_n=\omega_0+2n$ which implies that the von Neumann algebra of an interval of length $2\pi$ i.e., $\mA_{(-\pi,\pi)}$, is isomorphic to $\mA_{(-\infty,\infty)}$ \cite{Gesteau:2024rpt}. This means that this spectral density is exponential type $\pi$. 
% An initial perturbation at $\tau=0$ (global time) becomes more and more difficult to reconstruct as time goes by, until we reach operators at $\tau=\pi$ global time. These operators start sensing the initial perturbation.

\item {\bf Strong-mixing (clustering in time):} 
It was observed in \cite{Furuya:2023fei} that when the spectral density is Lebesgue measurable the thermal two-point function clusters in time i.e., it is strongly mixing, and the resulting von Neumann algebra is type III$_1$. Since this can occur in GFF with non-holographic spectral densities, we say we have {\it Stringy information loss}.

Note that there is no relation between infinite-exponential type and strong mixing. We saw that a compactly supported spectral density that is measurable shows strong mixing but has vanishing causal depth parameter $T_{dep}=0$. Physical examples of such systems include the IOP model \cite{iizuka2010matrix,Gesteau:2024rpt}. We also saw that the primon gas has an infinite causal depth parameter, $T_{dep}=\infty$, but has a discrete spectrum which implies almost periodicity.

\item {\bf Exponentially-mixing:} From quasi-normal mode physics, it is expected that in a large class of systems, the decay of the GFF thermal correlator is exponential. In this case, we say we have {\it Stringy quasi-normal modes}.
The necessary and sufficient condition for the exponential decay was identified in Lemma 5 of \cite{Furuya:2023fei}.
Define the two-point function $f_{gh}(t)=\braket{\varphi(g)\varphi(h)}$ of the GFF theory. Then,
\begin{lemma}\label{exponential}
There exists a small $\epsilon>0$ such that for all $z<\epsilon$ we have
 \begin{eqnarray}\label{expfalloff}
     \int dt\, e^{2z |t|}|f_{gh}(t)|^2 <\infty
 \end{eqnarray}
if and only if $f_{gh}(\omega)$ is analytic in the strip $|\Im(\omega)|<\epsilon$ and inside the strip we have: 
\begin{enumerate}
    \item $f(\omega)\in L^2(\Re(\omega))$ 
    \item The supremum over these $L^2$-norms over the strip remains bounded.
\end{enumerate}
  \end{lemma}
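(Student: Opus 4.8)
The plan is to prove Lemma \ref{exponential} as a Paley--Wiener-type statement, handling the two implications separately and reducing the symmetric weight $e^{2z|t|}$ to one-sided exponential weights via the decomposition $f_{gh}(t) = f^{+}(t) + f^{-}(t)$ with $f^{\pm}(t) \equiv \Theta(\pm t)\, f_{gh}(t)$. With this splitting, the hypothesis $\int dt\, e^{2z|t|}|f_{gh}(t)|^2 < \infty$ for all $|z|<\epsilon$ is equivalent to $e^{zt} f^{+} \in L^2(\mbR^{+})$ and $e^{-zt} f^{-} \in L^2(\mbR^{-})$ for every $|z|<\epsilon$.

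First I would establish the forward direction. Under that hypothesis, the Fourier--Laplace transform $\hat f^{+}(\omega) = \tfrac{1}{\sqrt{2\pi}} \int_{0}^{\infty} f^{+}(t)\, e^{i\omega t}\, dt$ converges absolutely for $\Im\omega > -\epsilon$ (by Cauchy--Schwarz against the $L^2$ weight) and is holomorphic there by Morera's theorem, and likewise $\hat f^{-}(\omega)$ is holomorphic for $\Im\omega < \epsilon$; their sum $f_{gh}(\omega) = \hat f^{+}(\omega) + \hat f^{-}(\omega)$ is then holomorphic on the strip $|\Im\omega|<\epsilon$ and restricts on the real axis to the given Fourier transform. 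The $L^2$ bound on the line $\Im\omega = z$ follows from Plancherel applied to $f^{+}e^{-zt}$ and $f^{-}e^{zt}$, namely $\|f_{gh}(\cdot + iz)\|_{L^2(\mbR)} \le \|f^{+}e^{-zt}\|_{L^2(\mbR^{+})} + \|f^{-}e^{zt}\|_{L^2(\mbR^{-})}$, which is finite and bounded uniformly for $|z| \le \epsilon - \delta$. This establishes conditions (a) and (b).

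For the converse, assuming holomorphy on the strip together with (a) and (b), the strategy is a contour shift. One evaluates the inverse transform $f_{gh}(t) = \tfrac{1}{\sqrt{2\pi}}\int_{\mbR} f_{gh}(\lambda)\, e^{-i\lambda t}\, d\lambda$ along the translated line $\Im\omega = z$ for $|z| < \epsilon$ to obtain $e^{-zt} f_{gh}(t) = \tfrac{1}{\sqrt{2\pi}}\int_{\mbR} f_{gh}(\lambda + iz)\, e^{-i\lambda t}\, d\lambda$, so that by Plancherel $\|e^{-zt} f_{gh}\|_{L^2(\mbR)} = \|f_{gh}(\cdot + iz)\|_{L^2(\mbR)} \le M$, with $M$ the bound from (b), uniformly in $z$. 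Since $e^{2z|t|} \le e^{2zt} + e^{-2zt}$, this gives $\int dt\, e^{2z|t|}|f_{gh}(t)|^2 \le \|e^{zt} f_{gh}\|_{L^2}^2 + \|e^{-zt} f_{gh}\|_{L^2}^2 \le 2M^2 < \infty$ for all $|z|<\epsilon$, which is the assertion.

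The main obstacle is the justification of the contour shift in the converse: one must show that the rectangle contour with vertical sides at $\Re\omega = \pm R$ has vanishing contribution as $R \to \infty$, which requires the pointwise decay $f_{gh}(\omega) \to 0$ as $|\Re\omega| \to \infty$, locally uniformly in $\Im\omega$ inside the strip. This does not follow from the $L^2$ bound alone, but it does follow by combining it with holomorphy through the sub-mean-value inequality $|f_{gh}(\omega)|^2 \le \tfrac{1}{\pi r^2} \int_{|w - \omega| < r} |f_{gh}(w)|^2\, d^2 w$ on a disk kept inside the strip, together with (b); one then packages this with dominated convergence (a Phragm\'en--Lindel\"of-type estimate) to close the contour. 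The remaining steps are routine bookkeeping with Plancherel and Fubini. I would also note that the holomorphic extension is unique, so there is no ambiguity in ``the'' function $f_{gh}(\omega)$ inside the strip: its boundary values on $\mbR$ are by construction the Fourier transform of $f_{gh}(t) \in L^2$.
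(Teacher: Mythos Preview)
Your argument is the standard Paley--Wiener-for-a-strip proof and is essentially correct; in particular the use of the sub-mean-value inequality together with the uniform $L^2$ bound (b) to extract pointwise decay and justify the contour shift is the right idea (note that (b) plus Fubini gives $f_{gh}\in L^2$ of the two-dimensional strip, which is what makes the disk-averages tend to zero). The only thing to flag is that the paper does not actually supply its own proof of this lemma: the statement is lifted verbatim from Lemma~5 of \cite{Furuya:2023fei} and quoted without argument, so there is nothing in the present paper to compare your route against beyond the citation.
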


\item {\bf Quantum K-system (Stringy horizon):} A stronger ergodic condition is that we have {\it future algebras} in the sense of Section \ref{time-intervalvNalgebras}.
It was shown in \cite{Furuya:2023fei,leutheusser2023emergent,ouseph2024local,gesteau2024emergent,Gesteau:2024rpt} the existence of a future algebra implies half-sided modular inclusion and the existence of a {\it Stringy horizon}.
Note that the existence of future algebras resembles a generalization of the causal depth parameter to half-infinite time intervals \cite{Gesteau:2024rpt}.

\item {\bf Conformal Anosov system:} The subalgebras of the quantum K-system transform under a representation of the universal cover of $\mathfrak{psl}(2,\mathbb{R})$. Such systems are locally hyperbolic. These are quantum analogs of the widely studied example of classical Anosov flows on the tangent bundle of a Riemann surface of negative curvature.

\end{enumerate}

In this work, we studied the time interval algebras of $0+1$-dimensional conformal GFF with spectral density $\rho(\omega)=\omega^{2\Delta-1}$, and constructed the modular flow of time interval algebras explicitly. In this case, viewed as a map in Poincar\'e coordinates, we have 
\begin{itemize}
    \item The map $I\to \mathcal{A}_I$ is injective, and has infinite causal depth $T_{dep}=\infty$.
    \item We have strong mixing because the connected two-point function decays as a function of modular time.
    \item We have a quantum K-system. There is a horizon: the Poincar\'e horizon.
    \item We have a conformal Anosov system because we have the action of the Lie algebra $\mathfrak{psl}(2,\mathbb{R})$.
\end{itemize}
Note that even though the two-point function itself does not decay exponentially fast, we still have a dense set of observables with smoothing in time and space that decay exponentially fast, as expected in an Anosov system. Viewed in terms of global coordinates, our system is almost-periodic and has finite causal depth $\pi$.

\section{Discussion}\label{sec:discussion}

In this work, we derived the modular flow of time interval algebras for conformal $0+1$-dimensional GFF. For integer $\Delta$, we 
proved that the time interval algebras form a local net, but we did not discuss an explicit construction of this net. The theory with $\Delta=1$ is a massless free scalar field in AdS$_2$. In Poincar\'e coordinates the boundary time interval algebras are a chiral massless scalar field: a $U(1)$ current algebra CFT. The lowest dimension primary is the current $J=\p\varphi$ with scaling dimension one. We also have vertex operators $e^{i \alpha \varphi(f)}$ in the spectrum \cite{buchholz1988current,weiner2007conformal}. The theory with integer $\Delta>1$ corresponds to projecting the $U(1)$ current theory to subalgebras generated by a fundamental field with a higher scaling dimension. For more detail, see \cite{buchholz1988current}.

\begin{figure}[t]
   \centering
    \includegraphics[width=.8\textwidth]{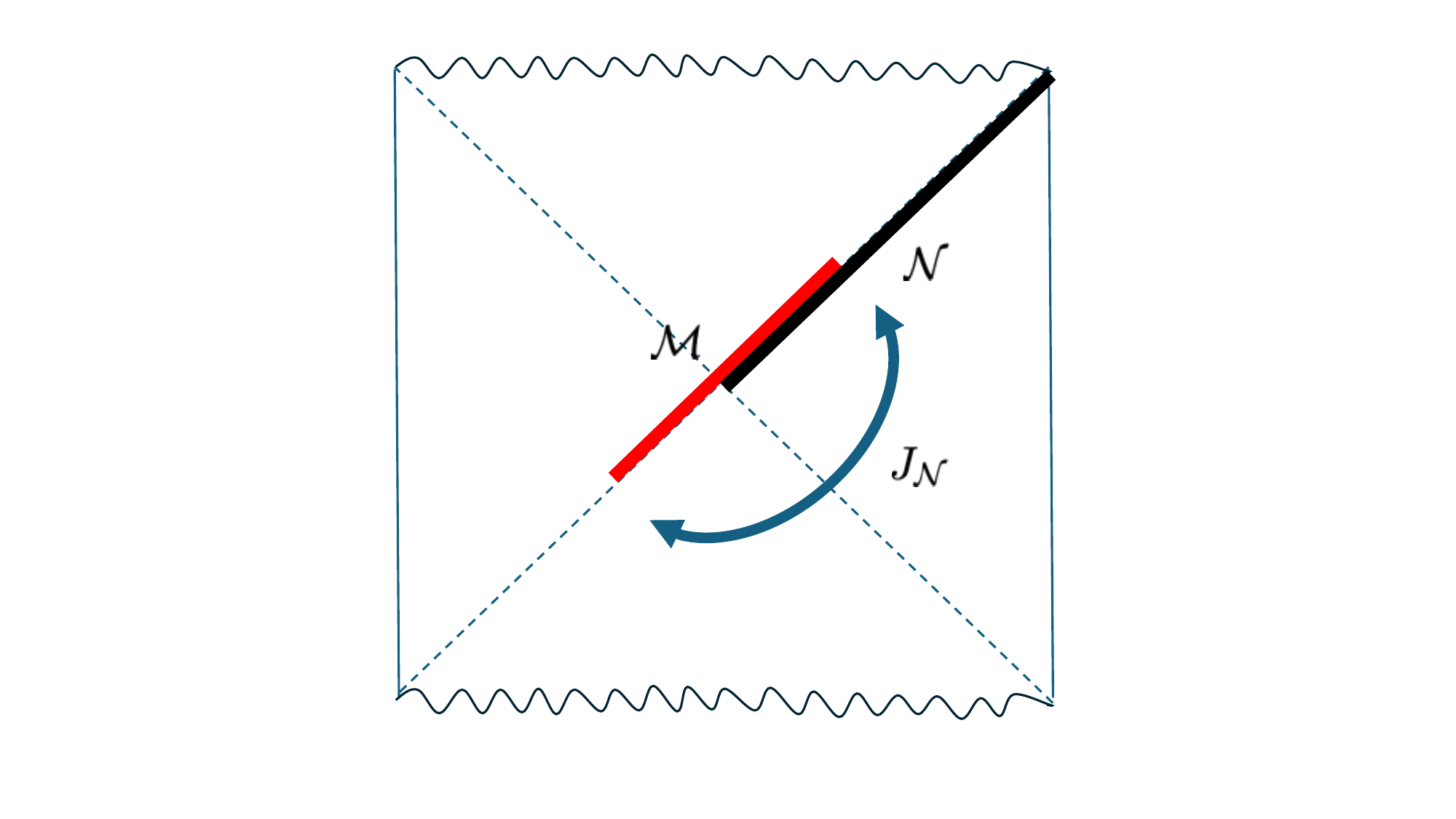}
    \caption{\small The choice of algebras $\mM$ and $\mN$ localized on the bifurcate future Killing horizon of a higher dimensional AdS eternal black hole satisfies the conditions of the modular intersection theorem. We obtain two copies of $PSL(2,\mathbb{R})$ each acting on the future and past Killing horizons.}
    \label{fig:eternalbh}
\end{figure}

In this work, we focused on AdS$_2$ in Poincar\'e and global coordinates. AdS$_2$-black holes or AdS$_2$-Rindler are quotients of global AdS$_2$. Repeating the analysis of this paper for these cases is straightforward and follows from the following logic: 
We saw in (\ref{linearbog}) that a linear Bogoliubov transformation sends the vacuum state killed by annihilation operators $a(\omega)$ to the KMS state killed by annihilation operators $c_L$ and $c_R$ in (\ref{newannihilation}). It follows from the Stone-von Neumann theorem that for a finite number of oscillators, this transformation is represented by a unitary in the Hilbert space. For an infinite number of oscillators, we are not guaranteed that the resulting Hilbert spaces are unitarily equivalent. However, in the special case where the transformation sends positive (negative) frequency modes to positive (negative) frequency modes, the vacua are the same, and the Hilbert spaces are unitarily equivalent. It is a well-known fact that the vacuum of Poincar\'e AdS$_2$, global AdS$_2$, and the AdS$_2$ black hole match \cite{strominger1999ads2,hamilton2006local}. Therefore, the modular flows of their net of von Neumann algebras are related to those we derived here by the unitary map that represents the conformal coordinate transformation. Another generalization of the current work is to consider higher-dimensional AdS and their quotients as topological black holes. We postpone careful treatment of these cases to upcoming work.

An interesting application of our results is the case of the conformal algebra of the Killing horizons of a higher-dimensional eternal black hole. Consider the GFF dual to a higher-dimensional eternal black hole, and the subalgebra of observables localized on the future or past Killing horizons. Choosing $\mM$ and $\mN$ as the algebra of null regions in Figure \ref{fig:eternalbh}, we find that the conditions of the modular intersection theorem i.e., Theorem \ref{twistedModInt} for $U=1$, are satisfied.\footnote{Note that the algebra of $\mM$ is dense in the Hilbert space obtained by the action of all operators localized only on the Killing horizon. This Hilbert space is strictly smaller than the full Hilbert space of the $\ket{TFD}$ state.} We obtain the action of a copy of $\mathfrak{psl}(2,\mathbb{R})$ on the future Killing horizon and another copy of $\mathfrak{psl}(2,\mathbb{R})$ on the past Killing horizon. 

Another potential generalization we would like to highlight is the local modular action work of \cite{buchholz2000geometric} and the approach of Summers \cite{summers1996geometric} that focuses on the action of modular conjugation of time interval algebras as an order-preserving bijection in the net of time interval algebras. This allows for an alternate approach to generalizing from the $0+1$-dimensional conformal group $PSL(2,\mathbb{R})$ to the higher dimensional conformal group $SO(d,2)$. We postpone a careful treatment of this approach to upcoming work.

\paragraph{Acknowledgements:}  We would like to thank Matthew Dodelson, Keiichiro Furuya, Hong Liu and Aron Wall for insightful conversations. NL's work is supported through DOE grants DE-SC0007884 and
DE-SC0025547. MM's work was supported in part by QuantISED $1.0$ grant No. DE-SC0020360. 

\appendix

\section{$PSL(2,\mbR)$ and the universal cover $\widetilde{PSL}(2,\mathbb{R}) $} \label{app:psl2r}

The $PSL(2,\mbR)$ group consists of linear transformations on the complex plane $\mbC \ni t$ 
\begin{eqnarray}
    g  :  t  \to  \frac{at+b}{ct+d}  ,\label{eq-psl-map-app}
\end{eqnarray}
where $a,b,c,d \in \mathbb{R}$, and $ad-bc=1$. They preserve the upper and lower halves of the complex $z$-plane. Each element $g$ of the $PSL(2,\mbR)$ group is represented as a $2\times 2$ matrix of determinant 1 quotiented by $\mbZ_2$ 
\begin{eqnarray}
    g = \begin{pmatrix}
        a & b\\
        c & d
    \end{pmatrix} \sim \begin{pmatrix}
        -a & -b\\
        -c & -d
    \end{pmatrix} \ .
\end{eqnarray}
The Iwasawa (KAN) decomposition of the $PSL(2,\mbR)$ group
\begin{eqnarray}
    R_\theta =  \begin{pmatrix}
        \cos\frac{\theta}{2} & \sin\frac{\theta}{2} \\
        -\sin\frac{\theta}{2} & \cos\frac{\theta}{2}
    \end{pmatrix}\ , \qquad
    D_s  =  \begin{pmatrix}
        e^{s/2} & 0 \\
        0 & e^{-s/2}
    \end{pmatrix} \ , \qquad
    T_b = \begin{pmatrix}
        1 & b \\
        0 & 1
    \end{pmatrix}
\end{eqnarray}
where $R_\theta, D_s, T_b$ correspond to rotations, dilatations and translations respectively. The corresponding generators satisfy the Lie algebra
\begin{eqnarray}\label{eq-psl-algebra-app}
   && \mathfrak{r}  = \frac{1}{2} \begin{pmatrix}
        0 & 1\\
        -1 & 0
    \end{pmatrix}  , \qquad    
    \mathfrak{d}  = \frac{1}{2} \begin{pmatrix}
        1 & 0\\
        0 & -1
    \end{pmatrix}   , \qquad 
    \mathfrak{t}  =    \begin{pmatrix}
        0 & 1\\
        0 & 0
    \end{pmatrix}    \nn \\
    && \left[ \mathfrak{r},\mathfrak{d}\right]  = \mathfrak{r} -  \mathfrak{t} \ , \qquad \left[ \mathfrak{d},\mathfrak{t}\right] = \mathfrak{t} \ ,\qquad \left[ \mathfrak{r},\mathfrak{t}\right] = \mathfrak{d}\ .
\end{eqnarray}
This prompts us to define the generator corresponding to special conformal transformations $S_c$ giving another decomposition of $PSL(2,\mbR)$ group
\begin{eqnarray}
    &&\mathfrak{s} = \mathfrak{t} - 2 \mathfrak{r} = \begin{pmatrix}
        0 && 0\\1 && 0
    \end{pmatrix}\ , \qquad S_c  =  e^{c\mathfrak{s}} = \begin{pmatrix}
        1 && 0\\
        c && 1
    \end{pmatrix} \nn \\
    && \left[ \mathfrak{s},\mathfrak{d}\right]  =  \mathfrak{s} \ , \qquad \left[ \mathfrak{d},\mathfrak{t}\right] =  \mathfrak{t} \ ,\qquad \left[ \mathfrak{t},\mathfrak{s}\right] =2 \mathfrak{d}\ .
\end{eqnarray}
Note that $R_\theta$ is an element of the SO(2) group which is the maximal compact subgroup of $PSL(2,\mbR)$. Moreover, the element $R_{2\pi} = e^{2\pi \mathfrak{r}}  \sim \mbI $ corresponding to rotation by $2\pi$ is in the center of the group.

\subsection*{The universal cover $\widetilde{PSL}(2,\mathbb{R}) $}

The topology of the $PSL(2,\mbR)$ group can be deduced from the Iwasawa decomposition since any element of the $PSL(2,\mbR)$ group can be decomposed as a product of rotation, dilatation, and translation \cite{longo}. For any $g \in PSL(2,\mathbb{R})$, we can find $\theta$, $s$, and $b$ such that
\begin{eqnarray}
    g = \begin{pmatrix}
        a && b\\ c&& d
    \end{pmatrix} = R_\theta \cdot D_s \cdot T_b  \ .
\end{eqnarray}
Thus the topology of the $PSL(2,\mbR)$ group is $S^{1}\times \mathbb{R}^{2}  $. The universal covering group  $\widetilde{PSL}(2,\mathbb{R})$ decompactifies the circle $S^1$ and therefore has the topology $\mathbb{R}^{3}$. The universal cover $\widetilde{PSL}(2,\mathbb{R})$ is then the group of transformations with generators satisfying the Lie algebra of $PSL(2,\mbR)$ in \eqref{eq-psl-algebra-app}, but however, the rotation $R_\theta$ is no longer periodic in $\theta$. Since the universal cover has the same group structure, the element $R_{2\pi}$ is still in the center of the group and is proportional to the identity operator. In a unitary representation of $\widetilde{PSL}(2,\mathbb{R})$, the element $R_{2\pi}$ is equal to a phase $e^{i 2\pi\mu}$ where $\mu \in \mathbb{R}/\mathbb{Z}$. \cite{Kitaev:2017hnr}

\section{Generalized Hilbert Transform (GHT)} \label{sec-hilbert}

The Hilbert transform of a function $f(t)$ is defined as the Cauchy principal value of the convolution
\begin{eqnarray}
    H(f)(t)=-\sqrt{\frac{2}{\pi}}\,  f(t) \ast \frac{1}{t}  =- \sqrt{\frac{2}{\pi}}  \text{ p.v.} \int_{-\infty}^\infty d\tau\,\frac{f(\tau)}{t-\tau}
\end{eqnarray}
where $\frac{1}{t}$ is well-defined as a distribution against Schwartz functions. In the frequency space, the Hilbert transform acts as
\begin{eqnarray}\label{eq-Hilbert-old}
    H(f)(\omega)  = -i \sgn(\omega) f(\omega) =   e^{-i\pi \text{sgn}(\omega)/2}  f(\omega) \ .
\end{eqnarray}
Consider the functions $f_\pm (t) =  f(t)\pm i H(f)(t)$. The Fourier transform only contains positive or negative modes
\begin{eqnarray}
    f_\pm(\omega) = f(\omega) \pm \sgn(\omega) f(\omega)\ .
\end{eqnarray}
Let us define a generalization of the Hilbert transform for any $\Delta \in \mbR^+$
\begin{eqnarray}\label{eq-mod-Hilbert-def}
    H_\Delta(f)(\omega)  &=&  e^{-i\pi \Delta  \text{sgn}(\omega)}  f(\omega) \nn \\
     H_\Delta(f)(t) &=& f(t) \cos(\pi\Delta) + H(f)(t) \sin(\pi \Delta)\ .
\end{eqnarray}
The inverse generalized Hilbert transform is 
\begin{eqnarray}\label{eq-mod-Hilbert-inv}
    H^{-1}_\Delta(f)(\omega)  = e^{i\pi \Delta  \text{sgn}(\omega)}  f(\omega) \ . 
\end{eqnarray}
We find that the generalized Hilbert transform (GHT) interpolates between the identity operator when $\Delta$ is an integer and the Hilbert transform when $\Delta$ is a half-integer up to an overall sign.
Consider a  reflection operator $R : f(t) \to f_{R}(t) = f(-t)  $. Note that
 \begin{eqnarray}
     H_{\Delta}\circ R (f)(\omega) &=& H_{\Delta} (f_R)(\omega) = e^{-i\pi\Delta \text{sgn}(\omega)} f_{R}(\omega) \nn\\ 
     &=& e^{-i\pi\Delta \text{sgn}(\omega)} f_{R}(\omega) = e^{-i\pi\Delta \text{sgn}(\omega)} f(-\omega) \nn\\ 
     &=& R\circ H_{\Delta}^{-1} (f)(\omega) \ .
 \end{eqnarray}
This shows that $R\circ H_{\Delta}\circ R = H_{\Delta}^{-1} .$

\section{Proofs for Sections \ref{sec:gff_conformal} and \ref{sec:modular-flow}}\label{app:proofs-sec-gff-conformal}

\subsection{Proof for Lemma \ref{lemma:psl2r_rep}}\label{app:proof_lemma_psl2r_rep}

\begin{proof}
    Let us consider two $PSL(2,\mathbb{R})$ transformation:
    \begin{align*}
        g_1 = \begin{pmatrix}
        a_1 && b_1\\ c_1 && d_1
    \end{pmatrix} \quad\quad\quad\quad\quad\quad\quad\quad g_2 = \begin{pmatrix}
        a_2 && b_2 \\ c_2 && d_2 
    \end{pmatrix} \, .
    \end{align*}
    Let us denote their composition by $g_3$, which is also a $PSL(2,\mathbb{R})$ transformation:
    \begin{align}\label{eq-g3}
        g_{3} = g_2 \cdot g_1 \, = \begin{pmatrix}
        a_3 && b_3 \\ c_3 && d_3 
    \end{pmatrix} \, = \, \begin{pmatrix}
        a_1 a_2+b_2c_1 \quad && a_2 b_1 + b_2 d_1 \\ a_1 c_2 + c_1 d_2 \quad&& c_2 b_1 + d_1 d_2 
    \end{pmatrix} \, .
    \end{align}
    Our goal in the following is to show that the following transformation (copied from \eqref{universal_psl2r_action}) also satisfies the above group structure:
    \begin{eqnarray} \label{eq-psl-ansatz}
    g \, : \,   (P_\pm f)(t)\to (P_\pm f_g)(t)= P_\pm\lb |a-ct|^{2(\Delta-1)} e^{\mp 2\pi i\Delta\sgn(c) \Theta(ct-a)}  f\lb \frac{d\,t-b}{a-ct}\rb\rb
\end{eqnarray}
That is, we need to show that 
\begin{align} \label{eq-psl-proof-goal}
   P_\pm f_{g_3} = P_\pm \left( e^{\pm i 2\pi\Delta n_{g_3}} f_{g_2 \cdot g_1} \right) \, ,
\end{align}
for some integer $n_{g_3}$.

From \eqref{eq-psl-ansatz}, we get
\begin{align}\label{psl-rep-proof-int-1}
    (P_\pm f_{g_2 \cdot g_1})(t)= P_\pm\lb |a_2-c_2 t|^{2(\Delta-1)} e^{\mp 2\pi i\Delta\sgn(c_2) \Theta(c_2 t-a_2)}f_{g_1}\lb \frac{d_2\,t-b_2}{a_2-c_2 t}\rb\rb
\end{align}
Note that since
\begin{align*}
\chi_\pm(t) = |a-ct|^{2(\Delta-1)}  e^{\pm i 2\pi \Delta \sgn(c) \Theta(ct-a)} = \lb a - c(t \mp i0^+) \rb^{2(\Delta-1)} \, , 
\end{align*}
$\chi_{\pm}(t)$ are analytic in $\pm \text{Im}(t) < 0$ half-plane. Therefore, $\chi_{+}(t)$ ($\chi_{-}(t)$ ) is a positive (negative) frequency function. Therefore, for any function $f(t)$, we have
\begin{align}
    P_\pm\lb |a-ct|^{2(\Delta-1)} e^{\mp 2\pi i\Delta\sgn(c) \Theta(ct-a)} P_{\mp}f(t)\rb = 0 \, ,
\end{align}
and hence, 
\begin{align} \label{eq-proj-inner}
    P_\pm\lb |a-ct|^{2(\Delta-1)} e^{\mp 2\pi i\Delta\sgn(c) \Theta(ct-a)} f(t)\rb = P_\pm\lb |a-ct|^{2(\Delta-1)} e^{\mp 2\pi i\Delta\sgn(c) \Theta(ct-a)} P_{\pm}f(t)\rb \, .
\end{align}
With this observation, we write \eqref{psl-rep-proof-int-1} as
\begin{align}%\label{psl-rep-proof-int-1}
    (P_\pm f_{g_2 \cdot g_1})(t)= P_\pm\lb |a_2-c_2 t|^{2(\Delta-1)} e^{\mp 2\pi i\Delta\sgn(c_2) \Theta(c_2 t-a_2)} P_{\pm} f_{g_1}\lb \frac{d_2\,t-b_2}{a_2-c_2 t}\rb\rb \, .
\end{align}
We further simplify this equation by using \eqref{eq-psl-ansatz} for $f_{g_1}$ to get
\begin{align}
(P_\pm f_{g_2 \cdot g_1})(t)= P_\pm \Bigg(& |a_2-c_2 t|^{2(\Delta-1)} e^{\mp 2\pi i\Delta\sgn(c_2) \Theta(c_2 t-a_2)} \nn \\ \times& P_{\pm} \lb \left| \frac{a_3 - c_3 t}{a_2-c_2 t}\right|^{2(\Delta-1)} e^{\mp 2\pi i \Delta \text{sgn}(c_1) \Theta\big(\frac{c_3\,t-a_3}{a_2-c_2 t}  \big)} f\lb \frac{d_3\,t-b_3}{a_3-c_3 t}\rb \rb\Bigg) \, ,
\end{align}
where we have used the expressions of $a_3$, $b_3$, $c_3$, and $d_3$ from \eqref{eq-g3}. After using \eqref{eq-proj-inner} again, we get
\begin{align}\label{psl-rep-proof-int-2}
    (P_\pm f_{g_2 \cdot g_1})(t)= P_\pm \Bigg(& |a_3-c_3 t|^{2(\Delta-1)} e^{\mp 2\pi i\Delta\sgn(c_3) \Theta(c_3 t-a_3)} e^{\pm 2\pi i\Delta n_{g_3}(t)} f\lb \frac{d_3\,t-b_3}{a_3-c_3 t}\rb \Bigg) \, ,
\end{align}
where $n_{g_3}(t)$ is an integer given by
\begin{align}
    n_{g_3}(t) = \sgn(c_3) \Theta(c_3 t-a_3) - \sgn(c_2) \Theta(c_2 t-a_2) - \text{sgn}(c_1) \Theta\lb\frac{c_3\,t-a_3}{a_2-c_2 t}\rb \, .
\end{align}

To show that \eqref{psl-rep-proof-int-2} satisfy \eqref{eq-psl-proof-goal}, we just need to show that $n_{g_3}(t)$ is a constant.
When $c_1 = c_2 =0$, $n_{g_3}(t)$ is trivially constant. So let us first consider the case when $c_1 = 0$ but $c_2 \ne 0$. In this case, $c_3 = a_1 c_2$ and $a_3 = a_1 a_2$, and hence,
\begin{align}
    n_{g_3}(t) = \sgn(a_1 c_2) \Theta(a_1 (c_2 t-a_2)) - \sgn(c_2) \Theta(c_2 t-a_2) = - \Theta(-a_1) \text{sgn}(c_2) \, ,
\end{align}
which is a constant. Finally, we focus on the case where neither $c_1$ nor $c_2$ vanish. Taking the time-derivative of $n_{g_3}(t)$, we get
\begin{align*}
    \frac{d n_{g_3}}{dt} \, = \, \lb \delta(t - a_3/c_3) - \delta(t-a_2/c_2) \rb \times \left\{ 1 - \text{sgn}(c_1)\text{sgn}(a_2 c_3 - c_2 a_3)\right\} \, .
\end{align*}
From \eqref{eq-g3}, we find that $a_2 c_3 - c_2 a_3 = c_1$. Therefore, we get
\begin{align}
    \frac{d n_{g_3}}{dt} \, = \, 0 \, ,
\end{align}
and hence, $\psi_{g_2,g_1}(t)$ is a constant integer. It is straightforward to generalize to $g_1$, $g_2$ in $n_{g_1}$-th, $ n_{g_2}$-th copy, with non-zero $n_{g_1}, n_{g_2} \in \mathbb{Z}$. This finishes our proof that \eqref{eq-psl-ansatz} form a representation of $\widetilde{PSL}(2,\mbR)$.
\end{proof}

\subsection{Proof for Lemma \ref{lemma-psl-unitary}}\label{app:proof_lemma_psl2r_unitary} 
\begin{proof}
 Consider $0+1$-dimensional conformal GFF in the vacuum state with the correlator in (\ref{conformal0p1}). The two-point function of the smeared fields is 
 \begin{eqnarray}\label{gff_smeared_twopoint}
    \braket{\varphi(f) \varphi(h) }  &=& \int_{-\infty}^\infty d\omega \, G(\omega)  f(-\omega) h(\omega)  =  \int_{0}^{\infty} d\omega \, \omega^{2\Delta-1}  f(-\omega) h(\omega)  \nn \\
    &=&  \int_{0}^{\infty} d\omega \, \omega^{2\Delta-1}  (P_-f)(-\omega) (P_+h)(\omega) \nn \\
    &=& \braket{\varphi(P_-f) \varphi(P_+h) } \ .
 \end{eqnarray}

\begin{enumerate}
\item To see the invariance of the two-point function of the smeared fields under the action of the universal cover of $PSL(2,\mbR)$ in \eqref{universal_psl2r_action}, consider the functions $\chi_\pm(t)$ with only positive or negative frequencies %for arbitrary $\Delta \geq 1$
\begin{eqnarray}
     \chi_\pm(t)&=& |a-ct|^{2(\Delta-1)}  e^{\pm i 2\pi \Delta \sgn(c) \Theta(ct-a)} \nn \\
     \chi_\pm(\omega) &=&  \frac{\Gamma(2\Delta-1) \sin(2\pi\Delta) e^{i\pi\Delta\sgn(c)}}{\sqrt{2\pi} |c|^{1-2\Delta}} \frac{\Theta(\pm\omega)}{|\omega|^{2\Delta-1}} e^{i \frac{a}{c} \omega} \ .
\end{eqnarray}
In the vacuum state for conformal GFF, under the coordinate action of $PSL(2,\mbR)$ in \eqref{psl2r_action}, the two-point function transforms as
\begin{eqnarray}\label{psl2r_action_2point}
    G(t_1'-t_2') = G(t_1-t_2) \frac{\chi_+(t_1) \chi_-(t_2)}{\lb (a-ct_1)(a-ct_2)\rb^{-2}}
\end{eqnarray}
where 
\begin{eqnarray}
    &&t' = \frac{d \,t-b}{a-ct}\ , \qquad t'_1 - t'_2 = \frac{t_1-t_2}{(a-ct_1)(a-ct_2)} \nn \\
    &&\text{sgn}(t'_{1}-t'_{2})= \text{sgn}(t_{1}-t_{2}) \, -2 \text{sgn}(c) \Theta(ct_{1}-a)+ 2 \text{sgn}(c) \Theta(ct_{2}-a) \ .
\end{eqnarray}
However, the action on the smeared fields in \eqref{universal_psl2r_action} leaves the two-point function of the smeared fields invariant:
\begin{eqnarray}
     \braket{\varphi(f)\varphi(h)} &=& \int_{-\infty}^\infty dt_1'dt_2'\, G(t_1'-t_2') f(t_1') h(t_2')  \nn \\
     &=& \int_{-\infty}^\infty dt_1dt_2\,  G(t_1-t_2)  \lb \chi_+(t_1) f(t_1) \rb \lb \chi_-(t_2) h(t_2)\rb \nn \\
     &=& \int_{-\infty}^\infty dt_1dt_2\,  G(t_1-t_2)  P_-\lb \chi_+(t_1) f(t_1) \rb P_-\lb \chi_-(t_2) h(t_2)\rb \nn\\
     &=& \braket{\varphi(P_-f')\varphi(P_+h')} = \braket{\varphi(f') \varphi(h') } 
\end{eqnarray}
where we have repeatedly used \eqref{gff_smeared_twopoint}.

\item    Since the one-point function of a GFF vanishes in the vacuum state and the higher-point functions factorize into two-point functions
\begin{eqnarray}\label{eq-Weyl-expectation}
    \braket{\varphi(f)} = 0\ , \qquad \braket{ \underbrace{\varphi(f)\cdots \varphi(f)}_\text{$2n$ times}} = \frac{(2n)!}{2^n n!} \braket{\varphi(f)\varphi(f)} \ ,
\end{eqnarray}
the correlation functions of the Weyl operators depend only the two-point function of the smeared fields
\begin{eqnarray}
    \braket{ W(f) }  &=&  e^{-\frac{1}{2}\braket{ \varphi(f) \varphi(f) } }\nn \\
     \braket{W(f)W(h)} &=& e^{-\frac{1}{2}[\varphi(f), \varphi(h) ] } \braket{W(f+h)}
\end{eqnarray}
where the commutator $[\varphi(f), \varphi(h) ]$ is the antisymmetric part of the two-point function of the smeared fields. Thus the invariance of the two-point function of the smeared fields also implies the invariance of the correlation functions of the Weyl operators under the action in \eqref{universal_psl2r_action}. Since all the correlation functions of the Weyl operators are invariant, the action preserves the vacuum state and can be represented unitarily on the Weyl algebra. The assertion then follows using Lemma \ref{lemma:psl2r_rep}.

\end{enumerate}

\end{proof}

\subsection{Proof for Theorem \ref{thm-loc-mobcov-net-integer-delta}}\label{proof:thm-loc-mobcov-net-integer-delta}
\begin{proof}

We start by first establishing Haag's duality for each algebra $\mA_I$. Let $W(f) \in \mA_{I}$. As discussed in section \ref{sec-gff-comm-choice}, Weyl operators $W(f)$ and $W(g)$ commute if and only if the commutator of smeared fields satisfy $[\varphi(f),\varphi(g)] = 2\pi i n$ for some integer $n$. However, there cannot exist any function $g(t)$ such that $[\varphi(f),\varphi(g)] = 2\pi i n$ (where $n$ is non-zero integer) for all $f(t)$ supported in $I$. This can be seen simply by doing scalar multiplication, i.e.  $f(t) \to \alpha f(t)$ for $\alpha \in \mathbb{R}$, which does not change the support of $f(t)$ but changes $[\varphi(f),\varphi(g)] \to i2\pi \alpha n \,$. Therefore, we find that $W(g) \in \mA'_{I}$ if and only if $[\varphi(f),\varphi(g)] = 0$ for all $f(t)$ supported in $I$. Therefore, $W(g) \in \mA'_{I}$ if and only if
\begin{eqnarray}
   0 &=& [\varphi(f) , \varphi(g) ]  =   \int_{-\infty}^{\infty}  d\omega\,  \omega^{2\Delta-1}  f(-\omega)  g(\omega) \nn \\
   &=&  \int_{-\infty}^{\infty} dt  \left( \partial_{t}^{2\Delta-1} f(t) \right)  g(t) \nn \\
   &=&  \int_{I} dt  \left( \partial_{t}^{2\Delta-1} f(t) \right)  g(t) 
\end{eqnarray}
where we have used the fact that $f(t)$ has support in the interval $I$. Since $f(t)$ is arbitrary, the above integral can only vanish if $g(t)$ vanishes in the interval $I$. We thus conclude that $g(t)$ is supported on the complement $I^{c}$. Hence for integer $\Delta$, each algebra in the map $I\to\mA_I$ satisfies Haag's duality 
\begin{eqnarray}\label{eq-W-comm-integer}
    (\mA_I)'  =  \mA_{I^{c}} \ . 
\end{eqnarray}
Similarly we find that 
\begin{eqnarray}
    (\mA_{I^{c}})'  =  \mA_I \ ,\qquad \mA_I  =  (\mA_I)'' 
\end{eqnarray}
and hence, $\mA_I$ is a von Neumann algebra.

Now consider two intervals $I_{1} \subset I_{2} $. Then it is clear that $\mA_{I_{1}} \subset \mA_{I_{2}} $ satisfying isotony. If $I_{1}$ and $I_{2}$ are such that $I_{2} \subset I_{1}^{c}$, then isotony implies that
\begin{align}
    \mA_{I_{2}} \subset \mA_{I_{1}^{c}} = (\mA_{I_{1}})'\ .
\end{align}
Therefore the map $I \to \mA_I$ defines a local net of von Neumann algebras. Now consider the unitary representation of $PSL(2,\mbR)$ on the net. When $\Delta$ is an integer, this $PSL(2,\mathbb{R})$ transformation in \eqref{eq-psl2r-rep-loc} acts locally. That is, if a $PSL(2,\mathbb{R})$ transformation maps interval $I$ to $I'$, then \eqref{eq-psl2r-rep-loc} maps functions supported in $I$ to those supported in $I'$. Moreover, the vacuum state is invariant under the unitary representation of the $PSL(2,\mathbb{R})$ (see Lemma \ref{lemma-psl-unitary}.) This unitary representation is the positive energy representation since the generator of the translation (i.e., the Hamiltonian) is positive. This establishes that the map $I \to \mA_I$ is a local M\"obius covariant net of von Neumann algebras and the net satisfies Haag's duality. 
\end{proof}

\subsection{Proof for Proposition \ref{lemma-support-1}}\label{proof:lemma-support-1}
\begin{proof}
    \begin{enumerate}
        \item Let $f(t)$ has support in $(a,\infty)$. Then 
        \begin{eqnarray}
            \Theta(t) \ast f(t)  =  \int_{-\infty}^{\infty} dt' \Theta(t-t') f(t')  =  \int_{a}^{\infty} dt'\Theta(t-t') f(t')  .
        \end{eqnarray}
        Since $\Theta(t-t') = 0$ for $t<t'$, the above integral vanishes for $t<a$. Thus, $\Theta(t) \ast f(t)$ has support in $(a,\infty)  $.

        Now let $\Theta(t) \ast f(t)$ has support in $(a,\infty)  $. Then for all $t < a$
        \begin{eqnarray}
            \int_{-\infty}^{\infty} dt' \Theta(t-t') f(t')  =  \int_{-\infty}^{t} dt' f(t')  =  0\ .
        \end{eqnarray}
        This can only be satisfied if $f(t) = 0$ for all $t < a$. Therefore, $f(t)$ has support in $(a,\infty)  $.
        
        \item This can be proved similarly. 
        \item This is just a special case of points $1$ and $2$, and hence, easily follows from these points.
    \end{enumerate}
\end{proof}

\subsection{Proof for Theorem \ref{thm-comm-I-general}}\label{proof:thm-comm-I-general}
\begin{proof}
We will follow similar steps as in the proof for Theorem \ref{thm-comm-half-line-general} by showing inclusions in both directions. Let $W(f)\in\mM_{(a,b)}$. From the structure of the Weyl algebra \eqref{weyl-algebra}, we can assume without loss of generality that $W(g) \in \lb \mT_\Delta^\dagger \mM_{(-\infty,a)} \mT_\Delta \rb \vee \lb \mT_\Delta \mM_{(b,\infty)} \mT_\Delta^\dagger \rb$ with $g(t)$ of the form given by
\begin{eqnarray}\label{eq-g-ansatz-comm-I}
    g(t)  =  H^{-1}_{\Delta}(g_{1})(t)  + H_{\Delta}(g_{2})(t)  
\end{eqnarray}
where $g_{1}(t)$ has support in $(-\infty,a)$ and $g_{2}(t)$ has support in $(b,\infty)$. Using \eqref{piplusminus}, we can see that the commutator \eqref{commutatorfg} for the smeared fields vanishes
\begin{eqnarray}
    [\varphi(f),\varphi(g)] 
    &=&  \int_{-\infty}^\infty d\omega\, \omega |\omega|^{2(\Delta-1)} \lb e^{i\pi\Delta\sgn(\omega)} g_1(\omega) + e^{-i\pi\Delta\sgn(\omega)} g_2(\omega)  \rb f(-\omega) \nn \\
    &=&  \int_{-\infty}^\infty dt\, \lb \p_t(\theta_- \ast g_1)(t) + \p_t(\theta_+ \ast g_2)(t)  \rb f(t) = 0
\end{eqnarray}
since by Proposition \ref{lemma-support-1}, $\theta_- \ast g_1$ is supported only in $(-\infty, a)$ and $\theta_+ \ast g_2$ is supported only in $(b,\infty)$ and therefore each term in the integral vanishes. Thus we have the inclusion $\lb \mT_\Delta^\dagger \mM_{(-\infty,a)} \mT_\Delta \rb \vee \lb \mT_\Delta \mM_{(b,\infty)} \mT_\Delta^\dagger \rb \subseteq (\mM_{(a,b)})'$. To see the reverse inclusion assume that $W(f) \in \mM_{(a,b)}$ and again without loss of generality let $W(g_1 + g_2) \in (\mM_{(a,b)})'$. Then 
\begin{eqnarray}
    0 &=& \lbrack \varphi(f), \varphi(g_1+g_2)\rbrack =  \int_{-\infty}^\infty d\omega\,\omega  |\omega|^{2(\Delta-1)} \lb  g_1(\omega) + g_2(\omega)\rb f(-\omega) \nn \\
    &=&  \int_{-\infty}^\infty d\omega\, \omega   \lb |\omega|^{2(\Delta-1)} e^{i\pi\Delta\sgn(\omega)} \rb \lb e^{-i\pi\Delta\sgn(\omega)}g_1(\omega)\rb f(-\omega) \nn \\ 
    && +  \int_{-\infty}^\infty d\omega\, \omega   \lb |\omega|^{2(\Delta-1)} e^{-i\pi\Delta\sgn(\omega)} \rb \lb e^{i\pi\Delta\sgn(\omega)}g_2(\omega)\rb f(-\omega) \nn \\ 
    &=& - \int_{-\infty}^\infty dt\, \lb  (\theta_- \ast H_\Delta (g_1))(t) + (\theta_+ \ast H_\Delta^{-1} (g_2))(t) \rb \p_t f(t)\ .
\end{eqnarray}
Again by using Proposition \ref{lemma-support-1}, we see that since the integral vanishes, $H_\Delta (g_1)$ has support in $(-\infty, a)$ and $H_\Delta^{-1} (g_2)$ has support in $(b,\infty)$.  Hence using \eqref{GHT3}, we have that $W(H_\Delta (g_1)) = \mT_\Delta W(g_1) \mT_\Delta^\dagger \in\mM_{(-\infty, a)}$ and $W(H_\Delta^{-1} (g_2)) = \mT_\Delta W(g_2) \mT_\Delta^\dagger \in\mM_{(b, \infty)}$. Therefore we have shown the reverse inclusion $(\mM_{(a,b)})' \subset\lb \mT_\Delta \mM_{(-\infty,a)} \mT_\Delta^\dagger \rb \vee \lb \mT_\Delta \mM_{(b,\infty)} \mT_\Delta^\dagger \rb $ which proves the equality.  

To see that $\mM_{(a,b)}$ is a von Neumann algebra, we take the double commutant
\begin{eqnarray}
    (\mM_{(a,b)})''  = \lb \mT_\Delta^\dagger \mM_{(-\infty,a)} \mT_\Delta \rb' \wedge \lb \mT_\Delta \mM_{(b,\infty)} \mT_\Delta^\dagger \rb'\ .
\end{eqnarray}
Now using Theorem \ref{thm-comm-half-line-general}, we get
\begin{eqnarray}
    \mT_\Delta^\dagger \mM_{(-\infty,a)} \mT_\Delta &=& (\mM_{(a,\infty)})' \nn\\
    \mT_\Delta \mM_{(b,\infty)} \mT_\Delta^\dagger &=& (\mM_{(-\infty,b)})'
\end{eqnarray}
which are both von Neumann algebras. Hence
\begin{eqnarray}
     (\mM_{(a,b)})''  = \mM_{(a,\infty)} \wedge \mM_{(-\infty,b)} = \mM_{(a,b)}
\end{eqnarray}
and is therefore a von Neumann algebra.
\end{proof}

\subsection{Proof for Corollary \ref{cor:local-net-integer-delta}}\label{proof:cor-local-net-integer-delta}
\begin{proof}
    Note that by definition, we have
    \begin{align}
        \mA_{I_{1}} \subset \mA_{I_{2}} 
    \end{align}
    if $I_{1} \subset I_{2} $. Thus we have isotony and the map $I \to \mA_{I}$ defines a net of von Neumann algebras.
    However, for non-integer $\Delta$ locality is violated. From the definition of locality, it is enough to show that $\mA_{\mathbb{R}^{-}} \not\subseteq (\mA_{\mathbb{R}^{+}})' $. Using Theorem \ref{thm-comm-half-line-general}, we know that $(\mA_{\mathbb{R}^{+}})' = \mT_\Delta^\dagger \mA_{\mathbb{R}^{-}} \mT_\Delta$. 
   However we cannot write any function $f(t)$ with support in $\mathbb{R}^{-}$ as $H_{\Delta}^{-1}(g)(t)$ with support of $g(t)$ in $\mathbb{R}^{-1}$ since
    \begin{eqnarray}
        H_{\Delta}^{-1}(g)(t)  &=& \cos(\pi\Delta) g(t) - \sin(\pi\Delta) \, H(g)(t) \nn\\
        &=&\cos(\pi\Delta)  g(t) + \sin(\pi\Delta) \int_{-\infty}^{0} dt' \, \frac{g(t')}{t-t'}
    \end{eqnarray}
    does not vanish for $t>0$. Thus, locality is violated.

    Now we discuss whether the net is M\"obius covariant. 
    Consider two intervals $I$ and $I_{c}$ which are related by a special conformal transformation $\mathcal{S}_{c}$ 
    \begin{align}
            S_c = t \to t' = \frac{t}{1+ct} \, .
    \end{align}
    As discussed in Section \ref{sec-Mobius}, a special conformal transformation acts on the positive and negative frequency part of a function with a different phase. As a result, the action of a $\mathcal{S}_{c}$ is generally non-local in the sense that functions supported on $I$ do not generally map to functions supported on $I_{c}$. This implies that the algebra $\M_{I_{c}}$ and $\M_{I}$ are not generally related by the unitary representation of $\mathcal{S}_{c}$, and hence, the net is not M\"obius covariant.

    To make the discussion less abstract, let us present an example where the special conformal transformation is non-local. For concreteness, let us consider the special conformal transformation with $c=-1$ and let us consider a function $f_I(t)$ with support in the $I =(1,2)$. This interval maps to $I_{-1} = (-\infty,-2)$ under $\mathcal{S}_{-1}$. Now using \eqref{universal_psl2r_action}, we find that the action of $\mathcal{S}_{-1}$ on $f_I(t)$ is given by
    \begin{align}
            \mS_{-1} : f_{I}(t) \to f'_{I}(t) =& P_{+}\left(|1+t|^{2(\Delta-1)} e^{i2\pi\Delta \Theta(-1-t)} f_{I}\left(\frac{t}{1+t} \right)\right) \nn\\ +& P_{-}\left(|1+t|^{2(\Delta-1)} e^{-i2\pi\Delta \Theta(-1-t)} f_{I}\left(\frac{t}{1+t} \right)  \right) \, .
    \end{align}
    Since $f_{I}(t)$ has support in $I = (1,2)$, we deduce that $f_{I}(t/(1+t))$ is non-zero only for $t<-2$. In this range, $\Theta(-1-t) = 1$, and hence, the above expression simplifies to
    \begin{align}
        f'_{I}(t) =& e^{i2\pi\Delta} P_{+}\left((-1-t)^{2(\Delta-1)}  f_{I}\left(\frac{t}{1+t} \right)\right) + e^{-i2\pi\Delta} P_{-}\left((-1-t)^{2(\Delta-1)}  f_{I}\left(\frac{t}{1+t} \right)  \right) \, \nn
    \end{align}
    or equivalently, 
    \begin{align}
f'_{I}(t)
        =& \, H_{2\Delta}^{-1}\Big[ (-1-t)^{2(\Delta-1)}  f_{I}\left(\frac{t}{1+t}\right) \Big] \ ,
    \end{align}
    where $H_{2\Delta}^{-1}$ is the inverse GHT. Since GHT and its inverse are non-local transformations, the support of $f'_{I}(t)$ is not in $I_{-1} = (-\infty,-2) $, which implies that $S_{-1}$ does not map functions supported in $I = (1,2)$ to functions supported in $I_{-1} = (-\infty,-2)$.  
    \end{proof}

\subsection{Proof for Proposition \ref{prop-unitary-related-algebra}}\label{proof:prop-unitary-related-algebra}
\begin{proof}
    Let $S_{\mA_2}$ be the Tomita operator for $\mA_2$. For $a_2 \in \mA_2$, we have
    \begin{align}
        S_{\mA_2} a_2 \ket{\Omega} = a^\dagger_2 \ket{\Omega} \, .
    \end{align}
    Let $a_2 = U a_1 U^\dagger$ where $a_1 \in \mA_1$. With this, the above equation becomes
    \begin{align}
        S_{\mA_2} U a_1 \ket{\Omega} = U a^\dagger_1 \ket{\Omega} %= U S_{\mA_1} a_1 \ket{\Omega} ,
    \end{align}
    where we have used $U^\dagger \ket{\Omega} = \ket{\Omega}$. In terms of the Tomita operator of $\mA_{1}$, the above equation becomes
    \begin{align}
        S_{\mA_2} U a_1 \ket{\Omega} =  U S_{\mA_1} a_1 \ket{\Omega} ,
    \end{align}
    which implies 
    \begin{align}
        S_{\mA_1} = U^\dagger S_{\mA_2} U \, .
    \end{align}
    Doing a polar decomposition of the above equation yields the desired result.
\end{proof}

\subsection{Proof for Theorem \ref{thm-interval-mod}} \label{sec-proof-interval-mod}
\begin{proof}
We have found the unitary $U_I$ that maps $\mA_I$ to $\mA_{\mathbb{R}^+}$ in Lemma \ref{lemma-interval-map}. The modular flow of $\mA_I$ is $\Delta_I^{is} = U_I \Delta_{\mathbb{R}^+}^{is} U_I^\dagger$. We have proved in appendix  \ref{app:proof_lemma_psl2r_rep} that the actions form a representation of $\widetilde{PSL}(2,\mathbb{R})$. The composition of actions is similar to matrix multiplication but with an extra constant phase $e^{\pm i 2\pi\Delta \psi}$. $\Delta_I^{-is} = U_I \Delta_{\mathbb{R}^+}^{-is} U_I^\dagger$ corresponds to the matrix multiplication 
\begin{align}
    \begin{pmatrix}
         \frac{1}{\sqrt{q-p}}&\frac{-p}{\sqrt{q-p}}\\-\frac{1}{\sqrt{q-p}}&\frac{q}{\sqrt{q-p}}
    \end{pmatrix}^{-1}
    \begin{pmatrix}
        e^{\pi s} & 0 \\ 0 & e^{-\pi s} 
    \end{pmatrix}
    \begin{pmatrix}
         \frac{1}{\sqrt{q-p}}&\frac{-p}{\sqrt{q-p}}\\-\frac{1}{\sqrt{q-p}}&\frac{q}{\sqrt{q-p}}
    \end{pmatrix}=
    \frac{1}{q-p}
    \begin{pmatrix}
        {qe^{\pi s}-pe^{-\pi s}}& {qp(e^{-\pi s}-e^{\pi s})} \\ {e^{\pi s}-e^{-\pi s}}& {qe^{-\pi s}-pe^{\pi s}}
    \end{pmatrix}
\end{align}
with an extra phase 
\begin{align}
    \psi =& \, \sgn(e^{\pi s}-e^{-\pi s}) \Theta((e^{\pi s}-e^{-\pi s}) t-(qe^{\pi s}-pe^{-\pi s})) \nonumber\\
    &-\sgn(1) \Theta(t-q) - \text{sgn}(-1) \Theta\lb\tfrac{(e^{\pi s}-e^{-\pi s}) t-(qe^{\pi s}-pe^{-\pi s})}{q-t}\rb \, .
\end{align}
The phase is independent of $t$. We can put $t=0$ to simplify it. We have
\begin{align}
    \psi = \, \sgn(s) \Theta(-(qe^{\pi s}-pe^{-\pi s}))-\Theta(-q) + \Theta\lb\tfrac{-(qe^{\pi s}-pe^{-\pi s})}{q}\rb.
\end{align}
We only have to consider $4$ cases, which are when $qe^{\pi s}-pe^{-\pi s}< 0$ and when $q >0$. If $qe^{\pi s}-pe^{-\pi s}> 0$, $\psi = -\Theta(-q) + \Theta(-q) = 0$. If $qe^{\pi s}-pe^{-\pi s}< 0$, we have $\psi = \,\sgn(s) + \sgn(q)$. If $q>0$, $qe^{\pi s}-pe^{-\pi s}< 0$ implies $s<0$, and vice versa. Thus, there is no extra phase in this case. Therefore, $\Delta_I^{is}$ has action
\begin{eqnarray}
       &&\Delta_I^{-is}\varphi(f)\Delta_I^{is} = \varphi(f')\ , \nn\\
        &&f_\pm'(t) = \left[\left(\tfrac{(q-t)e^{\pi s}+(t-p)e^{-\pi s}}{q-p} \mp\tfrac{e^{\pi s}-e^{-\pi s}}{q-p}i\epsilon\right)^{2\Delta-2} f\left(\tfrac{(q-t)p\,e^{\pi s} + (t-p)q\,e^{-\pi s}}{(q-t)e^{\pi s} + (t-p)e^{-\pi s}}\right)\right]_\pm.
    \end{eqnarray}
\end{proof}

\subsection{Proof for Corollary \ref{thm-J_interval}}\label{app:proof-thm-J_interval}
\begin{proof}
Consider the unitary $U_I$ corresponding to the conformal map
\begin{eqnarray}
    U_I^\dagger \mA_{(p,q)} U_I = \mA_{\mbR^+}\ .
\end{eqnarray}
It's action on Weyl operators is $U_I^\dagger W(f) U_I = W(f')$ where
\begin{eqnarray}
    (P_\pm f')(t) = P_\pm \lb \left|\frac{t + 1}{\sqrt{q-p}} \right|^{2\Delta-2} e^{\pm i 2\pi\Delta\Theta(-t-1)} f\left(\frac{qt+p}{t+1}\right) \rb\ .
\end{eqnarray}
The inverse action on Weyl operators is $U W(f') U^\dagger = W(f)$ where
\begin{eqnarray}
    (P_\pm f)(t) = P_\pm \lb \left|\frac{q-t}{\sqrt{q-p}} \right|^{2\Delta-2} e^{\mp i 2\pi\Delta\Theta(t-q)} f'\left(\frac{t-p}{q-t}\right)  \rb\ .
\end{eqnarray}
Using Proposition \ref{prop-unitary-related-algebra}, the modular conjugation operator corresponding to the time interval algebra $\mA_{(p,q)}$ is
\begin{eqnarray}\label{unitary_rot_modular_conj}
    J_I = U_I J_\mbR^+ U_I^\dagger = - U_I \mT_\Delta^\dagger \mR U_I^\dagger = - \mT_\Delta^\dagger U_I \mR U_I^\dagger 
\end{eqnarray}
where we have used that the GHT commutes with $U_I$. The action of $U_I \mR U_I^\dagger$ on the Weyl operators can be decomposed as
\begin{eqnarray}\label{interval_reflection}
    f(t) &\xrightarrow{U_I^\dagger}& f'(t)= \sum_\pm \left[ \left|\frac{t + 1}{\sqrt{q-p}} \right|^{2\Delta-2} e^{\pm i 2\pi\Delta\Theta(-t-1)} f\left(\frac{qt+p}{t+1}\right) \right]_\pm \nn \\
    &\xrightarrow{\mR}&  f'_R(t)= \sum_\pm \left[ \left|\frac{1-t}{\sqrt{q-p}} \right|^{2\Delta-2} e^{\mp i 2\pi\Delta\Theta(t-1)} f\left(\frac{p-qt}{1-t}\right) \right]_\pm  \nn \\
    &\xrightarrow{U_I}&  f''(t) = \sum_\pm \left[ \left|\frac{p+q-2t}{q-p} \right|^{2\Delta-2} e^{\mp i 2\pi\Delta\Theta(2t- (p+q))} f\left(\frac{2pq-(p+q)t}{p+q-2t}\right) \right]_\pm 
\end{eqnarray}
where in the second line we have used $P_\pm \circ R = R \circ P_\mp$ for the reflection map $R$ and in the third line we used $\Theta(t-q)+\Theta(\frac{2t-(p+q)}{q-t}) = \Theta(2t- (p+q))$.
Note that in \eqref{interval_reflection}, for $2t < p+q$, there is no phase implying $f''(t) = R_I[f](t)$. However for $2t >p+q$, there is a phase of $e^{\mp i2\pi\Delta}$ corresponding to the action of $\mT_\Delta^2$. Combining this with $-\mT_\Delta^\dagger$ in \eqref{unitary_rot_modular_conj} completes the proof. This matches with our previous result for the commutant of $\mA_{(p,q)}$ in Theorem \ref{thm-comm-I-general}.
\end{proof}

\subsection{Proof for Corollary \ref{corr-inv-in-terms-of-Js}} \label{app-proof-inv-corr}
\begin{proof}
 Due to Proposition \ref{prop-unitary-related-algebra} and Theorem \ref{thm-J_half-line}, we get
    \begin{eqnarray}
        J_{I}J_{\mathbb{R}^+}  &=&  U_IJ_{\mathbb{R}^+} U_I^\dagger J_{\mathbb{R}^+}  \nn\\
        &=&  U_I \mT_{\Delta}^\dagger \mR U_I^\dagger \mT_{\Delta}^\dagger \mR 
    \end{eqnarray}
    where $U_I$ maps the interval the algebra $\mA_{(-1,1)}$ to $\mA_{\mathbb{R}^{+}}$. Now using \eqref{eq-comm-T-R} and the fact that GHT commutes with $U_I$, we find that $\mT_{\Delta}$ drops out from the above equation and we get
    \begin{eqnarray}
        J_{I}J_{\mathbb{R}^+} =  U_I  \mR U_I^\dagger \mR \ .
    \end{eqnarray}
    Therefore, we get
    \begin{eqnarray}
        J_{I}J_{\mathbb{R}^+} W(f) (J_{I}J_{\mathbb{R}^+})^\dagger  =  U_I  \mR U_I^\dagger \mR  W(f) \mR U_I  \mR U_I^\dagger  \ .\label{eq-inver-corr-int-1}
    \end{eqnarray}
    We now gather some results that will allow us to simplify the above equation. First, by taking $q=-p=1$ in \eqref{eq-U-act}, we find that
    \begin{eqnarray}
        U_I^\dagger W(f) U_I =  W(f') 
    \end{eqnarray}
    where
    \begin{eqnarray}
                (P_{\pm}f')(t) = P_{\pm} \left[ \left|\frac{t+1}{\sqrt{2}}\right|^{2(\Delta-1)} e^{\pm i2\pi \Delta \Theta(-t-1)} (P_{\pm} f)\left(\frac{t-1}{t+1}\right)  \right] \ .
    \end{eqnarray}
    Similarly, we can find the inverse map so that
    \begin{eqnarray}
        U_I W(f) U_I^\dagger =  W(f')
    \end{eqnarray}
    where
    \begin{eqnarray}
                (P_{\pm}f')(t) = P_{\pm} \left[ \left|\frac{t-1}{\sqrt{2}}\right|^{2(\Delta-1)} e^{\mp i2\pi \Delta \Theta(t-1)} (P_{\pm} f)\left(\frac{t+1}{1-t}\right)  \right] \ .
    \end{eqnarray}
    Moreover, the reflection operator acts as
    \begin{eqnarray}
        \mR W(f) \mR &=& W(f')  \nn \\
        f'(t) = R[f](t) &=& f(-t)  
    \end{eqnarray}
    where $R$ is the reflection map. Note that $R$ satisfies the following commutation relation with the projectors $P_{\pm}$:
    \begin{eqnarray}
        R \circ P_{\pm} = P_{\mp} \circ R \ .
    \end{eqnarray}
    By combining these results, we can write \eqref{eq-inver-corr-int-1} as
    \begin{eqnarray}
        J_{I}J_{\mathbb{R}^+} W(f) (J_{I}J_{\mathbb{R}^+})^\dagger  =  W(f') 
    \end{eqnarray}
    where
    \begin{eqnarray}
        (P_{\pm} f')(t)  =  P_{\pm} \left[ |t|^{2(\Delta-1)} e^{\mp 2\pi i \Delta \Theta(t)}(P_{\pm}f)(-1/t) \right] \  .
    \end{eqnarray}
    This precisely matches the inversion map in \eqref{eq-inv-map}. 
\end{proof}

\section{Geometry of AdS$_2$}\label{app:ads_geometry}
In the embedding coordinates, AdS$_2$ is the surface
\begin{eqnarray}
  {\bf g}_{AB} X^AX^B=-(X^{-1})^2-(X^0)^2+(X^1)^2=-1
\end{eqnarray}
where ${\bf g}_{AB} =\text{diag}(-1,-1,1)$. The isometry group is $O(1,2)$ with four disconnected components; $\pi_0(O(1,2))=\mathbb{Z}_2\times \mathbb{Z}_2$. Its fundamental group is $\mathbb{Z}$. 
\paragraph{Global coordinates:}
The global coordinates are
\begin{eqnarray}
    &&X^{-1}=\cosh\mu \cos\tau, \qquad X^0=\cosh\mu \sin\tau, \qquad X^1=\sinh\mu \nn \\
    && ds^2 = -\cosh^2\mu\, d\tau^2 + d\mu^2
\end{eqnarray}
where $\tau \in \mbR$ and $\mu\in (-\infty,\infty)$. Using the change of variables $\sinh\mu=\tan\rho$ makes it clear that AdS$_2$ with the metric
\begin{eqnarray}
&&X^{-1}=\frac{\cos\tau}{\cos\rho},\qquad X^0=\frac{\sin\tau}{\cos\rho},\qquad X^1=\tan\rho \nn \\
&&ds^2 = \frac{1}{\cos^2\rho} (-d\tau^2+d\rho^2)\ .
\end{eqnarray}
has two boundaries at $\rho=\pm \pi/2$. Thus AdS$_2$ is conformally flat and we have the Penrose diagram as shown in Figure \ref{fig:ads2_penrose} (a).

The connected piece of the isometry group is $SO^+(1,2)\simeq PSL(2,\mathbb{R})$ and the Lie-algebra generators are related by 
\begin{eqnarray}
Q^A=\frac{1}{2}\epsilon^{ABC}J_{BC},\qquad [Q^A,Q^B]=\epsilon^{ABC} {\bf g}_{CD}Q^D
\end{eqnarray}
The charge $Q^{1}$ corresponds to rotations in $(X^{-1},X^0)$-direction, and corresponds to time-translations in global coordinates $\p_\tau$. Similarly, $Q^0$ generates rotations in $(X^{-1},X^1)$-plane and $Q^{-1}$ generates a rotation in $(X^0,X^1)$-plane.
Since $Q^A$ generates a translation that fixes the vector $X^A$, for every vector $V^A$ we can define the charge $Q_V=V_A Q^A$ that fixes the vector $V^A$. The generators for the isometries can be written as \cite{lin2019symmetries}
\begin{eqnarray}\label{isometries_embedding}
    J_{-1,0} =& i \lb -X^{-1} \p_{X^0} + X^0 \p_{X^{-1}}  \rb &=- i \p_\tau\nn \\
    J_{-1,1} =& -i(X^{-1} \p_{X^1}+X^1 \p_{X^{-1}} ) &=  i \lb\sin\tau \sin\rho \,\partial_\tau -\cos\tau\cos\rho \, \partial_\rho  \rb  \nn \\
    J_{0,1} =& -i(X^0 \p_{X^1} + X^1 \p_{X^0})  &=-i\lb\cos\tau \sin\rho \p_\tau+ \sin\tau \cos\rho \p_\rho  \rb \ . 
\end{eqnarray}
The universal cover of AdS$_2$ decompactifies the time-like circle and a rotation by $J_{-1,0}$ generates a translation $\tau\to \tau+2\pi$ in global coordinates. 
Another commonly used form of the global coordinates is to substitute $r = \sinh\mu$ with the metric
\begin{eqnarray}\label{ads2-schwarzchild}
    ds^2 = -(1+r^2)d\tau^2 + \frac{1}{1+r^2} dr^2\ .
\end{eqnarray}

\begin{figure}[t]
   \centering
    \includegraphics[width=\textwidth]{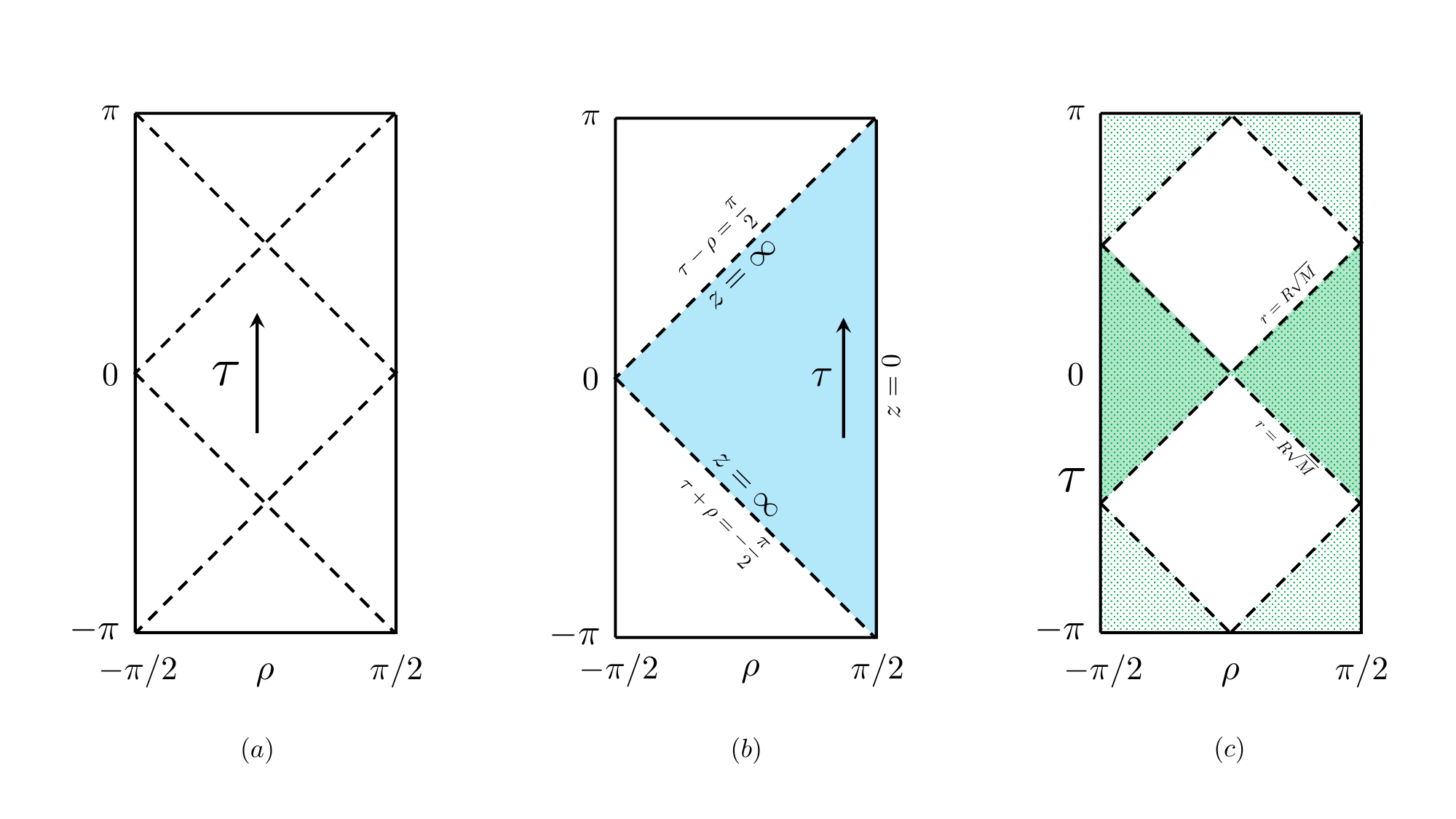}
    \caption{\small (a) Penrose diagram for AdS$_2$ spacetime in global coordinates. AdS$_2$ is conformally flat with light rays along 45$^\circ$ angles. (b) The Poincar\'e patch of AdS$_2$ is shaded in blue. (c) The AdS$_2$ Rindler region is shaded in green. }
    \label{fig:ads2_penrose}
\end{figure}

\paragraph{Poincar\'e coordinates:}
\begin{figure}[ht]
    \centering
    \includegraphics[width=.4\textwidth]{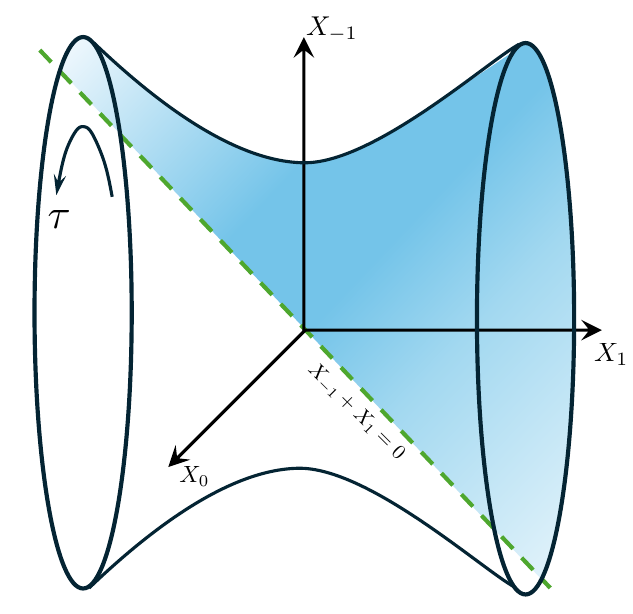}
    \caption{\small The hyperboloid corresponding to AdS$_2$ in embedding coordinates. The blue shaded region corresponds to $X_{-1}+X_1>0$ describing the Poincar\'e patch.}
    \label{fig:poincare_patch}
\end{figure}
We can describe the Poincar\'e patch using the Poincar\'e coordinates
\begin{eqnarray}\label{poincare_coord_ads2}
    &&X^{-1} = \frac{1-t^2+z^2}{2z},\qquad X^0 = \frac{t}{z},\qquad X^1= \frac{1+t^2-z^2}{2z} \nn \\
    &&ds^2 = \frac{-dt^2+dz^2}{z^2}\ ,\qquad t\in \mbR , \qquad z\geq 0\ .
\end{eqnarray}
The Poincar\'e coordinates can be inverted to give
\begin{eqnarray}\label{poincare_to_global}
    t = \frac{X^0}{X^{-1}+X^1}= \frac{\sin\tau}{\cos\tau+\sin\rho},\qquad z = \frac{1}{X^{-1}+X^1} = \frac{\cos\rho}{\cos\tau+\sin\rho}\ .
\end{eqnarray}
The Poincar\'e patch covers half of the hyperboloid with $X_{-1}+X_1>0$, see Figure \ref{fig:poincare_patch}. In terms of the global coordinates, this translates to $\cos\tau + \sin\rho >0$. Thus the Poincar\'e patch covers the region (see Figure \ref{fig:ads2_penrose} (b))
\begin{eqnarray}
    -\frac{\pi}{2}&<&\tau+\rho < \frac{3\pi}{2} \nn\\
    -\frac{3\pi}{2} &<& \tau-\rho < \frac{\pi}{2} \ .
\end{eqnarray}
The local isometries of the Poincar\'e patch are the same as the global AdS$_2$ isometries given by the Killing vectors
\begin{eqnarray}\label{isometries_poincare}
    J_{-1,0}+ J_{0,1} &=& -i \,\p_t \ , \nn \\
    J_{-1,1} &=&  i \lb t \, \partial_t  +z \,\partial_z\rb , \nn \\
    J_{-1,0}- J_{0,1} &=&-i\lb (t^2+z^2) \,\p_t + 2t z \,\p_z \rb\ .
\end{eqnarray}
The Poincar\'e patch is also sometimes described using the change in coordinate $r  = \frac{1}{z}$ with the metric
\begin{eqnarray}\label{ads2_poincare_2}
    ds^2 = -r^2dt^2 + \frac{1}{r^2} dr^2\ , \qquad r\geq 0\ .
\end{eqnarray}
For future reference, we also explicitly write the global coordinates in terms of the Poincar\'e coordinates 
\begin{eqnarray}
    \tau = \tan^{-1}\lb\frac{2t}{1 - t^2 +z^2} \rb \ ,\qquad \rho = \tan^{-1}\lb \frac{1+t^2-z^2}{2z}\rb\ ,
\end{eqnarray}
where $\text{sgn}(\tau) = \text{sgn}(t/z)$.
On the boundary $z=0$, the boundary Poincar\'e time $t$ is related to the global time by
\begin{eqnarray}\label{bdry_poincare_time}
    t = \tan\lb\frac{\tau}{2} \rb\ .
\end{eqnarray}
\begin{figure}[t]
    \centering
    \includegraphics[height=0.6\linewidth]{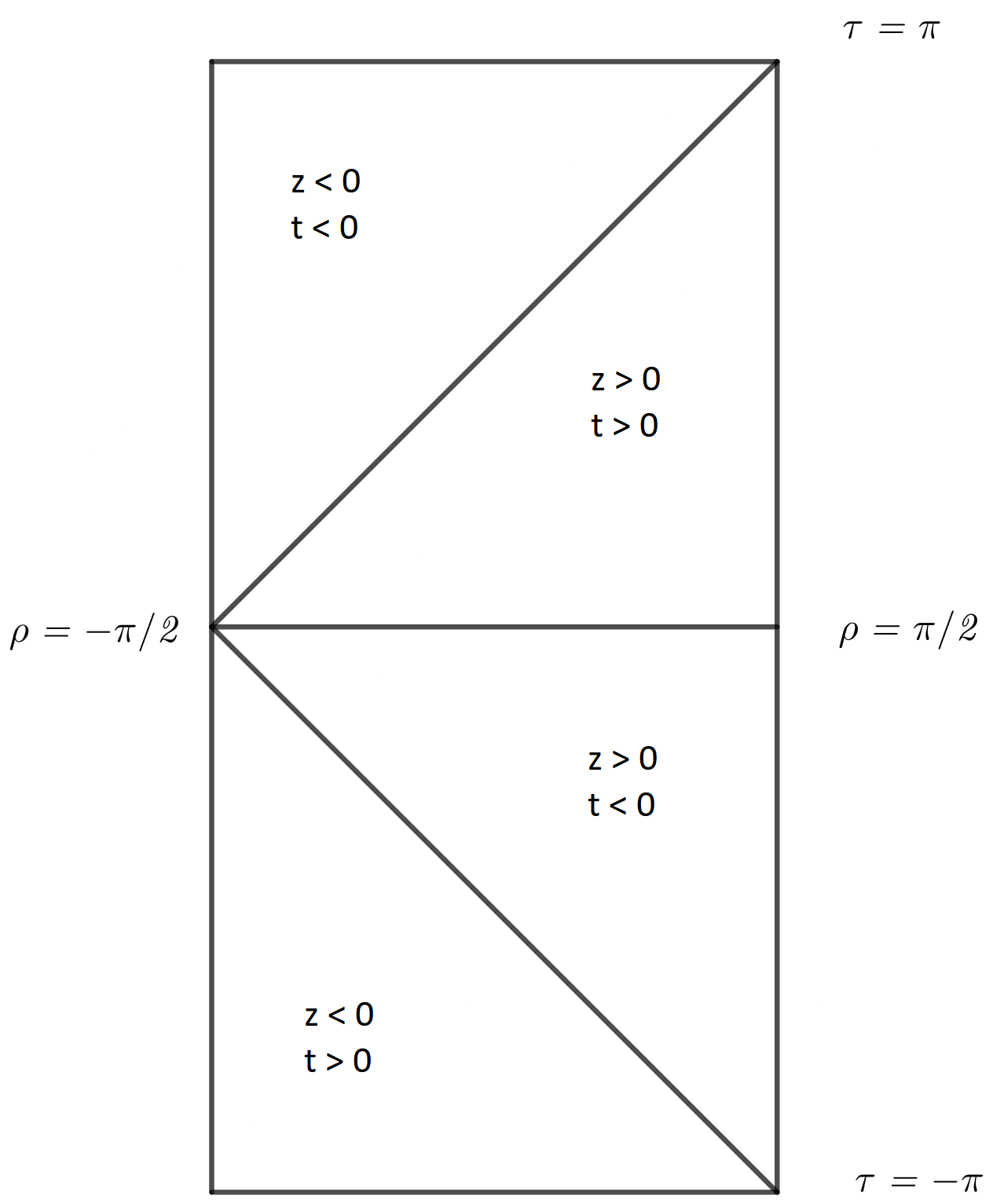}
    \caption{\small{The coordinate transformation is only valid for $z>0$. For $z<0$, it is still a bijective change of variable that allows us to label the points in the region.}}
    \label{fig:poincare}
\end{figure}
The expression in \eqref{poincare_to_global} is a valid coordinate transformation only in the Poincar\'e patch. However, it is still a well-defined change of variables and a bijective map for the region $\tau \in (-\pi,\pi)$. The $\tan^{-1}$ function gives two solutions if we neglect the sign. In fact, we can use the negative $z$ to label the region outside the Poincar\'e patch, see Figure \ref{fig:poincare}. To be explicit, we have
\begin{align}
    -\frac{\cos\rho}{\cos\tau+\sin\rho} = \frac{\cos(-\rho)}{\cos(\tau\pm\pi)+\sin(-\rho)}\nn\\
    \frac{\sin\tau}{\cos\tau+\sin\rho} = \frac{\sin(\tau\pm\pi)}{\cos(\tau\pm\pi)+\sin(-\rho)}\nn\\
    (t,z) \leftrightarrow (t,-z) \Leftrightarrow (\tau,\rho) \leftrightarrow (\tau\pm\pi, -\rho)\ .
\end{align}

\paragraph{Kruskal coordinates:} In Section \ref{sec:Twisted}, we have used the Kruskal coordinate system, defined as
\begin{eqnarray}
    u = \tan \frac{\tau+\rho}{2} \ , \qquad v = \tan\frac{\tau-\rho}{2} \ .
\end{eqnarray}
In this coordinate system, the metric has the form
\begin{eqnarray}
    ds^2 = \frac{-4 dudv}{(1+uv)^2} 
\end{eqnarray}
whereas, the generators of isometries are
\begin{eqnarray}
    J_{0,1} &=& -i(u\partial_{u} - v\partial_{v}) \nn\\
    J_{-1,0}-J_{-1,1} &=& -i (u^2 \partial_u + \partial_v) \nn\\
    J_{-1,0}+J_{-1,1} &=& -i ( \partial_u + v^2 \partial_v)  \ .
\end{eqnarray}

\paragraph{AdS$_2$ Rindler coordinates:}
The Rindler patch in AdS$_2$ is described by the coordinate transformation
\begin{eqnarray}
    X_{-1} = \frac{r}{\sqrt{M}},\quad X_0 = \sqrt{\frac{r^2}{M}-1} \sinh\sqrt{M}t_R, \quad X_1 = \sqrt{\frac{r^2}{M}-1} \cosh\sqrt{M}t_R 
\end{eqnarray}
with $r>\sqrt{M}$ and $t\in(-\infty,\infty)$. The metric simplifies to
\begin{eqnarray}\label{rindler-ads2}
    ds^2 = - \lb r^2 - M \rb dt_R^2 + \frac{1}{\lb r^2 - M \rb } dr^2 \ .
\end{eqnarray}
The global coordinates can be related to the Rindler coordinates as
\begin{eqnarray}
    \tanh\sqrt{M}t_R = \frac{\sin \tau}{\sin\rho},\qquad \frac{r}{\sqrt{M}} = \frac{\cos\tau}{\cos\rho}\ .
\end{eqnarray}
The AdS$_2$ Rindler patch covers the region where $\cos\tau > \cos\rho$ which is the shaded region in Figure \ref{fig:ads2_penrose} (c) and the horizons are labeled by $r=\sqrt{M}$. The Killing isometries of AdS$_2$ Rindler are
\begin{eqnarray}\label{isometries_rindler}
    J_{-1,0} &=&- i \lb \frac{r  \cosh\lb\sqrt{M}t_R \rb}{\sqrt{M}\sqrt{r^2-M}} \, \p_{t_R} - \sqrt{r^2-M} \sinh\lb\sqrt{M}t_R\rb \, \p_r \rb , \nn \\
    J_{-1,1} &=&  i \lb \frac{r \sinh\lb\sqrt{M}t_R \rb}{\sqrt{M}\sqrt{r^2-M}} \, \p_{t_R} - \sqrt{r^2-M} \cosh\lb\sqrt{M}t_R\rb \, \p_r \rb , \nn \\
    J_{0,1} &=&  -\frac{i}{\sqrt{M}}\,\p_{t_R} \ .
\end{eqnarray}
Another commonly used form for the Rindler AdS$_2$ is the metric
\begin{eqnarray}
    &&X_{-1} = \sqrt{r'^2 + 1},\qquad X_0 = r' \sinh t_R', \qquad X_1 = r' \cosh t_R' \nn \\
    &&ds^2 = - r'^2 dt_R'^2 + \frac{1}{1+ r'^2} dr'^2 \ , \qquad r'\geq 0 , \qquad t_R'\in \mbR\ .
\end{eqnarray}
The global coordinates can be related to them as
\begin{eqnarray}
    \tanh t_R' = \frac{\sin \tau}{\sin\rho},\qquad r' = \frac{1}{\cos\rho}\sqrt{\sin^2\rho - \sin^2 \tau}\ .
\end{eqnarray}
It covers the region where $\sin^2\rho > \sin^2\tau$ which is the dotted region in Figure \ref{fig:ads2_penrose} (c). Notably, this includes two copies of the Rindler wedge and the horizons are now labelled by $r'=0$.

\subsubsection*{Dilatation of forward lightcone in Poincar\'e AdS$_2$}

\begin{figure}
    \centering
    \includegraphics[width = 0.8\textwidth]{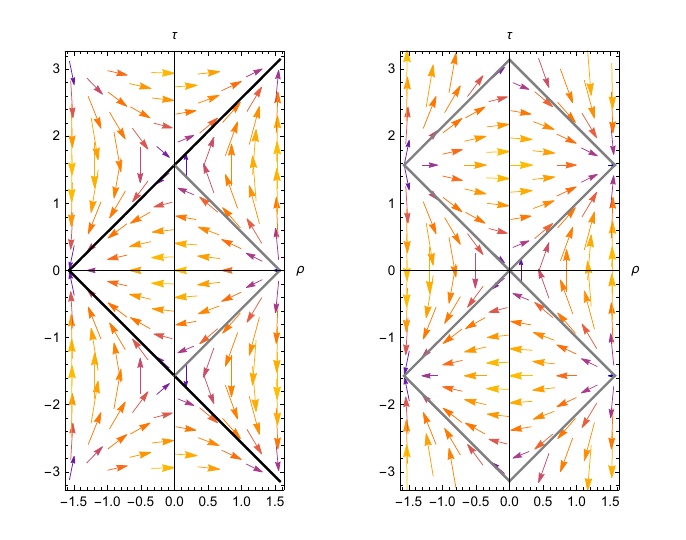}
    \caption{\small Vector plot for the generators of the AdS$_2$ isometries $J_{-1,1}$ (left) and $J_{1,0}$ (right).}
    \label{fig:ads2_poincare_modflow}
\end{figure}
We are interested in the Killing vector $J_{-1,1}$ that preserves the Poincar\'e patch, see Figure \ref{fig:ads2_poincare_modflow}. On the boundary $\rho=\frac{\pi}{2}$, the field becomes $J_{-1,1}|_{\rho=\frac{\pi}{2}} = i\sin\tau \,\p_\tau$. The integral curve is
\begin{eqnarray}\label{ads2_future_modflow}
    \tau(s) = 2 \tan^{-1}\lb e^s \tan\lb\frac{\tau(0)}{2}\rb  \rb\ .
\end{eqnarray}
Using \eqref{bdry_poincare_time} for the boundary Poincar\'e time, we see that the flow is a dilatation on the boundary
\begin{eqnarray}
    t(s) = e^s\, t(0)\ .
\end{eqnarray}
The full set of integral curves can be obtained by solving the set of coupled PDEs
\begin{eqnarray}
    &&\frac{d\tau(s)}{ds} = \sin(\rho(s)) \sin(\tau(s)) \ ,\qquad \frac{d\rho(s)}{ds} = -\cos(\rho(s)) \cos(\tau(s)) \nn \\
    &&\implies  \frac{d\tau}{d\rho} = -\tan\rho \tan\tau\ .
\end{eqnarray}
The solutions are of the form
\begin{eqnarray}
    \gamma \sin\tau = \cos\rho \ , \qquad \gamma \in\mbR
\end{eqnarray}
where $ \gamma $ labels the integral curves in the bulk, see Figure \ref{fig:ads2_integral_curves}. In the region corresponding to the future lightcone for $0<\gamma<1$, we can now decouple the PDEs to get
\begin{figure}[t]
    \centering
    \includegraphics[width=.75\textwidth]{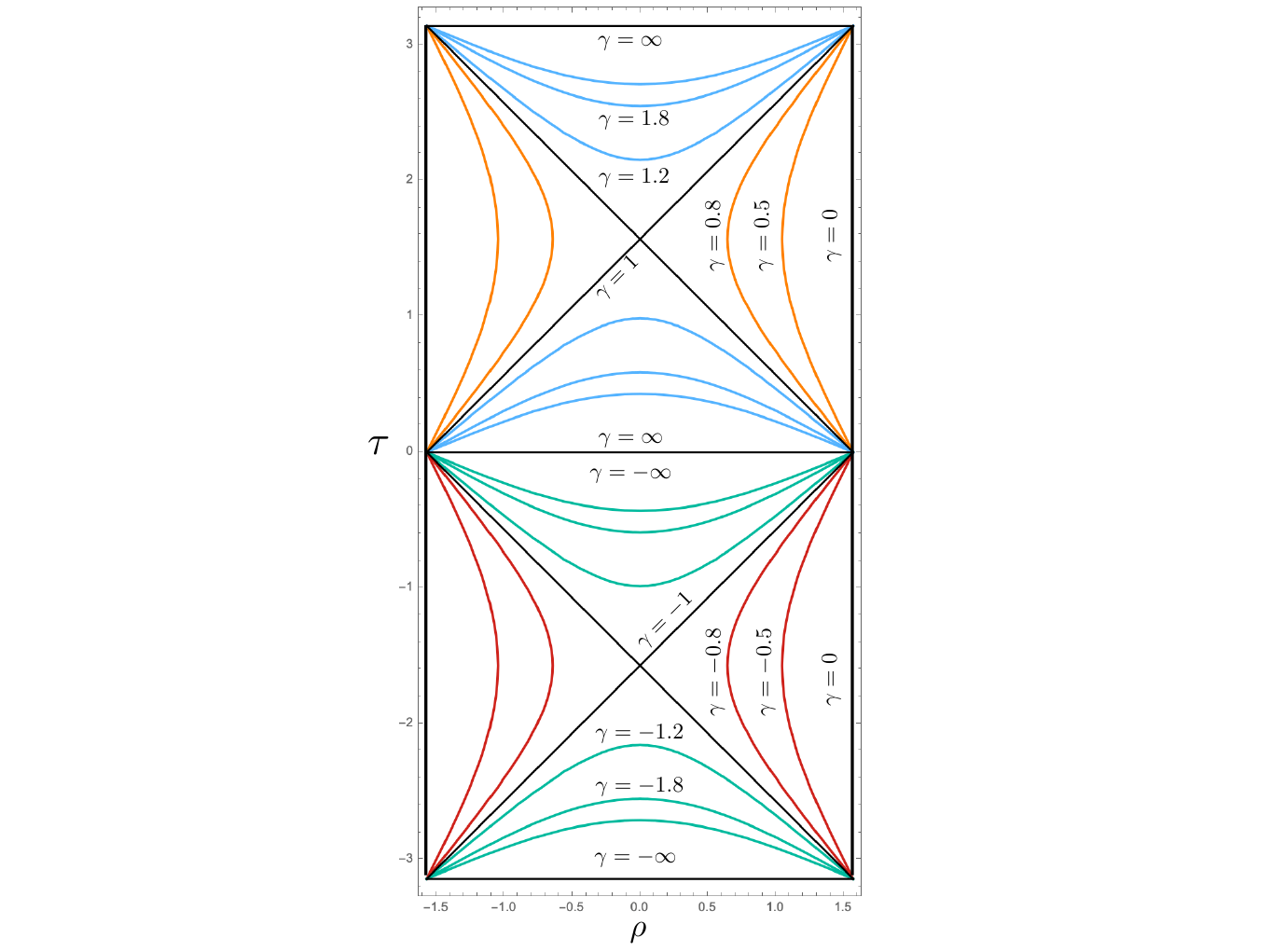}
    \caption{\small Integral curves for the modular flow for future lightcone ($0<\gamma<1$) in Poincar\'e AdS$_2$.}
    \label{fig:ads2_integral_curves}
\end{figure}
\begin{eqnarray}
    \frac{d\tau(s)}{ds} = \sin\tau\,\sqrt{1-\gamma^2 \sin^2\tau} \ .%\ ,\qquad \frac{d\rho(s)}{ds} = - \cos\rho \,\sqrt{1- \gamma^{-2}\cos^2\rho}\ .
\end{eqnarray}
The integral curves are then given by
\begin{eqnarray}
    e^{-isJ_{-1,1}} \tau = \tau(s) &=& -\tan^{-1}\lb \frac{1}{\sqrt{1-\gamma^2} \sinh(s+c_1)}\rb\ ,
     \nn \\
    c_1 &=& \sinh^{-1}\lb\frac{-1}{\sqrt{1-\gamma^2}\tan\tau(0) } \rb 
\end{eqnarray}
with the range for $-\cot^{-1}$ being restricted to $[0,\pi]$. In the limit $\gamma \to 0$, we recover the integral curve on the boundary in \eqref{ads2_future_modflow}.
In terms of the Poincar\'e coordinates, evaluating the integral curves is more straightforward. Using \eqref{isometries_poincare}
\begin{eqnarray}
    \frac{dt(s)}{ds} = t(s), \qquad \frac{dz(s)}{ds} = z(s)\ .
\end{eqnarray}
The solutions are the familiar dilatations
\begin{eqnarray}\label{poincare_dilatation}
t(s) = e^s t(0),\qquad z(s) = e^s z(0)\ .    
\end{eqnarray}

\section{Massive free fields in Poincar\'e AdS$_2$}\label{AdS2massive}

Consider the vacuum state of a massive free scalar field theory in AdS$_2$ spacetime. We expect it to be diffeomorphism invariant under the bulk AdS$_2$ isometries in Appendix \ref{app:ads_geometry}. Let us focus on the theory on the Poincar\'e patch. The equations of motion in Poincar\'e coordinates are
\begin{eqnarray}
    &&(\Box -m^{2}) \varphi(t,z) = 0 \label{eq-scalar-equation} \nn \\
    &&\left(-\partial_{t}^{2} + \partial_{z}^{2} - \frac{m^2}{z^{2}} \right) \varphi(t,z) = 0 \ .\label{eq-scalar-equation-2}
\end{eqnarray}
Consider a propagating wave solution with positive energy mode given by
\begin{eqnarray}\label{eq-pos-energy-mode}
    &&\varphi_{\omega}(t,z) = \frac{1}{\sqrt{2\omega}}  \chi_{\omega}(z)  e^{-i\omega t} \nn \\ 
    &&\chi_{\omega}(z) = \sqrt{\omega z}  J_{\nu}(\omega z) \ , \qquad \omega > 0   
\end{eqnarray}
with $\nu^{2} = m^2 + 1/4  $.
The solutions are normalized according to the Klein-Gordon inner product
\begin{eqnarray}
    \braket{ \varphi_1, \varphi_2 } = -i \int_{\Sigma}  d\Sigma \sqrt{|h|} n^{\mu}  \left[\varphi_1 \partial_{\mu} \varphi^\dagger_2 - (\partial_{\mu}\varphi_1) \varphi^\dagger_2 \right]  
\end{eqnarray}
where $\Sigma$ is a Cauchy slice, $h_{ij}$ is the induced metric on $\Sigma$ and $n^\mu$ is the (timelike) unit vector normal to $\Sigma $. In the case of Poincar\'e AdS$_2$ 
\begin{eqnarray}
\sqrt{|h|} =1/z \ , \qquad n^{\mu} = z \delta^{\mu t}\partial_{t}  \ .
\end{eqnarray} 
Therefore the inner product simplifies to 
\begin{eqnarray}
    \braket{ \varphi_1, \varphi_2 } = -i \int_{0}^{\infty}  dz \left[\varphi_1 \partial_{t} \varphi^\dagger_2 - (\partial_{t}\varphi_1) \varphi^\dagger_2 \right]\Big\vert_{t=0}  
\end{eqnarray}
where we have chosen $\Sigma$ as the $t=0$ slice.
The normalization of $\chi_{\omega}(z)$ \footnote{The Bessel function satisfies the closure equation for $\nu > -1/2$
\begin{eqnarray}
    \int_0^\infty dz\,z J_\nu(\omega z) J_\nu(\omega' z ) = \frac{1}{\omega} \delta(\omega - \omega')\ .
\end{eqnarray}}
\begin{eqnarray}
    \int_{0}^{\infty} dz  \chi_{\omega}(z)  \chi_{\omega'}(z) = \delta(\omega-\omega')  \label{eq-completeness-cond}
\end{eqnarray}
ensures that the positive energy modes in \eqref{eq-pos-energy-mode} satisfy
\begin{eqnarray}
    \braket{\varphi_{\omega},\varphi_{\omega'}} = \delta(\omega-\omega')  ,\quad\quad  \braket{\varphi^\dagger_{\omega},\varphi^\dagger_{\omega'}} = -\delta(\omega-\omega')  ,\quad\quad  \braket{\varphi_{\omega},\varphi^\dagger_{\omega'}} = 0 \ .\label{eq-ortho-modes}
\end{eqnarray}

Suppose we have a Killing vector field $V$. Since $\mL_V g_{\mu\nu} = 0$, we note that that the covariant scalar field equations of motion in \eqref{eq-scalar-equation} do not change under the transformation generated by $V$ and hence $\mL_V\varphi$ is also a solution to the equations of motion. However, this does not necessarily mean that the vacuum state is invariant. The vacuum state is defined as the state annihilated by all annihilation operators. Thus it is invariant if and only if all the positive energy modes transform among themselves. Then the negative energy modes also transform among themselves and annihilate the vacuum. More precisely, suppose that  we have a Bogoliubov transformation 
\begin{eqnarray}
    \mL_V\varphi_{\omega}(z,t) = \int_{0}^{\infty} d\omega'  \left[ \Gamma_V(\omega,\omega')  \varphi_{\omega'}(z,t)  +  \tilde{\Gamma}_V(\omega,\omega')  \varphi^\dagger_{\omega'}(z,t)\right]  
\end{eqnarray}
where $\Gamma_V(\omega,\omega')$ and $\tilde{\Gamma}_V(\omega,\omega')$ are the Bogoliubov coefficients given by
\begin{eqnarray}
    \Gamma_V(\omega,\omega') = \braket{ \mL_V\varphi_{\omega} , \varphi_{\omega'} } \ , \qquad     \tilde{\Gamma}_V(\omega,\omega') = -\braket{ \mL_V\varphi_{\omega} , \varphi^\dagger_{\omega'} }  \label{eq-bogo-KG}\ .
\end{eqnarray}
Then the vacuum state is invariant under the diffeomorphism if and only if $\tilde{\Gamma}_V(\omega,\omega') = 0$ for all  $\omega,\omega' > 0$.

We will now show this explicitly for the isometries of Poincar\'e AdS$_2$. Consider time translations generated by $H = J_{-1,0}+ J_{0,1} = -i \partial_{t}$ in \eqref{isometries_poincare}. Using \eqref{eq-pos-energy-mode}, we get
\begin{eqnarray}
    \mL_{H} \varphi_{\omega}(z,t) = -\omega  \varphi_{\omega}(z,t) \ .
\end{eqnarray}
This means that
\begin{eqnarray}
    \Gamma_{H}(\omega,\omega') = -\omega \delta(\omega-\omega') \ , \qquad \tilde{\Gamma}_{H}(\omega,\omega') = 0 \ .
\end{eqnarray}
Thus the vacuum is time translation invariant.
Now for dilatations generated by $J_{-1,1}$,
\begin{eqnarray}\label{eq-lie-D-int}
    \mL_{J_{-1,1}} \varphi_{\omega}(z,t) &=& i \lb t  \partial_{t} \varphi_{\omega}(z,t) + z \partial_{z} \varphi_{\omega}(z,t) \rb =\frac{i}{\sqrt{2\omega}}  e^{-i\omega t}  \left[ -i \omega t \chi_{\omega}(z) + z  \partial_{z} \chi_{\omega}(z)\right]  \nn \\
    &=&  \frac{i}{\sqrt{2\omega}}  e^{-i\omega t}  \left[ -i \omega t \chi_{\omega}(z) + \omega  \partial_{\omega} \chi_{\omega}(z)\right]  =  i\sqrt{\frac{\omega}{2} } \partial_{\omega} \left[ e^{-i\omega t} \chi_{\omega}(z) \right]  \nn\\
    &=& \sqrt{\omega} \partial_{\omega} \left[ \sqrt{\omega} \varphi_{\omega}(z,t) \right]  =  \frac{1}{2} \varphi_{\omega}(z,t)  +  \omega \partial_{\omega} \varphi_{\omega}(z,t) \ .
\end{eqnarray}
where we have used that $ z  \partial_{z} \chi_{\omega}(z) = \omega  \partial_{\omega} \chi_{\omega}(z)$. This implies that
\begin{eqnarray}
    \Gamma_{J_{-1,1}}(\omega,\omega') = \frac{1}{2} \delta(\omega-\omega') + \omega\partial_{\omega}\delta(\omega-\omega')  \ ,\qquad \tilde{\Gamma}_{J_{-1,1}}(\omega,\omega') = 0 \ .
\end{eqnarray}
Therefore, the vacuum is invariant under dilatations.
Finally, we consider the special conformal transformations $S = J_{-1,0}- J_{0,1}$. In this case, we will use the formula for $\tilde{\Gamma}_{S}(\omega,\omega')$ in terms of the Klein-Gordon inner product
\begin{eqnarray}
    \tilde{\Gamma}_{S}(\omega,\omega')  =  -\braket{ \mL_{S}\varphi_{\omega} , \varphi^\dagger_{\omega'}} = i  \int_{0}^{\infty}  dz \bigg[ \mL_{S}\varphi_{\omega}  \partial_{t} \varphi_{\omega'} - (\partial_{t}\mL_{S}\varphi_{\omega}) \varphi_{\omega'} \bigg]\Bigg\vert_{t=0} \ .\label{eq-Gamma-s-int}
\end{eqnarray}
We get
\begin{eqnarray}
    \mL_{S}\varphi_{\omega}(z,t) = \frac{-i}{\sqrt{2\omega}}  e^{-i\omega t}  \bigg[-i\omega(t^2 + z^2) \chi_{\omega}(z) + 2t z  \partial_{z} \chi_{\omega}(z) \bigg]  .
\end{eqnarray}
Therefore, 
\begin{eqnarray}
    \mL_{S}\varphi_{\omega}(z,t)\Big\vert_{t=0}= - \sqrt{\frac{\omega}{2}}  z^{2}  \chi_{\omega}(z)  ,
\end{eqnarray}
and
\begin{eqnarray}
    \partial_{t}\mL_{S}\varphi_{\omega}(z,t)\Big\vert_{t=0}= i \sqrt{\frac{\omega^{3}}{2}}  z^{2}  \chi_{\omega}(z)  -i \sqrt{\frac{2}{\omega}}  z \partial_{z} \chi_{\omega}(z)  .
\end{eqnarray}
With these identities, we simplify \eqref{eq-Gamma-s-int} as
\begin{eqnarray}
    \tilde{\Gamma}_{S}(\omega,\omega')  =   \int_{0}^{\infty}  dz \bigg[ \sqrt{\frac{\omega}{\omega'}}  \frac{\omega-\omega'}{2} z^{2}\chi_{\omega'}(Z)\chi_{\omega}(z)  -  \frac{1}{\sqrt{\omega\omega'}} \chi_{\omega'}(z)  z  \partial_{z} \chi_{\omega}(z) \bigg] .\label{eq-Gamma-s-2}
\end{eqnarray}
To further simplify this equation, we observe that the equation of motion in \eqref{eq-scalar-equation-2} implies
\begin{eqnarray}
    z^{2}\partial_{z}^{2}\chi_{\omega}(z)  - m^{2} \chi_{\omega}(z)  + \omega^{2}z^{2} \chi_{\omega}(z) = 0  .
\end{eqnarray}
With this equation, we write \eqref{eq-Gamma-s-2} as
\begin{eqnarray}
    \tilde{\Gamma}_{S}(\omega,\omega')  =   \int_{0}^{\infty}  dz \bigg[ \sqrt{\frac{\omega}{\omega'}}  \frac{\omega-\omega'}{2\omega^{2}} \chi_{\omega'}(z) (m^{2}-z^{2}\partial^{2}_{z})\chi_{\omega}(z)  -  \frac{\chi_{\omega'}(z)}{\sqrt{\omega\omega'}}   z  \partial_{z} \chi_{\omega}(z) \bigg] .
\end{eqnarray}
Now using 
\begin{eqnarray}
    z^{k}  \partial_{z}^{k}  \chi_{\omega}(z) = \omega^{k}  \partial_{\omega}^{k}  \chi_{\omega}(z)  
\end{eqnarray}
for $k \in \mathbb{N}$, we get
\begin{eqnarray}
    \tilde{\Gamma}_{S}(\omega,\omega')  &=&   \sqrt{\frac{\omega}{\omega'}}   \bigg[\frac{\omega-\omega'}{2\omega^{2}} m^{2}  -  \frac{\omega-\omega'}{2} \partial_{\omega}^{2} - \partial_{\omega} \bigg] \int_{0}^{\infty} dz \chi_{\omega'}(z)\chi_{\omega}(z)  \nn\\
    &=&  \sqrt{\frac{\omega}{\omega'}}   \bigg[\frac{\omega-\omega'}{2\omega^{2}} m^{2}  -  \frac{\omega-\omega'}{2} \partial_{\omega}^{2} - \partial_{\omega} \bigg]  \delta(\omega-\omega') 
\end{eqnarray}
where we have used the normalization condition in \eqref{eq-completeness-cond}. Now using the properties of the delta function, we get
\begin{eqnarray}
    &&\tilde{\Gamma}_{S}(\omega,\omega') = -  \sqrt{\frac{\omega}{\omega'}}   \bigg[ \frac{\omega-\omega'}{2}\partial_{\omega}^{2} + \partial_{\omega} \bigg]  \delta(\omega-\omega') \nn\\
    &=& - \frac{1}{2} \sqrt{\omega^{3}}  \partial_{\omega}^{2} \left[\frac{1}{\sqrt{\omega'}}\delta(\omega-\omega') \right]  + \frac{1}{2} \sqrt{\omega}  \partial_{\omega}^{2} \left[ \sqrt{\omega'} \delta(\omega-\omega')\right]  -  \sqrt{\omega}  \partial_{\omega} \left[\frac{1}{\sqrt{\omega'}} \delta(\omega-\omega')\right] \nn \\
    &=& - \frac{1}{2} \sqrt{\omega^{3}}  \partial_{\omega}^{2} \left[\frac{1}{\sqrt{\omega}}\delta(\omega-\omega') \right]  + \frac{1}{2} \sqrt{\omega}  \partial_{\omega}^{2} \left[ \sqrt{\omega} \delta(\omega-\omega')\right]  -  \sqrt{\omega}  \partial_{\omega} \left[\frac{1}{\sqrt{\omega}} \delta(\omega-\omega')\right] \nn\\
    &=&  -\frac{1}{2} \left[\frac{3}{4\omega} \delta(\omega-\omega') - \partial_{\omega}\delta(\omega-\omega') + \omega \partial_{\omega}^{2}\delta(\omega-\omega') \right]  -  \left[ -\frac{1}{2\omega} \delta(\omega-\omega') + \partial_{\omega}\delta(\omega-\omega') \right]  \nn\\ 
    && +  \frac{1}{2}  \left[-\frac{1}{4\omega} \delta(\omega-\omega') + \partial_{\omega}\delta(\omega-\omega') + \omega  \partial_{\omega}^{2}\delta(\omega-\omega') \right] = 0 \ .
\end{eqnarray}
Therefore, the vacuum is invariant under special conformal transformations. Thus we have shown that the vacuum of a massive free scalar field theory in Poincar\'e AdS$_2$ spacetime is invariant under its $PSL(2,\mbR)$ isometries.

\section{Half-sided modular inclusion and translation}\label{App:Hsmi}

\begin{definition}\label{HSMIdef}
   We say we have a future (past) half-sided modular inclusion, i.e. HSMI$+$ (HSMI$-$), if we have a proper inclusion of von Neumann algebras $\mA\subset \mR$ with a common cyclic and separating vector $\ket{\Omega}$ such that 
\begin{eqnarray}
\forall t>0 (t<0):\qquad \Delta_\mR^{-it}\mA\Delta_\mR^{it}\subset \mA
\end{eqnarray}
where $\Delta_\mR$ is the modular operator of $\mR$ in the state $\ket{\Omega}$. We call $\mA$ a modular future (past) subalgebra.
\end{definition}
Note that the $\Delta^{-it}$ convention is opposite to mathematics literature. Under this choice, the modular time is in the same direction as the physical time.  

\begin{theorem}[HSMI Theorem]\label{thm:hsmi}
    Suppose $\mA \subset \mR$ is HSMI$\pm$. Then,  
    \begin{eqnarray}
    G = \frac{1}{2\pi} \left( \log\Delta_{\mA}-\log\Delta_{\mR}\right)
    \end{eqnarray}
    is a positive operator and satisfies the following commutation relation
    \begin{eqnarray}
    \frac{1}{2\pi}[\log\Delta_{\mR}, G] = \pm i G\ .
    \end{eqnarray}
    Moreover, the unitary $U(a) = e^{iaG}$ satisfies following properties:
    \begin{eqnarray}
        && \Delta_{\mR}^{-it} U(a) \Delta_{\mR}^{it} = \Delta_{\mA}^{-it} U(a) \Delta_{\mA}^{it} = U(e^{\pm 2\pi t}a) \, , \label{eq-hsmi-1}\\
        && J_{\mR} U(a) J_{\mR} = J_{\mA} U(a) J_{\mA} = U(-a) \, , \label{eq-hsmi-2}\\
        && \mA = U(\pm 1) \mR U(\mp 1) \label{eq-hsmi-3}\, .
    \end{eqnarray}
\end{theorem}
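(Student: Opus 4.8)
The plan is to isolate the one genuinely analytic ingredient — the existence of a \emph{positive} generator intertwining the two modular groups — and then obtain every displayed identity by exponentiation and by matching Tomita operators. Throughout I treat the HSMI$+$ case; HSMI$-$ follows by $t\mapsto -t$. Write $K_\mR=\tfrac{1}{2\pi}\log\Delta_\mR$, $K_\mA=\tfrac{1}{2\pi}\log\Delta_\mA$, so $G=K_\mA-K_\mR$. The heart of the matter is to show that the one-sided inclusion $\Delta_\mR^{-it}\mA\Delta_\mR^{it}\subseteq\mA$ for $t\ge 0$ forces (a) $G\ge 0$ and (b) $\Delta_\mR^{-it}G\Delta_\mR^{it}=e^{2\pi t}G$, i.e.\ $[K_\mR,G]=iG$ on a common core.

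To establish (a) and (b) I would run the Borchers--Wiesbrock argument (in the streamlined form of \cite{borchers2000revolutionizing,wiesbrock1993symmetries}): for $x\in\mA$ the hypothesis makes $\Delta_\mR^{-it}x\Delta_\mR^{it}$ an element of $\mA$ for $t\ge 0$, so the KMS analyticity of the modular flow of $\mA$ may be combined with that of the modular flow of $\mR$ to analytically continue a suitable two-variable matrix element such as $\langle \eta,\ \Delta_\mR^{it}\,\Delta_\mA^{-is}\,\Delta_\mR^{-it}\,x\Omega\rangle$ into a tube domain; a three-lines/edge-of-the-wedge estimate then pins the $(s,t)$-dependence down to the $ax+b$ form, which is exactly (b), and because the analyticity strip lies on only one side it simultaneously yields the spectral positivity $G\ge 0$. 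The same analytic package also produces the reflection identity $J_\mR e^{iaG}J_\mR=e^{-iaG}$, and likewise with $\mA$ in place of $\mR$. I expect this step — in particular extracting positivity of $G$ from one-sidedness — to be the main obstacle; the remainder is essentially formal.

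Granting this, set $U(a)=e^{iaG}$. Since $\log\Delta_\mR\,\Omega=\log\Delta_\mA\,\Omega=0$ we have $G\Omega=0$, hence $U(a)\Omega=\Omega$ and $G$ is the claimed positive operator. From $[K_\mR,G]=iG$, exponentiation gives $\Delta_\mR^{-it}U(a)\Delta_\mR^{it}=U(e^{2\pi t}a)$, and since $\log\Delta_\mA=\log\Delta_\mR+2\pi G$ commutes with $G$ in the same way, also $\Delta_\mA^{-it}U(a)\Delta_\mA^{it}=U(e^{2\pi t}a)$; this is \eqref{eq-hsmi-1}, while \eqref{eq-hsmi-2} is the reflection part of the package above. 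For \eqref{eq-hsmi-3} I would check directly that $A(t):=\Delta_\mR^{it}\,U(e^{2\pi t}-1)$ is a one-parameter unitary group (a two-line computation using $U(a)\Delta_\mR^{it}=\Delta_\mR^{it}U(e^{2\pi t}a)$) whose Stone generator is $\log\Delta_\mR+2\pi G=\log\Delta_\mA$; hence $\Delta_\mA^{it}=U(1)\Delta_\mR^{it}U(-1)$, and together with the conjugation identity $J_\mA=U(1)J_\mR U(-1)$ (from \eqref{eq-hsmi-2} and $J_\mR U(-1)=U(1)J_\mR$) this gives $S_\mA=J_\mA\Delta_\mA^{1/2}=U(1)\,S_\mR\,U(-1)$. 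Since a von Neumann algebra with a cyclic and separating vector is recovered from its Tomita operator and $U(1)\Omega=\Omega$, we conclude $\mA=U(1)\mR U(-1)$, which is \eqref{eq-hsmi-3}; the $\mp$ signs throughout are the HSMI$-$ version obtained by reversing modular time. For completeness one should record that the monotone net $t\mapsto U(t)\mR U(-t)$ of which $\mA$ is the $t=1$ member is the standard output of this construction, cf.\ \cite{longoNotes}.
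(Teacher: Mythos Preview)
The paper does not give its own proof of this theorem; it simply records that ``the proofs are standard and can be found in \cite{borchers2000revolutionizing,araki2005extension,wiesbrock1993symmetries}.'' Your proposal is precisely a sketch of that standard Borchers--Wiesbrock argument, so the approaches coincide. One small caution: the step ``$J_\mA=U(1)J_\mR U(-1)$ (from \eqref{eq-hsmi-2} and $J_\mR U(-1)=U(1)J_\mR$)'' does not quite follow from those two facts alone, since \eqref{eq-hsmi-2} only says each of $J_\mR,J_\mA$ conjugates $U(a)$ to $U(-a)$; in the standard proofs this identity comes out of the same analytic package (or, equivalently, by uniqueness of the polar decomposition once you have pinned down $S_\mA=U(1)S_\mR U(-1)$ directly), so you should attribute it there rather than to \eqref{eq-hsmi-2}.
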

\begin{proof}
The proofs are standard and can be found in \cite{borchers2000revolutionizing,araki2005extension,wiesbrock1993symmetries}.
\end{proof}
\begin{corollary}
    If $\mA\subset\mR$ is HSMI$\pm$, then
    \begin{align}
        J_{\mR}J_{\mA} \, =& \, U(\mp 2) \, ,\label{eq-hsmi-4}\\
        \Delta_{\mR}^{it}\Delta_{\mA}^{-it} \, =& \, U(\mp 1 \pm e^{\mp 2\pi t}) \, .\label{eq-hsmi-5}
    \end{align}
\end{corollary}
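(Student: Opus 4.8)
The plan is to reduce both identities to the one-parameter unitary group $U(a)=e^{iaG}$ together with the conjugation rules already recorded in Theorem~\ref{thm:hsmi}, exploiting the fact that $\mA$ is $U(\pm1)$ conjugated to $\mR$. The first preliminary step I would carry out is to note that $U(a)$ fixes the cyclic and separating vector: since $\Delta_{\mR}\ket{\Omega}=\Delta_{\mA}\ket{\Omega}=\ket{\Omega}$, the vector $\ket{\Omega}$ lies in the domains of $\log\Delta_{\mR}$ and $\log\Delta_{\mA}$ and is killed by each, so $G\ket{\Omega}=\frac{1}{2\pi}\big(\log\Delta_{\mA}-\log\Delta_{\mR}\big)\ket{\Omega}=0$, whence $U(a)\ket{\Omega}=\ket{\Omega}$ for all $a\in\mbR$. (I would phrase it this way, via invariance under the two modular flows, precisely to avoid a naive subtraction of unbounded operators.)

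The second step is to rewrite \eqref{eq-hsmi-3} as $\mA = U(\mp1)^{\dagger}\,\mR\,U(\mp1)$ and invoke Proposition~\ref{prop-unitary-related-algebra} with the state-preserving unitary $U(\mp1)$, which yields
\begin{align}
\Delta_{\mA}^{-it} \,=\, U(\pm1)\,\Delta_{\mR}^{-it}\,U(\mp1)\ ,\qquad J_{\mA} \,=\, U(\pm1)\,J_{\mR}\,U(\mp1)\ ,
\end{align}
using that conjugation by a fixed unitary commutes with the functional calculus that defines $\Delta^{it}$.

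The third and final step is direct substitution. For the modular operators,
\begin{align}
\Delta_{\mR}^{it}\Delta_{\mA}^{-it} \,=\, \Delta_{\mR}^{it}\,U(\pm1)\,\Delta_{\mR}^{-it}\,U(\mp1) \,=\, U(\pm e^{\mp2\pi t})\,U(\mp1) \,=\, U(\mp1\pm e^{\mp2\pi t})\ ,
\end{align}
where the middle equality is \eqref{eq-hsmi-1} after $t\to-t$ and the last uses the group law for $U(a)$; for the modular conjugations,
\begin{align}
J_{\mR}J_{\mA} \,=\, J_{\mR}\,U(\pm1)\,J_{\mR}\,U(\mp1) \,=\, U(\mp1)\,U(\mp1) \,=\, U(\mp2)\ ,
\end{align}
using \eqref{eq-hsmi-2} for the middle equality. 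This establishes the corollary.

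There is no real obstacle here beyond bookkeeping: the one point that demands care is propagating the HSMI$+$ versus HSMI$-$ sign choice consistently through \eqref{eq-hsmi-1}--\eqref{eq-hsmi-3} and through Proposition~\ref{prop-unitary-related-algebra}. As consistency checks I would verify that at $t=0$ the first identity reduces to $U(0)=1$, and that for $U=1$ one recovers the familiar relation between the Borchers translations and $J_{\mR}J_{\mA}$, which matches \eqref{eq-hsmi-4} as used in the proof of Theorem~\ref{GenWiesbrockThm}.
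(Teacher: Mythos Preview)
Your proof is correct. The paper does not supply an explicit proof of this corollary---it is stated immediately after Theorem~\ref{thm:hsmi} as an immediate consequence---and your derivation is precisely the standard one: use \eqref{eq-hsmi-3} together with the state-invariance $U(a)\ket{\Omega}=\ket{\Omega}$ to transport the modular data via Proposition~\ref{prop-unitary-related-algebra}, then apply the conjugation rules \eqref{eq-hsmi-1} and \eqref{eq-hsmi-2}.
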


\begin{corollary} \label{kwing-corollary}
    If $\mA \subset \mR$ is HSMI$\pm$, then $\mR' \subset \mA'$ is HSMI$\mp$ and $\Delta_{\mA}^{-it}\mR\Delta_{\mA}^{it} \subset \mR$ for all $\mp t > 0$.
\end{corollary}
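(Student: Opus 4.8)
\textbf{Proof proposal for Corollary \ref{kwing-corollary}.}

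The plan is to deduce the statement from the HSMI Theorem (Theorem \ref{thm:hsmi}) and its first corollary by chasing the modular data through commutants. First I would treat the HSMI$+$ case; the HSMI$-$ case then follows by the obvious sign flips (or by applying the HSMI$+$ result to the time-reversed inclusion). So assume $\mA\subset\mR$ is HSMI$+$. Recall the standard facts that $\Delta_{\mR'}=\Delta_{\mR}^{-1}$ and $J_{\mR'}=J_{\mR}$, and likewise $\Delta_{\mA'}=\Delta_{\mA}^{-1}$, $J_{\mA'}=J_{\mA}$, all with respect to the common cyclic and separating vector $\ket{\Omega}$ (which is cyclic and separating for $\mR'$ and $\mA'$ as well). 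Also note that $\mA\subset\mR$ implies $\mR'\subset\mA'$, so we already have a proper inclusion of von Neumann algebras with a common cyclic and separating state; what must be checked is the half-sided condition.

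The key step is to show $\Delta_{\mA'}^{-it}\,\mR'\,\Delta_{\mA'}^{it}\subset\mR'$ for all $t<0$, i.e. that $\mR'\subset\mA'$ is HSMI$-$. Using $\Delta_{\mA'}=\Delta_{\mA}^{-1}$, this is equivalent to $\Delta_{\mA}^{it}\,\mR'\,\Delta_{\mA}^{-it}\subset\mR'$ for $t<0$, and taking commutants of both sides (which reverses inclusions and conjugates algebras, since $\Delta_{\mA}^{it}$ is unitary), it is equivalent to $\Delta_{\mA}^{it}\,\mR\,\Delta_{\mA}^{-it}\supset\mR$ for $t<0$, i.e. $\Delta_{\mA}^{-it}\,\mR\,\Delta_{\mA}^{it}\subset\mR$ for $t>0$. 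Thus the whole statement reduces to the single claim $\Delta_{\mA}^{-it}\mR\Delta_{\mA}^{it}\subset\mR$ for $t>0$ (for HSMI$+$), which is exactly the last assertion of the corollary with the sign appropriate to $+$. To prove this claim I would use the unitary $U(a)=e^{iaG}$ from Theorem \ref{thm:hsmi}. By \eqref{eq-hsmi-1} we have $\Delta_{\mA}^{-it}U(a)\Delta_{\mA}^{it}=U(e^{2\pi t}a)$, and by \eqref{eq-hsmi-3} we have $\mR=U(-1)\mA U(1)$. Therefore
\begin{eqnarray}
\Delta_{\mA}^{-it}\mR\,\Delta_{\mA}^{it}
&=& \Delta_{\mA}^{-it}U(-1)\mA\,U(1)\Delta_{\mA}^{it}\nn\\
&=& U(-e^{2\pi t})\,\big(\Delta_{\mA}^{-it}\mA\,\Delta_{\mA}^{it}\big)\,U(e^{2\pi t})\nn\\
&=& U(-e^{2\pi t})\,\mA\,U(e^{2\pi t})\ ,
\end{eqnarray}
where in the last line I used that $\Delta_{\mA}$ implements the modular automorphism of $\mA$, so $\mA$ is invariant. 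Now the half-sided translation theorem (Theorem \ref{thm:hsmi} / the half-sided translation statement in Appendix \ref{App:Hsmi}) says $U(b)\mR U(-b)\subset\mR$ for $b\le 0$; equivalently, writing $\mA=U(1)\mR U(-1)$, one gets $U(-e^{2\pi t})\mA\,U(e^{2\pi t})=U(1-e^{2\pi t})\mR\,U(e^{2\pi t}-1)\subset\mR$ precisely when $1-e^{2\pi t}\le 0$, i.e. $t\ge 0$. This gives $\Delta_{\mA}^{-it}\mR\Delta_{\mA}^{it}\subset\mR$ for all $t\ge 0$, as desired.

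Running the commutant argument of the previous paragraph backwards then yields $\mR'\subset\mA'$ is HSMI$-$, completing the HSMI$+$ case; the HSMI$-$ case is identical with $t\to-t$ and $U(a)\to U(-a)$ throughout, giving $\mR'\subset\mA'$ is HSMI$+$ and $\Delta_{\mA}^{-it}\mR\Delta_{\mA}^{it}\subset\mR$ for $t\le 0$. The main obstacle I anticipate is bookkeeping rather than conceptual: one must be careful with (i) the sign conventions, since this paper uses the $\Delta^{-it}$ convention opposite to the mathematics literature, and (ii) the direction-reversal when taking commutants, making sure the range of $t$ (or of the translation parameter $b=1-e^{2\pi t}$) comes out on the correct side of zero. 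A secondary point worth stating explicitly is why $\ket{\Omega}$ remains cyclic and separating for $\mR'$ and $\mA'$ — this is automatic since $\ket{\Omega}$ being cyclic and separating for a von Neumann algebra is equivalent to the same for its commutant — and why the inclusion $\mR'\subset\mA'$ is proper, which follows from $\mA\subsetneq\mR$ by taking commutants.
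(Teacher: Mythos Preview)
Your overall strategy is exactly right, and the paper states this corollary without proof, so there is nothing to compare against. However, the very bookkeeping you flagged as the main danger does bite: there are two sign errors which, as written, leave you asserting $\Delta_{\mA}^{-it}\mR\Delta_{\mA}^{it}\subset\mR$ for $t\ge 0$ in the HSMI$+$ case, whereas the corollary says $\mp t>0$, i.e.\ $t<0$.

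First, in the reduction step: from $\Delta_{\mA}^{it}\mR\Delta_{\mA}^{-it}\supset\mR$ for $t<0$ you cannot conclude $\Delta_{\mA}^{-it}\mR\Delta_{\mA}^{it}\subset\mR$ for $t>0$. Conjugating both sides by $\Delta_{\mA}^{-it}(\cdot)\Delta_{\mA}^{it}$ gives $\Delta_{\mA}^{-it}\mR\Delta_{\mA}^{it}\subset\mR$ for the \emph{same} range $t<0$ (relabeling $t\to-t$ instead would swap the exponents but preserve the $\supset$). So the target claim is $t<0$, matching the corollary.

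Second, the half-sided translation direction for HSMI$+$ is $U(b)\mR U(-b)\subset\mR$ for $b\ge 0$, not $b\le 0$: indeed $\mA=U(1)\mR U(-1)\subset\mR$ already gives the $b=1$ case, and this is how the translation is used elsewhere in the paper (e.g.\ the proof of Theorem~\ref{twistedModInt} uses $V(a)\mN V(-a)\subset\mN$ for $a\le 0$ in the HSMI$-$ case). With this correction your computation
\[
\Delta_{\mA}^{-it}\mR\Delta_{\mA}^{it}=U(1-e^{2\pi t})\,\mR\,U(e^{2\pi t}-1)
\]
lands inside $\mR$ exactly when $1-e^{2\pi t}\ge 0$, i.e.\ $t\le 0$. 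Both fixes flip $t\ge 0$ to $t\le 0$, after which your argument goes through cleanly.
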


The HSMI theorem above can be written in the language of ergodic hierarchy as 
\begin{corollary}
  Consider a von Neumann algebra in the standard GNS representation $\{\mH,\ket{\Omega},\mR\}$. If we have a modular future subalgebra $\mA\subset \mR$, then the positive operator \begin{eqnarray}
      G=\frac{1}{2\pi}(\log\Delta_\mA-\log\Delta_\mR)\geq 0
  \end{eqnarray}
  generates a unitary flow $U(s)=e^{is G}$ such that: 
  \begin{enumerate}
      \item {\bf Maximal Chaos:} The flow by $U(s)$ corresponds to a growing Anosov mode with a maximal Lyapunov exponent $\lambda=2\pi$, i.e. 
  \begin{eqnarray}
      \forall s, t \in \mathbb{R}:\qquad \Delta_\mR^{-it}e^{i s G}\Delta_\mR^{it}=\Delta_\mA^{-it}e^{i s G}\Delta_\mA^{it}=e^{i  e^{2\pi t}sG}\ .
  \end{eqnarray}
  \item {\bf Future algebra with $G>0$:} The algebra $\mR$ is a future algebra with respect to the dynamics $U(s)$, i.e.
  \begin{eqnarray}
      \forall s>0:\qquad U(s) \mR U(s)^\dagger\subset \mR\ .
  \end{eqnarray}
  \item  {\bf Quantum detailed balance:} We also have 
  \begin{eqnarray}\label{detailedassu}
       \forall s\in \mathbb{R}:\qquad J_\mR e^{i s G}J_\mR=J_\mA e^{i s G}J_\mA=e^{-i sG}\ .
  \end{eqnarray}
  \end{enumerate}
\end{corollary}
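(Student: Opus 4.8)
The plan is to observe that this corollary is simply a translation of the half-sided modular inclusion theorem, Theorem \ref{thm:hsmi}, into the language of the ergodic hierarchy, so the proof consists of unpacking that theorem rather than of any new computation. Since a modular future subalgebra is, by Definition \ref{HSMIdef}, exactly an HSMI$+$ inclusion $\mA\subset\mR$, I would apply Theorem \ref{thm:hsmi} with the upper sign throughout. The first step is to extract from that theorem that $G=\frac{1}{2\pi}(\log\Delta_\mA-\log\Delta_\mR)$ is a positive (hence self-adjoint) operator, so that $U(s)=e^{isG}$ is a strongly continuous one-parameter unitary group, and to record the commutation relation $\frac{1}{2\pi}[\log\Delta_\mR,G]=iG$.

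For items~1 and~3 there is nothing to do beyond substitution. Setting $a=s$ and $U(a)=e^{isG}$ in \eqref{eq-hsmi-1} gives $\Delta_\mR^{-it}e^{isG}\Delta_\mR^{it}=\Delta_\mA^{-it}e^{isG}\Delta_\mA^{it}=e^{ie^{2\pi t}sG}$, which is item~1 verbatim; the description of $U(s)$ as a maximally chaotic Anosov mode with $\lambda=2\pi$ is then just the reading that the flow parameter $s$ is dilated by $e^{2\pi t}$ per unit of modular time $t$. Likewise, evaluating \eqref{eq-hsmi-2} at $a=s$ gives $J_\mR e^{isG}J_\mR=J_\mA e^{isG}J_\mA=e^{-isG}$, which is the quantum detailed balance statement of item~3.

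The only point requiring a short argument is item~2, that $\mR$ is a future algebra for the flow $U(s)$. Here I would start from \eqref{eq-hsmi-3} in the form $U(1)\,\mR\,U(-1)=\mA\subset\mR$, and then, for any $s>0$, set $t=\frac{1}{2\pi}\log s$ and use item~1 to write $U(s)=\Delta_\mR^{-it}U(1)\Delta_\mR^{it}$. Since $G$ is self-adjoint, $U(s)^\dagger=U(-s)$, and since $\Delta_\mR^{it}$ implements the modular automorphism of $\mR$,
\[
U(s)\,\mR\,U(s)^\dagger=\Delta_\mR^{-it}\big(U(1)\,\mR\,U(-1)\big)\Delta_\mR^{it}=\Delta_\mR^{-it}\,\mA\,\Delta_\mR^{it}\subset\Delta_\mR^{-it}\,\mR\,\Delta_\mR^{it}=\mR\ ,
\]
which is exactly $U(s)\mR U(s)^\dagger\subset\mR$ for $s>0$; equivalently one can invoke the half-sided translation theorem stated in Appendix \ref{App:Hsmi} directly.

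The ``hard part'' here is mild: essentially all of the content sits inside Theorem \ref{thm:hsmi}, so the main thing to get right is the bookkeeping of signs --- matching the $\pm$ of Definition \ref{HSMIdef} to the signs in \eqref{eq-hsmi-1}--\eqref{eq-hsmi-3}, keeping in mind that this paper uses the $\Delta^{-it}$ convention (opposite to the mathematics literature, chosen so that modular and physical time run in the same direction), and verifying that the ``future'' direction $s>0$ in item~2 is indeed the one compatible with $G\ge 0$ and with the HSMI$+$ branch. Once the conventions are pinned down, every assertion of the corollary is a direct substitution into Theorem \ref{thm:hsmi}.
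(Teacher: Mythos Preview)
Your proposal is correct and matches the paper's treatment: the paper presents this corollary explicitly as ``the HSMI theorem above written in the language of ergodic hierarchy'' and gives no separate proof, so your reading that everything is a direct substitution into Theorem~\ref{thm:hsmi} (with a one-line argument for item~2 via \eqref{eq-hsmi-3} and the scaling relation, or equivalently via the half-sided translation theorem) is exactly what is intended. If anything, you have supplied more detail than the paper does.
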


\begin{theorem}[Half-sided translations]\label{thm:HST}
    Consider a dynamical flow $U(t)=e^{i Gt}$ that leaves the vacuum invariant $U(t)\ket{\Omega}=\ket{\Omega}$. If $\mA$ is a future algebra ($\mA_t\subset \mA$ for all $t>0$) then the following are equivalent:
    \begin{enumerate}
        \item The generator is positive: $G\geq 0$.
        \item The quantum detailed balance condition $\Delta^{1/2}U(t)=U(-t)\Delta^{1/2}$
        %\item the scaling $t\to e^{2\pi s} t$ is implemented by $V(s)=e^{i s D}$
        %\item The flow leaves the vacuum invariant $U(t)\ket{\Omega}=\ket{\Omega}$.
        \item The flow satisfies Borchers' relations $\Delta^{-is}U(t)\Delta^{is}=U(e^{2\pi s}t)$.
        \item The algebra $\mA_t$ is a modular future algebra of $\mA$.
    \end{enumerate}
\end{theorem}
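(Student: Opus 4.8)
\textbf{Proof proposal for Theorem \ref{thm:HST}.}

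The plan is to prove the cycle of implications $1 \Rightarrow 3 \Rightarrow 2 \Rightarrow 1$ together with the identification $3 \Leftrightarrow 4$, which is exactly the content of the half-sided modular inclusion/translation package established by Borchers and Wiesbrock. The cleanest route is to connect the abstract hypotheses here to Theorem \ref{thm:hsmi} (HSMI) and Definition \ref{HSMIdef}. First I would observe that since $\mA$ is a future algebra with $\mA_t = U(t)\mA U(-t) \subset \mA$ for all $t>0$, the pair $(\mA_1 \equiv U(1)\mA U(-1)) \subset \mA$ is a candidate half-sided inclusion, and the question is precisely whether it is \emph{modular} half-sided, i.e. whether $\Delta_\mA^{-is}\mA_t \Delta_\mA^{is}\subset \mA_t$ for $s$ of the right sign. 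So statements 3 and 4 are two faces of the same claim: Borchers' commutation relation $\Delta^{-is}U(t)\Delta^{is}=U(e^{2\pi s}t)$ is equivalent to $\Delta_\mA^{-is}$ acting by rescaling the translation flow, which is equivalent to $\mA_t$ being a modular future subalgebra in the sense of Definition \ref{HSMIdef}. I would package $3 \Leftrightarrow 4$ as a short argument using that $\Delta_\mA$ is the modular operator of $\mA$ and that $U$ moves $\mA$ into itself.

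Next, the core analytic input, $1 \Rightarrow 3$: this is Borchers' theorem. Assume $G\ge 0$ and $U(t)\ket\Omega = \ket\Omega$, with $\mA$ a future algebra. One considers, for $a\in\mA$ and $b\in\mA'$, the function $F(t,s) = \langle \Omega| b\, \Delta_\mA^{-is} U(t) a\, U(-t)\Delta_\mA^{is}|\Omega\rangle$ and its relatives; positivity of $G$ gives analyticity of $t\mapsto U(t)$ into a half-plane, while the KMS property of the modular flow gives analyticity in $s$ in a strip. A Hadamard three-line / edge-of-the-wedge argument then forces the commutation relation $\Delta_\mA^{-is}U(t)\Delta_\mA^{is}=U(e^{2\pi s}t)$, which is statement 3. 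Rather than reproduce this, I would cite \cite{borchers2000revolutionizing,araki2005extension} and indicate the structure. Then $3 \Rightarrow 2$ is algebraic: from $\Delta_\mA^{-is}U(t)\Delta_\mA^{is}=U(e^{2\pi s}t)$, set $s=-i/2$ by analytic continuation (justified because both sides extend analytically, again using $G\ge 0$ so that $U(e^{\pi}t)$ stays bounded in the relevant domain) to obtain $\Delta_\mA^{1/2}U(t)=U(-t)\Delta_\mA^{1/2}$, which is the quantum detailed balance condition 2. Finally $2 \Rightarrow 1$: the relation $\Delta_\mA^{1/2}U(t)=U(-t)\Delta_\mA^{1/2}$ together with self-adjointness and positivity of $\Delta_\mA$ implies, via a standard argument on $\langle\psi|U(t)|\psi\rangle$ using the spectral theorem and the fact that $\Delta_\mA^{it}U(t')\Delta_\mA^{-it}$ must remain a one-parameter unitary group with the Borchers scaling, that $G = -i\frac{d}{dt}U(t)|_{t=0}$ has non-negative spectrum; concretely one shows $U(t)$ extends to a bounded holomorphic semigroup on the upper half-plane, forcing $\mathrm{spec}(G)\subseteq[0,\infty)$.

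The step I expect to be the main obstacle is $1 \Rightarrow 3$ (Borchers' theorem proper): making the two-variable analyticity argument rigorous — controlling the domains on which $U(t)a\,U(-t)$ and $\Delta_\mA^{is}$ can simultaneously be continued, handling unboundedness of the generators, and invoking the edge-of-the-wedge / Bargmann–Hall–Wightman type reasoning. Since this is a classical result, my intention is to state the chain of equivalences carefully, give the algebraic steps $3\Rightarrow 2\Rightarrow 1$ and $3\Leftrightarrow 4$ in full, and for $1\Rightarrow 3$ defer to \cite{borchers2000revolutionizing,araki2005extension} while sketching why positivity of $G$ is exactly what powers the analytic continuation. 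One subtlety worth flagging: the sign/orientation conventions here differ from the mathematics literature (as the footnote after Definition \ref{HSMIdef} notes, $\Delta^{-it}$ replaces $\Delta^{it}$), so I would be careful that "future algebra" plus "$G\ge 0$" lines up with HSMI$+$ rather than HSMI$-$, and state the correspondence explicitly before invoking Theorem \ref{thm:hsmi}.
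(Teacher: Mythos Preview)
The paper does not give its own proof of this theorem; it simply writes ``See \cite{ouseph2024local} for the proof.'' Your sketch follows the standard Borchers--Wiesbrock route (Borchers' analyticity argument for $1\Rightarrow 3$, the algebraic/analytic steps for $3\Rightarrow 2$ and $3\Leftrightarrow 4$, and a spectral argument for positivity), which is exactly what the cited references \cite{borchers2000revolutionizing,araki2005extension,ouseph2024local} contain, so your approach is consistent with what the paper defers to.
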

\begin{proof}
See \cite{ouseph2024local} for the proof.
\end{proof}

\section{Normal and standard inclusions of von Neumann algebras}\label{app:StandardNormal}
Consider an inclusion of von Neumann algebras $\mA_1\subset \mA_2'$. 
Assume that the relative commutant $\mL\equiv(\mA_1'\cap \mA_2')''$ is non-trivial (inclusion is not singular). We will be interested in the GNS Hilbert space of the state on $\mL$ that we denote by $\mH=\overline{\mL\ket{\Omega}}$. Such a triple $\{\mA_1,\mA_2',\ket{\Omega}\}$ is called a {\it pseudo-standard inclusion}. In a local theory, if we associate von Neumann algebras $\mathcal{X}$ to subsets $X$. Every pair of non-overlapping causally-disjoint subsets $A_1$ and $A_2$ give an inclusion $\mA_1\subset \mA_2'$. To simplify the picture, we often assume Haag's duality: the commutant $\mathcal{X}'$ is the same as the algebra of the causal complementary subset $X'$. Therefore, $\mL'$ is the algebra of $A_1\cup A_2$. 
%That is why, we often write $\ket{\rho_{AB}^{1/2}}$ instead of $\ket{\Omega}$.

We are interested in studying the automorphisms of a pseudo-standard inclusion of von Neumann algebras $\{\mA_1,\mA_2',\ket{\Omega}\}$. These are the maps that preserve $\mA_1$, $\mA_2'$ and the state $\ket{\Omega}$. Assuming Haag's duality, such a map would preserve $\mL$, as well. Therefore, it will have a representation in the Hilbert space $\mH$ such that 
\begin{eqnarray}
    U^\dagger\mA_1 U=\mA_1\ , \qquad U^\dagger\mA_2' U=\mA_2'\ , \qquad J_\mL U=U J_\mL\ .
\end{eqnarray}
In particular, such a unitary preserves the natural cone $P^\sharp(\mL)\ket{\Omega}$, which is the subspace of $\mH$ of the states symmetric under $J_\mL$. 

For any pseudo-standard inclusion, we can define intermediate algebras
\begin{eqnarray}
&&\mA_1\subset \underline{\mN}\subset \overline{\mN}\subset \mA_2'\nn\\
    &&\underline{\mN}\equiv (\mA_1\cup \mA_1^c)'' \nn\\
    &&\overline{\mN}\equiv (\mA_2'\cap (\mA_2')^c)''\nn\\
    &&\mathcal{X}^c\equiv J_\mL \mathcal{X} J_\mL\ .
\end{eqnarray}
It is clear that the automorphisms $U$ will also preserve $\overline{\mN}$ and $\underline{\mN}$. We warn the reader not to confuse the notation $\mathcal{X}^c$ defined above with the notation defined in (\ref{Haagdual}).

We say we have a {\it standard inclusion} $\{\mA_1,\mA_2',\ket{\Omega}\}$ if the vector $\ket{\Omega}$ is cyclic and separating with respect to all three algebras $\mA_1,\mA_2',\mL$ \cite{doplicher1984standard}. 
In the context of the local observable algebras of QFT, consider two non-overlapping causally-disjoint regions $A_1$ and $A_2$ and their corresponding algebras $\mA_1$ and $\mA_2$. As long as $A$ has finite volume (not a null segment), by Reeh-Schlieder theorem, the inclusion $\mA_1\subset \mA_2'$ is standard. Corollary 1.3 of \cite{doplicher1984standard} proves that if $\mA_1$ and $\mA_2'$ are both factors then $\mA_1\subset \mA_2'$ is a standard inclusion if and only if $\mA_1$ and $\mL$ are both properly infinite (Type I$_\infty$, II$_\infty$ or Type III). 

Two special classes of standard inclusions relevant to local physics are pseudo-normal inclusions and split inclusions.
{\it Pseudonormal} inclusions are standard inclusions that satisfy $\underline{\mN}=\overline{\mN}=\mN$.
Split inclusions are standard inclusions such that there is a Type I algebra $\mN$ such that $\mA_1\subset \mN\subset \mA_2'$. An inclusion $\mA_1\subset \mA'_2$ is {\it normal} if the relative commutant of the inclusion $\mL\subset \mA'_2$ is equivalent to $\mA_1$.

The local algebras of QFT, often give pseudonormal and standard inclusions.
It follows from the definition above that $\mA_1^c\subset \mA_2'$. If $\mA_1\subset \mA_2'$ is standard the inclusion $\mA_1^c\subset \mA_2'$ is also standard. Furthermore, if the inclusion $\mA_1\subset \mA_2'$ is standard and splits, and both $\mA_1$ and $\mA_2'$ are factors, then the inclusion $\mA_1^c\subset \mA_2'$ is normal (Proposition 4.1 of \cite{doplicher1984standard}). Since $\mA_1^{cc}=\mA_1$ it is clear from the definition above that $\overline{\mN^c}=\overline{\mN}$ and $\underline{\mN^c}=\underline{\mN}$, as a result $\mA_1\subset \mA_2'$ is also pseudo-normal. Corollary 7.5 of \cite{doplicher1984standard} proves that in a standard inclusion, $\mN$ is always properly infinite (Type I$_\infty$, II$_\infty$ or Type III). Proposition 7.6 of \cite{doplicher1984standard} clarifies that when $\mA_1$ and $\mA_2'$ are both hyperfinite and $\mA_1\subset \mA_2'$ is a pseudonormal and standard inclusion the algebra $\mN$ is also hyperfinite.

\bibliographystyle{ieeetr}
\bibliography{main}
\end{document}